\def\papershowtoc{}
\newtheorem{theorem}{Theorem} %[section]
\newtheorem{definition}[theorem]{Definition}
\newtheorem{lemma}[theorem]{Lemma}
\newtheorem{corollary}[theorem]{Corollary}
\font\uwavefont=lasyb10 scaled 652
\newcommand\colorwave[1][blue]{\bgroup\markoverwith{\lower3\p@\hbox{\uwavefont\textcolor{#1}{\char58}}}\ULon}
\newcommand\createtodoauthor[2]{%
\def\tmpdefault{emptystring}
\expandafter\newcommand\csname #1\endcsname[2][\tmpdefault]{\def\tmp{##1}\ifthenelse{\equal{\tmp}{\tmpdefault}}
   {\todo[linecolor=#2,backgroundcolor=#2,bordercolor=#2]{\textbf{#1:} ##2}}
   {\ifthenelse{\equal{##2}{}}{\colorwave[#2]{##1}\xspace}{\todo[linecolor=#2,backgroundcolor=#2,bordercolor=#2]{\textbf{#1:} ##2}\colorwave[#2]{##1}}}}
\expandafter\newcommand\csname #1f\endcsname[2][\tmpdefault]{
	\marginnote{
		\todo[inline,linecolor=#2,backgroundcolor=#2,bordercolor=#2]{\textbf{#1 (Figure):} ##2}}
   }
}
\newcommand\createtodoauthor[2]{%
\def\tmpdefault{emptystring}
\expandafter\newcommand\csname #1\endcsname[2][\tmpdefault]{\def\tmp{##1}\ifthenelse{\equal{\tmp}{\tmpdefault}}
   {}
   {\ifthenelse{\equal{##2}{}}{##1\xspace}{##1}}}
\expandafter\newcommand\csname #1f\endcsname[2][\tmpdefault]{
   }
   }
\definecolor{pairedNegOneLightGray}{HTML}{cacaca}
\definecolor{pairedNegTwoDarkGray}{HTML}{827b7b}
\definecolor{pairedOneLightBlue}{HTML}{a6cee3}
\newcommand{\rr}{\mathbb{R}}
\newcommand{\nn}{\mathbb{N}}
\newcommand{\cc}{\mathbb{C}}
\newcommand{\paren}[1]{\left( #1 \right)}
\newcommand{\logp}[1]{\log \left( #1 \right)}
\newcommand{\normalized}[1]{\frac{#1}{\norm{#1}}}
\newcommand{\ohtilde}[1]{\widetilde{\mathcal{O}}\paren{ #1 }}
\newcommand{\polylog}[1]{\mathrm{polylog}\paren{#1}}
\newcommand{\smalloh}[1]{o\paren{#1}}
\newcommand{\controlled}[1]{C\text{-}#1}
\newcommand{\eps}{\varepsilon}
\newcommand{\Om}{\Omega}
\newcommand{\KO}{{\kappa_{\Omega}}}
\newcommand{\NO}{{\norm{\Omega}}}
\newcommand{\sigmamax}[1]{\sigma_{\max}\paren{#1}}
\newcommand{\sigmamin}[1]{\sigma_{\min}\paren{#1}}
\newcommand{\sigmaminnz}[1]{\widetilde\sigma_{\min}\paren{#1}}
\newcommand{\ot}{\otimes}
\def\wmax{{w_\text{max}}}
\def\wmin{{w_\text{min}}}
\def\OLSstate{\normalized{\paren{A^TA + \lambda L^TL}^{-1} A^T \ket b}}
\def\OLSKappa{
    \kappa_L
    \paren{1 +
    \frac{\norm{A}}{\sqrt\lambda \norm{L}}
    }
}
\newcommand{\OLScomplexity}[2]{
    \order{
        \kappa \log\kappa \paren{
            \paren{\frac{#1 + \sqrt\lambda #2}{\norm{A} + \sqrt\lambda \norm{L}}}
            \log\paren{\frac{\kappa}{\delta}}
            \paren{T_A + T_L}
            +
            T_b
        }
    }
}
\newcommand{\RidgeComplexity}{
    \order{
        \log\kappa
        \paren{
            \frac{\alpha_A}{\sqrt\lambda}
            \logp{\frac\kappa\delta}
            T_A
            +
            \kappa T_b
        }
    }
}
\newcommand{\GLSKappa}{
    \kappa_L \paren{1 + \frac{\sqrt{\KO}\norm{A}}{\sqrt{\lambda\norm{\Omega}}\norm{L}}}
}
\newcommand{\GLSComplexity}{
    \kappa \sqrt{\KO} \log\kappa
    \paren{
        \paren{
          \frac{\alpha_A}{\norm{A}} T_A
          + \frac{\alpha_L}{\norm{L}} T_L
          + \frac{\alpha_\Om \KO}{\NO} T_\Om
        }
        \log^3\paren{\frac{\kappa \KO \norm{A}\norm{L}}{\delta \NO}} 
        + T_b
    }
}
\newcommand{\UPDATED}{\color{blue}}
\renewcommand{\UPDATED}{}
\begin{document}
%%%%%%%%%%%%%%%%%%%%%%%%%%%%%%%%%%%%%%%%%%%%%%%%%%%%%%%%%%%%%%%%%%%%%%%%%%%%%%

%=============================================================================
\title{Quantum Regularized Least Squares}
%=============================================================================

\newcommand{\CSTAR}{Center for Security, Theory and Algorithmic Research, IIIT Hyderabad, Telangana 500032, India}
\newcommand{\CQST}{Center for Quantum Science and Technology, IIIT Hyderabad, Telangana 500032, India}
\newcommand{\CCNSB}{Center for Computational Natural Sciences and Bioinformatics, IIIT Hyderabad, Telangana 500032, India}
\newcommand{\RUB}{Faculty of Computer Science, Ruhr University Bochum, 44801 Bochum, Germany}

\author{Shantanav Chakraborty}
\affiliation{\CQST}
\affiliation{\CSTAR}
\email{shchakra@iiit.ac.in}
% \orcid{0000-0000-0000-0000}

\author{Aditya Morolia}
\affiliation{\CQST}
\affiliation{\CCNSB}
\email{aditya.morolia@research.iiit.ac.in}
% \orcid{0000-0000-0000-0000}

\author{Anurudh Peduri}
\affiliation{\RUB}
\affiliation{\CQST}
\affiliation{\CSTAR}
\email{anurudh.peduri@rub.de}
\orcid{0000-0002-6523-7098}

\maketitle

%=============================================================================
\begin{abstract}
Linear regression is a widely used technique to fit linear models and finds widespread applications across different areas such as machine learning and statistics. In most real-world scenarios, however, linear regression problems are often \textit{ill-posed} or the underlying model suffers from \textit{overfitting}, leading to erroneous or trivial solutions. This is often dealt with by adding extra constraints, known as regularization. In this paper, we use the frameworks of block-encoding and quantum singular value transformation (QSVT) to design the first quantum algorithms for quantum least squares with general $\ell_2$-regularization. These include regularized versions of quantum ordinary least squares, quantum weighted least squares, and quantum generalized least squares. Our quantum algorithms substantially improve upon prior results on \textit{quantum ridge regression} (polynomial improvement in the condition number and an exponential improvement in accuracy), which is a particular case of our result.

To this end, we assume approximate block-encodings of the underlying matrices as input and use robust QSVT algorithms for various linear algebra operations. In particular, we develop a variable-time quantum algorithm for matrix inversion using QSVT, where we use quantum singular value discrimination as a subroutine instead of gapped phase estimation. This ensures that substantially fewer ancilla qubits are required for this procedure than prior results. Owing to the generality of the block-encoding framework, our algorithms are applicable to a variety of input models and can also be seen as improved and generalized versions of prior results on standard (non-regularized) quantum least squares algorithms.
\end{abstract}
%=============================================================================

\ifdefined\papershowtoc
\newpage
\tableofcontents
\newpage
\fi

%=============================================================================
\section{Introduction}
\label{sec:intro}
%=============================================================================

The problem of fitting a theoretical model to a large set of experimental data appears across various fields ranging from the natural sciences to machine learning and statistics \cite{murphy2012machine}. Linear regression is one of the most widely used procedures for achieving this. By assuming that, for the underlying model, there exists a linear relationship between a dependent variable and one or more explanatory variables, linear regression constructs the best linear fit to the series of data points. Usually, it does so while minimizing the sum of squared errors - known as the least squares method.

In other words, suppose that we are given $N$ data points $ \{ ( a_i, b_i ) \}_{i=1}^{N}$ where $ \forall i : a_i \in \rr^d, \forall i : b_i \in \rr$. The assumption is that each $b_i$ is linearly dependent on $a_i$ up to some random noise of mean $0$. Suppose $A$ is the data matrix of dimension $N\times d$, such that its $i^{\mathrm{th}}$ row is the vector $a_i$ and $b\in \rr^N$ such that $b=(b_1,\cdots, b_N)^T$. Then the procedure, known as ordinary least squares, obtains a vector $x\in \rr^d$ that minimizes the objective function $\left|\left|Ax-b\right|\right|^2_2$. 
This problem has a closed-form solution given by $x=(A^T A)^{-1} A^T b = A^+ b$, where $A^+$ denotes the Moore-Penrose inverse of the matrix $A$. Thus computationally, finding the best fit by linear regression reduces to finding the pseudoinverse of a matrix that represents the data, a task that is expensive for classical machines for large data sets.

In practice, however, least squares regression runs into problems such as \textit{overfitting}. For instance, the solution might fit most data points, even those corresponding to random noise. 
Furthermore, the linear regression problem may also be \textit{ill-posed}, for instance, when the number of variables exceeds the number of data points rendering it impossible to fit the data. These issues come up frequently with linear regression models and result in erroneous or trivial solutions. Furthermore, another frequent occurrence is that the data matrix $A$ has linearly dependent columns. In this scenario, the matrix $A^TA$ is not full rank and therefore is not invertible. 

Regularization is a widely used technique to remedy these problems, not just for linear regression but for inverse problems, in general \cite{engl1996regularization}. In the context of linear regression, broadly, this involves adding a penalty term to the objective function, which constrains the solution of the regression problem. For instance, in the case of $\ell_2$-regularization, the objective is to obtain $x$ that minimizes 
\begin{equation}
\norm{Ax - b}^2_2 + \lambda \norm{Lx}^2_2
\end{equation}
where $L$ is an appropriately chosen penalty matrix (or regularization matrix) of dimension $N\times d$ and $\lambda>0$ is the regularization parameter, an appropriately chosen constant. This regularization technique is known as \textit{general $\mathit{\ell_2}$-regularization} or \textit{Tikhonov regularization} in the literature \cite{hemmerle1975explicit, hanke1993regularization, bishop1995training, golub1999tikhonov, vanwieringen2021lecture}. It is a generalization of \textit{ridge regression} which corresponds to the case when $L$ is the identity matrix \cite{hoerl1970ridge, marquaridt1970generalized, vinod1978survey}. The closed-form solution of the general $\ell_2$-regularized ordinary least squares problem is given by 
\begin{equation}
x=\paren{A^TA+\lambda L^TL}^{-1}A^Tb.
\end{equation}
A straightforward observation is that even when $A^TA$ is singular, a judicious choice of the penalty matrix $L$ can ensure that the effective condition number (ratio of the maximum and the minimum non-zero singular values) of the overall matrix is finite and $A^TA+\lambda L^TL$ is invertible.
 
In this paper, we develop quantum algorithms for linear regression with general $\ell_2$-regularization. If the optimal solution is $x=(x_1,\cdots,x_d)^T$, then our quantum algorithm outputs a quantum state that is $\delta$-close to $\ket{x}=\sum_{j=1}^d x_j\ket{j}/\norm{x}$, assuming access to the matrices $A, L$, and the quantum state $\ket{b}$ via general quantum input models. 

In several practical scenarios, depending on the underlying theoretical model, generalizations of the ordinary least squares (OLS) technique are more useful to fit the data.
For instance, certain samples may be of more importance (and therefore have more weight) than the others, in which case weighted least squares (WLS) is preferred.
Generalized least squares (GLS) is used when the underlying samples obtained are correlated.
These techniques also suffer from the issues commonplace with OLS, warranting the need for regularization \cite{vanwieringen2021lecture}.
Consequently, we also design algorithms for quantum WLS with general $\ell_2$-regularization and quantum GLS with general $\ell_2$-regularization.

\textbf{Organization of the paper:} In the remainder of \autoref{sec:intro}, we formally describe $\ell_2$-regularized versions of OLS, WLS, and GLS (\autoref{subsec:intro-l2}), discuss prior and related work (\autoref{subsec:prior-work}), and outline our contributions and results (\autoref{subsec:our-contributions}). In \autoref{sec:prelims}, we briefly outline the framework of block-encoding and quantum input models that are particular instances of it (\autoref{ssec:quantum_input_models}). We also briefly introduce quantum singular value transformation (QSVT) (\autoref{ssec:qsvt}) and variable time amplitude amplification (VTAA) (\autoref{ssec:vtaa}). Following this, in \autoref{sec:algorithmic_primitives}, we develop several algorithmic primitives involving arithmetic of block-encodings (\autoref{subsec:arith-w-block}), quantum singular value discrimination (\autoref{subsec:qevd}) and quantum linear algebra using QSVT (\autoref{ssec:mi_using_qsvt}). These are the technical building blocks for designing our quantum regularized regression algorithms. Using these algorithmic primitives, we design quantum algorithms for the quantum least squares with $\ell_2$-regularization in \autoref{sec:main_proof}. Finally, we conclude by discussing some possible future research directions in \autoref{sec:discussion}.

%-----------------------------------------------------------------------------
\subsection{Linear regression with $\ell_2$-regularization} 
%-----------------------------------------------------------------------------
\label{subsec:intro-l2}
Suppose we are given data points $ \{ ( a_i, b_i ) \}_{i=1}^{N} $, where $ \forall i : a_i \in \rr^d, \forall i : b_i \in \rr $ such that $ (a_i, b_i) \sim_{i.i.d} \mathcal{D} $, i.e. they are sampled i.i.d. from some unknown distribution $\mathcal{D}$, assumed to be linear. We want to find a vector $x\in \rr^d $ such that the inner product $  x^T a_j$ is a good predictor for the target $b_j $ for some unknown $a_j$. 
This can be done by minimizing the total squared loss over the given data points,

\begin{equation}
    \mathcal{L}_O := \sum_j (x^T a_j - b_j)^2,
\end{equation}

leading to the ordinary least squares (OLS) optimization problem. The task then is to find $x\in \rr^d$ that minimizes $\norm{ Ax - b}_2^2$, where $A$ is the $N\times d$ data matrix such that the $i^{th}$ row of $ A $ is $ a_i $, and the $i^{th}$ element of the vector $b$ is $b_i$. Assuming that $A^TA$ is non-singular, the optimal $x$ satisfies
\begin{equation}
  x = (A^TA)^{-1} A^T b=A^{+}b,
\end{equation}
which corresponds to solving a linear system of equations. 

Suppose that out of the samples present in the data, we have higher confidence in some of them than others. In such a scenario, the $i^{\mathrm{th}}$ observation can be assigned a weight $w_i\in \rr$. This leads to a generalization of the OLS problem to \textit{weighted least squares} (WLS). In order to obtain the best linear fit, the task is now to minimize the weighted version of the loss 
\begin{equation}
    \label{eqn:wls_loss}
    \mathcal{L}_W := \sum_j w_j (x^T a_j - b_j)^2.
\end{equation}

As before, assuming $A^TWA$ is non-singular, the above loss function has the following closed-form solution:

\begin{equation}
    \label{eqn:wls_soln}
    x = (A^TWA)^{-1} A^T W b,
\end{equation}
where $W$ is a diagonal matrix with $w_i$ being the $i^{\mathrm{th}}$ diagonal element. 

There can arise scenarios where there exists some correlation between any two samples. For \textit{generalized least squares} (GLS), the presumed correlations between pairs of samples are given in a symmetric, non-singular covariance matrix $\Omega$. This objective is to find the vector $x$ that minimizes

\begin{equation}
    \mathcal{L}_{\Omega} := \sum_{i, j} (\Omega^{-1})_{i, j} (x^T a_i - b_i) (x^T a_j - b_j).
\end{equation}

Similarly, the closed-form solution for GLS is given by 

\begin{equation}
    \label{eqn:gls_soln}
    x = (A^T\Omega^{-1}A)^{-1} A^T \Omega^{-1} b.
\end{equation}

As mentioned previously, in several practical scenarios, the linear regression problem may be \textit{ill-posed} or suffer from \textit{overfitting}. Furthermore, the data may be such that some of the columns of the matrix $A$ are linearly dependent. This shrinks the rank of $A$, and consequently of the matrix $A^TA$, rendering it singular and, therefore non-invertible. Recall that the closed-form solution of OLS exists only if $A^TA$ is non-singular, which is no longer the case. Such scenarios arise even for WLS and GLS problems \cite{vanwieringen2021lecture}.  

In such cases, one resorts to \textit{regularization} to deal with them. Let $\mathcal{L}$ be the loss function to be minimized for the underlying least squares problem (such as OLS, WLS, or GLS). Then general $\ell_2$-regularization (\textit{Tikhonov regularization}) involves an additional penalty term so that the objective now is to find the vector $x\in\rr^d$ that minimizes
\begin{equation}
    \mathcal{L}+\lambda \norm{Lx}^2_2.
\end{equation}
Here $\lambda$, known as the regularization parameter, is a positive constant that controls the size of the vector $x$, while $L$ is known as the penalty matrix (or regularization matrix) that defines a (semi)norm on the solution through which the size is measured. The solution to the Tikhonov regularization problem also has a closed-form solution. For example, in the OLS problem, when $\mathcal{L}=\mathcal{L}_O$, we have that 
\begin{equation}
    x=(A^TA+\lambda L^TL)^{-1}A^T b.
\end{equation}
It is worth noting that when $L=I$, the $\ell_2$-regularized OLS problem is known as \textit{ridge regression}.
For the unregularized OLS problem, the singular values of $A$, $\sigma_j$ are mapped to $1/\sigma_j$.
The penalty term due to $\ell_2$-regularization, results in a shrinkage of the singular values.
This implies that even in the scenario where $A$ has linearly dependent columns (some $\sigma_j=0$) and $(A^TA)^{-1}$ does not exist, the inverse $(A^TA+\lambda L^TL)^{-1}$ is well defined for $\lambda>0$ {\UPDATED and any positive-definite $L$}. {\UPDATED Throughout this article, we refer to such an $L$ (which is positive definite) as a \textit{good regularizer}}. 
The penalty matrix $L$ allows for penalizing each regression parameter differently and leads to joint shrinkage among the elements of $x$. It also determines the rate and direction of shrinkage. In the special case of ridge regression, as $L=I$, the penalty shrinks each element of $x$ equally along the unit vectors $e_j$. {\UPDATED Also note that by definition, $I$ is a \textit{good regularizer}}.

Closed-form expressions can also be obtained for the WLS and the GLS problem ($\mathcal{L}=\mathcal{L}_W, \mathcal{L}_\Omega$ respectively), and finding the optimal solution $x$ reduces to solving a linear system. The quantum version of these algorithms output a quantum state that is $\epsilon$-close $\ket{x}=\sum_j x_j\ket{j}/\norm{x}$.

Throughout this work, while designing our quantum algorithms, we shall assume  access (via a block-encoding) to the matrices $A$, $W$, $\Omega$, and $L$ and knowledge of the parameter $\lambda$. Classically, however, the regularization matrix $L$ and the optimal parameter $\lambda$ are obtained via several heuristic techniques \cite{hanke1993regularization, golub1999tikhonov, vanwieringen2021lecture}.

%-----------------------------------------------------------------------------
\subsection{Prior work} 
\label{subsec:prior-work}
%-----------------------------------------------------------------------------

Quantum algorithms for (unregularized) linear regression was first developed by Wiebe et al. \cite{wiebe2012quantum}, wherein the authors made use of the HHL algorithm for solving a linear system of equations \cite{HHL2009}. Their algorithm assumes query access to a sparse matrix $A$ (\textit{sparse-access-model}) and to a procedure to prepare $\ket{b}=\sum_i b_i\ket{i}$. They first prepare a quantum state proportional to $A^T\ket{b}$, and then use the HHL algorithm to apply the operator $(A^TA)^{-1}$ to it. Overall the algorithm runs in a time scaling as $\kappa^6_A$ (the condition number of $A$) and inverse polynomial in the accuracy $\delta$. Subsequent results have considered the problem of obtaining classical outputs for linear regression. For instance, in Ref.~\cite{wang_2017_linear_regression}, $A^+$ is directly applied to the quantum state $\ket{b}$, followed by amplitude estimation to obtain the entries of $x$. On the other hand, Ref.~\cite{schuld2016predictionByRegression} used the techniques of quantum principal component analysis in \cite{lloyd2014quantum} to predict a new data point for the regression problem. These algorithms also work in the \textit{sparse access model} and run in a time that scales as $\mathrm{poly}\left(\kappa,1/\delta\right)$. Kerenidis and Prakash \cite{KP2020_iterative_quantum_gradient} provided a quantum algorithm for the WLS problem wherein they used a classical data structure to store the entries of $A$ and $W$. Furthermore, they assumed QRAM access to this data structure \cite{prakash2014quantum, kerenidis2016quantum} that would allow the preparation of quantum states proportional to the entries of $A$ and $W$ efficiently. They showed that in this input model (\textit{quantum data structure model}), an iterative quantum linear systems algorithm can prepare $\ket{x}$ in time $\widetilde{O}(\mu\kappa^3/\delta)$, where $\kappa$ is the condition number of the matrix $A^T\sqrt{W}$ while $\mu=\norm{\sqrt{W}A}_F$. Chakraborty et al. \cite{CGJ19} applied the framework of block-encoding along with (controlled) Hamiltonian simulation of Low and Chuang \cite{LC16} to design improved quantum algorithms for solving linear systems. Quantum algorithms developed in the block-encoding framework are applicable to a wide variety of input models, including the sparse access model and the quantum data structure model of \cite{KP2020_iterative_quantum_gradient}. They applied their quantum linear systems solver to develop quantum algorithms for quantum weighted least squares and generalized least squares. Their quantum algorithm for WLS has a complexity that is in $\widetilde{O}\left(\alpha\kappa\mathrm{polylog}(Nd/\delta)\right)$, where $\alpha=s$, the sparsity of the matrix $A^T\sqrt{W}$ in the sparse access model while $\alpha=\norm{\sqrt{W}A}_F$, for the quantum data structure input model. For GLS, their quantum algorithm outputs $\ket{x}$ in cost $\widetilde{O}\left(\kappa_A \kappa_\Omega (\alpha_A+\alpha_\Omega\kappa_\Omega)\mathrm{polylog}(1/\delta)\right)$, where $\kappa_A$ and $\kappa_\Omega$ are the condition numbers of $A$ and $\Omega$ respectively while $\alpha_A$ and $\alpha_\Omega$ are parameters that depend on how the matrices $A$ and $\Omega$ are accessed in the underlying input model. 

While quantum linear regression algorithms have been designed and subsequently improved over the years, quantum algorithms for regularized least squares have not been developed extensively. Yu et al. \cite{yu2019improved} developed a quantum algorithm for \textit{ridge regression} in the sparse access model using the LMR scheme \cite{lloyd2014quantum} for Hamiltonian simulation and quantum phase estimation, which they then used to determine the optimal value of the parameter $\lambda$. Their algorithm to output $\ket{x}$ has a cubic dependence on both $\kappa$ and $1/\delta$. They use this as a subroutine to determine a good value of $\lambda$. A few other works \cite{Shao2020, chen2022faster} have considered the quantum ridge regression problem in the sparse access model, all of which can be implemented with $\mathrm{poly}(\kappa,1/\delta)$ cost.

Recently, Chen and de Wolf designed quantum algorithms for lasso ($\ell_1$-regularization) and ridge regressions from the perspective of empirical loss minimization \cite{chen2021quantum}. For both lasso and ridge, their quantum algorithms output a classical vector $\widetilde{x}$ whose loss (mean squared error) is $\delta$-close to the minimum achievable loss. In this context, they prove a quantum lower bound of $\Omega(d/\delta)$ for ridge regression which indicates that in their setting, the dependence on $d$ cannot be improved on a quantum computer (the classical lower bound is also linear in $d$ and there exists a matching upper bound). Note that $\widetilde{x}$ is not necessarily close to the optimal solution $x$ of the corresponding least squares problem, even though their respective loss values are. Moreover, their result (of outputting a classical vector $\widetilde{x}$) is incomparable to our objective of obtaining a quantum state encoding the optimal solution to the regularized regression problem.

Finally, Gily\'{e}n et al. obtained a ``dequantized'' classical algorithm for ridge regression assuming norm squared access to input data similar to the quantum data structure input model \cite{Gilyen2020AnIQ}. Furthermore, similar to the quantum setting where the output is the quantum state $\ket{x}=\sum_j x_j\ket{j}/\norm{x}$ instead of $x$ itself, their algorithm obtains samples from the distribution $x_j^2/\norm{x}^2$. For the regularization parameter $\lambda=O\left(\norm{A}\norm{A}_F\right)$, the running time of their algorithm is in $\widetilde{O}\left(\kappa^{12} r_A^3/\delta^4\right)$, where $r_A$ is the rank of $A$. Their result (and several prior results) does not have a polynomial dependence on the dimension of $A$ and therefore rules out the possibility of generic exponential quantum speedup (except in $\delta$) in the quantum data structure input model. 

%-----------------------------------------------------------------------------
\subsection{Our contributions}
\label{subsec:our-contributions}
%-----------------------------------------------------------------------------

In this work, we design the first quantum algorithms for OLS, WLS, and GLS with general $\ell_2$-regularization.
We use the Quantum Singular Value Transformation (QSVT) framework introduced by Gily\'{e}n et al \cite{GSLW2019}.
We assume that the relevant matrices are provided as input in the \textit{block-encoding} model,
in which access to an input matrix $A$ is given by a unitary $U_A$ whose top-left block is (close to) $A/\alpha$.
The parameter $\alpha$ takes specific values depending on the underlying input model.
QSVT then allows us to implement nearly arbitrary polynomial transformations to a block of a unitary matrix using a series of parameterized, projector-controlled rotations (quantum signal processing \cite{LC16b}).

More precisely, given approximate block-encodings of the data matrix $A$ and the regularizing matrix $L$, and a unitary procedure to prepare the state $\ket{b}$, our quantum algorithms output a quantum state that is $\delta$-close to $\ket{x}$, the quantum state proportional to the $\ell_2$-regularized ordinary least squares (or weighted least squares or generalized least squares problem). We briefly summarize the query complexities of our results in \autoref{table:summary-of-results}.

For the OLS problem with general $\ell_2$-regularization (\autoref{subsec:ols-gls-l2}, \autoref{thm:quantum_least_squares_gen_tik_reg}), 
we design a quantum algorithm which
given an $(\alpha_A, a_A, \varepsilon_A)$-block-encoding of $A$ (implemented in cost $T_A$), an $(\alpha_L, a_L, \varepsilon_L)$-block-encoding of $L$ (implemented in cost $T_L$), a parameter $\lambda>0$, and a procedure to prepare $\ket{b}$ (in cost $T_b$), outputs a quantum state which is $\delta$-close to $\ket{x}$. The algorithm has a cost
\begin{equation*}
\UPDATED
    \OLScomplexity{\alpha_A}{\alpha_L}
\end{equation*}
where $\kappa$ can be thought of as a \textit{modified condition number}, related to the effective condition numbers of $A$ and $L$. When $L$ is a good regularizer, this is given by the expression
\[ \UPDATED \kappa = \OLSKappa, \]
Notice that $\kappa$ is independent of $\kappa_A$, the condition number of the data matrix $A$, which underscores the advantage of regularization.  
The parameters $\alpha_A$ and $\alpha_L$ take specific values depending on the underlying input model.
For the sparse access input model, $\alpha_A=s_A$ and $\alpha_L=s_L$, the respective sparsities of the matrices $A$ and $L$.
On the other hand for the quantum data structure input model,
$\alpha_A=\norm{A}_F$ and $\alpha_L=\norm{L}_F$.
Consequently, the complexity of \textit{Quantum Ridge Regression} can be obtained by substituting $L=I$ in the above complexity as
\[ \UPDATED \RidgeComplexity \]
where $\kappa = 1 + \norm{A}/\sqrt\lambda$, by noting that the block-encoding of $I$ is trivial while the norm and condition number of the identity matrix is one.
For this problem of \textit{quantum ridge regression}, our quantum algorithms are substantially better than prior results \cite{Shao2020, yu2019improved, chen2022faster}, exhibiting a polynomial improvement in $\kappa$ and an exponential improvement in $1/\delta$.

For the $\ell_2$-regularized GLS problem (\autoref{sec:main_proof}, \autoref{thm:gls_l2}),
    we design a quantum algorithm that along with approximate block-encodings of $A$ and $L$,
    takes as input an $(\alpha_\Omega, a_\Omega, \varepsilon_\Omega)$-lock-encoding of the matrix $\Omega$
    (implementable at a cost of $T_\Omega$)
    to output a state $\delta$-close to $\ket{x}$
    at a cost of \[ \UPDATED \order{\GLSComplexity} \]
In the above complexity, when $L$ is a good regularizer, the modified condition number $\kappa$ is defined as \[ \UPDATED \kappa = \GLSKappa \]

The WLS problem is a particular case of GLS, wherein the matrix $\Omega$ is diagonal. However, we show that better complexities for the $\ell_2$-regularized WLS problem can be obtained if we assume QRAM access to the diagonal entries of $W$ (\autoref{sec:main_proof}, \autoref{thm:wls_l2_qram} and \autoref{thm:wls_l2_sparse}).
 
\autoref{table:summary-of-results} summarizes the complexities of our algorithms for quantum linear regression with general $\ell_2$-regularization.
For better exposition, here we assume that $\norm{A}, \norm{L}, \norm{\Omega}$ and $\lambda = \Theta\paren{1}$.
For the general expression of the complexities, we refer the readers to \autoref{sec:main_proof}.
\begin{table}[H]
\centering
    \def\arraystretch{2}
    \begin{tabular}{ |c|c|c|}
        \hline
        \textbf{Problem} & \textbf{Unregularized} & \textbf{$\mathbf{\ell_2}$-Regularized} \\ 
        \hline\hline
        Quantum OLS %Ordinary Least Squares 
        & $\ohtilde{\alpha_A\kappa_A \logp{1/\delta}}$ 
        & {\UPDATED $\ohtilde{(\alpha_A + \alpha_L) \kappa_L \logp{1/\delta}}$} \\[10pt]
        \hline
        Quantum GLS % Generalized Least Squares 
        & {\UPDATED $\ohtilde{\paren{\alpha_A + \alpha_\Omega \kappa_\Omega} \kappa_A \sqrt{\kappa_\Omega} \log^3\paren{1/\delta}}$ }
        & {\UPDATED $\ohtilde{\paren{\alpha_A + \alpha_L + \alpha_\Omega \kappa_\Omega} \kappa_L\sqrt{\kappa_\Omega}\log^3\paren{1/\delta}}$} \\[10pt]
        \hline
    \end{tabular}
\caption{Complexity of quantum linear regression algorithms with and without general $\ell_2$-regularization. All of these algorithms require only $\Theta({\log\kappa})$ additional qubits.}
	\label{table:summary-of-results}
\end{table}

In order to derive our results, we take advantage of the ability to efficiently perform arithmetic operations on block-encoded matrices, as outlined in \autoref{sec:algorithmic_primitives}. Along with this, we use QSVT to perform linear algebraic operations on block-encoded matrices. To this end, adapt the results in Refs.~\cite{GSLW2019, Grand_Uni_2021} to our setting. One of our contributions is that we work with robust versions of many of these algorithms. In prior works, QSVT is often applied to block-encoded matrices, assuming perfect block-encoding. For the quantum algorithms in this paper, we rigorously obtain the precision $\varepsilon$ required to obtain a $\delta$-approximation of the desired output state.  

For instance, a key ingredient of our algorithm for regularized least squares is to make use of QSVT to obtain $A^+$, given an $\varepsilon$-approximate block-encoding of $A$.  In order to obtain a (near) optimal dependence on the condition number of $A$ by applying variable-time amplitude amplification (VTAA) \cite{ambainis2010variable}, we recast the standard QSVT algorithm as a variable stopping-time quantum algorithm. Using QSVT instead of controlled Hamiltonian simulation ensures that the variable-time quantum procedure to prepare $A^+\ket{x}$ has a slightly better running time (by a log factor) and considerably fewer additional qubits than Refs.~\cite{CKS17,CGJ19}.

Furthermore, for the variable time matrix inversion algorithm, a crucial requirement is the application of the inversion procedure to the portion of the input state that is spanned by singular values larger than a certain threshold. In order to achieve this, prior results have made use of \textit{Gapped Phase Estimation} (GPE), which is a simple variant of the standard phase estimation procedure that decides whether the eigenvalue of a Hermitian operator is above or below a certain threshold \cite{ambainis2010variable, CKS17, CGJ19}. However, GPE can only be applied to a Hermitian matrix and requires additional registers that store the estimates of the phases, which are never used for variable-time amplitude amplification. In this work, instead of GPE, we develop a robust version of \textit{quantum singular value discrimination} (QSVD) using QSVT, which can be directly applied to non-Hermitian matrices. This algorithm decides whether some singular value of a matrix is above or below a certain threshold without storing estimates of the singular values. This leads to a \textit{space-efficient} variable time quantum algorithm for matrix inversion by further reducing the number of additional qubits required by a factor of $O(\log^2(\kappa/\delta))$ as compared to prior results \cite{CKS17, CGJ19}. Consequently, this also implies that in our framework, quantum algorithms for (unregularized) least squares (which are special cases of our result) have better complexities than those of Ref.~\cite{CGJ19}.

%=============================================================================
\section{Preliminaries}
\label{sec:prelims}
%=============================================================================

This section lays down the notation, and introduces the quantum singular value transformation (QSVT) and block-encoding frameworks, which are used to design the algorithm for quantum regression.

%-----------------------------------------------------------------------------
\subsection{Notation}
\label{ssec:notation}
%-----------------------------------------------------------------------------

For a matrix $ A \in \rr^{N \times d} $,
$A_{i, .}$ denotes the $i^{th}$ row of $A$,
and $\norm{A_{i,\cdot}}$ denotes the vector norm of $A_{i, .}^T$.
$s^A_r$ and $s^A_c$ denote the row and column sparsity of the matrix, which is the maximum number of non-zero entries in any row and any column, respectively.
\\~\\
\textbf{Singular Value Decomposition.} The decomposition $A = W\Sigma V^{\dagger}$, where $W$ and $V$ are unitary and $\Sigma$ is a diagonal matrix, represents the \textit{singular value decomposition} (SVD) of $A$. All matrices can be decomposed in this form. The diagonal entries of $\Sigma$, usually denoted by $\sigma(A) = \{\sigma_j\}$, is the multiset of all \textit{singular values} of $A$, which are real and non-negative. $\sigma_{\max}$ and $\sigma_{\min}$ denote the maximum and minimum singular values of $A$. $r(A) = \mathrm{rank}(A)$ is the number of non-zero singular values of $A$. The columns of $W,~V$ (denoted by $\{\ket{w_j}\}$ and $\{\ket{v_j}\}$) are the left and right \textit{singular vectors} of $A$. Thus $A = \sum_j^r \sigma_j \ket{w_j}\bra{v_j}$.
The singular vectors of $A$ can be computed as the positive square roots of the eigenvalues of $A^\dagger A$ (which is positive semi-definite and therefore has non-negative real eigenvalues.)
\\~\\
{\UPDATED
\textbf{Effective Condition Number.}
$\kappa_A$ denotes (an upper bound on) the effective condition number of $A$,
defined as the ratio of the maximum and minimum non-zero singular values of $A$.
Let $\sigmamax{A}$ be the largest singular value of $A$,
and $\sigmamin{A}$ be the smallest singular value of $A$.
Additionally, let $\sigmaminnz{A}$ be the smallest non-zero singular value of $A$.
Then

\[ \kappa_A 
\ge \frac{\sigmamax{A}}{\sigmaminnz{A}}
= \sqrt{\frac{\lambda_{\max}(A^\dagger A)}{\widetilde\lambda_{\min}(A^\dagger A)}}
\]

If $A$ is full-rank, then $\sigmaminnz{A} = \sigmamin{A}$, and $\kappa_A$ becomes the condition number of the matrix.
In this text, unless stated otherwise, we always refer to $\kappa_A$ as (an upper bound on) effective condition number of a matrix, and not the true condition number.
}
\\~\\
\textbf{Norm.} Unless otherwise specified, $\norm{A}$ denotes the spectral norm of $A$, while $\norm{A}_F$ denotes the Frobenius norm of $A$, defined as
\begin{align*}
    &\norm{A} := \max_{x \neq 0} \frac{\norm{Ax}}{\norm{x}} = \sigma_{\max}(A) \\
    &\norm{A}_F := \sqrt{\sum_{j=1}^{r} \sigma_j^2}
\end{align*}
Unless otherwise specified, when $A$ is assumed to be normalized, it is with respect to the spectral norm. 
\\~\\
% \textbf{Big-O Notation.} Let $f$ and $g$ be real valued functions in $t$ defined on a subset of $\rr$ unbounded from above. Then we have the following notations to represent their asymptotic scaling with respect to each other.
% \begin{equation}
%     f(t) = \order{g(t)} \iff \exists t_0, C \in \rr : \forall t > t_0 : \abs{f(t)} \leq C \abs{g(t)},
% \end{equation}
% which means that $f$ does not scale faster than $g$. Similarly, 
% \begin{equation}
%     f(t) = \Omega({g(t)}) \iff \exists t_0, C \in \rr : \forall t > t_0 : \abs{f(t)} \geq C \abs{g(t)},
% \end{equation}
% which means that $f$ scales faster than $g$. And
% \begin{equation}
%     f(t) = \Theta({g(t)}) \iff f(t) = \order{g(t)} \land g(t) = \order{f(t)},
% \end{equation}
% which means that they scale similarly (asymptotically.) 
\textbf{Soft-O Complexity.} Finally, we use
$f = \ohtilde{g}$ to denote $f = \order{g\cdot\mathrm{polylog}(g)}$.
\\~\\
\textbf{Controlled Unitaries}. If $U$ is a $s$-qubit unitary, then $\controlled{U}$ is a $(s+1)$-qubit unitary defined by
\begin{equation*}
    \controlled{U} = \ketbra{0} \otimes I_s + \ketbra{1} \otimes U
\end{equation*}

{\UPDATED
Throughout this text whenever we state that the time taken to implement a unitary $U_A$ is $T_A$ and the cost of an algorithm is $\order{n T_A}$, we imply that the algorithm makes $n$ uses of the unitary $U_A$. Thus, if the circuit depth of $U_A$ is $T_A$, the circuit depth of our algorithm is $\order{n T_A}$.
}

%-----------------------------------------------------------------------------
\subsection{Quantum Input Models}
\label{ssec:quantum_input_models}
%-----------------------------------------------------------------------------

The complexities of quantum algorithms often depend on how the input data is accessed. For instance, in quantum algorithms for linear algebra (involving matrix operations), it is often assumed that there exists a black-box that returns the positions of the non-zero entries of the underlying matrix when queried. The algorithmic running time is expressed in terms of the number of queries made to this black-box. Such an input model, known as the \textit{Sparse Access Model}, helps design efficient quantum algorithms whenever the underlying matrices are sparse. Various other input models exist, and quantum algorithms are typically designed and optimized for specific input models.

Kerenidis and Prakash \cite[Section 5.1]{kerenidis2016quantum} introduced a different input model, known as the \textit{quantum data structure model}, which is more conducive for designing quantum machine learning algorithms. In this model, the input data (e.g: entries of matrices) arrive online and are stored in a classical data structure (often referred to as the KP-tree in the literature), which can be queried in superposition by using a QRAM. This facilitates efficiently preparing quantum states corresponding to the rows of the underlying matrix, that can then be used for performing several matrix operations. Subsequently, several quantum-inspired classical algorithms have also been developed following the breakthrough result of Tang~\cite{tang_reco_19}. Such classical algorithms have the same underlying assumptions as the quantum algorithms designed in the data structure input model and are only polynomially slower provided the underlying matrix is low rank.

In this work, we will consider the framework of \textit{block-encoding}, wherein it is assumed that the input matrix $A$ (up to some sub-normalization) is stored in the left block of some unitary. The advantage of the block-encoding framework, which was introduced in a series of works \cite[Definition 1]{LC16}, \cite[Section 1]{CGJ19}, \cite[Section 1.3]{GSLW2019}, is that it can be applied to a wide variety of input models. For instance, it can be shown that both the sparse access input model as well as the quantum data structure input model are specific instances of block-encoded matrices \cite[Sections 2.2 and 2.4]{CGJ19}, \cite[Section 5.2]{GSLW2019}. Here we formally define the framework of block-encoding and also express the sparse access model as well as the quantum data structure model as block-encodings. We refer the reader to \cite{CGJ19, GSLW2019} for proofs.  

\begin{definition}[Block Encoding, restated from \cite{GSLW2019}, Definition 24]
    \label{def:block_encoding}
Suppose that $A$ is an $s$-qubit operator, $\alpha, \varepsilon \in \mathbb{R}^+$ and $a \in \nn $, then we say that the $(s + a)$-qubit unitary $U_A$ is an $(\alpha, a, \varepsilon)$-block-encoding of $A$, if

\begin{equation}
 \norm{ A - \alpha (\bra{0}^{\otimes a} \otimes I)U_A(\ket{0}^{\otimes a} \otimes I) } \leq \varepsilon .
\end{equation}

\end{definition}
Let $\ket{\psi}$ be an $s$-qubit quantum state. Then applying $U_A$ to $\ket{\psi}\ket{0}^{\otimes a}$ outputs a quantum state that is $\frac{\varepsilon}{\alpha}$-close to 
$$
\dfrac{A}{\alpha}\ket{\psi}\ket{0}^{\otimes a}+\ket{\Phi^{\perp}},
$$
where $\left(I_s\otimes \ketbra{0}^{\otimes a}\right)\ket{\Phi^{\perp}}=0$. Equivalently, suppose $\tilde{A} := \alpha\paren{\bra{0}^{\otimes a} \otimes I_s} U_A \paren{\ket{0}^{\otimes a} \otimes I_s}$ denotes the actual matrix that is block-encoded into $U_A$, then $\norm{A-\tilde{A}}\leq\varepsilon$. 

 In the subsequent sections, we provide an outline of the quantum data structure model and the sparse access model which are particular instances of the block encoding framework.

%-----------------------------------------------------------------------------
\subsubsection{Quantum Data Structure Input Model}
\label{sssec:q_data_structure_model}
%-----------------------------------------------------------------------------

Kerenidis and Prakash introduced a quantum accessible classical data structure which has proven to be quite useful for designing several quantum algorithms for linear algebra \cite{kerenidis2016quantum}. The classical data structure stores entries of matrices or vectors and can be queried in superposition using a QRAM (quantum random access memory). We directly state the following theorem from therein.

\begin{theorem}[Implementing quantum operators using an efficient data structure, \cite{prakash2014quantum, kerenidis2016quantum}]
    \label{thm:qrom_data_structure}
    Let $ A \in \rr^{N \times d} $, and $ w $ be the number of non-zero entries of $A $.
    Then there exists a data structure of size $ \order{ w \log^2\left( dN \right)} $ that given the matrix elements $ ( i, j, a_{ij} ) $, stores them at a cost of $ \order {\logp{dN}} $ operations per element.
    Once all the non-zero entries of $A$ have been stored in the data structure, there exist quantum algorithms that are $ \varepsilon $-approximations to the following maps:
    \begin{equation*}
        U : \ket{i}\ket{0} \mapsto \frac{1}{\norm{A_{i, \cdot}}} \sum_{j=1}^{d}a_{i, j} \ket{i,j} = \ket{\psi_i},
    \end{equation*}
    \begin{equation*}
        V : \ket{0}\ket{j} \mapsto \frac{1}{\norm{A}_F} \sum_{i=1}^{N} \norm{A_i,.} \ket{i,j}=\ket{\phi_j}
    \end{equation*}
    where $\norm{A_{i, \cdot}}$ is the norm of the $i^{\mathrm{th}}$ row of $A$ and the second register of $\ket{\psi_i}$ is the quantum state corresponding to the $i^{\mathrm{th}}$ row of $A$. These operations can be applied at a cost of $ \order{ \mathrm{polylog}(Nd / \varepsilon) } $.
\end{theorem}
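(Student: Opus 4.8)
The plan is to realize both $U$ and $V$ via the Kerenidis--Prakash tree data structure together with a sequence of QRAM-controlled single-qubit rotations, i.e.\ the standard amplitude-encoding construction. First I would describe the classical structure. For each row $i$ of $A$ I maintain a binary tree $\mathcal{T}_i$ of depth $\lceil \logp{d} \rceil$ whose leaf indexed by $j$ stores the pair $(a_{i,j}^2, \mathrm{sgn}(a_{i,j}))$ and whose every internal node stores the sum of the squared magnitudes of the leaves in its subtree; in particular the root of $\mathcal{T}_i$ stores $\norm{A_{i,\cdot}}^2$. On top of these I keep one additional binary tree $\mathcal{T}$ of depth $\lceil \logp{N} \rceil$ whose leaf $i$ stores $\norm{A_{i,\cdot}}^2$ and whose root stores $\norm{A}_F^2 = \sum_i \norm{A_{i,\cdot}}^2$. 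Inserting a single entry $(i,j,a_{ij})$ touches only the root-to-leaf path in $\mathcal{T}_i$ and in $\mathcal{T}$, i.e.\ $\order{\logp{dN}}$ nodes, each updated by one addition, which gives the claimed $\order{\logp{dN}}$ per-element insertion cost; since each of the $w$ non-zero entries contributes $\order{\logp{dN}}$ nodes and each stored value occupies $\order{\logp{dN}}$ bits, the total size is $\order{w \logpp{2}{dN}}$.

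Next I would implement $U$ by walking down $\mathcal{T}_i$ with controlled rotations. Conditioned on the row register holding $\ket{i}$, I start at the root and, at each internal node $v$ with stored value $t_v$ and children values $t_\ell, t_r$, apply a single-qubit rotation on a fresh qubit that sends $\ket{0}$ to $\sqrt{t_\ell/t_v}\,\ket{0} + \sqrt{t_r/t_v}\,\ket{1}$, the required node values being fetched in superposition by a QRAM query into the data structure. After $\lceil \logp{d} \rceil$ levels the amplitude on leaf $j$ is $\abs{a_{i,j}}/\norm{A_{i,\cdot}}$, and a final sign (or phase) correction read from the leaves produces exactly $\ket{\psi_i} = \frac{1}{\norm{A_{i,\cdot}}} \sum_j a_{i,j} \ket{i,j}$. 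The map $V$ is obtained identically but descending $\mathcal{T}$: conditioned on the column register $\ket{j}$, the rotations place amplitude $\norm{A_{i,\cdot}}/\norm{A}_F$ on leaf $i$, yielding $\ket{\phi_j}$. Each map therefore uses $\order{\logp{d}}$ (resp.\ $\order{\logp{N}}$) rotations, each accompanied by a $\order{\polylog{Nd}}$ QRAM query, so the per-application cost is $\order{\polylog{Nd}}$.

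The step I expect to be the main obstacle is the error analysis establishing that the implemented maps are genuine $\varepsilon$-approximations at cost only $\order{\polylog{Nd/\varepsilon}}$. Two sources of error must be controlled simultaneously: the finite precision to which each node value (hence each rotation angle) is stored, and the finite-precision synthesis of the rotation gates themselves. I would fix a per-level precision and track how the resulting angle errors propagate through the $\order{\logp{d}}$ (resp.\ $\order{\logp{N}}$) composed rotations, bounding the final state error by the sum of the per-level perturbations. A clean way to do this is to observe that a rotation-angle error $\eta$ at a single node perturbs the output in trace distance by only $\order{\eta}$, so it suffices to store each value to $\order{\logp{Nd/\varepsilon}}$ bits; this simultaneously keeps the data-structure size within the stated bound and pushes the per-query and gate-synthesis cost to $\order{\polylog{Nd/\varepsilon}}$. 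The remaining care is purely bookkeeping: verifying that the normalization at each node is well defined (no division by a vanishing subtree-sum along an occupied path) and that the QRAM genuinely supports querying these tree values in superposition.
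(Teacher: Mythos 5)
Your proposal is correct and is essentially the canonical Kerenidis--Prakash construction: the paper itself states \autoref{thm:qrom_data_structure} as an imported result from \cite{prakash2014quantum, kerenidis2016quantum} without reproving it, and your binary-tree-of-squared-amplitudes structure with QRAM-controlled rotations, sign correction at the leaves, and $\order{\logp{Nd/\varepsilon}}$-bit precision analysis is exactly the argument given in those references. The only cosmetic remark is that for $V$ the rotations need not be conditioned on the column register $\ket{j}$ at all, since the tree $\mathcal{T}$ over row norms is independent of $j$; this does not affect correctness.
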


It was identified in Ref.~\cite{CGJ19} that if a matrix $A$ is stored in this quantum accessible data structure, there exists an efficiently implementable block-encoding of $A$. We restate their result here.
\begin{lemma}[Implementing block encodings from quantum data structures,~\cite{CGJ19}, Theorem 4]
    \label{lem:qrom_be_constr}
     Let the entries of the matrix $ A \in \mathbb{R}^{N\times d}$ be stored in a quantum accessible data structure, then there exist unitaries $  U_R, U_L $ that can be implemented at a cost of $ \order{ \mathrm{polylog}(dN / \varepsilon) } $ such that $ U_R^{\dagger} U_L $ is a $ ( \norm{A}_F, \lceil \logp{d + N} \rceil, \varepsilon ) $-block-encoding of $ A $.
\end{lemma}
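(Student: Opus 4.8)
The plan is to construct the two unitaries directly from the quantum-accessible data structure of \autoref{thm:qrom_data_structure} and then verify by a single overlap computation that their product has the right block. Concretely, I would set $U_L$ equal to the row-preparation unitary $U$ and $U_R$ equal to the column-preparation unitary $V$ supplied by \autoref{thm:qrom_data_structure}, so that $U_L\ket{i}\ket{0} = \ket{\psi_i} = \frac{1}{\norm{A_{i,\cdot}}}\sum_j a_{ij}\ket{i}\ket{j}$ and $U_R\ket{0}\ket{j} = \ket{\phi_j} = \frac{1}{\norm{A}_F}\sum_i \norm{A_{i,\cdot}}\ket{i}\ket{j}$. Since \autoref{thm:qrom_data_structure} implements each of these maps at cost $\polylog{dN/\varepsilon}$, the composite $U_R^\dagger U_L$ is implementable in $\order{\polylog{dN/\varepsilon}}$, which already gives the claimed cost bound.

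The heart of the argument is the overlap of the prepared states. Expanding $\ket{\psi_i}$ and $\ket{\phi_j}$ in the computational basis and using orthonormality, I would show
\begin{equation*}
\bra{0}\bra{j}\, U_R^\dagger U_L\, \ket{i}\ket{0} = \braket{\phi_j}{\psi_i} = \frac{1}{\norm{A}_F\,\norm{A_{i,\cdot}}}\sum_{l,k}\norm{A_{l,\cdot}}\,a_{ik}\,\delta_{li}\delta_{jk} = \frac{a_{ij}}{\norm{A}_F}.
\end{equation*}
This exhibits $A/\norm{A}_F$ as exactly the block of $U_R^\dagger U_L$ obtained by fixing the auxiliary register to $\ket{0}$, so with subnormalization $\alpha = \norm{A}_F$ the product is, in the idealized (exact) case, a perfect block-encoding of $A$. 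The auxiliary register carries the $(d+N)$-dimensional index used to dilate the rectangular matrix into a square operator, which accounts for the $a = \lceil\log(d+N)\rceil$ ancilla qubits.

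For robustness I would track the fact that \autoref{thm:qrom_data_structure} provides only $\varepsilon'$-approximate unitaries $\tilde U_L,\tilde U_R$. Using that these are unitary, $\norm{\tilde U_R^\dagger \tilde U_L - U_R^\dagger U_L} \le \norm{\tilde U_L - U_L} + \norm{\tilde U_R - U_R} \le 2\varepsilon'$, and the induced error on the $\alpha$-rescaled block is at most $\norm{A}_F\cdot 2\varepsilon'$. Choosing the internal precision $\varepsilon' = \varepsilon/\paren{2\norm{A}_F}$ then yields an $(\norm{A}_F, \lceil\log(d+N)\rceil, \varepsilon)$-block-encoding, and since the cost of \autoref{thm:qrom_data_structure} scales only polylogarithmically in $1/\varepsilon'$, this rescaling is absorbed into the final $\order{\polylog{dN/\varepsilon}}$ bound.

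The routine parts are the overlap identity and the error propagation. The one genuinely fiddly point is the index bookkeeping for a rectangular $A$: one must fix the dilation so that the extracted block is $A$ and not $A^T$, and confirm that the register projected onto $\ket{0}$ on both sides is consistently the auxiliary register of dimension $d+N$. The other place to be careful is the subnormalization-versus-precision scaling, i.e.\ the factor $\norm{A}_F$ relating the block-encoding error to the operator error of $U_R^\dagger U_L$, though this only costs a polylogarithmic factor in the running time.
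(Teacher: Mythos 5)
Your proposal is correct and follows essentially the same route as the paper: take the row- and column-preparation unitaries $U,V$ from \autoref{thm:qrom_data_structure}, compute $\braket{\psi_i|\phi_j}=a_{ij}/\norm{A}_F$ to identify the block, and inherit the $\order{\polylog{dN/\varepsilon}}$ cost (the paper handles your flagged register-bookkeeping issue by defining $U_L = V\cdot\texttt{SWAP}$ so both unitaries act on inputs of the form $\ket{\cdot}\ket{0^s}$). Your explicit rescaling of the internal precision by $\norm{A}_F$ is in fact slightly more careful than the paper's one-line treatment of the approximation error.
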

\begin{proof}
The unitaries $U_R$ and $U_L$ can be implemented via $U$ and $V$ in the previous lemma. Let $U_R=U$ and $U_L=V.\texttt{SWAP}$. Then for $s=\lceil\log (d + N) \rceil$ we have
$$
U_R:~\ket{i}\ket{0^s}\rightarrow \ket{\psi_i}, 
$$
and 
$$
U_L:~\ket{j}\ket{0^s}\rightarrow \ket{\phi_j},
$$
So we have that the top left block of $U^{\dag}_R U_L$ is 
$$
 \sum_{i=1}^{N}\sum_{j=1}^{d}\braket{\psi_i|\phi_j}\ket{i,0}\bra{j,0}
$$
Now
\begin{align*}
\braket{\psi_i|\phi_j}&=\sum_{k=1}^d\sum_{\ell=1}^{N} \dfrac{a_{ik}}{\norm{A_{i, \cdot}}}\cdot\dfrac{\norm{A_\ell}}{\norm{A}_F}\underbrace{\braket{i,k|l,j}}_{:=\delta_{i,l}.\delta_{k,j}}\\
                      &=\dfrac{a_{ij}}{\norm{A}_F}.                      
\end{align*}
Moreover since only $\varepsilon$-approximations of $U$ and $V$ can be implemented we have that $U^{\dag}_R U_L$ is a $(\norm{A}_F,\lceil \log(n+d)\rceil,\varepsilon)$ block encoding of $A$ implementable with the same cost as $U$ and $V$.
\end{proof}

In Ref.~\cite{KP2020_iterative_quantum_gradient} argued that in certain scenarios, storing the entries of $ A^{(p)}, (A^{1-p})^{\dagger} $ might be useful as compared to storing $A$, for some $ p \in [0, 1]$. In such cases, the quantum data structure is a $(\mu_p, \lceil\log(N+d)\rceil,\varepsilon)$ block encoding of $A$, where $\mu_p(A)=\sqrt{s_{2p}(A).s_{2(1-p)}(A^T)}$ such that $s_p(A) := \max_{j} \norm{A_{j, \cdot}}_q^q$. Throughout the work, whenever our results are expressed in the quantum data structure input model, we shall state our complexity in terms of $\mu_A$. When the entries of $A$ are directly stored in the data structure, $\mu_A=\norm{A}_F$. Although, we will not state it explicitly each time, our results also hold when fractional powers of $A$ are stored in the database and simply substituting $\mu_A=\mu_p(A)$, yields the required complexity.

%-----------------------------------------------------------------------------
\subsubsection{Sparse Access Input Model}
\label{sssec:sparse_access_model}
%-----------------------------------------------------------------------------

The sparse access input model considers that the input matrix $ A \in \rr^{N \times d} $ has row sparsity $ s_r $ and column sparsity $ s_c $. Furthermore, it assumes that the entries of $A$ can be queried via an oracle as

\begin{equation*}
    O_A : \ket{i} \ket{j} \ket{0}^{\otimes b} \mapsto \ket{i} \ket{j} \ket{a_{ij}} \quad \forall i \in [N], j \in [d],
\end{equation*}

and the indices of the non-zero elements of each row and column can be queried via the following oracles:

\begin{align*}
    &O_r : \ket{i} \ket{j} \mapsto \ket{i} \ket{r_{ij}} \quad \forall i \in [N], k \in [s_r],\\
    &O_c : \ket{i} \ket{j} \mapsto \ket{c_{ij}} \ket{j} \quad \forall i \in [d], k \in [s_c]
\end{align*}
where $ r_{ij} $ is the $ j^{\mathrm{th}} $ non-zero entry of the $ i^{\mathrm{th}} $ row of $ A $ and $c_{ij}$ is the $i^{\mathrm{th}}$ non-zero entry of the $j^{\mathrm{th}}$ column of $A$. Gily\'{e}n et al.~\cite{GSLW2019} showed that a block encoding of a sparse $ A $ can be efficiently prepared by using these three oracles. We restate their lemma below.

\begin{lemma}[Constructing a block-encoding from sparse-access to matrices, \cite{GSLW2019full}, Lemma 48]
    \label{lem:sparse_access_be_constr}
    Let $ A \in \rr^{N \times d} $ be an $ s_r, s_c $ row, column sparse matrix given as a sparse access input. Then for all $ \varepsilon \in (0, 1) $, we can implement a $ (\sqrt{s_c s_r}, \mathrm{polylog}(Nd / \varepsilon), \varepsilon) $-block-encoding of $ A $ with $ \order{ 1 } $ queries to $O_r, O_c, O_A$ and $ \mathrm{polylog}( Nd / \varepsilon ) $ elementary quantum gates.
\end{lemma}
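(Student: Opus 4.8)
The plan is to realise the block-encoding as a product $U_A = U_L^\dagger U_R$ of two state-preparation unitaries and to reduce the entire claim to a single overlap computation. Concretely, I would use a row register $R$, a column register $C$, and a flag qubit $F$ (padding $A$ to a square matrix of dimension $\max(N,d)$ so that $R$ and $C$ may be identified). The goal is to build $U_L,U_R$ so that $U_L$ sends $\ket{i}_R\ket{0}_C\ket{0}_F$ to a state $\ket{\psi_i}$ and $U_R$ sends $\ket{0}_R\ket{j}_C\ket{0}_F$ to a state $\ket{\varphi_j}$, with the key property $\langle \psi_i | \varphi_j\rangle = a_{ij}/\sqrt{s_c s_r}$. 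Since the registers set to $\ket{0}$ are exactly the ancillas of \autoref{def:block_encoding}, this overlap is precisely the $(i,j)$ entry of $\paren{\bra{0}\otimes I}U_A\paren{\ket{0}\otimes I}$, so establishing it shows the encoded block is $A/\sqrt{s_c s_r}$, fixing $\alpha=\sqrt{s_c s_r}$.

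The two states I would prepare are
\[ \ket{\psi_i} = \ket{i}_R \otimes \frac{1}{\sqrt{s_r}}\sum_{k=1}^{s_r}\ket{r_{ik}}_C \otimes \ket{0}_F, \]
\[ \ket{\varphi_j} = \frac{1}{\sqrt{s_c}}\sum_{k=1}^{s_c}\ket{c_{kj}}_R \otimes \ket{j}_C \otimes \paren{a_{c_{kj},j}\ket{0}_F + \sqrt{1-\abs{a_{c_{kj},j}}^2}\ket{1}_F}. \]
Building $U_L$ requires only creating a uniform superposition of size $s_r$ on $C$ followed by one call to $O_r$; building $U_R$ requires a uniform superposition of size $s_c$ on $R$, one call to $O_c$ to load the row indices $c_{kj}$, a query to $O_A$ into a scratch register to obtain $a_{c_{kj},j}$, a single-qubit rotation on $F$ controlled on that value, and a final uncomputation of $O_A$. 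The flag qubit guarantees $\norm{\ket{\varphi_j}}=1$ whenever $\abs{a_{ij}}\le 1$, so both maps are genuine isometries and extend to unitaries on the shared register; the cost is $\order{1}$ queries to $O_r,O_c,O_A$. Taking the inner product, the flag forces the $\ket{0}_F$ branch, $\langle i | c_{kj}\rangle_R$ selects the unique $k$ with $c_{kj}=i$ and $\langle r_{ik'}|j\rangle_C$ the unique $k'$ with $r_{ik'}=j$, leaving exactly $a_{ij}/\sqrt{s_c s_r}$ (and $0$ when $a_{ij}=0$), as required.

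The remaining work is the approximation analysis, which is where I expect the only genuine difficulty. The value $a_{c_{kj},j}$ is returned by $O_A$ as a finite $b$-bit string, and the controlled rotation that imprints it on $F$ (together with its phase, for complex entries) can only be synthesised to finite precision; both effects perturb each loaded amplitude by some $\eta$. Letting $\widetilde A$ be the matrix actually encoded, the error matrix $E = A - \widetilde A$ is again $s_r,s_c$-sparse with entries bounded by $\eta$, so I would bound its spectral norm by $\norm{E}\le\sqrt{\norm{E}_1\norm{E}_\infty}\le\sqrt{s_c s_r}\,\eta$. Demanding $\norm{E}\le\varepsilon$ forces $\eta\le\varepsilon/\sqrt{s_c s_r}$, i.e. $b=\order{\log(s_c s_r/\varepsilon)}=\order{\log(Nd/\varepsilon)}$ bits, and rotation synthesis to this precision costs $\polylog(Nd/\varepsilon)$ gates. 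Counting the ancillas — the index register playing the role of ancilla, the flag, and the $b$-bit scratch register — yields $a=\polylog(Nd/\varepsilon)$ additional qubits, completing all three block-encoding parameters. The main obstacle is thus not the overlap algebra but the careful bookkeeping of (i) complex entries via a combined magnitude-and-phase rotation, (ii) propagating the per-entry error $\eta$ to a spectral-norm guarantee through the sparsity bound, and (iii) extending $U_L,U_R$ to honestly implementable unitaries on a common register so that $U_L^\dagger U_R$ is well defined.
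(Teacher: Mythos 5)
The paper does not actually prove this lemma---it is imported verbatim from Lemma~48 of \cite{GSLW2019full}---so there is no in-paper proof to compare against; your construction is, however, essentially the standard one from that source and mirrors the paper's own proof of the analogous data-structure result (\autoref{lem:qrom_be_constr}): a product $U_L^\dagger U_R$ of two state-preparation isometries whose pairwise overlaps reproduce $a_{ij}/\sqrt{s_r s_c}$, followed by a sparsity-pattern bound $\norm{E}\le\sqrt{\norm{E}_1\norm{E}_\infty}\le\sqrt{s_rs_c}\,\eta$ on the discretization error. The only cosmetic deviation is that you load the full amplitude $a_{ij}$ on the column side via a flag qubit, whereas \cite{GSLW2019full} splits it symmetrically as $\sqrt{a_{ij}}$ on each side; both yield $\alpha=\sqrt{s_rs_c}$ (under the implicit assumption $\abs{a_{ij}}\le 1$), and your closing caveat about putting $U_L$ and $U_R$ on a common register (a \texttt{SWAP}, exactly as in \autoref{lem:qrom_be_constr}) is the right remaining bookkeeping.
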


Throughout the paper, we shall assume input matrices are accessible via approximate block-encodings. This also allows us to write down the complexities of our quantum algorithms in this general framework. Additionally, we state the complexities in both the sparse access input model as well as the quantum accessible data structure input model as particular cases. 

%-----------------------------------------------------------------------------
\subsection{Quantum Singular Value Transformation}
\label{ssec:qsvt}
%-----------------------------------------------------------------------------

In a seminal work, Gily\'{e}n et al. presented a framework to apply an arbitrary polynomial function to the singular values of a matrix, known as Quantum Singular Value Transformation (QSVT) \cite{GSLW2019}. QSVT is quite general: many quantum algorithms can be recast to this framework, and for several problems, better quantum algorithms can be obtained \cite{GSLW2019, Grand_Uni_2021}. In particular, QSVT has been extremely useful in obtaining optimal quantum algorithms for linear algebra. For instance, using QSVT, given the block-encoding of a matrix $A$, one could obtain $A^{-c}$ with $c\in [0,\infty)$ with optimal complexity and by using fewer additional qubits than prior art. This section briefly describes this framework, which is a generalization of Quantum Signal Processing (QSP) \cite[Section 2]{LC16}, \cite[Theorem 2]{LC16b}, \cite{LYC16}. 
The reader may refer to \cite{Grand_Uni_2021} for a more pedagogical overview of these techniques.
 
Let us begin by discussing the framework of Quantum Signal Processing. QSP is a quantum algorithm to apply a $ d $-degree bounded polynomial transformation with parity $ d \mod 2 $ to an arbitrary quantum subsystem, using a quantum circuit $ U_{\Phi} $ consisting of only controlled single qubit rotations. This is achieved by interleaving a \textit{signal rotation operator} $W$ (which is an $x$-rotation by some fixed angle $ \theta $) and a \textit{signal processing operator} $ S_{\phi} $ (which is a $z$-rotation by a variable angle $ \phi \in [0, 2 \pi] $). In this formulation, the signal rotation operator is defined as

\begin{equation}
    \label{eqn:signal_rotation_operator}
    W(x) := \begin{pmatrix}
        x & i \sqrt{1 - x^2} \\
        i \sqrt{1 - x^2} & x
    \end{pmatrix},
\end{equation}

which is an $x$-rotation by angle $ \theta = -2 \arccos(x) $, and the signal processing operator is defined as

\begin{equation}
    \label{eqn:signal_processing_operator}
    S_{\phi} := e^{i \phi Z},
\end{equation}

which is a $z$-rotation by an angle $ - 2 \phi $. Interestingly, sandwiching them together for some $ \Phi := ( \phi_0, \phi_1, \ldots \phi_d ) \in \rr^{d + 1}$, as shown in \autoref{eqn:qsp_matrix}, gives us a matrix whose elements are polynomial transformations of $ x $,

\begin{align}
    \label{eqn:qsp_matrix}
    U_{\Phi} &:= e^{i \phi_0 Z} \prod_{j = 1}^{j= d} \paren{  W(x) e^{ i \phi_j Z } } \\
             &= \begin{pmatrix}
        P(x) & i Q(x) \sqrt{1 - x^2} \\
        i Q^*(x) \sqrt{1 - x^2} & P^*(x)
    \end{pmatrix},
\end{align}

such that

\begin{enumerate}
    \item $ \deg P \leq d;\ \deg Q \leq d - 1 $,
    \item $ P(x) $ has a parity $d \mod 2 $,
    \item $ | P(x) |^2 + (1 - x^2) | Q(x) |^2 = 1 \quad \forall x \in [-1, 1] $.
\end{enumerate}

Following the application of the quantum circuit $ U_{\Phi} $ for an appropriate $ \Phi $, one can project into the top left block of $ U_{\Phi} $ to recover the polynomial $ \bra{0} U_{\Phi} \ket{0} = P(x) $. Projecting to other basis allows the ability to perform more interesting polynomial transformations, which can be linear combinations of $ P(x), Q(x) $, and their complex conjugates. For example, projecting to $ \{ \ket{+}, \ket{-} \} $ basis gives us

\begin{equation}
    \label{eqn:qsp_hadamard_basis}
    \bra{+} U_{\Phi} \ket{+} = \real (P(x)) + i \real (Q(x)) \sqrt{1 - x^2} .
\end{equation}

Quantum Signal Processing can be formally stated as follows. 

\begin{theorem}[Quantum Signal Processing, Corollary 8 from \cite{GSLW2019}]
\label{thm:qsp-main-theorem}
    Let $P \in \mathbb{C}[x]$ be a polynomial of degree $d \ge 2$, such that 
    \begin{itemize}
        \item $P$ has parity-$(d\mod 2)$,
        \item $\forall x \in [-1, 1] : \abs{P(x)} \leq 1$,
        \item $\forall x \in (-\infty, -1] \cup [1, \infty) : \abs{P(x)} \geq 1$,
        \item if $d$ is even, then $\forall x \in \rr : P(ix) P^*(ix) \geq 1$. 
    \end{itemize}
    Then there exists a $\Phi \in \rr^d$ such that 
    \begin{equation}
        \prod_{j=1}^{d} \paren{e^{i \phi_j \sigma_z} W(x)} = \begin{pmatrix}
            P(x) & \cdot \\ 
            \cdot & \cdot 
        \end{pmatrix}.
    \end{equation}
    % Moreover for $x \in \{-1, 1\}$ we have that $P(x) = x^d \prod_{j=1}^{d} e^{i \phi_j}$, and for $d$ even $P(0) = e^{-i \sum_{j = 1}^{d}(-1)^j \phi_j}$.
\end{theorem}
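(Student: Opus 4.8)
The plan is to establish the existence of the phase sequence $\Phi$ by reducing the claim to the QSP matrix identity of \autoref{eqn:qsp_matrix}. The key observation is that the product $\prod_{j=1}^{d}\paren{e^{i\phi_j\sigma_z}W(x)}$ has exactly the structure displayed in \autoref{eqn:qsp_matrix}, so its top-left entry is a polynomial $P(x)$ of degree at most $d$ with parity $d\bmod 2$, and together with the associated $Q(x)$ it satisfies the normalization constraint $\abs{P(x)}^2 + (1-x^2)\abs{Q(x)}^2 = 1$ on $[-1,1]$. The theorem is therefore an \emph{achievability} statement: given a target $P$ meeting the four hypotheses, I must produce phases $\phi_j$ realizing it as the $(0,0)$ entry. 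First I would make precise the correspondence between admissible phase sequences and the set of realizable polynomial pairs $(P,Q)$.

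The main technical work is the converse/completion direction: given $P$ satisfying the hypotheses, construct a companion polynomial $Q$ so that the pair is QSP-realizable. Concretely, I would argue that the hypotheses on $P$ (boundedness by $1$ on $[-1,1]$, growth outside the interval, and the positivity condition $P(ix)P^*(ix)\ge 1$ in the even case) are exactly the conditions guaranteeing that $1 - \abs{P(x)}^2$ admits a factorization of the form $(1-x^2)\abs{Q(x)}^2$ with $Q$ a polynomial of degree at most $d-1$ and correct parity. This is a Fejér–Riesz–type spectral factorization argument: one treats $1 - \abs{P(x)}^2$ as a nonnegative (on the relevant domain) Laurent/trigonometric polynomial after the substitution $x = \cos\theta$, and extracts a square root $Q$ among polynomials. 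The sign/positivity hypotheses are precisely what ensure the quantity being factored is nonnegative everywhere it needs to be, so that a real factorization exists.

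Having produced a valid pair $(P,Q)$, I would then recover the phases by an inductive peeling argument on the degree $d$. The idea is that the leading-order behavior of $(P,Q)$ determines the last phase $\phi_d$ (equivalently the outermost factor $e^{i\phi_d\sigma_z}$): one shows that right-multiplying $U_\Phi$ by $W(x)^{-1}e^{-i\phi_d\sigma_z}$ for the correct choice of $\phi_d$ yields a matrix of the same QSP form but with the degree of the top-left polynomial reduced by one, while preserving parity and the normalization identity. Iterating this reduction $d$ times strips off all the phases and terminates in a degree-zero (constant) unitary, which fixes $\phi_0$. Reading the extracted phases in order gives the desired $\Phi\in\rr^d$.

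The hard part will be the spectral factorization step: verifying that the four hypotheses on $P$ are \emph{sufficient} (not merely necessary) for the existence of a real polynomial $Q$ of the right degree and parity completing $P$ to a QSP pair. In particular, the parity bookkeeping and the separate even-degree positivity condition $P(ix)P^*(ix)\ge 1$ require care, since they are exactly the obstructions that distinguish the general complex-polynomial case from the achievable one. The inductive phase-extraction step, by contrast, is essentially algebraic and routine once the companion $Q$ is in hand; its only subtlety is checking that each degree-reduction preserves all the structural invariants. Since this is a restatement of Corollary 8 of \cite{GSLW2019}, I would invoke that result for the factorization core and present the reduction argument to make the statement self-contained.
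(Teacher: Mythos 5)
The paper does not actually prove this statement: it is imported verbatim as Corollary~8 of the cited QSVT paper, so there is no in-paper argument to compare against. Your outline --- complete $P$ to a QSP-realizable pair $(P,Q)$ by a Fej\'er--Riesz-type factorization of $1-P(x)P^*(x)$, using hypotheses (2)--(4) to guarantee nonnegativity and divisibility by $(1-x^2)$, then extract the phases by an inductive degree-peeling argument --- is exactly the structure of the proof in the cited source, and you correctly identify the completion lemma as the step carrying all the content of the growth and positivity hypotheses. The one piece of bookkeeping worth flagging is that the product in the statement contains $d$ phases and $d$ factors of $W(x)$, whereas the generic QSP form in Eq.~\eqref{eqn:qsp_matrix} carries an extra leading rotation $e^{i\phi_0 Z}$; your induction must therefore terminate with that residual rotation absorbed (a trailing or leading $e^{i\phi\sigma_z}$ only multiplies $P$ by a unimodular constant, which preserves all four hypotheses, so this is harmless but should be stated). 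Since you explicitly defer the factorization core to the cited result, your proposal is in effect a fleshed-out version of what the paper does by pure citation, and I see no gap in the structure you describe.
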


{\UPDATED
Thus, QSP allows us to implement any polynomial $P(x)$ that satisfies the aforementioned requirements. Throughout this article, we refer to any such polynomial $P(x)$ as a \textit{QSP polynomial}. Quantum Singular Value Transformation is a natural generalization of this procedure. It allows us to apply a QSP polynomial transformation to each singular value of an arbitrary block of a unitary matrix. In addition to this generalization, QSVT relies on the observation that several functions can be well-approximated by QSP polynomials. Thus, through QSVT one can transform each singular value of a block-encoded matrix by any function that can be approximated by a QSP polynomial. Since several linear algebra problems boil down to applying specific transformations to the singular values of a matrix, QSVT is particularly useful for developing fast algorithms for quantum linear algebra. Next, we introduce QSVT formally via the following theorem.
}

\begin{theorem}[Quantum Singular Value Transformation \cite{GSLW2019full}, Section~3.2]
\UPDATED
\label{thm:qsvt-def}
Suppose $A\in \mathbb{R}^{N\times d}$ is a matrix with singular value decomposition $A=\sum_{j=1}^{d_{\min}}\sigma_j\ket{v_j}\bra{w_j}$, where $d_{\min}=\min\{N,d\}$ and $\ket{v_j}$ $(\ket{w_j})$ is the left (right) singular vector with singular value $\sigma_j$. Furthermore, let $U_A$ be a unitary such that $A=\widetilde{\Pi}U_A\Pi$, where $\Pi$ and $\widetilde{\Pi}$ are orthogonal projectors. Then, for any QSP polynomial $P(x)$ of degree $n$, there exists a vector $\Phi=\left(\phi_1,\phi_2,\cdots\phi_n\right)\in\mathbb{R}^n$ and a unitary
\begin{equation}
    \label{eqn:qsvt_sequence}
    U_{\Phi} = \begin{cases}
        e^{i \phi_1 ( 2 \widetilde{\Pi} - I )} U_A \left[ \prod_{k = 1}^{(n - 1) / 2} e^{i \phi_{2k} ( 2 \widetilde{\Pi} - I )} U_A^{\dagger} e^{i \phi_{2k+1} ( 2 \widetilde{\Pi} - I )} U_A \right], & n \text{ is odd } \\
        \left[ \prod_{k = 1}^{n/ 2} e^{i \phi_{2k-1} ( 2 \widetilde{\Pi} - I )} U_A^{\dagger} e^{i \phi_{2k} ( 2 \widetilde{\Pi} - I )} U_A \right], & n \text{ is even},
    \end{cases}
\end{equation}
such that
\begin{equation}
    \label{eqn:qsvt_sequence_projected}
    P^{SV} (A) = \begin{cases}
        \widetilde{\Pi} U_{\Phi} \Pi, & n \text{ is odd} \\
        \Pi U_{\Phi} \Pi, & n \text{ is even},
    \end{cases}
\end{equation}
where $ P^{SV} (A) $ is the polynomial transformation of the matrix $ A $ defined as
\begin{equation}
    \label{eqn:poly_sv_transformation_of_matrix}
    P^{SV} (A) := \begin{cases}
        \sum_j P(\sigma_j) \ket{v_j} \bra{w_j}, & P \text{ is odd} \\
        \sum_j P(\sigma_j) \ket{w_j} \bra{w_j}, & P \text{ is even}.
    \end{cases}
\end{equation}
\end{theorem}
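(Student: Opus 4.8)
The plan is to reduce the singular-value statement to the scalar Quantum Signal Processing result of \autoref{thm:qsp-main-theorem} by exhibiting a family of two-dimensional subspaces that are jointly invariant under $U_A$, $U_A^\dagger$, and the projector-controlled phase rotations $e^{i\phi(2\widetilde\Pi - I)}$ and $e^{i\phi(2\Pi-I)}$. The central idea (``qubitization'') is that on each such subspace the entire alternating product collapses to a single-qubit QSP circuit acting on the angle $\arccos\sigma_j$, so that the scalar polynomial identity lifts verbatim to the matrix level and the contributions sum to $P^{SV}(A)$.

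First I would set up the invariant subspaces. Writing $\ket{w_j}\in\mathrm{range}(\Pi)$ and $\ket{v_j}\in\mathrm{range}(\widetilde\Pi)$ for the right and left singular vectors, the hypothesis $A=\widetilde\Pi U_A\Pi$ together with the automatic bound $\norm{A}\le 1$ (as $U_A$ is unitary and $\Pi,\widetilde\Pi$ are projectors) yields $U_A\ket{w_j}=\sigma_j\ket{v_j}+\sqrt{1-\sigma_j^2}\,\ket{v_j^\perp}$ for a unit vector $\ket{v_j^\perp}\perp\mathrm{range}(\widetilde\Pi)$, and symmetrically $U_A^\dagger\ket{v_j}=\sigma_j\ket{w_j}+\sqrt{1-\sigma_j^2}\,\ket{w_j^\perp}$. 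I would then verify that the spaces $\mathcal{H}_j:=\mathrm{span}\{\ket{w_j},\ket{w_j^\perp}\}$ and their images $\widetilde{\mathcal{H}}_j:=\mathrm{span}\{\ket{v_j},\ket{v_j^\perp}\}$ are mutually orthogonal across distinct $j$ and are closed under $U_A$ and $U_A^\dagger$; in the ordered bases $\paren{\ket{w_j},\ket{w_j^\perp}}$ and $\paren{\ket{v_j},\ket{v_j^\perp}}$ the restriction of $U_A$ is the $2\times2$ rotation matching the signal-rotation operator $W(\sigma_j)$ of \autoref{eqn:signal_rotation_operator} up to a fixed local basis change.

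Next I would track the phase operators. Since $2\widetilde\Pi-I$ acts as $+1$ on $\ket{v_j}$ and $-1$ on $\ket{v_j^\perp}$ (and likewise $2\Pi-I$ on $\mathcal{H}_j$), each factor $e^{i\phi_k(2\widetilde\Pi-I)}$ reduces on the invariant subspace to $e^{i\phi_k Z}=S_{\phi_k}$, exactly the signal-processing operator of \autoref{eqn:signal_processing_operator}. Interleaving these with the rotations $W(\sigma_j)$ reproduces precisely the QSP product $\prod_k\paren{e^{i\phi_k Z}W(\sigma_j)}$, whose relevant block entry is $P(\sigma_j)$ by \autoref{thm:qsp-main-theorem}. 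Projecting with $\widetilde\Pi$ (for odd $n$) or $\Pi$ (for even $n$) and summing the resulting contributions $P(\sigma_j)\ket{v_j}\bra{w_j}$, respectively $P(\sigma_j)\ket{w_j}\bra{w_j}$, over all $j$ then reproduces \autoref{eqn:poly_sv_transformation_of_matrix}. The parity case split is dictated by whether the sequence terminates on a $U_A$ or a $U_A^\dagger$, which determines whether the output lands in $\mathrm{range}(\widetilde\Pi)$ or $\mathrm{range}(\Pi)$, and correspondingly whether odd/even parity of $P$ pairs $\ket{v_j}$ or $\ket{w_j}$ on the left.

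I expect the main obstacle to be the careful handling of the degenerate subspaces: singular values $\sigma_j\in\{0,1\}$ where $\ket{v_j^\perp}$ or $\ket{w_j^\perp}$ may vanish or fail to be uniquely determined, together with the leftover one-dimensional pieces of $\mathrm{range}(\Pi)^\perp$ and $\mathrm{range}(\widetilde\Pi)^\perp$ on which $U_A$ acts as a pure phase. On these the $2\times2$ QSP block collapses to a scalar, and I must check that $P$ evaluates consistently (the boundary values $P(0)$, $P(1)$) so the telescoped sum still equals $P^{SV}(A)$. This is precisely where the normalization and parity hypotheses on $P$ from \autoref{thm:qsp-main-theorem} are invoked; the remaining bookkeeping of the parity-dependent projector on each side is then routine.
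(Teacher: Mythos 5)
The paper does not prove \autoref{thm:qsvt-def} at all — it is imported verbatim from Ref.~\cite{GSLW2019full}, Section~3.2, so there is no in-paper argument to compare against. Your sketch is a correct outline of the standard proof from that source: the cosine--sine (Jordan) decomposition of the projector pair into jointly invariant two-dimensional subspaces, on each of which the alternating sequence collapses to the scalar QSP circuit of \autoref{thm:qsp-main-theorem} evaluated at $\sigma_j$, with the degenerate subspaces ($\sigma_j\in\{0,1\}$ and the one-dimensional leftovers) handled separately exactly as you flag.
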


{\UPDATED
Theorem \ref{thm:qsvt-def} tells us that for any QSP polynomial $P$ of degree $n$, we can implement $P^{SV}(A)$ using one ancilla qubit, $\Theta(n)$ applications of $U_A$, $U^{\dag}_A$ and controlled reflections $I-2\Pi$ and $I-2\widetilde{\Pi}$. Furthermore, if in some well-defined interval, some function $f(x)$ is well approximated by an $n$-degree QSP polynomial $P(x)$, then Theorem \ref{thm:qsvt-def} also allows us to implement a transformation that approximates $f(A)$, where
}
\begin{equation}
\UPDATED
    \label{eqn:sv_transformation_of_matrix}
     f(A) := \begin{cases}
        \sum_j f(\sigma_j) \ket{v_j} \bra{w_j}, & P\text{ is odd} \\
        \sum_j f(\sigma_j) \ket{w_j} \bra{w_j}, & P \text{ is even}.
    \end{cases}
\end{equation}
{\UPDATED
The following theorem from Ref.~\cite{GSLW2019full} deals with the robustness of the QSVT procedure, i.e. how errors propagate in QSVT. In particular, for two matrices $A$ and $\tilde{A}$, it shows how close their polynomial transformations ($P^{SV}(A)$ and $P^{SV}(\widetilde{A})$, respectively) are, as a function of the distance between $A$ and $\tilde{A}$.
}

\begin{lemma}[Robustness of Quantum Singular Value Transformation, \cite{GSLW2019full}, Lemma~23]
    \label{lem:robustness_of_QSVT}
    Let $P \in \mathbb{C}[x]$ be a QSP polynomial of degree $n$. Let $A, \tilde{A} \in \mathbb{C}^{N \times d}$ be matrices of spectral norm at most 1, such that 
    \begin{equation*}
        \norm{A - \tilde{A}} + \norm{\frac{A + \tilde{A}}{2}}^2 \leq 1.
    \end{equation*}
    Then, 
    \begin{equation*}
        \norm{P^{SV}(A) - P^{SV}(\tilde{A})} \leq n \sqrt{\frac{2}{1 - \norm{\frac{A + \tilde{A}}{2}}^2}} \norm{A - \tilde{A}}. 
    \end{equation*}
\end{lemma}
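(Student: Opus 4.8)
The plan is to realize both $P^{SV}(A)$ and $P^{SV}(\tilde A)$ through a \emph{common} QSVT circuit built from canonical block-encoding unitaries attached to $A$ and $\tilde A$, and then to compare the two circuits by a hybrid (telescoping) argument. Taking $A$ square without loss of generality (pad with zeros) and using $\norm A\le 1$, I would attach to $A$ the block-encoding unitary
\begin{equation*}
U_A = \begin{pmatrix} A & \sqrt{I-AA^\dagger} \\ \sqrt{I-A^\dagger A} & -A^\dagger \end{pmatrix},
\end{equation*}
acting on $\cc^N\oplus\cc^d$, with $\Pi=\widetilde\Pi$ the projector onto the first register; one checks $U_A$ is unitary using the intertwining identity $A\sqrt{I-A^\dagger A}=\sqrt{I-AA^\dagger}\,A$, and $A=\widetilde\Pi U_A\Pi$. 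By \autoref{thm:qsvt-def} the associated alternating product $U_\Phi$ then satisfies $P^{SV}(A)=\widetilde\Pi U_\Phi\Pi$ (odd $n$) or $\Pi U_\Phi\Pi$ (even $n$), and the identical statement holds for $\tilde A$ with the \emph{same} projectors and \emph{same} phases $\Phi$. The whole point of this choice is that $\Pi,\widetilde\Pi$ and hence every reflection $e^{i\phi_k(2\widetilde\Pi-I)}$ coincide for $A$ and $\tilde A$, so the only difference between the two circuits lives in the copies of $U_A$ versus $U_{\tilde A}$.

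The first real step is a telescoping bound. Writing $U_\Phi(A)$ and $U_\Phi(\tilde A)$ as products of exactly $n$ unitary factors (each $U_A$ or $U_A^\dagger$) interleaved with the common unitary reflections, I would swap the factors one at a time. Since every factor and every reflection has operator norm $1$ and $\norm{U_A^\dagger-U_{\tilde A}^\dagger}=\norm{U_A-U_{\tilde A}}$, each swap costs at most $\norm{U_A-U_{\tilde A}}$, so $\norm{U_\Phi(A)-U_\Phi(\tilde A)}\le n\,\norm{U_A-U_{\tilde A}}$; compressing by the norm-non-increasing projectors yields $\norm{P^{SV}(A)-P^{SV}(\tilde A)}\le n\,\norm{U_A-U_{\tilde A}}$. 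This is exactly where the single factor of $n$ comes from — cleaner than the weaker estimate one would obtain by treating $P(A^\dagger A)$ as a generic matrix function — and is the reason for routing through the circuit rather than through a divided-difference bound on $P$.

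It remains to show $\norm{U_A-U_{\tilde A}}\le \sqrt{2/(1-c^2)}\,\norm{A-\tilde A}$ with $c=\norm{(A+\tilde A)/2}$, and this square-root block estimate is the main obstacle. The diagonal blocks of $U_A-U_{\tilde A}$ contribute only $\norm{A-\tilde A}$, so the work is in the off-diagonal blocks $\sqrt{I-A^\dagger A}-\sqrt{I-\tilde A^\dagger\tilde A}$ and its $AA^\dagger$ analogue. Setting $H=I-A^\dagger A$ and $\tilde H=I-\tilde A^\dagger\tilde A$, I would use that $X=\sqrt H-\sqrt{\tilde H}$ solves the Sylvester equation $\sqrt H\,X+X\sqrt{\tilde H}=H-\tilde H$, whence $\norm X\le \tfrac{1}{2\sqrt m}\norm{H-\tilde H}$ with $m$ a spectral lower bound on $H,\tilde H$ (equivalently, operator monotonicity of the square root gives this Lipschitz estimate away from $0$). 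Writing $M=(A+\tilde A)/2$, $E=(A-\tilde A)/2$, the cross terms cancel on averaging, so $\norm{H-\tilde H}=2\norm{M^\dagger E+E^\dagger M}\le 4\norm M\,\norm E=2c\,\norm{A-\tilde A}$, while the hypothesis $\norm{A-\tilde A}+c^2\le 1$ is precisely what certifies that $\norm A^2,\norm{\tilde A}^2$ stay away from $1$ and hence a lower bound $m\gtrsim 1-c^2$. Tracking these constants is the delicate part — as $c\to 1$ both $m$ and the target bound degenerate together, so the estimate must be kept tight — and produces the factor $\sqrt{2/(1-c^2)}$. Combining with the telescoping bound gives $\norm{P^{SV}(A)-P^{SV}(\tilde A)}\le n\sqrt{2/(1-c^2)}\,\norm{A-\tilde A}$; the even and odd cases differ only in which projector compresses $U_\Phi$ and are handled identically.
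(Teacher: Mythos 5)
The paper does not prove this lemma; it imports it verbatim from \cite{GSLW2019full}, Lemma~23. Your overall architecture --- realizing $P^{SV}(A)$ and $P^{SV}(\tilde A)$ by the \emph{same} phase sequence and projectors applied to canonical unitary dilations, telescoping over the $n$ unitary factors to get $\norm{P^{SV}(A)-P^{SV}(\tilde A)}\le n\norm{U_A-U_{\tilde A}}$, and reducing everything to a perturbation bound for the dilation --- is the standard route, and that part of your argument is sound.

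The gap is in the final step, the claim $\norm{U_A-U_{\tilde A}}\le\sqrt{2/(1-c^2)}\,\norm{A-\tilde A}$. Your Sylvester-equation estimate needs a spectral lower bound $m$ on $H=I-A^\dagger A$ and $\tilde H=I-\tilde A^\dagger\tilde A$, and you assert $m\gtrsim 1-c^2$; this is false. Scalar counterexample: fix small $\eta>0$ and take $a=1-\eta^2/2$, $\tilde a=1-2\eta+\eta^2/2$. Then $c=(a+\tilde a)/2=1-\eta$ and $|a-\tilde a|=2\eta-\eta^2=1-c^2$, so the hypothesis holds with equality, yet $1-a^2\approx\eta^2$, so $m\approx\eta^2\ll 1-c^2\approx 2\eta$. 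Feeding the true $m$ into your estimate gives $\norm{\sqrt H-\sqrt{\tilde H}}\le\frac{1}{2\sqrt m}\norm{H-\tilde H}\approx\frac{4\eta}{2\eta}=2$, a constant, while the bound to be proved is $\approx 2\sqrt\eta\to 0$ --- and the true value $\sqrt{1-a^2}-\sqrt{1-\tilde a^2}\approx-2\sqrt\eta$ actually saturates that target. Even the sharper Sylvester bound with denominator $\lambda_{\min}(\sqrt H)+\lambda_{\min}(\sqrt{\tilde H})$ fails: take $A=\mathrm{diag}(a,\tilde a)$ and $\tilde A=\mathrm{diag}(\tilde a,a)$ with the same numbers, so that both minima are $\approx\eta$ and the bound is again $O(1)$. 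The inequality is tight precisely in the regime where one of $H,\tilde H$ has eigenvalues of order $\norm{A-\tilde A}^2$, so any correct proof of the dilation bound must exploit the correlation between the directions where $\sqrt H$ is small and those where $\sqrt{\tilde H}$ or $H-\tilde H$ compensates (e.g., by bounding $\norm{(U_A-U_{\tilde A})v}^2$ for each unit vector $v$ using the joint structure of all four blocks, as in the cited reference), rather than a uniform Lipschitz constant for the square root; your uniform lower bound discards exactly this correlation, so the key estimate is not established.
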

{\UPDATED
We will apply this theorem to develop a robust version of QSVT. More precisely, in order to implement QSVT, we require access to a unitary $U_A$, which is a block-encoding of some matrix $A$. This block-encoding, in most practical scenarios, is not perfect: we only have access to a $\varepsilon$-approximate block-encoding of $A$. If we want an $\delta$-accurate implementation of $P^{SV}(A)$, how precise should the block-encoding of $A$ be? Such a robustness analysis has been absent from prior work involving QSVT and will allow us to develop robust versions of a number of quantum algorithms in subsequent sections.  The following theorem determines the precision $\varepsilon$ required in the block-encoding of $A$ in terms of $n$, the degree of the QSP polynomial that we wish to implement and $\delta$, the accuracy of $P^{SV}(A)$.
}
\begin{theorem}[Robust QSVT]
    \label{thm:robust_qsvt}
    \UPDATED
    Let $P \in \mathbb{C}[x]$ be a QSP polynomial of degree $n \ge 2$.
    Let $\delta \in [0, 1]$ be the precision parameter.
    Let $U$ be an $(\alpha, a, \varepsilon)$-block-encoding of matrix $A \in \cc^{N \times d}$
    satisfying $\norm{A} \le \alpha/2$,
    implemented in cost $T$
    for some $\varepsilon \le {\alpha\delta}/{2n}$.
    Then we can construct a $(1, a + 1, \delta)$-block-encoding of $P(A/\alpha)$
    in cost $\order{nT}$.
\end{theorem}
\begin{proof}
\UPDATED
    Let $\tilde A$ be the encoded block of $U$, then $\norm{A - \tilde A} \le \varepsilon$.
    Applying QSVT on $U$ with the polynomial $P$, we get a block-encoding for $P(\tilde A / \alpha)$,
    with $\order{n}$ uses of $U, U^\dagger$, and as many multiply-controlled NOT gates. 
    Observe that $\norm{\frac{A}{\alpha} - \frac{\tilde{A}}{\alpha}} \le \frac{\varepsilon}{\alpha} \le \frac{\delta}{2n} \le \frac14$, and,
    
    \begin{align*}
        \norm{\frac{\frac{A}{\alpha} + \frac{\tilde A}{\alpha}}{2}}^2
        = \norm{\frac{A}{\alpha} + \frac{\tilde A - A}{2\alpha}}^2
        \le \paren{\frac{\norm{A}}{\alpha} + \frac{\norm{\tilde A - A}}{2\alpha}}^2
        \le \paren{\frac12 + \frac18}^2
        \le \frac12
    \end{align*}
    Therefore the error in the final block-encoding is given by invoking \autoref{lem:robustness_of_QSVT}
    with matrices $A/\alpha, \tilde{A}/\alpha$:
    \begin{align*}
        \norm{P\paren{\frac A \alpha} - P\paren{\frac {\tilde{A}} \alpha}}
        \le n \sqrt{\frac2{1 - \frac12}}~\frac\varepsilon\alpha
        = \frac{2n\varepsilon}{\alpha}
        \le \delta.
    \end{align*}
\end{proof}
{\UPDATED
In Section \ref{sec:algorithmic_primitives}, we will make use of Theorem \ref{thm:robust_qsvt}, to develop robust quantum algorithms for singular value discrimination, variable-time matrix inversion, positive and negative powers of matrices. Subsequently, in Sec.~\ref{sec:main_proof}, we shall combine algorithmic primitives to design robust quantum regularized least squares algorithms.
}
\subsection{Variable Time Amplitude Amplification}
\label{ssec:vtaa}
%-----------------------------------------------------------------------------

Ambainis \cite{ambainis2010variable} defined the notion of a \textit{variable-stopping-time quantum algorithm} and formulated the technique of \textit{Variable Time Amplitude Amplification} (VTAA), a tool that can be used to amplify the success probability of a variable-stopping-time quantum algorithm to a constant by taking advantage of the fact that computation on some parts of an algorithm can complete earlier than on other parts. The key idea here is to look at a quantum algorithm $\mathcal{A}$ acting on a state $\ket{\psi}$ as a combination of $m$ quantum sub-algorithms $\mathcal{A} = \mathcal{A}_m \cdot \mathcal{A}_{m-1} \cdot \ldots \mathcal{A}_1$, each acting on $\ket{\psi}$ conditioned on some ancilla flag being set. Formally, a variable stopping time algorithm is defined as follows

\begin{definition}[Variable-stopping-time Algorithm, \cite{ambainis2010variable}]
    A quantum algorithm $\mathcal{A}$ acting on $\mathcal{H}$ that can be written as $m$ quantum sub-algorithms, $\mathcal{A} = \mathcal{A}_m \cdot \mathcal{A}_{m-1} \cdot \ldots \mathcal{A}_1$ is called a variable stopping time algorithm if $\mathcal{H} = \mathcal{H}_C \otimes \mathcal{H}_{\mathcal{A}}$, where $\mathcal{H}_C = \otimes_{i=1}^{m} \mathcal{H}_{C_i}$ with $\mathcal{H}_{C_i} = \mathrm{span}(\ket{0}, \ket{1})$, and each unitary $\mathcal{A}_j$ acts on $\mathcal{H}_{C_j} \otimes \mathcal{H}_{\mathcal{A}}$ controlled on the first $j-1$ qubits $\ket{0}^{\otimes j  -1} \in \otimes_{i=1}^{j-1} \mathcal{H}_{C_i}$ being in the all zero state.
\end{definition}

Here $\mathcal{H}_{C_i}$ is a single qubit clock register. In VTAA, $\mathcal{H}_\mathcal{A}$ has a flag space consisting of a single qubit to indicate success, $\mathcal{H}_\mathcal{A} = \mathcal{H}_F \otimes \mathcal{H}_W$. Here $\mathcal{H}_F = \mathrm{Span}(\ket{g}, \ket{b})$ flags the good and bad parts of the run. Furthermore, for $1\leq i\leq m$, define the stopping times $t_i$ such that $t_1 < t_2 < \cdots t_m=T_{\max}$, such that the algorithm $\mathcal{A}_j\mathcal{A}_{j-1}\cdots \mathcal{A}_1$ having (gate/query) complexity $t_i$ halts with probability
\begin{equation*}
\label{eq:stopping-prob}
p_j=\norm{\Pi_{C_j}\mathcal{A}_j\mathcal{A}_{j-1}\cdots \mathcal{A}_1\ket{0}_{\mathcal{H}}}^2,
\end{equation*}
where $\ket{0}_{\mathcal{H}}\in\mathcal{H}$ is the all zero quantum state and $\Pi_{C_j}$ is the projector onto $\ket{1}$ in $\mathcal{H}_{C_j}$. From this one can define the average stopping time of the algorithm $\mathcal{A}$ defined as
\begin{equation*}
\label{eq:avg-prob}
\norm{T}_2=\sqrt{\sum_{j=1}^{m}p_j t^2_j}.
\end{equation*}
For a variable stopping time algorithm if the average stopping time $\norm{T}_2$ is less than the maximum stopping time $T_{\max}$, VTAA can amplify the success probability $(p_{\mathrm{succ}})$ much faster than standard amplitude amplification. In this framework, the success probability of $\mathcal{A}$ is given by
\begin{equation*}
\label{successprob-vtaa}
p_{\mathrm{succ}}=\norm{\Pi_F \mathcal{A}_m\mathcal{A}_{m-1}\cdots \mathcal{A}_1\ket{0}_{\mathcal{H}}}^2
\end{equation*}
While standard amplitude amplification requires time scaling as $\order{T_{\max}/\sqrt{p_{\mathrm{succ}}}}$, the complexity of VTAA is more involved. Following \cite{CGJ19}, we define the complexity of VTAA as follows.
%%%%%%%%%%%%%%%%%%%%%%%%%%%%%%%%%%%%%%%%%%%%%%%%%%%%%%%%%%%%%%%%%
\begin{lemma}[Efficient variable time amplitude amplification \cite{CGJ19full}, 
Theorem~23]
    \label{lem:variable_time_aa}
    Let $U$ be a state preparation unitary such that $U \ket{0}^{\otimes k} = \sqrt{p_{\mathrm{prep}}}\ket{0}\ket{\psi_0} + \sqrt{1 - p_{\mathrm{prep}}}\ket{1}\ket{\psi_1}$ that has a query complexity $T_U$. And let $\mathcal{A}=\mathcal{A}_m\mathcal{A}_{m-1}\cdots \mathcal{A}_1$ be a variable stopping time quantum algorithm that we want to apply to the state $\ket{\psi_0}$, with the following known bounds: $p_{\mathrm{prep}} \geq p'_{\mathrm{prep}}$ and $p_{\mathrm{succ}} \geq p'_{\mathrm{succ}}$. Define $T'_{\max} := 2T_{\max}/t_1$ and $$
    Q := \paren{T_{\max} + \frac{T_U + k}{\sqrt{p_{\mathrm{prep}}}}} \sqrt{\logp{T'_{\max}}} + \frac{\paren{\norm{T}_2 + \frac{T_U + k}{\sqrt{p_{\mathrm{prep}}}}}\logp{T'_{\max}}}{\sqrt{p_{\mathrm{succ}}}}.
    $$
    Then with success probability $\geq 1 - \delta$, we can create a variable-stopping time algorithm $\mathcal{A}'$ that prepares the state $a \ket{0} \mathcal{A}' \ket{\psi_0} + \sqrt{1 - a^2} \ket{1} \ket{\psi_{\textrm{garbage}}}$, such that $a = \Theta(1)$ is a constant and $\mathcal{A}'$ has the complexity $\order{Q}$. 
\end{lemma}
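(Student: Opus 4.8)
The statement restates the variable-time amplitude amplification (VTAA) theorem of Ambainis, adapted by CGJ to absorb an imperfect state-preparation step, so my plan is to reconstruct that argument. The strategy is to regard the entire computation --- prepare $\ket{\psi_0}$ with $U$, then run the variable-stopping-time algorithm $\mathcal{A}=\mathcal{A}_m\cdots\mathcal{A}_1$ --- as a single variable-time algorithm, and to amplify the success flag in $\mathcal{H}_F$ to constant amplitude $a=\Theta(1)$ using a schedule of amplitude amplifications arranged so that the average stopping time $\norm{T}_2$, rather than the worst case $T_{\max}$, governs the dominant cost.

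First I would handle the preparation. Because $U\ket{0}^{\otimes k}$ yields the branch $\ket{0}\ket{\psi_0}$ only with probability $p_{\mathrm{prep}}\ge p'_{\mathrm{prep}}$, I would amplify this branch with standard amplitude amplification to produce $\ket{\psi_0}$ at constant amplitude using $\order{(T_U+k)/\sqrt{p_{\mathrm{prep}}}}$ applications of $U,U^\dagger$; this amplified subroutine is the origin of the $(T_U+k)/\sqrt{p_{\mathrm{prep}}}$ contributions in $Q$. I would then prepend it as an initial clock segment, yielding an augmented variable-time algorithm whose later segments are exactly $\mathcal{A}_1,\dots,\mathcal{A}_m$. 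Next I would partition the clock register by stopping time into $L=\order{\logp{T'_{\max}}}$ dyadic levels, with $T'_{\max}=2T_{\max}/t_1$, so that level $\ell$ collects the amplitude halting with complexity in $[2^{\ell-1}t_1,2^{\ell}t_1)$.

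The core is then the recursive amplification across these levels. Within each level I would run the relevant truncation $\mathcal{A}_j\cdots\mathcal{A}_1$ and amplify its flagged-success component, recursing so that each dyadic level is cleared at a cost proportional to its own stopping time --- exploiting that amplifying a branch of cost $t_j$ by one round costs $\order{t_j}$, so that short branches are amplified cheaply. Writing $p_j$ for the probability of halting at time $t_j$ and using $\norm{T}_2^2=\sum_j p_j t_j^2$, I would bound level by level both the residual un-amplified amplitude and the per-call cost, and sum them with Cauchy--Schwarz over the $L$ levels. This yields an alignment cost of $\order{\paren{T_{\max}+(T_U+k)/\sqrt{p_{\mathrm{prep}}}}\sqrt{\logp{T'_{\max}}}}$ and a final-amplification cost of $\order{\paren{\norm{T}_2+(T_U+k)/\sqrt{p_{\mathrm{prep}}}}\logp{T'_{\max}}/\sqrt{p_{\mathrm{succ}}}}$, which together give $\order{Q}$; the promised bounds $p_{\mathrm{succ}}\ge p'_{\mathrm{succ}}$ and $p_{\mathrm{prep}}\ge p'_{\mathrm{prep}}$ are used to fix the number of rounds in advance and to certify the overall $1-\delta$ success probability.

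\textbf{Main obstacle.} The delicate part is the complexity bookkeeping establishing that $\norm{T}_2$, and not $T_{\max}$, multiplies the $1/\sqrt{p_{\mathrm{succ}}}$ factor. This requires a careful level-by-level accounting in which the dyadic structure contributes only the $\logp{T'_{\max}}$ factors --- rather than a factor of $m$ or of $T_{\max}$ --- together with the bound $\sum_{\ell} (\text{level amplitude})\cdot(\text{level cost})\le\norm{T}_2$ obtained from Cauchy--Schwarz. Equally subtle is the interaction with the amplified preparation routine: since it must be re-invoked at the start of every branch inside every amplification round, its cost appears additively alongside both $T_{\max}$ and $\norm{T}_2$ in the two terms, and one must verify that this does not inflate the number of levels or rounds, so that the constant output amplitude $a=\Theta(1)$ and the stated $\order{Q}$ bound hold simultaneously.
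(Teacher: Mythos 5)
The paper does not actually prove this lemma: it is imported verbatim from Chakraborty--Gily\'{e}n--Jeffery (Theorem~23 of \cite{CGJ19full}) and used as a black box, so there is no in-paper proof to compare your argument against. Your sketch is a fair reconstruction of how that external proof goes --- amplify the preparation branch to constant amplitude at cost $\order{(T_U+k)/\sqrt{p_{\mathrm{prep}}}}$, prepend it as an initial clock segment, partition the stopping times into $\order{\logp{T'_{\max}}}$ dyadic levels, amplify recursively level by level, and use Cauchy--Schwarz over the levels so that $\norm{T}_2$ rather than $T_{\max}$ multiplies the $1/\sqrt{p_{\mathrm{succ}}}$ term.

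That said, as a standalone proof your write-up is incomplete in exactly the place you flag yourself. The two claims that carry all the weight --- that the recursive dyadic amplification contributes only a $\sqrt{\logp{T'_{\max}}}$ factor on the $T_{\max}$ term and a $\logp{T'_{\max}}$ factor on the $\norm{T}_2/\sqrt{p_{\mathrm{succ}}}$ term, and that the per-level amplitude-times-cost sum is controlled by $\norm{T}_2$ via Cauchy--Schwarz --- are asserted rather than derived, and the interaction between the re-invoked preparation routine and the recursion depth is only described, not verified. Since the present paper deliberately treats this as a cited result, the appropriate resolution is either to cite \cite{CGJ19full} as the paper does, or to carry out the level-by-level accounting in full; the sketch as written does neither.
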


One cannot simply replace standard amplitude amplification with VTAA to boost the success probability of a quantum algorithm. A crucial task would be to recast the underlying algorithm in the VTAA framework. We will be applying VTAA to the quantum algorithm for matrix inversion by QSVT. So, first of all, in order to apply VTAA to the algorithm must be first recast into a variable-time stopping algorithm so that VTAA can be applied.

Originally, Ambainis~\cite{ambainis2010variable} used VTAA to improve the running time of the HHL algorithm from $\order{\kappa^2\log N}$ to $\order{\kappa\log^3\kappa\log N}$. Childs et al.~\cite{CKS17} designed a quantum linear systems algorithm with a polylogarithmic dependence on the accuracy. Additionally, they recast their algorithm into a framework where VTAA could be applied to obtain a linear dependence on $\kappa$. Later Chakraborty et al.~\cite{CGJ19} modified Ambainis' VTAA algorithm to perform variable time amplitude estimation.

In this work, to design quantum algorithms for $\ell_2$-regularized linear regression, we use a quantum algorithm for matrix inversion by QSVT. We recast this algorithm in the framework of VTAA to achieve nearly linear dependence in $\kappa$ (the effective condition number of the matrix to be inverted). Using QSVT instead of controlled Hamiltonian simulation improves the complexity of the overall matrix inversion algorithm (QSVT and VTAA) by a log factor. It also reduces the number of additional qubits substantially. Furthermore, we replace a gapped quantum phase estimation procedure with a more efficient quantum singular value discrimination algorithm using QSVT. This further reduces the number of additional qubits by $O(\log^2(\kappa/\delta))$ than in Refs.~\cite{CKS17,CGJ19}, where $\kappa$ is the condition number of the underlying matrix and $\delta$ is the desired accuracy. The details of the variable stopping time quantum algorithm for matrix inversion by QSVT are laid out in \autoref{ssec:mi_using_qsvt}.

%=============================================================================
\section{Algorithmic Primitives}
\label{sec:algorithmic_primitives}
%=============================================================================

This section introduces the building blocks of our quantum algorithms for quantum linear regression with general $\ell_2$-regularization. As mentioned previously, we work in the block-encoding framework. 
We develop robust quantum algorithms for arithmetic operations, inversion, and positive and negative powers of matrices using quantum singular value transformation, assuming we have access to approximate block-encodings of these matrices. While some of these results were previously derived assuming perfect block-encodings \cite{GSLW2019, CGJ19}, we calculate the precision required in the input block-encodings to output a block-encoding or quantum state arbitrarily close to the target. 

%%% commented out, repetitive paragraph.
% In particular for the problem of inverting a matrix using QSVT, we reformulate it as a variable stopping time algorithm such that VTAA is applicable. The use of QSVT instead of controlled Hamiltonian simulation improves the complexity of the overall matrix inversion algorithm (using QSVT and VTAA) by a log factor and reduces the number of ancilla qubits substantially. Furthermore, in order to convert the usual matrix inversion algorithm to a variable stopping time algorithm, we make use of quantum eigenvalue discrimination using QSVT \cite{Grand_Uni_2021}, instead of gapped phase estimation used in \cite{CKS17}. This further reduces the number of additional qubits by $O(\log^2(\kappa/\delta))$, where $\kappa$ is the condition number of the underlying matrix and $\delta$ is the desired accuracy. 

% Overall, we state the various algorithmic primitives as lemmas and provide proofs whenever they are new or are adapted to our setting. This helps us to refer to the respective lemmas while proving the complexities of our quantum algorithms in Sec.~\ref{sec:main_proof}.  

Given a $(\alpha,a,\varepsilon)$-block-encoding of a matrix $A$, we can efficiently amplify the sub-normalization factor from $\alpha$ to a constant and obtain an amplified block-encoding of $A$. For our quantum algorithms in Sec.~\ref{sec:main_proof}, we show working with pre-amplified block-encodings often yields better complexities. We state the following lemma which was proven in Ref.~\cite{low2017hamiltonian}:

\begin{lemma}[Uniform Block Amplification of Contractions, \cite{low2017hamiltonian}]
  \label{lem:uniform_block_ampl_contractions}
  Let $A \in \rr^{N \times d}$ such that $\norm{A} \le 1$
  If $\alpha \ge 1$ and
  $U$ is a $(\alpha, a, \varepsilon)$-block-encoding of A
  that can be implemented at a cost of $T_U$,
  then there is a $(\sqrt 2, a + 1, \varepsilon + \gamma)$-block-encoding of A
  that can be implemented at a cost of $\order{\alpha T_U \log\paren{1/\gamma}}$.
\end{lemma}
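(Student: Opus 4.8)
The goal is to take an $(\alpha, a, \varepsilon)$-block-encoding $U$ of a contraction $A$ (with $\norm{A} \le 1$ and $\alpha \ge 1$) and squeeze the subnormalization factor down from $\alpha$ to the constant $\sqrt{2}$, at the cost of one extra ancilla qubit, a small additive error increase $\gamma$, and a runtime blow-up of $\order{\alpha \log(1/\gamma)}$. The natural strategy is to realize the amplification as a quantum singular value transformation: since $U$ block-encodes $A/\alpha$ (a matrix whose singular values all lie in $[0, 1/\alpha]$), I would apply QSVT with a polynomial $P$ that approximates the linear amplification map $x \mapsto \alpha x / \sqrt{2}$ on the interval $[0, 1/\alpha]$, so that each singular value $\sigma_j/\alpha$ of the encoded block is pushed up to (approximately) $\sigma_j/\sqrt{2}$, yielding a $(\sqrt{2}, \cdot, \cdot)$-block-encoding of $A$.

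\textbf{Key steps.} First I would invoke the known polynomial-approximation result for the scaled identity/linear function: there exists an odd QSP polynomial $P$ of degree $\order{\alpha \log(1/\gamma)}$ that is $\gamma$-close (say, in subnormalized sense) to $\alpha x/\sqrt{2}$ uniformly on $[-1/\alpha, 1/\alpha]$, while remaining bounded by $1$ on $[-1,1]$ (this is precisely the regime handled by the GSLW polynomial constructions for rectangular-window / amplification functions). Since we only care about the positive branch $[0,1/\alpha]$ where the singular values live, parity and boundedness requirements are easy to arrange. Second, I would feed this $P$ into the robust QSVT machinery — concretely \autoref{thm:robust_qsvt} or directly \autoref{thm:qsvt-def} together with \autoref{lem:robustness_of_QSVT} — applied to the unitary $U$. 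QSVT consumes $\order{\deg P} = \order{\alpha \log(1/\gamma)}$ applications of $U, U^\dagger$ and their controlled reflections, which gives the claimed cost $\order{\alpha T_U \log(1/\gamma)}$. It uses exactly one additional ancilla qubit beyond the $a$ already present, matching the $a+1$ in the statement. Third, I would account for the two error contributions: the polynomial-approximation error $\gamma$ (how well $P$ tracks $\alpha x/\sqrt 2$), and the propagated input error $\varepsilon$ coming from the fact that $U$ only $\varepsilon$-approximately block-encodes $A$. The latter is controlled by the Lipschitz/robustness bound of \autoref{lem:robustness_of_QSVT}, which shows the transformed block moves by at most $\order{\deg P \cdot (\varepsilon/\alpha)}$; arranging constants so that this contributes the $\varepsilon$ term (as stated, the error is $\varepsilon + \gamma$) completes the bookkeeping.

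\textbf{Main obstacle.} The delicate part is the error accounting rather than the algorithmic skeleton. I expect two places to need care. One is constructing the amplifying polynomial $P$ so that it simultaneously (i) approximates $\alpha x/\sqrt 2$ to within $\gamma$ on the small interval $[0,1/\alpha]$, (ii) stays bounded by $1$ on all of $[-1,1]$ (a genuine constraint, since naively scaling by $\alpha$ would overshoot outside the target interval — this is why the degree must grow like $\alpha\log(1/\gamma)$ rather than being constant), and (iii) satisfies the QSP parity/normalization conditions of \autoref{thm:qsp-main-theorem}. The target $\sqrt 2$ (rather than $1$) is exactly the slack that leaves room to keep $P$ bounded while still amplifying. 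The second subtle point is tracking how the input block-encoding error $\varepsilon$ survives the transformation: one must verify the hypotheses of \autoref{lem:robustness_of_QSVT} hold for the rescaled matrices $A/\alpha$ and its encoded approximant, and confirm that the factor-of-$n$ amplification in that lemma, combined with the $1/\alpha$ rescaling, nets out to leave the $\varepsilon$ contribution at the stated order rather than blowing it up by the polynomial degree. Since the paper attributes this lemma to \cite{low2017hamiltonian} and merely restates it, I would lean on that reference for the polynomial construction and focus the written proof on assembling these QSVT and robustness ingredients.
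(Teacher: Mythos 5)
The paper gives no proof of this lemma at all -- it is imported verbatim from \cite{low2017hamiltonian} -- so there is no in-paper argument to compare against. Your algorithmic skeleton is the standard (and essentially the only known) route: build an odd, $[-1,1]$-bounded polynomial $P$ of degree $\order{\alpha\log(1/\gamma)}$ approximating $x\mapsto \alpha x/\sqrt2$ on $[-1/\alpha,1/\alpha]$, and run it through QSVT on $U$ at the cost of one extra ancilla. Your observation that the slack between $1/\sqrt2$ and $1$ is what makes the bounded amplifying polynomial possible is exactly right.

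There is, however, a genuine gap in your error accounting. Routing the input error $\varepsilon$ through \autoref{lem:robustness_of_QSVT} gives $\norm{P(A/\alpha)-P(\tilde A/\alpha)} = \order{n\cdot\varepsilon/\alpha} = \order{\varepsilon\log(1/\gamma)}$, which does \emph{not} ``net out'' to the stated $\varepsilon$: you would prove a $(\sqrt2,a+1,\order{\varepsilon\log(1/\gamma)}+\gamma)$-block-encoding, which is strictly weaker than the lemma. The fix is to notice that the generic robustness lemma is not needed here at all. The QSVT circuit applied to $U$ implements $P^{SV}(\tilde A/\alpha)$ exactly, where $\tilde A$ is the actual encoded block with $\norm{A-\tilde A}\le\varepsilon$. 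You then bound
\begin{equation*}
\Bigl\lVert P^{SV}\bigl(\tfrac{\tilde A}{\alpha}\bigr)-\tfrac{A}{\sqrt2}\Bigr\rVert
\le \Bigl\lVert P^{SV}\bigl(\tfrac{\tilde A}{\alpha}\bigr)-\tfrac{\tilde A}{\sqrt2}\Bigr\rVert
+\Bigl\lVert \tfrac{\tilde A}{\sqrt2}-\tfrac{A}{\sqrt2}\Bigr\rVert
\le \gamma' + \tfrac{\varepsilon}{\sqrt2},
\end{equation*}
where the first term is the sup-norm error of $P$ against the linear target on the singular values of $\tilde A/\alpha$ (singular-value by singular-value, no degree factor), and the second uses only that the target function is linear with Lipschitz constant $\alpha/\sqrt2$ on the subnormalized scale. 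Multiplying by the new subnormalization $\sqrt2$ and taking $\gamma'=\gamma/\sqrt2$ gives exactly $\varepsilon+\gamma$. One further small point to patch: since $\norm{\tilde A}\le 1+\varepsilon$, the singular values of $\tilde A/\alpha$ can slightly exceed $1/\alpha$, so the polynomial must approximate the target on a marginally enlarged interval (or the constants absorbed accordingly).
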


\begin{restatable}[Uniform Block Amplification]{corollary}{thmbodyUniformBlockAmp}
  \label{lem:uniform_block_ampl}
  Let $A \in \rr^{N \times d}$ and $\delta \in (0, 1]$.
  Suppose $U$ is a $(\alpha, a, \varepsilon)$-block-encoding of A,
  such that $\varepsilon \leq \frac{\delta}{2}$,
  that can be implemented at a cost of $T_U$.
  Then a $(\sqrt 2 \norm{A}, a + 1, \delta)$-block-encoding of A can be implemented
  at a cost of $\order{\frac{\alpha T_U}{\norm{A}} \log\paren{\norm{A}/\delta}}$.
\end{restatable}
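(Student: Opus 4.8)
The plan is to reduce the general statement to \autoref{lem:uniform_block_ampl_contractions}, which only handles contractions, via a rescaling by $\norm{A}$. Write $n := \norm{A}$ and set $B := A/n$, so that $\norm{B} = 1$. Dividing the defining inequality of \autoref{def:block_encoding} through by $n$ shows that the very same unitary $U$ is an $(\alpha/n,\, a,\, \varepsilon/n)$-block-encoding of $B$: indeed $\norm{B - (\alpha/n)(\bra{0}^{\otimes a}\otimes I)U(\ket{0}^{\otimes a}\otimes I)} = \tfrac1n\norm{A - \alpha(\bra{0}^{\otimes a}\otimes I)U(\ket{0}^{\otimes a}\otimes I)} \le \varepsilon/n$. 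Before invoking the contraction lemma I would check its hypothesis $\alpha/n \ge 1$; this is exactly the standard normalization $\alpha \ge \norm{A}$, which holds in all our applications (in fact $\norm{A}\le\alpha/2$ wherever this corollary is used), and otherwise one simply replaces $\alpha$ by $\max\{\alpha,\norm{A}\}$ without changing the asymptotics.

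Next I would apply \autoref{lem:uniform_block_ampl_contractions} to $B$ with an amplification precision $\gamma$ to be fixed, producing a $(\sqrt2,\, a+1,\, \varepsilon/n + \gamma)$-block-encoding $V$ of $B$ at cost $\order{(\alpha/n)\,T_U \log(1/\gamma)}$. Rescaling back is the mirror image of the first step: multiplying both the subnormalization and the error bound in \autoref{def:block_encoding} by $n$ shows that this same $V$ is a $(\sqrt2\,n,\, a+1,\, \varepsilon + n\gamma)$-block-encoding of $A = nB$.

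It then remains to choose $\gamma$ so that the final error is at most $\delta$. Since $\varepsilon \le \delta/2$ by hypothesis, taking $\gamma = \delta/(2n)$ gives $\varepsilon + n\gamma \le \delta/2 + \delta/2 = \delta$, which yields the claimed $(\sqrt2\,\norm{A},\, a+1,\, \delta)$-block-encoding. Substituting this value of $\gamma$ into the cost gives $\order{(\alpha/n)\,T_U \log(2n/\delta)} = \order{\tfrac{\alpha T_U}{\norm{A}}\log(\norm{A}/\delta)}$, absorbing the constant inside the logarithm, exactly as stated.

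There is no deep obstacle here; the work is entirely in the bookkeeping of how the three block-encoding parameters transform under $A \mapsto A/\norm{A}$ — the subnormalization and the error scale linearly in $\norm{A}$ while the ancilla count is preserved — together with verifying the hypothesis $\alpha/\norm{A}\ge 1$ needed to legitimately invoke \autoref{lem:uniform_block_ampl_contractions}.
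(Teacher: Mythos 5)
Your proposal is correct and follows essentially the same route as the paper: re-interpret $U$ as an $(\alpha/\norm{A}, a, \varepsilon/\norm{A})$-block-encoding of $A/\norm{A}$, invoke \autoref{lem:uniform_block_ampl_contractions} with $\gamma = \delta/(2\norm{A})$, and rescale back, using $\varepsilon \le \delta/2$ to bound the total error by $\delta$. Your explicit check of the hypothesis $\alpha/\norm{A} \ge 1$ is a small point of care that the paper's one-line proof leaves implicit.
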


We now obtain the complexity of applying a block-encoded matrix to a quantum state, which is a generalization of a lemma proven in Ref.~\cite{CGJ19}.

\begin{restatable}[Applying a Block-encoded Matrix on a Quantum State]{lemma}{thmbodyApplyBlockEnc}
  \label{lem:apply_block_enc}
  Let $A$ be an $s$-qubit operator such that its non-zero singular values lie in $[\norm{A}/\kappa,\norm{A}]$. Also let $\delta \in (0, 1)$, and $U_A$ be an $(\alpha, a, \varepsilon)$-block-encoding of $A$, implementable in time $T_A$, such that 
    $$
    \varepsilon \leq \frac{\delta\norm{A}}{2\kappa}.
    $$
  Furthermore, suppose $\ket b$ be an $s$-qubit quantum state, prepared in time $T_b$.
  Then we can prepare a state that is $\delta$-close to
  $\frac{A\ket b}{\norm{A\ket b}}$
  with success probability $\Omega \paren{1}$  at a cost of
  \begin{equation*}
      \order{\frac{\alpha\kappa}{\norm{A}}(T_A + T_b)}
  \end{equation*}
\end{restatable}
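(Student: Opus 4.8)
The plan is to combine the robust QSVT machinery (\autoref{thm:robust_qsvt}) with a standard amplitude-amplification argument, tracking errors carefully so that the final state is $\delta$-close to $A\ket b/\norm{A\ket b}$. The strategy is as follows. First I would use the fact that the non-zero singular values of $A$ lie in $[\norm A/\kappa,\norm A]$, so that the rescaled matrix $A/(2\norm A)$ has non-zero singular values in $[1/(2\kappa),1/2]$. The target transformation is the (scaled) identity restricted to the row space of $A$: concretely, I want to apply $A$ itself to $\ket b$. The natural approach is to realize $A/(\sqrt2\norm A)$ (or a close variant) as a block-encoding with constant subnormalization, apply it to $\ket b$, and then amplify the resulting amplitude on the good flag qubit.

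The key steps, in order, are: (i) Apply \autoref{lem:uniform_block_ampl} to the given $(\alpha,a,\varepsilon)$-block-encoding of $A$ to obtain a $(\sqrt2\norm A,a+1,\delta')$-block-encoding of $A$ for a suitably small $\delta'$; by that corollary this costs $\order{\frac{\alpha}{\norm A}\log(\norm A/\delta')\,T_A}$, which, absorbing the logarithmic factor, matches the claimed $\alpha/\norm A$ prefactor. Here I must check that the hypothesis $\varepsilon\le\delta'/2$ of the corollary is implied by the assumed bound $\varepsilon\le\delta\norm A/(2\kappa)$ for the chosen $\delta'$. (ii) Apply this amplified block-encoding to $\ket b\ket{0}^{\otimes(a+1)}$, producing a state of the form $\frac{1}{\sqrt2\norm A}\paren{A\ket b}\ket{0}^{\otimes(a+1)}+\ket{\Phi^\perp}$ up to error controlled by $\delta'$ and the subnormalization. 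The amplitude on the good subspace is $\Theta\paren{\frac{\norm{A\ket b}}{\sqrt2\norm A}}$, and since $\ket b$ has nontrivial overlap with the row space and the smallest non-zero singular value is $\norm A/\kappa$, this amplitude is $\Omega\paren{1/\kappa}$ in the worst case. (iii) Apply amplitude amplification to boost the success probability to $\Omega(1)$; this incurs a multiplicative factor of $\order{\sqrt A\norm/\norm{A\ket b}}=\order{\kappa}$ rounds, giving the overall $\order{\frac{\alpha\kappa}{\norm A}(T_A+T_b)}$ cost, since each round re-invokes both the block-encoding (cost $\sim\frac{\alpha}{\norm A}T_A$) and the state preparation (cost $T_b$).

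The main obstacle, and the step requiring the most care, is the error propagation through amplitude amplification. A block-encoding error of $\delta'$ translates, after applying to $\ket b$ and normalizing, into an error that can be amplified by the $\order{\kappa}$ rounds of amplitude amplification; so I expect the required precision to be $\delta'=\order{\delta/\kappa}$ or thereabouts, which is exactly why the hypothesis demands $\varepsilon\le\delta\norm A/(2\kappa)$. I would make this rigorous by bounding the normalized output-state error in terms of the block-encoding error via a triangle-inequality argument (using that the un-normalized good-subspace vector is $\frac{1}{\sqrt2\norm A}A\ket b$ up to $\order{\delta'}$, and that normalization of a vector of length $\Omega(1/\kappa)$ amplifies absolute errors by $\order\kappa$), and then invoking a standard robustness statement for amplitude amplification asserting that amplifying a state $\delta''$-close to the ideal yields a final state $\order{\delta''}$-close to the ideal amplified state. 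Combining these, the choice $\varepsilon\le\delta\norm A/(2\kappa)$ suffices. The remaining bookkeeping — verifying the singular-value interval after rescaling and confirming the logarithmic amplification overhead is subsumed in the stated complexity — is routine.
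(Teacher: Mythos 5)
Your proposal does not establish the complexity claimed in \autoref{lem:apply_block_enc}; it establishes the complexity of the \emph{next} statement, \autoref{lem:apply_block_enc_preamp}. The problem is concentrated in your step (i): pre-amplifying via \autoref{lem:uniform_block_ampl} costs $\order{\frac{\alpha}{\norm A}\log(\norm A/\delta')\,T_A}$ per invocation, and this unitary is re-invoked in each of the $\order{\kappa}$ rounds of amplitude amplification. The factor $\log(\norm A/\delta')$ cannot be ``absorbed'' into the $\alpha/\norm A$ prefactor: with $\delta'=\Theta(\delta/\kappa)$ (which, as you yourself note, is what the error propagation forces), your total cost is $\order{\frac{\alpha\kappa}{\norm A}\log(\kappa/\delta)\,T_A+\kappa T_b}$. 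That is exactly the bound of \autoref{lem:apply_block_enc_preamp} — better in the $T_b$ term, but strictly worse in the $T_A$ term by the $\log(\kappa/\delta)$ factor — and it is not bounded by the stated $\order{\frac{\alpha\kappa}{\norm A}(T_A+T_b)}$, which contains no logarithmic factors.

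The paper's proof avoids pre-amplification entirely: apply the given $U_A$ directly to $\ket 0^{\otimes a}\ket b$ at cost $T_A+T_b$, observe that the amplitude on the flagged subspace is $\norm{A\ket b}/\alpha\geq\norm A/(\alpha\kappa)$ (using that $\ket b$ is supported on singular values at least $\norm A/\kappa$), and run $\order{\alpha\kappa/\norm A}$ rounds of plain amplitude amplification, each costing $T_A+T_b$. The hypothesis $\varepsilon\leq\delta\norm A/(2\kappa)$ then directly gives $\delta$-closeness of the normalized output, since the block-encoding error $\varepsilon$ is amplified by at most $2/\norm{A\ket b}\leq 2\kappa/\norm A$ upon normalization. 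Your remaining steps (the amplitude lower bound, the triangle-inequality error propagation, the robustness of amplitude amplification) are sound and essentially mirror the paper's reasoning; only the decision to amplify the block-encoding first is what breaks the claimed bound. Dropping step (i) and trading the $\Omega(1/\kappa)$ good amplitude for $\Omega(\norm A/(\alpha\kappa))$ repairs the proof.
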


\begin{restatable}[Applying a pre-amplified Block-encoded Matrix on a Quantum State]{corollary}{thmbodyApplyBlockEncPreamp}
  \label{lem:apply_block_enc_preamp}
  Let $A$ be an $s$-qubit operator such that its non-zero singular values lie in $[\norm{A}/\kappa,\norm{A}]$. Also let $\delta \in (0, 1)$, and $U_A$ be an $(\alpha, a, \varepsilon)$-block-encoding of $A$, implementable in time $T_A$, such that 
    $$
    \varepsilon \leq \frac{\delta\norm{A}}{4\kappa}.
    $$
  Furthermore, suppose $\ket b$ be an $s$-qubit quantum state that can be prepared in time $T_b$.
  Then we can prepare a state that is $\delta$-close to
  $\frac{A\ket b}{\norm{A\ket b}}$
  with success probability $\Omega \paren{1}$ at a cost of
  \begin{equation*}
     \order{\frac{\alpha\kappa}{\norm{A}}\logp{\frac{\kappa}{\delta}} T_A + \kappa T_b}
  \end{equation*}
\end{restatable}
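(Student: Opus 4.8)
The plan is to reduce to \autoref{lem:apply_block_enc} after first cheaply pre-amplifying the block-encoding of $A$. The observation driving the improvement is that in \autoref{lem:apply_block_enc} both $T_A$ and $T_b$ are multiplied by the factor $\alpha\kappa/\norm{A}$; when the subnormalization $\alpha$ is much larger than $\norm{A}$, carrying this factor on the state-preparation cost $T_b$ is wasteful. Pre-amplification lets us replace $\alpha$ by a constant multiple of $\norm{A}$ \emph{before} invoking \autoref{lem:apply_block_enc}, so that $T_b$ is multiplied only by $\order{\kappa}$, at the price of a single extra logarithmic factor on the $T_A$ term.

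First I would apply \autoref{lem:uniform_block_ampl} to $U_A$ with target precision $\delta' := \delta\norm{A}/(2\kappa)$. The hypothesis of that corollary requires $\varepsilon \le \delta'/2 = \delta\norm{A}/(4\kappa)$, which is exactly the bound assumed here. This yields a $(\sqrt2\,\norm{A}, a+1, \delta')$-block-encoding of $A$ implementable in cost $T_A' = \order{\frac{\alpha T_A}{\norm{A}}\logp{\norm{A}/\delta'}}$. Substituting $\delta' = \delta\norm{A}/(2\kappa)$ gives $\norm{A}/\delta' = 2\kappa/\delta$, collapsing the logarithm to $\logp{\kappa/\delta}$, so that $T_A' = \order{\frac{\alpha}{\norm{A}}\logp{\kappa/\delta}\,T_A}$.

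Next I would feed this amplified block-encoding into \autoref{lem:apply_block_enc}. Its precision requirement, that the input block-encoding have error at most $\delta\norm{A}/(2\kappa)$, is met with equality by our choice of $\delta'$, and its singular-value hypothesis is a property of $A$ itself and hence carries over unchanged. Since the amplified block-encoding has subnormalization $\sqrt2\,\norm{A}$, the factor $\alpha\kappa/\norm{A}$ appearing in \autoref{lem:apply_block_enc} becomes $\sqrt2\,\kappa = \order{\kappa}$, giving a state $\delta$-close to $A\ket b/\norm{A\ket b}$ with success probability $\Omega(1)$ at cost $\order{\kappa(T_A' + T_b)}$. Substituting the expression for $T_A'$ yields total cost $\order{\kappa\cdot\frac{\alpha}{\norm{A}}\logp{\kappa/\delta}T_A + \kappa T_b} = \order{\frac{\alpha\kappa}{\norm{A}}\logp{\kappa/\delta}T_A + \kappa T_b}$, as claimed.

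The only real subtlety, and the step I would check most carefully, is the bookkeeping of the three precision parameters. The factor of two between the corollary's $\varepsilon \le \delta'/2$ condition and the lemma's $\delta' \le \delta\norm{A}/(2\kappa)$ condition is precisely what forces the tighter input bound $\varepsilon \le \delta\norm{A}/(4\kappa)$ in the statement; one must verify that choosing $\delta'$ at the extreme of the allowed range keeps the argument of the logarithm at $\order{\kappa/\delta}$ rather than inflating it, which is what makes the final $\logp{\kappa/\delta}$ emerge cleanly.
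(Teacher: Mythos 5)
Your proposal is correct and follows essentially the same route as the paper's proof: pre-amplify $U_A$ via \autoref{lem:uniform_block_ampl} with intermediate precision $\delta\norm{A}/(2\kappa)$ to get a $(\sqrt2\,\norm{A}, a+1, \cdot)$-block-encoding, then invoke \autoref{lem:apply_block_enc} so that the $\alpha/\norm{A}$ factor lands only on $T_A$. The bookkeeping of the three precision parameters, including the factor of two forcing $\varepsilon \le \delta\norm{A}/(4\kappa)$, matches the paper's choice exactly.
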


Now, it may happen that $U_b$ prepares a quantum state that is only $\varepsilon$-close to the desired state $\ket{b}$. In such cases, we have the following lemma
\begin{restatable}[Robustness of state preparation]{lemma}{thmbodyApplyBlockEncApprox}
  \label{lem:apply_block_enc_approx}
  Let $A$ be an $s$-qubit operator such that its non-zero singular values lie in $\left[\norm{A}/\kappa,\norm{A}\right]$. Suppose $\ket{b'}$ is a quantum state that is $\varepsilon/2\kappa$-close to $\ket b$
  and $\ket\psi$ is a quantum state that is $\varepsilon/2$-close to
  $A\ket{b'}/\norm{A\ket{b'}}$. Then we have that $\ket\psi$ is $\varepsilon$-close to $A\ket{b}/\norm{A\ket{b}}$.
\end{restatable}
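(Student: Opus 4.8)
The plan is to bound the target quantity $\norm{\ket\psi - A\ket b/\norm{A\ket b}}$ by the triangle inequality, routing through the intermediate normalized vector $A\ket{b'}/\norm{A\ket{b'}}$. The first resulting term, $\norm{\ket\psi - A\ket{b'}/\norm{A\ket{b'}}}$, is at most $\varepsilon/2$ directly by hypothesis, so the whole argument reduces to showing that the second term, $\norm{A\ket{b'}/\norm{A\ket{b'}} - A\ket b/\norm{A\ket b}}$, is at most $\varepsilon/2$. This is a statement about how far the normalized image of a vector under $A$ can move when the input is perturbed by $\norm{\ket b - \ket{b'}} \le \varepsilon/2\kappa$.

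The key ingredient is a perturbation bound for normalized vectors: for any two nonzero vectors $p, q$,
\[ \norm{\frac{p}{\norm p} - \frac{q}{\norm q}} \le \frac{2\norm{p - q}}{\norm p + \norm q}. \]
I would prove this by squaring. Writing $\alpha = \norm p$, $\beta = \norm q$, $\hat p = p/\norm p$, $\hat q = q/\norm q$, and $c = \mathrm{Re}\braket{\hat p | \hat q}$, one has $\norm{\hat p - \hat q}^2 = 2 - 2c$ and $\norm{p - q}^2 = \alpha^2 + \beta^2 - 2\alpha\beta c$, so the claimed inequality is equivalent to $(\alpha+\beta)^2(1 - c) \le 2(\alpha^2 + \beta^2 - 2\alpha\beta c)$. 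The difference of the two sides simplifies to $(\alpha - \beta)^2(1 + c)$, which is nonnegative since $c \ge -1$ by Cauchy--Schwarz. Applying this with $p = A\ket b$ and $q = A\ket{b'}$ gives a bound whose numerator is $\norm{A\ket b - A\ket{b'}} = \norm{A(\ket b - \ket{b'})} \le \norm A \cdot \varepsilon/2\kappa$.

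For the denominator I would use the spectral assumption: since the non-zero singular values of $A$ lie in $[\norm A/\kappa, \norm A]$, any unit vector supported on the span of the right singular vectors with non-zero singular value satisfies $\norm{A\ket b} \ge \norm A/\kappa$, and likewise $\norm{A\ket{b'}} \ge \norm A/\kappa$; this uses that $\ket b$ and $\ket{b'}$ lie in this support, which is implicit for the normalized quantities to be well-defined. Hence $\norm{A\ket b} + \norm{A\ket{b'}} \ge 2\norm A/\kappa$, and combining with the perturbation bound, the second term is at most $\dfrac{2\norm A \cdot \varepsilon/2\kappa}{2\norm A/\kappa} = \varepsilon/2$. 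Adding the two contributions yields $\varepsilon$, as required.

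The step I expect to be the main obstacle is obtaining the perturbation bound with the symmetric denominator $\norm p + \norm q$ rather than the looser $\max(\norm p, \norm q)$: the latter, combined with a single-sided lower bound on $\norm{A\ket b}$, only yields $\varepsilon$ for the second term and hence $3\varepsilon/2$ overall. Closing the constants at exactly $\varepsilon$ requires the symmetric denominator together with lower bounds on \emph{both} $\norm{A\ket b}$ and $\norm{A\ket{b'}}$, so the care lies in tracking these constants and in justifying that both input states lie in the support of $A$.
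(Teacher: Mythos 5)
Your proof is correct, and it follows the same skeleton as the paper's: a triangle inequality through the intermediate normalized state $A\ket{b'}/\norm{A\ket{b'}}$, with the spectral assumption supplying the lower bound $\norm{A}/\kappa$ on the relevant norms. The difference is in how the second term is closed. The paper splits $A\ket{b}$ as $(A\ket{b}-A\ket{b'})+A\ket{b'}$ over the single denominator $\norm{A\ket{b}}$ and then simply asserts that ``for small enough $\varepsilon\ll\kappa$ we can assume $\norm{A\ket b}\approx\norm{A\ket{b'}}$'' in order to identify $A\ket{b'}/\norm{A\ket{b}}$ with $A\ket{b'}/\norm{A\ket{b'}}$ -- an informal step that is not quantified. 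Your symmetric perturbation bound
\[
\norm{\frac{p}{\norm p}-\frac{q}{\norm q}}\le\frac{2\norm{p-q}}{\norm p+\norm q},
\]
whose verification via the identity $(\alpha-\beta)^2(1+c)\ge 0$ is correct, absorbs both the change of direction and the change of normalization into a single exact inequality, so the final constant $\varepsilon$ comes out with no asymptotic caveat; your observation that the looser $\max$-denominator version would only give $3\varepsilon/2$ is also accurate and explains why the symmetric form is needed. You are likewise right to flag that the lower bounds $\norm{A\ket b},\norm{A\ket{b'}}\ge\norm{A}/\kappa$ require both states to lie in the support of $A$; the paper uses the same bound for $\ket{b}$ without comment and leaves the corresponding assumption on $\ket{b'}$ implicit, to be discharged in the downstream applications. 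In short, your argument is a rigorous version of the paper's, and strictly tighter in its bookkeeping.
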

The proofs for \autoref{lem:uniform_block_ampl}, \autoref{lem:apply_block_enc}, \autoref{lem:apply_block_enc_preamp}, and \autoref{lem:apply_block_enc_approx} can be found in \autoref{app:algo_primitives}.

%-----------------------------------------------------------------------------
\subsection{Arithmetic with Block-Encoded Matrices}
\label{subsec:arith-w-block}
%-----------------------------------------------------------------------------

The block-encoding framework embeds a matrix on the top left block of a larger unitary $U$. It has been demonstrated that this framework allows us to obtain sums, products, linear combinations of block-encoded matrices. This is particularly useful for solving linear algebra problems in general. Here, we state some of the arithmetic operations on block-encoded matrices that we shall be using in order to design the quantum algorithms of \autoref{sec:main_proof} and tailor existing results to our requirements. 

First we prove a slightly more general form of linear combination of unitaries in the block-encoding framework, presented in \cite{GSLW2019}. To do this we assume that we are given optimal state preparation pairs, defined as follows.

\begin{definition}[Optimal State Preparation Unitary]
    \label{def:constr_opt_spu}
    Let $m \in \mathbb{Z}^+$,
    and $s = \lceil\log{m}\rceil$.
    Let $\eta \in \mathbb{R}^m_+$.
    Then we call a $s$-qubit unitary $P$ a $\eta$ state-preparation unitary if
    \[P\ket{0} = \frac{1}{\sum_j \eta_j}\sum_j \sqrt{\eta_j} \ket{j}\]
\end{definition}

\begin{restatable}[Linear Combination of Block Encoded Matrices]{lemma}{thmbodyConstrLincomb}
    \label{lem:constr_lincomb}
    For each $j \in \{0, \ldots, m-1\}$,
      let $A_j$ be an $s$-qubit operator,
      and $y_j \in \mathbb{R}^+$.
      Let $U_j$ be a $(\alpha_j, a_j, \varepsilon_j)$-block-encoding of $A_j$,
        implemented in time $T_j$.
    Define the matrix $A = \sum_j y_j A_j$,
      and the vector $\eta \in \rr^m$ s.t. $\eta_j = y_j\alpha_j$.
    Let $U_\eta$ be a $\eta$ state-preparation unitary,
      implemented in time $T_\eta$.
    Then we can implement a 
    $$\paren{\sum_j y_j \alpha_j ,\max_j(a_j) + s, \sum_j y_j \varepsilon_j}$$
    block-encoding
    of $A$
    % using one call each to controlled $A_j$
    % and $\order{1}$ calls to a $\eta$ state-preparation unitary.
    at a cost of $\order{\sum_j T_j + T_\eta}$.
\end{restatable}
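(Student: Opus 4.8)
The plan is to use the standard linear-combination-of-unitaries (LCU) construction, but tracking sub-normalizations and errors carefully since the individual block-encodings are only approximate and carry different normalization constants $\alpha_j$. The idea is to implement the select oracle $W = \sum_j \ketbra{j} \otimes U_j$ controlled on the state-preparation register, and sandwich it between $U_\eta$ and $U_\eta^\dagger$. Let me sketch how I would organize this.

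First I would set up the combined unitary. Let $s = \lceil \log m \rceil$ be the number of qubits in the selection register. Define the multiplexed unitary $W$ acting on the selection register, the (largest) ancilla register of size $\max_j(a_j)$, and the system register, by $W = \sum_{j} \ketbra{j} \otimes U_j$, where each $U_j$ is padded with idle ancillas so that all act on a common ancilla space of size $\max_j a_j$. I would then define the candidate block-encoding as $V := (U_\eta^\dagger \otimes I)\, W\, (U_\eta \otimes I)$, and identify the projector onto the "all-zeros" ancilla/selection state. The cost claim $\order{\sum_j T_j + T_\eta}$ follows immediately because $W$ is applied once (it is a controlled selection among the $U_j$, costing $\sum_j T_j$ in the worst case) along with two applications of $U_\eta$.

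Next I would compute the encoded block exactly in the idealized case (perfect block-encodings). Projecting $V$ onto $\ket{0}^{\otimes s}$ in the selection register and $\ket{0}^{\otimes \max_j a_j}$ in the ancilla register, the amplitude structure of $U_\eta$ gives $\bra{0} U_\eta^\dagger = \frac{1}{\sum_k \eta_k}\sum_j \sqrt{\eta_j}\bra{j}$ and symmetrically on the right, so that the top-left block becomes $\frac{1}{\sum_k \eta_k}\sum_j \eta_j \frac{A_j}{\alpha_j}$. Substituting $\eta_j = y_j \alpha_j$ collapses this to $\frac{1}{\sum_k \eta_k}\sum_j y_j A_j = \frac{A}{\sum_j y_j\alpha_j}$, which is exactly what a $(\sum_j y_j\alpha_j, \cdot, \cdot)$-block-encoding of $A$ requires. (Here I would be slightly careful about the normalization convention in \autoref{def:constr_opt_spu}; if the stated $\frac{1}{\sum_j\eta_j}$ prefactor is really meant to be $\frac{1}{\sqrt{\sum_j \eta_j}}$ for $P\ket 0$ to be a unit vector, the two $1/\sqrt{\cdot}$ factors from $U_\eta$ and $U_\eta^\dagger$ multiply to give the single $\frac{1}{\sum_j\eta_j}$ appearing in the block, which is the self-consistent reading. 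I would state the computation in whichever convention the definition fixes and note this.)

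The main obstacle, and the part deserving real care, is the error analysis, since each $U_j$ only encodes some $\tilde A_j$ with $\norm{A_j - \tilde A_j}\le \varepsilon_j$. I would let $\tilde A = \sum_j y_j \tilde A_j$ be the matrix actually encoded (the computation above goes through verbatim with $A_j \to \tilde A_j$), and then bound the error by the triangle inequality:
\[
\norm{A - \tilde A} = \norm{\sum_j y_j (A_j - \tilde A_j)} \le \sum_j y_j \norm{A_j - \tilde A_j} \le \sum_j y_j \varepsilon_j,
\]
which is precisely the claimed error parameter. The one subtlety I would verify is that the block normalization $\sum_j y_j \alpha_j$ is unaffected by the approximation — it depends only on $U_\eta$ and the fixed constants $\alpha_j$, not on the encoded matrices — so the approximation error enters purely additively through the encoded block and not through the sub-normalization. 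Finally I would confirm the ancilla count: $\max_j(a_j)$ qubits for the shared block-encoding ancilla plus $s$ qubits for the selection register gives the stated $\max_j(a_j) + s$.
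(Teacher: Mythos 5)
Your proposal is correct and follows essentially the same route as the paper: the standard select-oracle LCU construction $U_\eta^\dagger\,\bigl(\sum_j \ketbra{j}\otimes U_j'\bigr)\,U_\eta$ with padded ancillas, the choice $\eta_j = y_j\alpha_j$ to make the encoded block equal $A/\sum_j y_j\alpha_j$, and the triangle inequality giving the error $\sum_j y_j\varepsilon_j$. Your side remark about the normalization prefactor in \autoref{def:constr_opt_spu} (it should read $1/\sqrt{\sum_j\eta_j}$ for $P\ket{0}$ to be a unit vector) is a correct observation about a typo in the definition, and the paper's proof implicitly uses the same self-consistent reading you identified.
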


The proof is similar to the one in Ref.~\cite{GSLW2019}, with some improvements to the bounds. The detailed proof can be found in \autoref{app:algo_primitives}.
We now specialize the above lemma for the case where we need a linear combination of just two unitaries. This is the case used in this work, and we obtain a better error scaling for this by giving an explicit state preparation unitary.

\begin{corollary}[Linear Combination of Two Block Encoded Matrices]
    \label{lem:constr_lincomb_two}
    For $j \in \{0, 1\}$,
    let $A_j$ be an $s$-qubit operator
    and $y_j \in \mathbb{R}^+$.
    Let $U_j$ be a $(\alpha_j, a_j, \varepsilon_j)$-block-encoding of $A_j$,
      implemented in time $T_j$.
    Then we can implement a $(y_0 \alpha_0 + y_1 \alpha_1, 1 + \max(a_0, a_1), y_0 \varepsilon_0 + y_1 \varepsilon_1)$ encoding of $y_0 A_0 + y_1 A_1$
    % using one call each to controlled $A_j$ and $\order{1}$ primitive gates.
    in time $\order{T_0 + T_1}$.
\end{corollary}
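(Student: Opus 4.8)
The plan is to derive this directly from \autoref{lem:constr_lincomb} specialized to $m = 2$, while supplying an explicit single-qubit state-preparation unitary so that no error and no non-trivial cost is incurred in preparing the LCU coefficients. With $m = 2$ the definition gives $s = \lceil \log 2 \rceil = 1$, and the coefficient vector of \autoref{def:constr_opt_spu} is $\eta = (y_0\alpha_0,\, y_1\alpha_1)$.

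First I would construct the $\eta$ state-preparation unitary by hand. Since $\eta$ is a length-two vector of known positive reals, a single-qubit $y$-rotation $R_y(\theta)$ with $\theta = 2\arctan\!\paren{\sqrt{\eta_1/\eta_0}}$ sends $\ket{0}$ to the normalized superposition $\paren{\sqrt{\eta_0}\ket{0} + \sqrt{\eta_1}\ket{1}}/\sqrt{\eta_0 + \eta_1}$ required by \autoref{def:constr_opt_spu}. This gate acts on the single ($s = 1$) qubit that is already charged to the ancilla count, it is exact, and it has cost $T_\eta = \order{1}$. The exactness is the whole point: it means the coefficient-preparation step contributes nothing to the error budget, which is the ``better error scaling'' alluded to before the statement (a generic $U_\eta$ would instead add its own approximation error).

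Second, I would invoke \autoref{lem:constr_lincomb} with these inputs and simply read off the three parameters of the resulting block-encoding of $y_0 A_0 + y_1 A_1$: the subnormalization $\sum_j y_j\alpha_j = y_0\alpha_0 + y_1\alpha_1$, the ancilla count $\max_j(a_j) + s = \max(a_0, a_1) + 1$, and the error $\sum_j y_j\varepsilon_j = y_0\varepsilon_0 + y_1\varepsilon_1$. The cost is $\order{\sum_j T_j + T_\eta} = \order{T_0 + T_1}$, since $T_\eta = \order{1}$, giving exactly the claimed parameters.

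There is essentially no obstacle here beyond bookkeeping; the only step meriting care is checking that the explicit single-qubit rotation is genuinely an \emph{exact} $\eta$ state-preparation unitary, so that the error remains $y_0\varepsilon_0 + y_1\varepsilon_1$ with no extra term. This relies only on $\eta_0, \eta_1$ being known reals and on treating the single-qubit rotation as an exact primitive.
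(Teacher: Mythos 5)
Your proposal is correct and follows essentially the same route as the paper: the paper also proves the corollary by exhibiting an explicit, exact single-qubit state-preparation unitary for $\eta = (y_0\alpha_0, y_1\alpha_1)$ (written there as the $2\times 2$ real rotation $P = \frac{1}{\sqrt\alpha}\bigl(\begin{smallmatrix}\sqrt{y_0\alpha_0} & -\sqrt{y_1\alpha_1}\\ \sqrt{y_1\alpha_1} & \sqrt{y_0\alpha_0}\end{smallmatrix}\bigr)$, which is exactly your $R_y(\theta)$) and then invoking \autoref{lem:constr_lincomb}. Your reading-off of the subnormalization, ancilla count, error, and cost matches the paper's conclusion.
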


\begin{proof}

Let $\alpha = y_0 \alpha_0 + y_1 \alpha_1$
and $P = \frac{1}{\sqrt{\alpha}}
\begin{pmatrix} \sqrt{y_0 \alpha_0} & -\sqrt{y_1 \alpha_1} \\
\sqrt{y_1 \alpha_1} & \sqrt{y_0 \alpha_0} \end{pmatrix}$.
By \autoref{def:constr_opt_spu}, 
$P$ is a $\{y_0\alpha_0, y_1\alpha_1\}$ state-preparation-unitary.
Invoking \autoref{lem:constr_lincomb} with $P$,
we get the required unitary.

\end{proof}

Given block-encodings of two matrices $A$ and $B$, it is easy to obtain a block-encoding of $AB$. 

\begin{lemma}[Product of Block Encodings, \cite{GSLW2019full}, Lemma 53]
     \label{lem:prod_of_be}
    If $U_A$ is an $(\alpha, a, \delta)$-block-encoding
      of an $s$-qubit operator $A$
      implemented in time $T_A$,
    and $U_B$ is a $(\beta, b, \varepsilon)$-block-encoding
      of an $s$-qubit operator $B$
      implemented in time $T_B$,
    then $(I^{\otimes b} \otimes U_A) (I^{\otimes a} \otimes U_B)$
    is an $(\alpha\beta, a+b, \alpha\varepsilon + \beta\delta)$-block-encoding of $AB$
      implemented at a cost of $\order{T_A + T_B}$.
\end{lemma}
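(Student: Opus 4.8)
The plan is to first identify the matrix actually encoded in the top-left block of the product unitary $C := (I^{\otimes b}\otimes U_A)(I^{\otimes a}\otimes U_B)$, and then bound its distance from $AB$. Write $\tilde A := \alpha(\bra{0}^{\otimes a}\otimes I)U_A(\ket{0}^{\otimes a}\otimes I)$ and $\tilde B := \beta(\bra{0}^{\otimes b}\otimes I)U_B(\ket{0}^{\otimes b}\otimes I)$ for the genuine encoded blocks, so that $\norm{A - \tilde A}\le\delta$, $\norm{B-\tilde B}\le\varepsilon$, and --- since $\tilde A/\alpha$ and $\tilde B/\beta$ are sub-blocks of unitaries --- $\norm{\tilde A}\le\alpha$, $\norm{\tilde B}\le\beta$. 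Note that $C$ is manifestly unitary on $a+b+s$ qubits (a product of unitaries), so it uses $a+b$ ancilla qubits, and it is implemented with one use each of $U_A$ and $U_B$ (the identity factors are free), giving cost $\order{T_A+T_B}$.

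The core step is to show $\paren{\bra{0}^{\otimes(a+b)}\otimes I_s}C\paren{\ket{0}^{\otimes(a+b)}\otimes I_s} = \tilde A\tilde B/(\alpha\beta)$, so that the encoded matrix, after multiplying by the subnormalization $\alpha\beta$, is exactly $\tilde A\tilde B$. I would track an arbitrary system input $\ket{\psi}$ through the two layers, reading $I^{\otimes b}\otimes U_A$ as acting as $U_A$ on the $a$-ancilla and system while fixing the $b$-ancilla, and $I^{\otimes a}\otimes U_B$ as acting as $U_B$ on the $b$-ancilla and system while fixing the $a$-ancilla. Applying the second factor to $\ket{0}^{\otimes a}\ket{0}^{\otimes b}\ket{\psi}$ leaves the $a$-ancilla untouched and sends the remainder to $\ket{0}^{\otimes b}\tfrac{\tilde B}{\beta}\ket{\psi} + \ket{\Phi^\perp_b}$, where $\ket{\Phi^\perp_b}$ has vanishing $b$-ancilla overlap with $\ket{0}^{\otimes b}$. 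Applying the first factor maps the good term to $\ket{0}^{\otimes b}\otimes\paren{\ket{0}^{\otimes a}\tfrac{\tilde A}{\alpha}\tfrac{\tilde B}{\beta}\ket{\psi} + \ket{\Phi^\perp_a}}$ and keeps $\ket{\Phi^\perp_b}$ inside the subspace where the $b$-ancilla is nonzero. Projecting both ancilla registers back onto $\ket{0}^{\otimes(a+b)}$ then annihilates every garbage term ($\ket{\Phi^\perp_a}$ by the $a$-projection, $\ket{\Phi^\perp_b}$ by the $b$-projection), leaving exactly $\tfrac{\tilde A}{\alpha}\tfrac{\tilde B}{\beta}\ket{\psi}$, as claimed.

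It remains to bound $\norm{AB - \tilde A\tilde B}$. Telescoping as $AB - \tilde A\tilde B = A(B-\tilde B) + (A-\tilde A)\tilde B$ and using submultiplicativity gives $\norm{AB-\tilde A\tilde B}\le\norm{A}\norm{B-\tilde B} + \norm{A-\tilde A}\norm{\tilde B}\le \alpha\varepsilon + \delta\beta$, where I invoke $\norm{A}\le\alpha$ and $\norm{\tilde B}\le\beta$. This is precisely the claimed error, so by \autoref{def:block_encoding} the four facts established --- encoded block $\tilde A\tilde B$, subnormalization $\alpha\beta$, $a+b$ ancillas, and error $\alpha\varepsilon+\beta\delta$ --- show that $C$ is an $(\alpha\beta, a+b, \alpha\varepsilon+\beta\delta)$-block-encoding of $AB$.

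The main obstacle is the bookkeeping in the core step: keeping the two ancilla registers' roles straight so that the off-diagonal garbage produced by each layer either stays flagged (nonzero $b$-ancilla) through the second layer or is removed by the final projection (nonzero $a$-ancilla). Once that orthogonality is pinned down the norm estimate is routine; the only subtlety is the use of $\norm{A}\le\alpha$ (equivalently, regrouping the telescope as $(A-\tilde A)B + \tilde A(B-\tilde B)$ to land on $\norm{\tilde A}\le\alpha$ and $\norm{B}\le\beta$) to obtain the clean coefficient $\alpha\varepsilon+\beta\delta$ rather than an extra second-order $\delta\varepsilon$ term.
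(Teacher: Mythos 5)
Your proof is correct and follows the standard argument: the paper itself does not prove this lemma but imports it verbatim from \cite{GSLW2019full} (Lemma 53), whose proof is exactly your computation --- the projector onto $\ket{0}^{\otimes b}$ commutes past $I^{\otimes b}\otimes U_A$, so the encoded block of the product is $\tilde A\tilde B/(\alpha\beta)$, and the telescoping estimate gives the error. The only caveat, which you already flag yourself, is that the clean coefficient $\alpha\varepsilon+\beta\delta$ requires $\norm{A}\le\alpha$ (or, after regrouping, $\norm{B}\le\beta$), whereas \autoref{def:block_encoding} alone only guarantees $\norm{A}\le\alpha+\delta$, leaving a harmless second-order $\delta\varepsilon$ remainder; this is a standard and universally tolerated abuse that is present in the original GSLW statement as well.
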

%%%%%%%%%%%%%%%%%%%%%%%%%%%%%%%%%%%%%%%%%%%%%%%%%%%%%%%%%%%%%%%%%%%%%%%%%%%%%%%%%%%%%%%%%%%%%
Directly applying \autoref{lem:prod_of_be} results in a block-encoding of $\frac{AB}{\alpha\beta}$. If $\alpha$ and $\beta$ are large, then the sub-normalization factor $\alpha\beta$ might incur an undesirable overhead to the cost of the algorithm that uses it. In many cases, the complexity of obtaining products of block-encodings can be improved if we first amplify the block-encodings (using Lemma \ref{lem:uniform_block_ampl}) and then apply \autoref{lem:prod_of_be}. We prove the following lemma:

\begin{lemma}[Product of Amplified Block-Encodings]
     \label{lem:prod_of_be_preamp}
    Let $\delta \in (0, 1]$. If $U_A$ is an $(\alpha_A, a_A, \varepsilon_A)$-block-encoding
      of an $s$-qubit operator $A$
      implemented in time $T_A$,
    and $U_B$ is a $(\alpha_B, a_B, \varepsilon_B)$-block-encoding
      of an $s$-qubit operator $B$
      implemented in time $T_B$,
      such that $\varepsilon_A \leq \frac{\delta}{4\sqrt{2}\norm{B}}$ and $\varepsilon_B \leq \frac{\delta}{4\sqrt{2}\norm{A}}$.
    Then we can implement a
    $(2\norm{A}\norm{B}, a_A+a_B+2, \delta)$-block-encoding of $AB$
      implemented at a cost of 
      $$\order{\paren{\frac{\alpha_A}{\norm{A}} T_A + \frac{\alpha_B}{\norm{B}} T_B}\log\paren{\frac{\norm{A}\norm{B}}{\delta}}}.$$
\end{lemma}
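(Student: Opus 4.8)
The plan is to build the desired encoding by a clean two-step composition of earlier results: first amplify each input block-encoding individually using \autoref{lem:uniform_block_ampl}, so that the large sub-normalizations $\alpha_A,\alpha_B$ are replaced by $\order{\norm{A}},\order{\norm{B}}$, and then combine the two amplified encodings with the product rule \autoref{lem:prod_of_be}. The only genuine design choice is the intermediate precision to which each amplification is carried out, and I would fix those two parameters by working backwards from the target error $\delta$ of the final product encoding.

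Concretely, I would amplify $U_A$ to a $(\sqrt2\norm{A}, a_A+1, \delta_A)$-block-encoding of $A$ and $U_B$ to a $(\sqrt2\norm{B}, a_B+1, \delta_B)$-block-encoding of $B$, leaving $\delta_A,\delta_B$ to be chosen. Feeding these into \autoref{lem:prod_of_be} produces a block-encoding of $AB$ with sub-normalization $\sqrt2\norm{A}\cdot\sqrt2\norm{B}=2\norm{A}\norm{B}$, ancilla count $(a_A+1)+(a_B+1)=a_A+a_B+2$, and error $\sqrt2\norm{A}\,\delta_B+\sqrt2\norm{B}\,\delta_A$; the sub-normalization and ancilla count already match the statement exactly.

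To force the error bound I would split $\delta$ evenly between the two contributions by choosing $\delta_A=\delta/\paren{2\sqrt2\norm{B}}$ and $\delta_B=\delta/\paren{2\sqrt2\norm{A}}$, so that each term equals $\delta/2$ and their sum is exactly $\delta$. With this choice the precondition $\varepsilon\le(\text{target error})/2$ of \autoref{lem:uniform_block_ampl} becomes precisely $\varepsilon_A\le\delta/\paren{4\sqrt2\norm{B}}$ and $\varepsilon_B\le\delta/\paren{4\sqrt2\norm{A}}$, which are exactly the hypotheses of the lemma; the factors of $\sqrt2$ are what make the two conditions line up, so both amplification steps are licensed.

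Finally I would read off the cost. Amplifying $U_A$ costs $\order{\frac{\alpha_A}{\norm{A}}T_A\log\paren{\norm{A}/\delta_A}}$, and substituting $\delta_A=\delta/\paren{2\sqrt2\norm{B}}$ gives $\log\paren{\norm{A}/\delta_A}=\log\paren{2\sqrt2\norm{A}\norm{B}/\delta}=\order{\log\paren{\norm{A}\norm{B}/\delta}}$, with the symmetric statement for $U_B$. Since \autoref{lem:prod_of_be} implements the product at a cost equal to the sum of the costs of its two inputs, the overall complexity is just the sum of these two amplification costs, which is exactly the claimed bound. I do not anticipate a real obstacle: the argument is a straightforward composition, and the only thing requiring care is the bookkeeping that ties the chosen amplification precisions $\delta_A,\delta_B$ back to the stated hypotheses on $\varepsilon_A,\varepsilon_B$.
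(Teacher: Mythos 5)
Your proposal is correct and follows essentially the same route as the paper's own proof: amplify each block-encoding via \autoref{lem:uniform_block_ampl} to sub-normalizations $\sqrt2\norm{A}$ and $\sqrt2\norm{B}$, compose with \autoref{lem:prod_of_be}, and fix $\delta_A=\delta/(2\sqrt2\norm{B})$, $\delta_B=\delta/(2\sqrt2\norm{A})$, which is exactly the paper's choice. The error, ancilla, and cost bookkeeping all match.
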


\begin{proof}
  Using \autoref{lem:uniform_block_ampl}
    for some $\delta_A \ge 2\varepsilon_A$ we get a
    $(\sqrt 2 \norm{A}, a_A + 1, \delta_A)$-block-encoding of $A$
    at a cost of 
    $$\order{\frac{\alpha_A T_A}{\norm{A}} \log\paren{\norm{A}/\delta_A}}.$$
  Similarly for some $\delta_B \ge 2\varepsilon_B$ we get a
    $(\sqrt 2 \norm{B}, a_B + 1, \delta_B)$-block-encoding of $B$
    at a cost of 
    $$\order{\frac{\alpha_B T_B}{\norm{B}} \log\paren{\norm{B}/\delta_B}}.$$
      Now using \autoref{lem:prod_of_be} we get 
    a $(2, a_A+a_B+2, \sqrt{2}\paren{\norm{A}\delta_B + \norm{B}\delta_A})$-block-encoding of $AB$.
  We can choose
    $\delta_A = \frac{\delta}{2\sqrt 2\norm{B}}$
    and 
    $\delta_B = \frac{\delta}{2\sqrt 2\norm{A}}$
  which bounds the final block-encoding error by $\delta$.
\end{proof}
%%%%%%%%%%%%%%%%%%%%%%%%%%%%%%%%%%%%%%%%%%%%%%%%%%%%%%%%%%%%%%%%%%%%%%%%%%%%%%
% Observe that although we assumed that $A$ and $B$ are $s$-qubit operators, this is without loss of generality. For any two matrices of dimension ${N\times d}$ and $d \times K$, such that $N, d, K \in \nn;~N,d, K \leq 2^s$ we can always pad them with rows of zero entries. This leaves the matrix product unaltered. 
{\UPDATED
Observe that we have assumed that $A$ and $B$ are $s$-qubit operators. For any two matrices of dimension ${N\times d}$ and $d \times K$, such that $N,d, K \leq 2^s$, we can always pad them with rows and columns of zero entries and convert them to $s$-qubit operators. Thus, in the scenario where $A$ and $B$ are not $s$-qubit operators, one can consider block encodings of padded versions of these matrices. Note that this does not affect the operations on the sub-matrix blocks encoding $A$ and $B$. Thus, the above results can be used to perform block-encoded matrix products for arbitrary (compatible) matrices.
} 

Next we show how to find the block encoding of tensor product of matrices from their block encodings. This procedure will be useful in creating the dilated matrices required for regularization. The proof can be found in \autoref{app:algo_primitives}. 

\begin{restatable}[Tensor Product of Block Encoded Matrices]{lemma}{thmbodyBlockEncTensor}
\label{lemma:block_enc_tensor}
  Let $U_1$ and $U_2$ be
  $(\alpha, a, \varepsilon_1)$ and $(\beta, b, \varepsilon_2)$-block-encodings
  of $A_1$ and $A_2$, $s$ and $t$-qubit operators,
  implemented in time $T_1$ and $T_2$ respectively.
  Define $S := \Pi_{i=1}^s \texttt{SWAP}_{a+b+i}^{a+i}$. 
  Then, $S (U_1 \otimes U_2) S^\dagger$ is an $(\alpha\beta, a+b, \alpha\varepsilon_2 + \beta\varepsilon_1 +\varepsilon_1\varepsilon_2)$ block-encoding of $A_1 \otimes A_2$,
    implemented at a cost of $\order{T_1 + T_2}$.
\end{restatable}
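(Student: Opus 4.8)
The plan is to verify the three claimed properties of the block-encoding directly from the definitions. First I would record what the operator $S (U_1 \otimes U_2) S^\dagger$ does: the unitary $U_1 \otimes U_2$ is a block-encoding of $A_1 \otimes A_2$, but with the ancilla and system registers interleaved in the wrong order (the $a$ ancillas of $U_1$, then the $s$ system qubits, then the $b$ ancillas of $U_2$, then the $t$ system qubits). The swap network $S = \prod_{i=1}^s \texttt{SWAP}_{a+b+i}^{a+i}$ is designed precisely to permute registers so that all $a+b$ ancilla qubits are grouped together at the front and all $s+t$ system qubits are grouped at the back. So the conceptual content is a bookkeeping/reordering argument, and the main work is to confirm that $S$ implements exactly this permutation and to track the error.

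The key steps, in order, are as follows. I would first note that since $U_1$ and $U_2$ are $(\alpha,a,\varepsilon_1)$ and $(\beta,b,\varepsilon_2)$-block-encodings, their encoded blocks $\tilde A_1, \tilde A_2$ satisfy $\norm{A_1 - \tilde A_1} \le \varepsilon_1$ and $\norm{A_2 - \tilde A_2} \le \varepsilon_2$, and $U_1 \otimes U_2$ exactly block-encodes $\tilde A_1 \otimes \tilde A_2$ with subnormalization $\alpha\beta$ (using $(\bra 0^{\otimes a}\otimes I)U_1(\ket 0^{\otimes a}\otimes I) = \tilde A_1/\alpha$ and the analogous fact for $U_2$, and the fact that tensor products of these projected blocks multiply out). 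Second, I would argue that conjugating by $S$ simply relabels the $(a+b)$ ancilla qubits into a contiguous front block and the $(s+t)$ system qubits into a contiguous back block, so that $(\bra 0^{\otimes(a+b)}\otimes I_{s+t}) S(U_1\otimes U_2)S^\dagger (\ket 0^{\otimes(a+b)}\otimes I_{s+t}) = \tilde A_1 \otimes \tilde A_2 / (\alpha\beta)$; this gives a $(\alpha\beta, a+b, \cdot)$-block-encoding of $\tilde A_1 \otimes \tilde A_2$. Third, the only remaining quantitative step is the error bound: I would estimate
\[
\norm{A_1 \otimes A_2 - \tilde A_1 \otimes \tilde A_2}
\]
by inserting a hybrid term, writing $A_1\otimes A_2 - \tilde A_1 \otimes \tilde A_2 = (A_1 - \tilde A_1)\otimes A_2 + \tilde A_1 \otimes (A_2 - \tilde A_2)$, applying the triangle inequality and submultiplicativity of the spectral norm under tensor products, and bounding $\norm{A_2} \le \beta$ (since $A_2 = \tilde A_2 + (A_2 - \tilde A_2)$ and the encoded block has norm at most $\beta$, or more simply $\norm{\tilde A_1}\le \alpha$), which would yield $\beta\varepsilon_1 + \alpha\varepsilon_2$; to land on the stated bound $\alpha\varepsilon_2 + \beta\varepsilon_1 + \varepsilon_1\varepsilon_2$ I would instead split as $(A_1-\tilde A_1)\otimes A_2 + \tilde A_1 \otimes (A_2 - \tilde A_2)$ and bound $\norm{A_2}\le \beta$ versus $\norm{\tilde A_1}\le\alpha$, or symmetrically use $\norm{A_1-\tilde A_1}\norm{A_2-\tilde A_2}$ as the cross term, matching the $\varepsilon_1\varepsilon_2$ summand.

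Finally, the cost claim is immediate: $S$ and $S^\dagger$ are fixed permutation circuits of $\order{s}$ SWAP gates independent of $T_1, T_2$, and $U_1\otimes U_2$ costs $\order{T_1 + T_2}$, so the total is $\order{T_1 + T_2}$. I expect the main obstacle to be not any hard inequality but rather the careful index bookkeeping in verifying that the particular swap network $S$ achieves exactly the register regrouping claimed — i.e., confirming that $\texttt{SWAP}_{a+b+i}^{a+i}$ for $i = 1,\dots,s$ moves each of the $s$ system qubits of the first factor past the $b$ ancillas of the second factor into the correct contiguous position, without disturbing the grouping of the remaining registers. Getting this permutation argument precisely right (and making sure the projector $\bra 0^{\otimes(a+b)}$ on the reordered register picks out the intended block) is the delicate part; the norm and cost estimates are then routine.
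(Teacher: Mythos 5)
Your proposal is correct and follows essentially the same route as the paper: identify $U_1\otimes U_2$ as a block-encoding of $\tilde A_1\otimes \tilde A_2$ on interleaved registers, verify that the SWAP network regroups the ancillas into a contiguous front block, and bound the error by splitting $A_1\otimes A_2-\tilde A_1\otimes\tilde A_2$ via the triangle inequality, with the $\varepsilon_1\varepsilon_2$ cross term arising exactly as you describe. The only nitpick is that the clean way to see the cross term is to bound $\norm{A_2}\le\beta+\varepsilon_2$ (not $\le\beta$), which is what the paper's expansion $(\alpha\tilde A_1+\varepsilon_1 I)\otimes(\beta\tilde A_2+\varepsilon_2 I)$ implicitly does.
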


We will now use Lemma \ref{lemma:block_enc_tensor} to augment one matrix into another, given their approximate block-encodings.

\begin{lemma}[Block-encoding of augmented matrix]
  \label{lem:augmented_regression_matrix}
If $U_A$ is an $(\alpha_A, a_A,\varepsilon_A)$-block encoding of an $s$-qubit operator $A$ that can be implemented in time $T_A$ and $U_B$ is an $(\alpha_B, a_B,\varepsilon_B)$-block encoding of an $s$-qubit operator $B$ that can be implemented in time $T_B$, then we an implement an
$(\alpha_A + \alpha_B,
    \max(a_A, a_B) + 2,
    \varepsilon_A + \varepsilon_B)$-block-encoding of
  \begin{equation*}
    A_B = \begin{pmatrix}A&0\\ B&0\end{pmatrix}
  \end{equation*}

at a cost of $\order{T_A + T_B}$.
\end{lemma}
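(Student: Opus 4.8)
I want to build a block-encoding of the augmented matrix
\[
A_B = \begin{pmatrix} A & 0 \\ B & 0 \end{pmatrix}
\]
out of the given block-encodings $U_A$ and $U_B$ of $A$ and $B$. The natural idea is to write $A_B$ as a sum of two matrices, each of which is a tensor product of the corresponding input matrix with a fixed $2\times 2$ "placement" matrix that drops it into the correct block. This lets me invoke \autoref{lemma:block_enc_tensor} to encode each summand and then \autoref{lem:constr_lincomb_two} to add them.

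**The decomposition I would use.** Observe that
\[
A_B = \ketbra{0}{0}_c \otimes A + \ketbra{1}{0}_c \otimes B,
\]
where the subscript $c$ labels the extra "block-selector" qubit and $\ketbra{i}{j}$ are the elementary $2\times 2$ matrices. That is, $A$ is placed in the top-left $(0,0)$ block and $B$ in the bottom-left $(1,0)$ block, exactly matching $A_B$. Each of the $2\times 2$ matrices $\ketbra{0}{0}$ and $\ketbra{1}{0}$ is a contraction of norm $1$, and each admits a trivial exact $(1,1,0)$-block-encoding by a constant-cost one- or two-qubit unitary (for instance, the $2\times 2$ block $\ketbra{1}{0}$ sits in the top-left corner of a suitable permutation/Pauli combination; I can exhibit an explicit $U$ whose appropriate block is this matrix). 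So I have perfect, $O(1)$-cost block-encodings of both placement matrices, with sub-normalization $1$ and error $0$.

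**Assembling the pieces.** First I apply \autoref{lemma:block_enc_tensor} to $U_A$ and the trivial encoding of $\ketbra{0}{0}$: with $\alpha_A$, $a_A$, $\varepsilon_A$ on one side and $(1,1,0)$ on the other, this yields an $(\alpha_A, a_A+1, \varepsilon_A)$-block-encoding of $\ketbra{0}{0}\otimes A$ at cost $\order{T_A}$ (the $\varepsilon_1\varepsilon_2$ and $\alpha\varepsilon_2$ cross-terms vanish because the placement matrix is encoded exactly, and $\beta\varepsilon_1 = 1\cdot\varepsilon_A$). Symmetrically I get an $(\alpha_B, a_B+1, \varepsilon_B)$-block-encoding of $\ketbra{1}{0}\otimes B$ at cost $\order{T_B}$. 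Now I apply \autoref{lem:constr_lincomb_two} with $y_0 = y_1 = 1$ to these two block-encodings. This produces a
\[
\bigl(\alpha_A + \alpha_B,\ 1 + \max(a_A+1, a_B+1),\ \varepsilon_A + \varepsilon_B\bigr)
\]
block-encoding of the sum $\ketbra{0}{0}\otimes A + \ketbra{1}{0}\otimes B = A_B$, at cost $\order{T_A + T_B}$. Since $1 + \max(a_A+1,a_B+1) = \max(a_A,a_B)+2$, this is precisely the claimed $(\alpha_A+\alpha_B,\ \max(a_A,a_B)+2,\ \varepsilon_A+\varepsilon_B)$ block-encoding.

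**Where the care is needed.** The main thing to get right is not any hard estimate but the bookkeeping of \emph{which} tensor factor carries the ancilla qubits and which carries the selector qubit, so that the two summands in \autoref{lem:constr_lincomb_two} act on a common register and the selector qubit ends up in the correct position of the padded matrix. In particular I must verify that \autoref{lemma:block_enc_tensor}, whose statement tensors an $s$-qubit operator with a $t$-qubit operator, is being applied with the placement matrix as the small factor and that the resulting encoded operator is genuinely $\ketbra{i}{0}\otimes A$ rather than $A\otimes\ketbra{i}{0}$ — these differ only by a relabeling of registers (the \texttt{SWAP} network $S$ in that lemma), but the relabeling must be consistent across both summands before the linear combination is taken. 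Once the register conventions are fixed, the ancilla count and the error/sub-normalization bounds follow immediately from plugging the stated parameters into the two invoked lemmas.
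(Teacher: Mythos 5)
Your proposal is correct and follows essentially the same route as the paper: decompose $A_B$ as $\ketbra{0}{0}\otimes A + \ketbra{1}{0}\otimes B$, block-encode each summand via \autoref{lemma:block_enc_tensor} using exact $(1,1,0)$-block-encodings of the $2\times 2$ placement matrices (the paper exhibits these explicitly as $\texttt{SWAP}$ and $(I\otimes X)\cdot\texttt{SWAP}$), and combine with \autoref{lem:constr_lincomb_two}. The parameter bookkeeping matches the paper's exactly.
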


\begin{proof}
    %%% Dilate A
    Let $M_A = \begin{pmatrix}
        1 & 0 \\
        0 & 0
    \end{pmatrix}$.
    Then the \texttt{SWAP} gate is a $(1, 1, 0)$ block encoding of $M_A$.
    By \autoref{lemma:block_enc_tensor},
    we can implement $U_A'$, an $(\alpha_A, a_A + 1, \varepsilon_A)$-block-encoding
    of their tensor product $M_A \otimes A = \begin{pmatrix}
        A & 0 \\
        0 & 0
    \end{pmatrix}$
    at a cost of $\order{T_A}$.
    %%%
    %%% Dilate L
    Similarly, Let $M_B = \begin{pmatrix}
        0 & 0 \\
        1 & 0
    \end{pmatrix}$.
    Then $ (I\otimes X)\cdot \texttt{SWAP}$ is a $(1, 1, 0)$-block-encoding of $M_B$.
    Similarly \autoref{lemma:block_enc_tensor},
    we can implement $U_B'$, an $(\alpha_B, a_B + 1, \varepsilon_B)$-block-encoding
    of $M_B \otimes B = \begin{pmatrix}
        0 & 0 \\
        B & 0
    \end{pmatrix}$
    at a cost of $\order{T_B}$.
    %%%
    %%% Combine A and L
    We add them by using \autoref{lem:constr_lincomb_two} on $U_A'$ and $U_B'$,
    to implement $U_{A_B}$,
    an $(\alpha_A + \alpha_B, 2 + \max(a_A,a_B), \varepsilon_A + \varepsilon_B)$-block-encoding of
    $A_B = \begin{pmatrix}A&0\\ B&0\end{pmatrix}$.
    This can be implemented at a cost of $\order{T_A + T_B}$.
\end{proof}

%-----------------------------------------------------------------------------
\subsection{Robust Quantum Singular Value Discrimination}
\label{subsec:qevd}
%-----------------------------------------------------------------------------

The problem of deciding whether the eigenvalues of a Hamiltonian lie above or below a certain threshold, known as \textit{eigenvalue discrimination}, finds widespread applications. For instance, the problem of determining whether the ground energy of a generic local Hamiltonian is $<\lambda_a$ or $>\lambda_b$ is known to be QMA-Complete \cite{kempe2006complexity}. Nevertheless, quantum eigenvalue discrimination has been useful in preparing ground states of Hamiltonians. Generally, a variant of quantum phase estimation, which effectively performs a projection onto the eigenbasis of the underlying Hamiltonian, is used to perform eigenvalue discrimination \cite{ge2019faster}. Recently, it has been shown that QSVT can be used to approximate a projection onto the eigenspace of an operator by implementing a polynomial approximation of the \textit{sign function} \cite{lin2020near}. This was then used to design improved quantum algorithms for ground state preparation.

{\UPDATED
In our work, we design a more general primitive, known as \textit{Quantum Singular Value Discrimination} (QSVD). Instead of eigenvalues, the algorithm distinguishes whether a singular value $\sigma$ is $\le \sigma_a$ or $\ge \sigma_b$. This is particularly useful when the block-encoded matrix is not necessarily Hermitian and hence, may not have well-defined eigenvalues. We use this procedure to develop a more space-efficient variable stopping time matrix inversion algorithm in \autoref{ssec:mi_using_qsvt}. Owing to the widespread use of singular values in a plethora of fields, we believe that our QSVD procedure is of independent interest.
}

Let us define the \textit{sign function} $\mathrm{sign} : \rr \to \rr$ as follows:

\begin{equation}
    \mathrm{sign} (x) = \begin{cases}
        -1 &\quad x < 0 \\
        0  &\quad x = 0 \\
        1  &\quad x > 0. \\
    \end{cases}
\end{equation}

Given a threshold singular value $c$, 
Low and Chuang \cite{low2017hamiltonian} showed that there exists a polynomial approximation to $\mathrm{sign}(c - x)$ (based on its approximation of the \textit{erf function}). We use the result of Ref.~\cite{Grand_Uni_2021}, where such a polynomial of even parity was considered.
{\UPDATED
This is crucial, as for even polynomials, QSVT maps right (left) singular vectors to right (left) singular vectors, which enables us to use the polynomial in \cite{Grand_Uni_2021} for singular value discrimination.
}

\begin{lemma}[Polynomial approximation to the sign function \cite{low2017hamiltonian,low2017quantum,Grand_Uni_2021}]
\label{lem:approx-sign}
For any $\varepsilon, \Delta, c \in (0,1)$, there exists an efficiently computable even polynomial $P_{\varepsilon,\Delta,c}(x)$ of degree $l=\order{\frac{1}{\Delta}\log(1/\varepsilon)}$ such that 
\begin{itemize}
\item[1.~] $\forall x\in [0,1] \colon \abs{P_{\varepsilon,\Delta,c}(x)}\leq 1$\\
\item[2.~] $\forall x \in [0, 1] \setminus \paren{c - \frac{\Delta}{2}, c + \frac{\Delta}{2}} \colon  \abs{P_{\varepsilon, \Delta, c}(x)-\mathrm{sign}(c - x)} \leq \varepsilon$
\end{itemize}
\end{lemma}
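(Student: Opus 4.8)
The plan is to reduce the statement to a known polynomial approximation of the sign function. The crucial observation is that the problem asks us to distinguish $x \le c - \Delta/2$ from $x \ge c + \Delta/2$ on the interval $[0,1]$, and that the target function $\mathrm{sign}(c-x)$ equals $+1$ to the left of $c$ and $-1$ to the right of $c$. Since the desired polynomial must be \emph{even} (so that QSVT preserves the left/right singular-vector correspondence, as emphasized in the text), I cannot simply take a standard odd approximation of $\mathrm{sign}$; instead I would start from a polynomial approximation to the sign function and symmetrize the argument.

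First, I would invoke the known approximation to $\mathrm{sign}(y)$ itself: Low and Chuang \cite{low2017hamiltonian} (see also \cite{low2017quantum}) construct, via a polynomial approximation to the error function $\mathrm{erf}(ky)$, an odd polynomial $S(y)$ of degree $\order{\frac{1}{\Delta}\log(1/\varepsilon)}$ that is bounded by $1$ on $[-1,1]$ and satisfies $\abs{S(y) - \mathrm{sign}(y)} \le \varepsilon$ for all $\abs{y} \ge \Delta/2$. The error-function-based construction is what controls both the degree (scaling as $\frac{1}{\Delta}\log(1/\varepsilon)$) and the uniform boundedness. Next, I would perform a change of variables to recenter the threshold at $c$ and to produce an \emph{even} polynomial in the QSVT variable $x$. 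Concretely, one sets $y = c - x$ (or a suitable rescaling thereof so that the argument stays in $[-1,1]$ as $x$ ranges over $[0,1]$), and then replaces the odd $S$ by the even construction used in \cite{Grand_Uni_2021}: one uses an even polynomial whose restriction reproduces $\mathrm{sign}(c-x)$ on the relevant range. The parity requirement is handled exactly because \cite{Grand_Uni_2021} supplies an even polynomial of the same asymptotic degree; I would cite that directly rather than rederive the parity fix.

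With the polynomial $P_{\varepsilon,\Delta,c}(x)$ defined this way, the two claimed properties follow essentially by construction: boundedness $\abs{P_{\varepsilon,\Delta,c}(x)} \le 1$ on $[0,1]$ is inherited from the uniform bound on $S$ (or its even analogue), and the approximation bound $\abs{P_{\varepsilon,\Delta,c}(x) - \mathrm{sign}(c-x)} \le \varepsilon$ for $x \in [0,1] \setminus (c-\Delta/2, c+\Delta/2)$ translates directly from the corresponding guarantee $\abs{S(y)-\mathrm{sign}(y)} \le \varepsilon$ for $\abs{y} \ge \Delta/2$ under the change of variables. The degree bound $l = \order{\frac{1}{\Delta}\log(1/\varepsilon)}$ is preserved because the affine recentering does not change the degree and the even-parity modification costs at most a constant factor.

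\textbf{Main obstacle.} The delicate point is the parity constraint: the naive approximation to $\mathrm{sign}(c-x)$ is neither even nor odd once the threshold $c$ is nonzero, so the argument recentering breaks the symmetry that QSVT needs. The technical work therefore lies in exhibiting an \emph{even} polynomial that still approximates the (asymmetric) step at $c$ on the one-sided domain $[0,1]$ — and in checking that requiring evenness does not force the approximation to degrade on $[-1,0)$ in a way that spoils the bound on $[0,1]$. Since the singular values $\sigma$ only ever take values in $[0,1]$, the behavior on negative inputs is irrelevant to the singular value discrimination application, which is precisely why the even construction of \cite{Grand_Uni_2021} suffices; I would lean on their result for this step rather than constructing the polynomial from scratch.
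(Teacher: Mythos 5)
The paper does not prove this lemma at all: it imports the result directly, citing Low and Chuang's erf-based approximation of the sign function and the even-parity variant from the Grand Unification paper, which is exactly the route your sketch takes (recenter the threshold at $c$, inherit the degree $\order{\frac{1}{\Delta}\log(1/\varepsilon)}$ and the uniform bound from the erf construction, and defer the even-parity fix to \cite{Grand_Uni_2021}). Your proposal is consistent with the paper's treatment and correctly identifies the one nontrivial point, namely that evenness must be reconciled with the asymmetric step at $c$ — which is harmless here precisely because only $x\in[0,1]$ matters for singular values and an even polynomial bounded on $[0,1]$ is automatically bounded on $[-1,0)$.
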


Therefore, given a matrix $A$ with singular values between $[0,1]$,
we can use QSVT to implement $P_{\varepsilon,\Delta,c}(A)$ which correctly distinguishes
between singular values of $A$ whose value is less than $c-\Delta/2$
and those whose value is greater than $c+\Delta/2$. 
For our purposes, we shall consider that we are given $U_A$,
which is an $(\alpha,a,\varepsilon)$ block-encoding of a matrix $A$.
Our goal would be to distinguish whether a certain singular value $\sigma$
satisfies $0\le \sigma \le\varphi$ or $2\varphi\le \sigma \le 1$.
Since $U_A$ (approximately) implements $A/\alpha$,
the task can be rephrased as distinguishing whether a singular value of $A/\alpha$ is in $[0,\varphi/\alpha]$ or in $[2\varphi/\alpha,1]$.
For this, we develop a robust version of quantum singular value discrimination ($QSVD(\phi,\delta)$), which indicates the precision $\varepsilon$ required to commit an error that is at most $\delta$.

\begin{theorem}[Quantum Singular Value Discrimination using QSVT]
    \label{thm:QEVD}
    \UPDATED
    Suppose $A \in \cc^{N \times N}$ is an $s$-qubit operator (where $N = 2^s$)
    with singular value decomposition $A = \sum_{j \in [N]} \sigma_j \ketbra{u_j}{v_j}$
    such that all $\sigma_j$ lie in the range $[0,1]$.
    Let $\varphi \in \paren{0,\frac{1}{2}}$ and $\delta \in (0, 1]$ be some parameters.
    Suppose that for some $\alpha \ge 2$ and $\varepsilon$ satisfying
    \begin{equation*}
        \varepsilon = \smalloh{\frac{\delta\varphi}{\log(1/\delta)}}
    \end{equation*}
    we have access to $U_A$, an $(\alpha, a, \varepsilon)$-block-encoding of $A$ implemented in cost $T_A$.
    Then there exists a quantum algorithm $QSVD(\varphi,\delta)$ which implements a $(1, a + 1, \delta)$-block-encoding of some $(s + 1)$-qubit operator $D \in \cc^{2N \times 2N}$
    satisfying the following constraints for all $j \in [N]$:
    \begin{itemize}
        \item $\sigma_j \le \varphi \implies D\ket0\ket{v_j} = \ket0\ket{v_j}$
        \item $\sigma_j \ge 2\varphi \implies D\ket0\ket{v_j} = \ket1\ket{v_j}$
    \end{itemize}
    This algorithm has a cost of
    $$\order{\frac{\alpha}{\varphi}\logp{\frac{1}{\delta}}T_A}.$$
\end{theorem}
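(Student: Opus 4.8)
**

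The plan is to combine the sign-function polynomial from \autoref{lem:approx-sign} with the Robust QSVT machinery of \autoref{thm:robust_qsvt}, and then append a single controlled operation to convert the sign information into a flag qubit. The key observation is that the discrimination task is exactly a thresholding of singular values, which is precisely what a polynomial approximation of $\mathrm{sign}(c-x)$ accomplishes under singular value transformation.

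**Step 1: Set up the polynomial and the threshold.** Since $U_A$ is an $(\alpha, a, \varepsilon)$-block-encoding of $A$, it effectively encodes $A/\alpha$, whose singular values lie in $[0, 1/\alpha] \subseteq [0, 1]$. I want to distinguish $\sigma_j \le \varphi$ from $\sigma_j \ge 2\varphi$, which after rescaling becomes distinguishing $\sigma_j/\alpha \le \varphi/\alpha$ from $\sigma_j/\alpha \ge 2\varphi/\alpha$. I would invoke \autoref{lem:approx-sign} with center $c = 3\varphi/(2\alpha)$ (the midpoint) and gap width $\Delta = \varphi/\alpha$, so that the ``forbidden'' interval $(c - \Delta/2, c + \Delta/2) = (\varphi/\alpha, 2\varphi/\alpha)$ contains no singular value of interest. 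The accuracy parameter of the polynomial, call it $\varepsilon'$, will be fixed later to control the final error; the resulting polynomial $P := P_{\varepsilon', \Delta, c}$ has even parity and degree $l = \order{\frac{\alpha}{\varphi}\log(1/\varepsilon')}$.

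**Step 2: Apply Robust QSVT.** Because $P$ is an even QSP polynomial, $P^{SV}(A/\alpha)$ maps each right singular vector to itself with coefficient $P(\sigma_j/\alpha)$ — this is exactly why even parity was emphasized in the text. Applying \autoref{thm:robust_qsvt} (using $\norm{A} \le \alpha/2$, guaranteed since $\sigma_j \le 1 \le \alpha/2$ as $\alpha \ge 2$) with the degree-$l$ polynomial $P$, I obtain a $(1, a+1, \delta')$-block-encoding of $P(A/\alpha)$ in cost $\order{l \, T_A}$, provided $\varepsilon \le \alpha \delta'/(2l)$. For singular values in the target ranges, $P(\sigma_j/\alpha)$ is within $\varepsilon'$ of $+1$ (when $\sigma_j \le \varphi$) or $-1$ (when $\sigma_j \ge 2\varphi$). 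I then build the operator $D$ by applying a single-qubit rotation controlled on this sign-encoding block: concretely, post-composing with a Hadamard-type map that sends the $+1$ eigenvalue to $\ket{0}$ and the $-1$ eigenvalue to $\ket{1}$ on a fresh flag qubit, so that $D\ket{0}\ket{v_j}$ is (close to) $\ket{0}\ket{v_j}$ or $\ket{1}\ket{v_j}$ as required. This costs one extra ancilla, matching the claimed $a+1$ ancilla count.

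**Step 3: Track and bound the error.** The main obstacle — and the part requiring care — is the error bookkeeping, since two independent error sources stack: the QSVT robustness error $\delta'$ (from the imperfect block-encoding) and the polynomial approximation error $\varepsilon'$ (from $P$ only approximating $\mathrm{sign}$). I would choose $\varepsilon' = \Theta(\delta)$ and $\delta' = \Theta(\delta)$ so their sum is bounded by $\delta$, then back-solve the hypothesis on $\varepsilon$. With $l = \order{\frac{\alpha}{\varphi}\log(1/\delta)}$, the Robust QSVT requirement $\varepsilon \le \alpha\delta'/(2l)$ becomes $\varepsilon = \order{\frac{\delta \varphi}{\log(1/\delta)}}$, matching the stated hypothesis $\varepsilon = \smalloh{\frac{\delta\varphi}{\log(1/\delta)}}$ exactly. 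The total cost is $\order{l \, T_A} = \order{\frac{\alpha}{\varphi}\log(1/\delta) T_A}$, as claimed. The subtle point is ensuring that the idealized map $D$ (with exact $\pm 1$ values) differs from the implemented operator by at most $\delta$ in operator norm; this follows because on every relevant singular vector the coefficient is $\delta$-close to its ideal value, and the orthogonal complement is handled by the block-encoding structure — I would verify this carefully rather than treat it as automatic, since it is where the ``robust'' adjective earns its keep.
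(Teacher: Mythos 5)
Your proposal is correct and follows essentially the same route as the paper's proof: the same sign-polynomial centered at $3\varphi/(2\alpha)$ fed into robust QSVT, the same Hadamard-conjugated controlled block-encoding to turn the $\pm 1$ sign into a flag qubit, and the same $\varepsilon' = \Theta(\delta)$, $\gamma = \Theta(\delta)$ error split that back-solves to $\varepsilon = \smalloh{\delta\varphi/\log(1/\delta)}$. The only differences are cosmetic — you take the gap width $\Delta = \varphi/\alpha$ rather than the paper's $\varphi/(2\alpha)$ (both are valid), and the careful verification you defer at the end is carried out in the paper via an explicit decomposition of $D$ using projectors onto the low, high, and intermediate singular subspaces.
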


\begin{proof}
\UPDATED
We invoke \autoref{lem:approx-sign} with parameters
        $\varepsilon' := \frac\delta2$,
        $c := \frac{3\varphi}{2\alpha}$
        and $\Delta := \frac{\varphi}{2\alpha}$,
    to construct an even polynomial $P := P_{\varepsilon', \Delta, c}$
    of degree $n := \order{\frac\alpha\varphi \log(\frac1{\varepsilon'})}$,
    which is an $\varepsilon'$-approximation of $f(x) := \text{sign}\paren{\frac{3\varphi}{2\alpha} - x}$
    for $x \in \left[0, \frac\varphi\alpha \right] \cup \left[\frac{2\varphi}\alpha, 1 \right]$.
Invoking \autoref{thm:robust_qsvt} with $P$ and $U_A$,
    we get $U_B$ -- a $(1, a + 1, \gamma)$-block-encoding of $B := P(A/\alpha)$,
    implemented in cost $\order{n T_A}$,
    where $\varepsilon$ must satisfy $\varepsilon \le \alpha\gamma/2n$.

\newcommand{\qsvdswap}{\ensuremath{\texttt{SWAP}_{[s, s + a + 1]}}}

Now consider the following unitary $W$ that acts on $s + a + 2$ qubits:
\[ W := \qsvdswap^\dagger (H \otimes I_{s + a + 1}) \paren{\controlled{U_B}} (H \otimes I_{s + a + 1}) \qsvdswap \]

$W$ is the required block-encoding of $D$,
    and $\texttt{SWAP}_{[l, r]}$ sequentially swaps adjacent qubits with indices in range $[l, r]$
    effectively moving qubit indexed $l$ to the right of qubit $r$.
    (where qubits are zero-indexed, with higher indices for ancillas).
Let $\tilde B$ be the top-left block of $U_B$ (therefore $\norm{B - \tilde B} \le \gamma$).
Then we can extract $\tilde D$, the top-left block of $W$ as follows:
\begin{align*}
\tilde D
&= \paren{\bra{0}^{\ot a + 1} \ot I_{s + 1}}
    \qsvdswap^\dagger
    \paren{\ketbra{+} \ot I_{s + a + 1} + \ketbra{-} \ot U_B}
    \qsvdswap
    \paren{\ket{0}^{\otimes a + 1} \otimes I_{s + 1}}
\\
&= \ketbra{+} \otimes I_s + \ketbra{-} \otimes \tilde B
\end{align*}

Let us define index sets $L, R \subseteq [N]$ where
    $L := \{j \in [N] \mid \sigma_j \le \varphi \}$ and
    $R := \{j \in [N] \mid \sigma_j \ge 2\varphi \}$;
    and the corresponding subspace projections
    $\Pi_L := \sum_{j \in L} \ketbra{v_j}$,
    $\Pi_R := \sum_{j \in R} \ketbra{v_j}$, and
    $\Pi_\perp := I_{s} - \Pi_L - \Pi_R$.
Using these we pick our required operator $D$ as follows:
    \[ D := I \otimes \Pi_L + X \otimes \Pi_R + \tilde D (I \otimes \Pi_\perp) \]
That is, $D$ behaves as expected on the required subspace, and acts identical to $\tilde D$ on the remaining space.
The error in the block-encoding can be computed as
\begin{align*}
\norm{D - \tilde D} 
&= \norm{I \ot \Pi_L + X \ot \Pi_R + \tilde D (I \ot \Pi_\perp) - \tilde D}\\
&= \norm{I \ot \Pi_L + X \ot \Pi_R - \tilde D (I \ot (\Pi_L + \Pi_R))} \\
&= \norm{(I \ot I_s - \tilde D) (I \ot \Pi_L) + (X \ot I_s - \tilde D) (I \ot \Pi_R)} \\
&= \norm{\paren{\ketbra{-} \ot (I_s - \tilde B)} (I \ot \Pi_L) - \paren{\ketbra{-} \ot (I_s + \tilde B)} (I \ot \Pi_R)} \\
&= \norm{(I_s - \tilde B)\Pi_L - (I_s + \tilde B)\Pi_R} \\
&= \norm{(I_s - B)\Pi_L - (I_s + B)\Pi_R + (B - \tilde B)(\Pi_L - \Pi_R)} \\
&\le \norm{(I_s - P(A/\alpha))\Pi_L - (I_s + P(A/\alpha))\Pi_R} + \norm{B - \tilde B}\norm{\Pi_L - \Pi_R} \\
&\le \varepsilon' + \gamma \\
\end{align*}
We can choose $\gamma = \delta/2$, therefore 
\[
\varepsilon 
\le \frac{\alpha\delta}{4n} 
= \smalloh{\frac{\delta\varphi}{\log(\frac1\delta)}}
\]

\end{proof}

In \autoref{ssec:mi_using_qsvt}, we develop a variable stopping time quantum algorithm for matrix inversion using QSVT. In order to recast the usual matrix inversion to the VTAA framework, we need to be able to apply this algorithm to specific ranges of the singular values of the matrix. This is achieved by applying a controlled QSVD algorithm, to determine whether the input singular vector corresponds to an singular value less than (or greater than) a certain threshold. Based on the outcome of controlled QSVD, the standard inversion algorithm is applied. These two steps correspond to sub-algorithms $\mathcal{A}_j$ of the VTAA framework.

In prior works such as Refs.~\cite{ambainis2010variable, CKS17, CGJ19}, gapped phase estimation (GPE) was used to implement this. GPE requires an additional register of $\order{\log(\kappa)\log(1/\delta)}$ qubits to store the estimated phases. For the whole VTAA procedure, $\log\kappa$ such registers are needed. As a result, substituting GPE with QSVD, we save $\order{\log^2(\kappa)\log(1/\delta)}$ qubits.

%-----------------------------------------------------------------------------
\subsection{Variable-Time Quantum Algorithm for Matrix Inversion using QSVT}
\label{ssec:mi_using_qsvt}
%-----------------------------------------------------------------------------

Matrix inversion by QSVT applies a polynomial approximation of $f(x)=1/x$, satisfying the constraints laid out in \autoref{ssec:qsvt}. Here, we make use of the result of \cite{Grand_Uni_2021} to implement $A^{+}$. We adapt their result to the scenario where we have an approximate block-encoding of $A$ as input. Finally, we convert this to a variable stopping time quantum algorithm and apply VTAA to obtain a linear dependence on the condition number of $A$.   

\begin{lemma}[Matrix Inversion polynomial (Appendix C of \cite{Grand_Uni_2021})]
  \label{lem:poly_approx_inverse}
  Given $\kappa \geq 1, \varepsilon \in \mathbb{R}^+$,
  there exists an odd QSP polynomial $P_{\varepsilon, \kappa}^\text{MI}$
    of degree $\order{\kappa \log(\kappa/\varepsilon)}$,
    which is an $\frac{\varepsilon}{2\kappa}$ approximation
    of the function $f(x) = \frac{1}{2\kappa x}$
    in the range $\mathcal{D}:=[-1,-\frac{1}{\kappa}]\cup [\frac{1}{\kappa},1]$.
  Also in this range $P_{\varepsilon, \kappa}^\text{MI}$ is bounded from above by $1$,
    i.e. $ \forall x \in \mathcal{D} : \left|P_{\varepsilon, \kappa}^\text{MI}(x)\right| \le 1$.
\end{lemma}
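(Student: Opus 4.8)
The plan is to build $P^{\text{MI}}_{\varepsilon,\kappa}$ as a product of two pieces: a low-degree polynomial that approximates $1/x$ away from the origin, and an even \emph{window} polynomial that suppresses the spurious growth of that approximation inside the gap $\paren{-\tfrac1\kappa, \tfrac1\kappa}$, thereby enforcing the boundedness condition demanded of a QSP polynomial by \autoref{thm:qsp-main-theorem}.

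First I would construct the core approximation to $1/x$. The starting point is the identity that $g_b(x) := \frac{1 - (1-x^2)^b}{x}$ is an odd polynomial of degree $2b-1$, and that on $\mathcal{D}$ one has $\abs{g_b(x) - 1/x} = \frac{(1-x^2)^b}{\abs x} \le \kappa\, e^{-b/\kappa^2}$. Choosing $b = \order{\kappa^2 \logp{\kappa/\varepsilon}}$ makes this at most $\varepsilon/2$. Since $g_b$ has degree quadratic in $\kappa$, the second step is to expand $g_b$ in the Chebyshev basis and truncate it at degree $D = \order{\kappa \logp{\kappa/\varepsilon}}$; the Chebyshev coefficients of $(1-x^2)^b$ are, up to normalization, binomial and concentrate around index $\sqrt b$, so their tail beyond $D$ contributes at most another $\varepsilon/2$ in sup-norm. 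This yields an odd polynomial $\tilde g$ of degree $D$ with $\abs{\tilde g(x) - 1/x} \le \varepsilon$ on $\mathcal{D}$, at the cost that $\tilde g$ may grow as large as $\order{\sqrt b}$ near the origin.

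Next I would tame this growth using the sign-function polynomial already available in \autoref{lem:approx-sign}. Invoking it with threshold $c = \tfrac{3}{4\kappa}$, window $\Delta = \tfrac1{2\kappa}$, and accuracy $\varepsilon' = \order{\varepsilon/\kappa}$ produces an even polynomial $P_{\varepsilon',\Delta,c}$ of degree $\order{\kappa\logp{\kappa/\varepsilon}}$; from it I form the even rectangle $R(x) := \tfrac12\paren{1 - P_{\varepsilon',\Delta,c}(x)}$, which is $\varepsilon'$-close to $0$ for $\abs x \le \tfrac1{2\kappa}$, $\varepsilon'$-close to $1$ for $\abs x \ge \tfrac1\kappa$, and bounded by $1$ on $[-1,1]$. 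I then set $P^{\text{MI}}_{\varepsilon,\kappa}(x) := \frac{1}{2\kappa}\, \tilde g(x)\, R(x)$, which is odd (odd times even) of total degree $\order{\kappa\logp{\kappa/\varepsilon}}$. The reason $\varepsilon'$ must be taken as small as $\order{\varepsilon/\kappa}$ is that the window error is amplified by the $\order{\kappa}$ magnitude of $\tilde g$ on $\mathcal{D}$; however $\logp{1/\varepsilon'} = \order{\logp{\kappa/\varepsilon}}$, so the degree remains linear in $\kappa$.

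Finally I would verify the two defining properties. On $\mathcal{D}$ one has $\abs{\tilde g R - \tfrac1x} \le \abs{\tilde g}\,\abs{R-1} + \abs{\tilde g - \tfrac1x} \le \order{\kappa}\cdot\varepsilon' + \varepsilon$, which after division by $2\kappa$ gives the claimed $\frac{\varepsilon}{2\kappa}$-approximation of $\frac{1}{2\kappa x}$. For the uniform bound I split $[-1,1]$ into three regions: on $\mathcal{D}$, $\abs{\tfrac{1}{2\kappa x}} \le \tfrac12$ keeps the product safely below $1$; near the origin the factor $R = \order{\varepsilon'}$ multiplies the $\order{\sqrt b}$ bump to something negligible; and in the transition band $\tfrac1{2\kappa} \le \abs x \le \tfrac1\kappa$ both $\abs{\tilde g(x)} = \order{\kappa}$ and $\abs{R(x)} \le 1$ keep the product controlled. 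Rescaling by a constant slightly below $1$ if needed enforces the strict bound $\abs{P^{\text{MI}}_{\varepsilon,\kappa}(x)} \le 1$. The main obstacle is precisely this transition band, where the unbounded approximation $\tilde g$ and the imperfect window $R$ overlap: one must choose $c$, $\Delta$, and $\varepsilon'$ so the window has essentially switched on before $\tilde g$ reaches its large values. The Chebyshev tail estimate that delivers the linear-in-$\kappa$ degree is the other technically demanding ingredient.
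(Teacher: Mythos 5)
You should note that the paper itself offers no proof of this lemma---it is imported directly from Appendix~C of the cited reference, which in turn follows the standard construction of Childs--Kothari--Somma and Gily\'en et al. Your proposal reconstructs exactly that argument (the odd polynomial $\frac{1-(1-x^2)^b}{x}$ with Chebyshev truncation to degree $\order{\kappa\log(\kappa/\varepsilon)}$, multiplied by an even rectangle function built from the sign-function approximant of \autoref{lem:approx-sign} to restore boundedness near the origin, followed by a constant rescaling), so it is correct and takes essentially the same approach as the source.
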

%\anurudh{cite this}

\begin{theorem}[Inverting Normalized Matrices using QSVT]
    \label{thm:matrix_inversion_qsvt_normalized}
    Let $A$ be a normalized matrix
        with non-zero singular values in the range $[1/\kappa_A,1]$
        for some $\kappa_A \ge 1$.
    Let $\delta \in (0, 1].$ 
    For some
    {\UPDATED $\varepsilon = \smalloh{ \frac{\delta}{\kappa_A^2 \logp{{\kappa_A}/{\delta}} } }$ and $\alpha \ge 2$},
        let $U_A$ be an $(\alpha, a, \varepsilon)$-block-encoding of $A$,
        implemented in time $T_A$.
    Then we can implement a $(2 \kappa_A, a + 1, \delta)$-block-encoding of $A^+$
    at a cost of \[ \order{\kappa_A \alpha \logp{\frac{\kappa_A}{\delta}}T_A}. \]
\end{theorem}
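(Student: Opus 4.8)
The plan is to apply the odd matrix-inversion polynomial of \autoref{lem:poly_approx_inverse} to $U_A$ through the robust-QSVT machinery of \autoref{thm:robust_qsvt}, and then to track the two sources of error (polynomial approximation and imperfect block-encoding) through the subnormalization factor $2\kappa_A$. The central observation is a rescaling of the inversion polynomial's domain. Since $U_A$ encodes $A/\alpha$, whose non-zero singular values lie in $[1/(\kappa_A\alpha), 1/\alpha] \subseteq [1/(\kappa_A\alpha), 1]$, I would instantiate \autoref{lem:poly_approx_inverse} with $\kappa' := \kappa_A\alpha$ and an approximation parameter $\varepsilon'$ to be fixed, obtaining an odd QSP polynomial $P := P^{\text{MI}}_{\varepsilon', \kappa'}$ of degree $n = \order{\kappa_A\alpha\,\logp{\kappa_A\alpha/\varepsilon'}}$, bounded by $1$, that approximates $\frac{1}{2\kappa' x}$ to within $\varepsilon'/(2\kappa')$ on the relevant domain. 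The convenient feature is that on each singular value $\sigma_j/\alpha$ of $A/\alpha$ we get $P(\sigma_j/\alpha) \approx \frac{\alpha}{2\kappa'\sigma_j} = \frac{1}{2\kappa_A\sigma_j}$, so $P^{SV}(A/\alpha) \approx \frac{1}{2\kappa_A}A^+$ — this is exactly why the target subnormalization is $2\kappa_A$ and not $2\kappa_A\alpha$. The odd parity guarantees that QSVT acts diagonally on the singular-value subspaces and produces the correct singular-vector orientation of $A^+$ under the convention of \autoref{thm:qsvt-def}.

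Next I would invoke \autoref{thm:robust_qsvt} with $P$ and $U_A$. Its hypothesis $\norm{A} \le \alpha/2$ holds since $\norm{A} \le 1$ and $\alpha \ge 2$, so it yields a $(1, a+1, \gamma)$-block-encoding of $P(A/\alpha)$ whenever the input precision satisfies $\varepsilon \le \alpha\gamma/(2n)$. I would then combine the two errors in spectral norm: the encoded block is within $\gamma$ of $P(A/\alpha)$, which is within $\varepsilon'/(2\kappa_A\alpha)$ of $\frac{1}{2\kappa_A}A^+$. Reading this off as a $(2\kappa_A, a+1, \cdot)$-block-encoding of $A^+$ multiplies both errors by $2\kappa_A$, giving total error at most $\frac{\varepsilon'}{\alpha} + 2\kappa_A\gamma$. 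Choosing $\varepsilon' = \alpha\delta/2$ and $\gamma = \delta/(4\kappa_A)$ bounds this by $\delta$; moreover the factor $\alpha$ cancels inside the logarithm, so the degree simplifies to $n = \order{\kappa_A\alpha\,\logp{\kappa_A/\delta}}$ with no assumption on the size of $\alpha$.

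The main obstacle, and the quantitative heart of the statement, is verifying that the stated precision $\varepsilon = \smalloh{\delta/(\kappa_A^2\,\logp{\kappa_A/\delta})}$ is precisely what these choices force. Substituting $\gamma = \delta/(4\kappa_A)$ and $n = \order{\kappa_A\alpha\,\logp{\kappa_A/\delta}}$ into the robust-QSVT requirement $\varepsilon \le \alpha\gamma/(2n)$ gives $\varepsilon \le \frac{\alpha\delta}{8\kappa_A n} = \smalloh{\frac{\delta}{\kappa_A^2\,\logp{\kappa_A/\delta}}}$, where the $\alpha$ again cancels. The extra factor of $\kappa_A$ beyond the naive $\delta/(\kappa_A\,\log(\cdot))$ enters from two places: one power from the subnormalization $2\kappa_A$ that amplifies the read-off error, and one from the degree $n \propto \kappa_A$. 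Finally the cost is $\order{n T_A} = \order{\kappa_A\alpha\,\logp{\kappa_A/\delta}\,T_A}$, and the single ancilla added by \autoref{thm:robust_qsvt} accounts for the $(a+1)$ in the output, completing the argument.
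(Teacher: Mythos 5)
Your proposal is correct and follows essentially the same route as the paper's own proof: instantiate the inversion polynomial of \autoref{lem:poly_approx_inverse} with effective condition number $\kappa_A\alpha$, push it through \autoref{thm:robust_qsvt}, reinterpret the resulting block-encoding of $P(A/\alpha)\approx A^+/2\kappa_A$ with subnormalization $2\kappa_A$, and split the error budget as $\phi/\alpha + 2\kappa_A\gamma \le \delta$ via the identical choices $\phi=\alpha\delta/2$, $\gamma=\delta/4\kappa_A$, which forces exactly the stated bound on $\varepsilon$. The only additions beyond the paper's argument are your explicit check of the $\norm{A}\le\alpha/2$ hypothesis and the remark that $\alpha$ cancels in the logarithm, both of which are correct.
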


\begin{proof}
\UPDATED
We use the matrix inversion polynomial defined in \autoref{lem:poly_approx_inverse},
    $P := P^{MI}_{\phi, \kappa}$ for this task,
    with $\kappa = \kappa_A\alpha$ and an appropriate $\phi$.
    This has a degree of $n := \order{\kappa_A \alpha \log\paren{\kappa_A\alpha/\phi}}$.
We invoke \autoref{thm:robust_qsvt} to apply QSVT using the polynomial $P$ above,
    block-encoding $U_A$, and an appropriate error parameter $\gamma$
    such that $\varepsilon \le \alpha\gamma/2n$,
    to get the unitary $U$, a $(1, a + 1, \gamma)$-block-encoding of $P(A/\alpha)$.
As $P$ is a $(\phi/2\kappa)$-approximation of $f(x) := 1/2\kappa x$, we have
\begin{align*}
    \norm{f(A/\alpha) - P(A/\alpha)} \le \frac{\phi}{2\kappa},
\end{align*}
which implies $U$ is a $(1, a + 1, \gamma + \phi/2\kappa)$-block-encoding of $f(A/\alpha)$.
And because $f(A/\alpha) = \frac{\alpha A^+}{2\kappa} = A^+ / 2\kappa_A$,
we can re-interpret $U$ as a $(2\kappa_A, a + 1, 2\kappa_A\gamma + \phi/\alpha)$-block-encoding of $A^+$.
Choosing $2\kappa_A \gamma = \phi/\alpha = \delta/2$, the final block-encoding has an error of $\delta$. This gives us $\phi = \alpha\delta/2$ and $\gamma = \delta/4\kappa_A$, and
\[ 
\varepsilon \le \frac{\alpha\gamma}{2n} 
= \frac{\alpha\delta}{8\kappa_A n}
= \order{\frac{\delta}{\kappa_A^2 \log(\kappa_A/\delta)}}
 \]

\end{proof}

Next, we design a map $W(\gamma, \delta)$ that uses QSVT to invert the singular values of a matrix if they belong to a particular domain. This helps us recast the usual matrix inversion algorithm as a variable-stopping-time algorithm and will be a key subroutine for boosting the success probability of this algorithm using VTAA. This procedure was also used in Refs.~\cite{CKS17,CGJ19} for the quantum linear systems algorithms.

\begin{theorem}[Efficient inversion of block-encoded matrix]
\label{thm:efficient_inversion_of_bem}
  Let $A$ be a normalized matrix with non-zero singular values in the range $\left[1/\kappa,1\right]$,
    for some $\kappa \ge 1$.
  Let $\delta \in (0, 1];~ 0<\gamma \leq 1$.
  Let $U_A$ be an $(\alpha, a, \varepsilon)$-block-encoding of $A$
    implemented in time $T_A$,
    such that {
    \UPDATED
    $\alpha \geq 2$ and
    $\varepsilon = \smalloh{\frac{\delta\gamma^2}{\logp{\frac1{\delta\gamma}}}}$
    }.
  Then for any quantum state $\ket{b}$ that is spanned by the left singular vectors of $A$ corresponding to the singular values in the range $\left[\gamma, 1\right]$, there exists a unitary $W(\gamma, \delta)$ that implements 
    \begin{equation}
        \label{eqn:efficient_inversion_qsvt_w_map}
        W(\gamma, \delta) : \ket{0}_F \ket{0}_Q \ket{b}_I \mapsto \dfrac{1}{a_{\max}} \ket{1}_F \ket{0}_Q f(A) \ket{b}_I + \ket{0}_F \ket{\perp}_{QI}
    \end{equation}
    where $a_{\max} = \order{\kappa_A}$ is a constant independent of $\gamma$,
    $\ket{\perp}_{QI}$ is an unnormalized quantum state orthogonal to $\ket{0}_Q$ and $\norm{f(A)\ket{b} - A^{+}\ket{b}} \leq \delta$.
    Here $F$ is a 1-qubit flag register,
        $Q$ is an $\alpha$-qubit ancilla register, and
        $I$ is the $\log N$-qubit input register.
    This unitary has a cost 
    \begin{equation}
        \label{eqn:efficient_MI_complexity}
        \order{\frac{\alpha}{\gamma}\logp{\frac{1}{\gamma\delta}} T_A}
    \end{equation}
\end{theorem}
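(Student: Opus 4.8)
The plan is to realize $W(\gamma,\delta)$ by applying robust QSVT (\autoref{thm:robust_qsvt}) with a matrix-inversion polynomial whose approximation quality is demanded only on the interval $[\gamma,1]$ rather than on the full spectral window $[1/\kappa,1]$. Since the input $\ket{b}$ has support only on left singular vectors with singular value at least $\gamma$, a polynomial that behaves like $1/x$ accurately on $[\gamma,1]$ (and is merely bounded by $1$ elsewhere) already produces $A^{+}\ket{b}$, and such a polynomial has degree only $\order{\frac1\gamma\logp{\frac{1}{\gamma\delta}}}$ instead of the $\order{\kappa\logp{\kappa/\delta}}$ needed in \autoref{thm:matrix_inversion_qsvt_normalized}. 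This is exactly the source of the $1/\gamma$ (rather than $\kappa$) scaling in the cost \eqref{eqn:efficient_MI_complexity}, and is what makes the construction amenable to VTAA, where the same primitive is invoked at many thresholds $\gamma$.

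Concretely, I would first instantiate the matrix-inversion polynomial of \autoref{lem:poly_approx_inverse} with condition-number parameter $\alpha/\gamma$ and accuracy $\phi$, obtaining an odd QSP polynomial $Q$ of degree $\order{\frac\alpha\gamma\logp{\frac{\alpha}{\gamma\phi}}}$ with $\abs{Q}\le 1$ on $[-1,1]$ that approximates $\frac{\gamma}{2\alpha x}$ on $[\gamma/\alpha,1]$. Fixing $a_{\max} := 2\kappa$ once and for all, and rescaling $P := \frac{1}{\kappa\gamma}Q$ (legitimate since $\gamma \ge 1/\kappa$ forces $\frac{1}{\kappa\gamma}\le 1$, so $P$ is still a QSP polynomial bounded by $1$), the polynomial $P$ approximates $\frac{1}{a_{\max}\alpha x}$ on $[\gamma/\alpha,1]$. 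Applying \autoref{thm:robust_qsvt} to $U_A$ with $P$ then yields a $(1,a+1,\cdot)$-block-encoding $U$ of $P(A/\alpha)$, whose encoded block approximates $\frac{1}{a_{\max}}A^{+}$ on the subspace spanned by the relevant singular vectors, at a cost of $\order{\frac\alpha\gamma\logp{\frac1{\gamma\delta}}T_A}$. Finally I would adjoin the one-qubit flag register $F$ and flip it conditioned on the block-encoding ancilla register $Q$ being in $\ket{0}$; this sends the encoded-block component to $\frac{1}{a_{\max}}\ket{1}_F\ket{0}_Q f(A)\ket{b}_I$ with $f(A) := a_{\max}\cdot(\text{top-left block of }U)$, and leaves the orthogonal component as $\ket{0}_F\ket{\perp}_{QI}$, exactly the map in \eqref{eqn:efficient_inversion_qsvt_w_map}.

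The error analysis is where the real bookkeeping lives, and it is the step I expect to be the main obstacle. Writing $f(A) = a_{\max}\cdot(\text{encoded block of }U)$ and comparing with $A^{+}$ on the subspace $[\gamma,1]$, the deviation $\norm{f(A)\ket{b} - A^{+}\ket{b}}$ picks up two contributions, each amplified by $a_{\max} = \Theta(\kappa)$: the polynomial-approximation error of $P$ (controlled by choosing $\phi = \Theta(\alpha\delta)$, which is precisely what fixes the degree and hence the cost), and the block-encoding error propagated through QSVT, which by \autoref{thm:robust_qsvt} is $\order{n\varepsilon/\alpha}$ with $n = \deg P$. Forcing the $a_{\max}$-amplified sum below $\delta$ dictates the required precision on $\varepsilon$; substituting $n = \order{\frac\alpha\gamma\logp{\frac1{\gamma\delta}}}$ and using $\kappa \ge 1/\gamma$ to convert the residual $\kappa$-dependence into $\gamma$-dependence yields the stated bound $\varepsilon = \smalloh{\frac{\delta\gamma^2}{\logp{\frac1{\delta\gamma}}}}$. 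The delicate points are (i) keeping the success-amplitude prefactor $1/a_{\max}$ uniform in $\gamma$ (essential for VTAA), which is exactly why we fix $a_{\max} = 2\kappa$ and absorb the $\gamma$-dependence into the $\frac{1}{\kappa\gamma}$ rescaling of the polynomial rather than into the normalization, and (ii) verifying that $P$ stays bounded by $1$ on all of $[-1,1]$ so that QSVT remains applicable even though accuracy is required only on $[\gamma,1]$.
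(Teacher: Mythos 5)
Your overall strategy is the right one and matches the paper's in spirit: instantiate the inversion polynomial of \autoref{lem:poly_approx_inverse} only on the restricted window $[\gamma/\alpha,1]$ so that the degree is $\order{\frac{\alpha}{\gamma}\logp{\frac{1}{\gamma\delta}}}$, and then flag the $\ket{0}_Q$ component. The place where your argument breaks is the error bookkeeping for the rescaled polynomial $P=\frac{1}{\kappa\gamma}Q$ together with $a_{\max}=2\kappa$. If you apply \autoref{thm:robust_qsvt} (equivalently \autoref{lem:robustness_of_QSVT}) directly to $P$, the propagated block-encoding error is $\order{n\varepsilon/\alpha}$ \emph{independently of the magnitude of $P$}; after you define $f(A):=a_{\max}\cdot(\text{encoded block})$, this error is amplified by $a_{\max}=\Theta(\kappa)$, so closing $\norm{f(A)\ket{b}-A^{+}\ket{b}}\le\delta$ forces $\varepsilon=\smalloh{\frac{\delta\gamma}{\kappa\logp{\frac{1}{\gamma\delta}}}}$. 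Your claim that ``$\kappa\ge 1/\gamma$ converts the residual $\kappa$-dependence into $\gamma$-dependence'' goes the wrong way: $\kappa\ge 1/\gamma$ gives $\frac{\delta\gamma}{\kappa}\le\delta\gamma^{2}$, so the precision you actually need is \emph{stronger} than the hypothesis $\varepsilon=\smalloh{\frac{\delta\gamma^{2}}{\logp{\frac{1}{\delta\gamma}}}}$ whenever $\kappa\gg 1/\gamma$ (e.g.\ the $\gamma=\Theta(1)$ branch of VTAA with large $\kappa$). To rescue your route you must exploit that $P^{SV}=\frac{1}{\kappa\gamma}Q^{SV}$, apply the robustness lemma to $Q$ and only then scale by $\frac{1}{\kappa\gamma}$; that recovers the factor $\frac{1}{\kappa\gamma}$ in the propagated error and makes the stated $\varepsilon$ suffice. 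You would also need to justify that $\frac{1}{\kappa\gamma}Q$ is still an implementable QSP polynomial, since scaling down violates the $\abs{P(x)}\ge 1$ condition outside $[-1,1]$ in \autoref{thm:qsp-main-theorem}.

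The paper sidesteps both issues: it invokes \autoref{thm:matrix_inversion_qsvt_normalized} with $\kappa_A$ replaced by $1/\gamma$, obtaining a $(2/\gamma,a+1,\delta)$-block-encoding whose encoded block is $\frac{\gamma}{2}f(A)$ with $\norm{f(A)\ket{b}-A^{+}\ket{b}}\le\delta$ already guaranteed (the error is amplified only by $2/\gamma$, which is exactly what the hypothesis $\varepsilon=\smalloh{\frac{\delta\gamma^{2}}{\logp{\frac{1}{\delta\gamma}}}}$ accounts for). The uniformization of the success amplitude is then done \emph{after} the QSVT step by a single-qubit controlled rotation on the flag register, $\ket{1}_F\mapsto\frac{2}{\gamma a_{\max}}\ket{1}_F+\sqrt{1-\frac{4}{\gamma^{2}a_{\max}^{2}}}\ket{0}_F$, which changes $\frac{\gamma}{2}$ into $\frac{1}{a_{\max}}$ without touching $f(A)$ or its error guarantee. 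This keeps $a_{\max}$ independent of $\gamma$ (your point (i)) while never amplifying the QSVT error by $\kappa$.
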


\begin{proof}
    Since we only need to invert the singular values in a particular range, we can use the procedure in \autoref{thm:matrix_inversion_qsvt_normalized} with $\kappa_A$ modified to the restricted range. That gives us the description of a quantum circuit $\widetilde{W}(\gamma, \delta)$ that can implement the following map
    \begin{equation*}
        \widetilde{W}(\gamma, \delta): \ket{b}_I\ket{0}_Q \mapsto \frac{\gamma}{2} f(A) \ket{b}_I \ket{0}_Q + \ket{\perp}_{QI},
    \end{equation*}
    where $\ket{\perp}$ is an unnomalized state with no component along $\ket{0}_Q$. This has the same cost as \autoref{eqn:efficient_MI_complexity}. Here $\norm{f(A)\ket{\psi} - A^+ \ket{\psi}} \leq \delta$ whenever $\ket{\psi}$ is a unit vector in the span of the singular vectors of $A$ corresponding to the singular values in $[\gamma, 1]$. This follows from the sub-multiplicativity property of the matrix-vector product. 

    Next, we must transform the amplitude of the good part of the state to $\Theta({\kappa})$, independent of $\gamma$. To achieve this, we will have to flag it with an ancillary qubit to use a controlled rotation to modify the amplitude. Thus we add a single qubit $\ket{0}_F$ register and flip this register controlled on register $Q$ being in the state $\ket{0}$ (the good part). This gives us the transformation 
    \begin{equation*}
        \widetilde{W}'(\gamma, \delta) : \ket{0}_F\ket{b}_I\ket{0}_Q \mapsto \frac{\gamma}{2}\ket{1}_Ff(A)\ket{b}_I\ket{0}_Q + \ket{0}_F\ket{\perp}_{QI}
    \end{equation*}
        Then we use a controlled rotation to replace the amplitude $\gamma/2$ with some constant $a_{\max}^{-1}$ which is independent of $\gamma$, which is achieved by introducing the relevant phase to the flag space  
    \begin{equation*}
        \ket{1}_F \mapsto \frac{2}{\gamma a_{\max}} \ket{1}_F + \sqrt{1 - \frac{4}{\gamma^2a^2_{\max}}} \ket{0}_F. 
    \end{equation*}
    This gives us the desired $W(\gamma, \delta)$ as in \autoref{eqn:efficient_inversion_qsvt_w_map}. 
\end{proof}

Given such a unitary $W(\gamma,\delta)$, Ref.~\cite{CGJ19} laid out a procedure for a variable time quantum algorithm $\mathcal{A}$ that takes as input the block encoding of an $N \times d$ matrix $A$, and a state preparation procedure $U_b : \ket{0}^{\otimes n} \mapsto \ket{b}$, and outputs a quantum state that is a bounded distance away from $A^+\ket{b}/\norm{A^+\ket{b}}$. In order to determine the branches of the algorithm on which to apply VTAA at a particular iteration, \cite{CKS17, CGJ19, ambainis2010variable} use the technique of gapped phase estimation, which given a unitary $U$, a threshold $\phi$ and one of its eigenstate $\ket{\lambda}$, decides if the corresponding eigenvalue is a bounded distance below the threshold, or a bounded distance above it. In this work, we replace gapped phase estimation with the QSVD algorithm (\autoref{thm:QEVD}) which can be applied directly to any block-encoded (not necessarily Hermitian) matrix $A$, and allows for saving on $\order{\log^2(\kappa/\delta)}$ qubits.
~\\~\\
\textbf{The Variable time Algorithm:} This algorithm will be a sequence of $m$ sub-algorithms $\mathcal{A} =  \mathcal{A}_m \cdot \mathcal{A}_{m-1} \cdot \ldots \mathcal{A}_1$, where $m = \lceil \log\kappa\rceil + 1$. The overall algorithm acts on the following registers:

\begin{itemize}
    \item $m$ single qubit clock registers $C_i : i \in [m]$. 
    \item An input register $I$, initialized to $\ket{0}^{\otimes s}$.
    \item Ancillary register space $Q$ for the block encoding of $A$, initialized to $\ket{0}^{\otimes a}$. 
    \item A single qubit flag register $\ket{0}_F$ used to flag success of the algorithm.
\end{itemize}

Once we have prepared the above state space, we use the state preparation procedure to prepare the state $\ket{b}$. Now we can define how each $\mathcal{A}_j$ acts on the state space. Let $\varepsilon' = \frac{\delta}{a_{\max}m}$. The action of $\mathcal{A}_j$ can be broken down into two parts:

\begin{enumerate}
    \item If $C_{j-1} \ldots C_1$ is in state $\ket{0}^{\otimes (j-1)}$, apply QSVD$(2^{-j},\varepsilon')$, (\autoref{thm:QEVD}) to the state $\ket{b}$. The output is to be written to the clock register $C_j$.  
    \item If the state of $C_j$ is now $\ket{1}$, apply $W(2^{-j}, \varepsilon')$ to $I \otimes F \otimes Q$.
\end{enumerate}

Additionally, we would need algorithms $\mathcal{A}'=\mathcal{A}'_m\cdots\mathcal{A}'_1$ which are similar to $\mathcal{A}$, except that in Step 2, it implements $W'$ which sets the flag register to $1$. That is,
$$
W'\ket{b}_I\ket{0}_F\ket{0}_Q = \ket{b}_I\ket{1}_F\ket{0}_Q.
$$
Now we are in a position to define the variable time quantum linear systems algorithm using QSVT.
  
\begin{theorem}[Variable Time Quantum Linear Systems Algorithm Using QSVT]
  \label{thm:matrix_inversion_vtaa}
    Let $\varepsilon,\delta>0$.
    Let $A$ is a normalized $N\times d$ matrix such that its non-zero singular values lie in $[1/\kappa,1]$.
    Suppose that for 
    \begin{equation*}
      \UPDATED
    \varepsilon = \smalloh{\frac{\delta}{\kappa^3\log^2\paren{\frac{\kappa}{\delta}}}},
    \end{equation*}
    we have access to $U_A$ which is an $(\alpha, a, \varepsilon)$-block-encoding of $A$, implemented with cost $T_A$. 
    Let $\ket b$ be a state vector which is spanned by the left singular vectors of $A$.
    Suppose there exists a procedure to prepare the state $\ket{b}$ in cost $T_b$.
    Then there exists a variable time quantum algorithm
    that outputs a state that is $\delta$-close $\frac{A^+\ket{b} }{\norm{A^+\ket{b}}}$
    at a cost of
    \begin{equation}
        \order{\kappa\log\kappa\paren{\alpha T_A \logp{\frac{\kappa}{\delta}} + T_b}}
    \end{equation}  
    using $\order{\logp{\kappa}}$ additional qubits. 
\end{theorem}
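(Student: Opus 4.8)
The plan is to instantiate the variable-stopping-time algorithm $\mathcal{A} = \mathcal{A}_m \cdots \mathcal{A}_1$ sketched in the preceding discussion, with $m = \lceil \log \kappa \rceil + 1$, and then feed it into the VTAA machinery of \autoref{lem:variable_time_aa}. First I would check that the construction genuinely fits the variable-stopping-time template: each $\mathcal{A}_j$ acts on $\mathcal{H}_{C_j} \otimes \mathcal{H}_{\mathcal A}$ controlled on $C_{j-1}\cdots C_1$ being $\ket{0}^{\otimes(j-1)}$, where Step~1 writes the output of $QSVD(2^{-j}, \varepsilon')$ into the clock qubit $C_j$ and Step~2 applies $W(2^{-j}, \varepsilon')$ conditioned on $C_j = \ket 1$. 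Decomposing $\ket b = \sum_\ell \beta_\ell \ket{w_\ell}$ in the left singular basis, the discrimination guarantee of \autoref{thm:QEVD} ensures that a component with singular value $\sigma_\ell$ is flagged (its clock set to $\ket 1$) at the first step $j$ for which $\sigma_\ell \ge 2\cdot 2^{-j}$, i.e.\ around $j_\ell \approx \logp{1/\sigma_\ell}$, at which point \autoref{thm:efficient_inversion_of_bem} inverts it. The stopping probability $p_j$ is thus the weight $\sum_{\ell : j_\ell = j} \abs{\beta_\ell}^2$ of the components that halt at step $j$.

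Next I would fix the precision budget. Each invocation of $QSVD$ and of $W$ commits error at most $\varepsilon'$; since the good (flagged and inverted) branch carries amplitude $\Theta(1/a_{\max})$ with $a_{\max} = \order{\kappa}$, accumulating over the $m$ sub-algorithms and renormalizing multiplies the error by $\order{a_{\max}}$. Choosing $\varepsilon' = \delta/(a_{\max} m)$ therefore keeps the final normalized state $\delta$-close to $A^+\ket b/\norm{A^+\ket b}$. Propagating this back through the precision requirements of \autoref{thm:QEVD} (which needs $\varepsilon = \smalloh{\varepsilon' 2^{-j}/\logp{1/\varepsilon'}}$) and of \autoref{thm:efficient_inversion_of_bem} (which needs $\varepsilon = \smalloh{\varepsilon'(2^{-j})^2/\logp{1/(\varepsilon' 2^{-j})}}$), and taking the worst case at the smallest scale $2^{-m} = \Theta(1/\kappa)$ together with $\varepsilon' = \Theta(\delta/(\kappa \log\kappa))$, reproduces the stated input requirement $\varepsilon = \smalloh{\delta/(\kappa^3 \log^2(\kappa/\delta))}$; this part is just careful substitution.

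The substantive step is the complexity accounting via \autoref{lem:variable_time_aa}. By \autoref{thm:QEVD} and \autoref{thm:efficient_inversion_of_bem} the cost of $\mathcal{A}_j$ is $\order{\alpha 2^j \logp{\kappa/\delta} T_A}$, so the accumulated stopping time is $t_j = \order{\alpha 2^j \logp{\kappa/\delta} T_A}$ and $T_{\max} = t_m = \order{\alpha \kappa \logp{\kappa/\delta} T_A}$; since $t_1 = \order{\alpha\logp{\kappa/\delta}T_A}$ we get $T'_{\max} = \order{\kappa}$ and $\logp{T'_{\max}} = \order{\log\kappa}$. Using $t_{j_\ell} = \order{\alpha \logp{\kappa/\delta} T_A/\sigma_\ell}$ gives $\norm{T}_2 = \sqrt{\sum_j p_j t_j^2} = \order{\alpha\logp{\kappa/\delta}T_A \, \norm{A^+\ket b}}$, while the flagged success amplitude yields $p_{\mathrm{succ}} = \Theta(\norm{A^+\ket b}^2/a_{\max}^2) = \Theta(\norm{A^+\ket b}^2/\kappa^2)$. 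The crucial cancellation is that $\norm{T}_2/\sqrt{p_{\mathrm{succ}}} = \order{\alpha\kappa\logp{\kappa/\delta}T_A}$, with the data-dependent factor $\norm{A^+\ket b}$ dropping out; taking $p_{\mathrm{prep}} = \Theta(1)$ for the direct preparation $U_b$ of $\ket b$ (so $T_U = T_b$ and $k = \order{\log\kappa}$) and using $\norm{A^+\ket b}\ge 1$ to bound the $T_b$ contribution, substituting all quantities into the expression for $Q$ collapses both terms to $\order{\kappa\log\kappa\paren{\alpha T_A \logp{\kappa/\delta} + T_b}}$.

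Finally, the qubit count follows by inspection: the only registers beyond those internal to $U_A$ are the $m = \order{\log\kappa}$ single-qubit clocks and the $\order{1}$ flag qubit, and crucially the $QSVD$ ancillas are reused across steps rather than stored, which is exactly what replacing gapped phase estimation by \autoref{thm:QEVD} buys us. I expect the main obstacle to be the bound on $\norm{T}_2$ together with the bookkeeping that makes $\norm{A^+\ket b}$ cancel against $\sqrt{p_{\mathrm{succ}}}$ -- this is the heart of why VTAA yields near-linear rather than quadratic dependence on $\kappa$ -- along with verifying that the per-step errors and the final renormalization compose to the claimed $\delta$.
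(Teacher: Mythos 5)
Your proposal is correct and follows essentially the same route as the paper's proof: the same decomposition into $m = \lceil\log\kappa\rceil+1$ sub-algorithms built from $QSVD(2^{-j},\varepsilon')$ and $W(2^{-j},\varepsilon')$, the same choice $\varepsilon' = \delta/(a_{\max}m)$, the same computations of $T_{\max}$, $\norm{T}_2$, and $p_{\mathrm{succ}}$ (your $\norm{A^+\ket b}$ is exactly the paper's $\sqrt{\sum_k \abs{c_k}^2/\sigma_k^2}$, and the cancellation against $\sqrt{p_{\mathrm{succ}}}$ is the same step), and the same back-propagation of the precision requirements from \autoref{thm:QEVD} and \autoref{thm:efficient_inversion_of_bem} to obtain $\varepsilon = \smalloh{\delta/(\kappa^3\log^2(\kappa/\delta))}$. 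No gaps; the qubit-count argument also matches the paper's.
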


\begin{proof}
The correctness of the algorithm is similar to that of Refs.~\cite{CKS17,CGJ19}, except here, we use QSVD instead of gapped phase estimation.
% Here we will describe the overall complexity of this algorithm and refer the readers to Refs.~\cite{CKS17,CGJ19} for details on its correctness.
According to \autoref{lem:variable_time_aa}, we need $T_{\max}$ (the maximum time any of the sub-algorithms $\mathcal{A}_j$ take), $\norm{T}_2^2$ (the $\ell_2$-averaged stopping time of the sub-algorithms), and $\sqrt{p_{\mathrm{succ}}}$ (the square root of the success probability.) 
Now each sub-algorithm consists of two steps, implementing QEVD with precision $2^{-j}$ and error $\varepsilon'$, followed by $W(2^{-j}, \varepsilon')$. 
From \autoref{thm:QEVD}, the first step costs 
    \begin{equation*}
        \order{\alpha  T_A  2^{j} \logp{\frac{1}{\varepsilon'}}},
    \end{equation*}
    
    and the cost of implementing $W(2^{-j}, \varepsilon')$ is as described in \autoref{eqn:efficient_MI_complexity}. Thus the overall cost of $\mathcal{A}_j$, which is the sum of these two costs, turns out to be 

    \begin{equation}
        \label{eqn:complexity_a_j}
        \order{\alpha T_A 2^{j}\logp{\frac{2^{j}}{\varepsilon'}}}
    \end{equation}

    Note that the time $t_j$ required to implement $\mathcal{A}_j\ldots\mathcal{A}_1$ is also the same as \autoref{eqn:complexity_a_j}. Also, 
    \begin{align*}
        T_{\max} &= \max_j~t_j \\
                &= \max_j~\order{\alpha T_A 2^{j}\logp{\frac{2^{j}}{\varepsilon'}}} \\
                &= \order{\alpha T_A \kappa\logp{\frac{\kappa}{\varepsilon'}}} \\
                &= \order{\alpha T_A \kappa\logp{\frac{\kappa\logp{\kappa}}{\delta}}}.
    \end{align*}

The $\norm{T}_2^2$ is dependent on the probability that $\mathcal{A}$ stops at the $j^{\mathrm{th}}$ step. This is given by $p_j = \norm{\Pi_{C_j}\mathcal{A}_j \ldots \mathcal{A}_1\ket{\psi}_I\ket{0}_{CFPQ}}^2$, where $\Pi_{C_j}$ is the projector on $\ket{1}_{C_j}$, the $j^{\mathrm{th}}$ clock register. From this, $\norm{T}_2^2$ can be calculated as

    \begin{align*}
        \norm{T}_2^2 &= \sum_j p_j t_j^2 \\ 
                       &= \sum_j \norm{\Pi_{C_j}\mathcal{A}_j \ldots \mathcal{A}_1\ket{\psi}_I\ket{0}_{CFPQ}}^2 t_j^2 \\ 
                       &= \sum_k \abs{c_k}^2 \sum_j \paren{\norm{\Pi_{C_j}\mathcal{A}_j\ldots \mathcal{A}_1 \ket{v_k}_I\ket{0}_{CFPQ}}^2 t_j^2} \\ 
                       &= \order{\alpha^2 T_A^2 \sum_k{\log^2\paren{\frac{1}{\sigma_k\varepsilon'}} \frac{\abs{c_k}^2}{\sigma_k^2}}  }
    \end{align*}

    Therefore 
    \begin{equation}
        \norm{T}_2 = \order{\alpha T_A \logp{\frac{\kappa \log{\kappa}}{\delta}}\sqrt{\sum_k\frac{\abs{c_k}^2}{\sigma_k^2}}}.
    \end{equation}

    Next we calculate the success probability. 
    \begin{align*}
        \sqrt{p_{\mathrm{succ}}} &= \norm{\Pi_F \frac{A^{-1}}{\alpha_{\max}} \ket{b}_I\ket{\phi}_{CFPQ}} + \order{m \varepsilon'} \\ 
                                 &= \frac{1}{\alpha_{\max}} \sqrt{\sum_j \frac{\abs{c_j}^2}{\sigma_j^2}} + \order{\frac{\delta}{\alpha_{\max}}} \\ 
                                 &= \Omega\paren{\frac{1}{\kappa}\sqrt{\sum_j \frac{\abs{c_j}^2}{\sigma_j^2}}}
    \end{align*}

    Given these, we can use \autoref{lem:variable_time_aa} to write the final complexity of matrix inversion with VTAA:
    \begin{align*}
        T_{\max} + T_{b} + \frac{(\norm{T}_2 + T_b)\logp{T'_{\max}}}{\sqrt{p_{\textrm{succ}}}} 
        = \order{\kappa \log\kappa\paren{\alpha T_A \logp{\frac{\kappa}{\delta}} + T_b}}
    \end{align*}
    \UPDATED
    The upper bound on the precision required for the input block-encoding, $\varepsilon$,
    can be calculated from the bounds on the precisions for $W(\kappa, \varepsilon')$ (\autoref{thm:efficient_inversion_of_bem}) and $\mathrm{QSVD}(\kappa, \varepsilon')$ (\autoref{thm:QEVD}) as follows:
    \begin{align*}
    \varepsilon
    = \smalloh{\min\paren{\frac{\varepsilon'}{\kappa^2\logp{\frac{\kappa}{\varepsilon'}}}, \frac{\varepsilon'}{\kappa\logp{\frac{1}{\varepsilon'}}}}} 
    = \smalloh{{\frac{\varepsilon'}{\kappa^2\logp{\frac{\kappa}{\varepsilon'}}}}} 
    = \smalloh{\frac{\delta}{\kappa^3\log^2\paren{\frac{\kappa}{\delta}}}}
    \end{align*}
\end{proof}
The overall complexity is better by a $\log$ factor and requires $\order{\log^2(\kappa/\delta)}$ fewer additional qubits as compared to the variable time algorithms in Refs.~\cite{CKS17,CGJ19}.

%-----------------------------------------------------------------------------
\subsection{Negative Powers of Matrices using QSVT}
\label{subsec:negative-power-qsvt} 
%-----------------------------------------------------------------------------

We consider the problem:
    given an approximate block-encoding of a matrix $A$,
    we need to prepare a block-encoding of $A^{-c}$, where $c\in (0,1)$.
This procedure will be used to develop algorithms for $\ell_2$-regularized versions of GLS.
We will directly use the results of~\cite{GSLW2019}.

\begin{lemma}[Polynomial approximations of negative power functions, \cite{GSLW2019full}, Corollary 67]
    \label{lem:poly_approx_negative_powers}
    Let $\varepsilon, \delta \in (0, \frac{1}{2}], c > 0$
    and let $f(x) := \frac{\delta^c}{2} x^{-c}$,
    then there exist even/odd polynomials
    $P_{c, \varepsilon, \delta}, P'_{c, \varepsilon, \delta} \in \rr[x]$
    such that $\norm{P_{c, \varepsilon, \delta} - f}_{[\delta, 1]} \leq \varepsilon$,
    $\norm{P_{c, \varepsilon, \delta}}_{[-1, 1]} \le 1$
    and $\norm{P'_{c, \varepsilon, \delta} - f}_{[\delta, 1]} \le \varepsilon$,
    $\norm{P'_{c, \varepsilon, \delta}}_{[-1, 1]} \le 1$.
    Moreover the degree of the polynomials are
    $\order{\frac{\max(1, c)}{\delta} \logp{\frac{1}{\varepsilon}}}$.
\end{lemma}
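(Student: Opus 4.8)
The plan is to treat this as essentially a restatement of the construction in \cite{GSLW2019full}, so I would assemble the approximating polynomials from standard analytic building blocks rather than invent something new. The natural starting point is the Gamma-integral representation of the negative power,
\begin{equation*}
  x^{-c} = \frac{1}{\Gamma(c)} \int_0^\infty t^{c-1} e^{-xt}\, dt,
\end{equation*}
which expresses the singular function as a Gamma-weighted superposition of decaying exponentials. Since accuracy is only required on $[\delta,1]$, where the integrand decays like $e^{-\delta t}$, I would first truncate the integral at $T = \order{\frac{1}{\delta}\logp{1/\varepsilon}}$, incurring a tail error that is $\order{\varepsilon}$ uniformly on $[\delta,1]$.

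For the truncated integral I would replace each exponential $e^{-xt}$ by its degree-$K$ Taylor polynomial. On $\abs{xt} \le T$ the Taylor remainder is $\order{(eT/K)^K}$, so $K = \order{T + \logp{1/\varepsilon}} = \order{\frac{1}{\delta}\logp{1/\varepsilon}}$ suffices for uniform $\varepsilon$-accuracy. Interchanging the (now finite) sum with the integral turns the truncated representation into a genuine polynomial $\sum_{k=0}^{K} \frac{(-x)^k}{k!}\int_0^T t^{c-1+k}\,dt$ of degree $K$, whose dependence on $c$ enters only through the moment integrals and contributes the $\max(1,c)$ factor in the final degree bound. Normalizing by $\delta^c/2$ then guarantees $\abs{f(x)} \le \tfrac12$ on $[\delta,1]$, leaving room to absorb the approximation error while keeping the magnitude below $1$ there.

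The two points requiring care — and the main obstacle — are the global bound $\norm{P_{c,\varepsilon,\delta}}_{[-1,1]} \le 1$ and the parity. The argument above controls the polynomial only on $[\delta,1]$; on $[0,\delta)$ and on $[-1,0)$ the true function is singular or undefined, so I would compose with a bounded smooth cut-off (or invoke the clamping step built into the smooth-function approximation machinery of \cite{GSLW2019full}) to force the magnitude below $1$ everywhere without disturbing the interval of interest, at the cost of only a constant-factor increase in degree. Finally, the even and odd variants $P_{c,\varepsilon,\delta}$ and $P'_{c,\varepsilon,\delta}$ would be obtained by extracting the even and odd polynomial parts of the construction, so that they approximate $\abs{x}^{-c}$ and $\mathrm{sign}(x)\abs{x}^{-c}$ respectively on $[\delta,1]$; checking that the parity projection preserves both the error bound and the global magnitude bound is the last routine verification. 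I expect the delicate part to be meeting the uniform bound on all of $[-1,1]$ while keeping the degree linear in $1/\delta$, since a naive Chebyshev expansion of the exponentials, or a substitution $y = x^2$ to enforce even parity, would degrade this dependence.
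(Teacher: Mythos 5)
First, note that the paper does not actually prove this lemma: it is imported verbatim as Corollary~67 of \cite{GSLW2019full}, so there is no in-paper proof to match. Your reconstruction via the Gamma integral $x^{-c}=\frac{1}{\Gamma(c)}\int_0^\infty t^{c-1}e^{-xt}\,dt$ is a genuinely different route from the one in the cited source, which instead expands $x^{-c}=(1-(1-x))^{-c}$ in its binomial series around $x_0=1$ and feeds it to their general ``local Taylor series'' machinery (Lemma~57 / Corollary~66 of \cite{GSLW2019full}). That machinery is engineered precisely so that the multiplication by an approximate rectangle function and the parity symmetrization come out with degree $\order{\frac{1}{\delta}\log(B/\varepsilon)}$, where $B$ is the $\ell_1$-norm of the local Taylor coefficients; for $x^{-c}$ around $x_0=1$ one gets $B=2^{O(c)}$, which is where the $\max(1,c)$ factor and the global bound $\norm{P}_{[-1,1]}\le 1$ come from simultaneously.

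The concrete gap in your version is the clamping step. After truncating at $T=\Theta\paren{\frac{c+\log(1/\varepsilon)}{\delta}}$ and Taylor-expanding $e^{-xt}$ to degree $K=\Theta(T)$, the resulting polynomial is controlled on $[0,1]$ but is of size $e^{\Theta(T)}$ on $[-1,0)$ (there the truncated integrand is $e^{|x|t}$, and the degree-$K$ Taylor polynomial with $K\gtrsim T$ faithfully reproduces that growth). To force the product with a rectangle-function approximation below $1$ on all of $[-1,1]$, the rectangle polynomial must be $e^{-\Theta(T)}$-small outside $[\delta,1]$, which costs degree $\Theta\paren{\frac{T}{\delta}}=\Theta\paren{\frac{1}{\delta^2}\logp{\frac1\varepsilon}}$ --- not the ``constant-factor increase'' you assert, and it breaks the claimed linear dependence on $1/\delta$. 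The same obstruction is why your fallback of substituting $y=x^2$ fails, as you yourself note. Relatedly, the parity projection $\frac{P(x)\pm P(-x)}{2}$ only returns an $\varepsilon$-approximation of $f$ on $[\delta,1]$ if $P$ is already $O(\varepsilon)$-small (not merely bounded) on $[-1,-\delta]$, so it cannot be deferred to a ``routine verification'' after the boundedness problem; it is entangled with it. To salvage the degree bound you would need to switch to the binomial-series-around-$1$ expansion (or an equivalent Chebyshev-based argument) before clamping, which is essentially what \cite{GSLW2019full} does.
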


\begin{theorem}[Negative fractional powers of a normalized matrix using QSVT]
    \label{thm:constr_neg_powers_normalized}
    Let $c \in (0, 1)$ be some constant and $\delta \in (0, 1]$
    Let $A$ be a normalized matrix with non-zero singular values in the range $[1/\kappa,1]$.
    Let $U_A$ be a $(\alpha, a, \varepsilon)$-block-encoding of a matrix $A$, implemented in time $T_A$ such that $\alpha\geq 2$ and 
    \[
        \UPDATED
        \varepsilon 
        = \smalloh{\frac \delta {\kappa^{c + 1} \log(\kappa/\delta)}}
    \]
    Then we can construct a $(2\kappa^c, a+1, \delta)$-block-encoding of $A^{-c}$ at a cost of
    \begin{equation*}
      \order{\alpha \kappa \logp{\frac{\kappa}\delta} ~T_A}.
    \end{equation*}
\end{theorem}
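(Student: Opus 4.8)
The plan is to mirror the proof of \autoref{thm:matrix_inversion_qsvt_normalized}, replacing the matrix-inversion polynomial with the negative-power polynomial of \autoref{lem:poly_approx_negative_powers} and carefully tracking the extra powers of $\alpha$ introduced both by rescaling and by the shifted approximation domain. Since $A$ is normalized with $\norm{A} \le 1$ and $\alpha \ge 2$, the precondition $\norm{A} \le \alpha/2$ of \autoref{thm:robust_qsvt} holds, so QSVT may be applied robustly to $U_A$.

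First I would observe that $U_A$ encodes $A/\alpha$ (up to error $\varepsilon$), and since the non-zero singular values of $A$ lie in $[1/\kappa, 1]$, those of $A/\alpha$ lie in $[1/(\kappa\alpha), 1/\alpha] \subseteq [1/(\kappa\alpha), 1]$. I would therefore invoke \autoref{lem:poly_approx_negative_powers} with domain parameter $\delta' := 1/(\kappa\alpha)$ and approximation accuracy $\varepsilon'$ (fixed later), obtaining a polynomial $P := P_{c, \varepsilon', \delta'}$ of degree $n = \order{\frac{1}{\delta'}\logp{1/\varepsilon'}} = \order{\kappa\alpha\logp{1/\varepsilon'}}$ (using $c \in (0,1)$, so $\max(1,c)=1$) that $\varepsilon'$-approximates $f(x) := \frac{(\delta')^c}{2} x^{-c}$ on $[\delta', 1]$ and satisfies $\norm{P}_{[-1,1]} \le 1$.

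Next I would apply \autoref{thm:robust_qsvt} to $P$ and $U_A$ with an internal accuracy $\gamma$ satisfying $\varepsilon \le \alpha\gamma/(2n)$, producing a $(1, a+1, \gamma)$-block-encoding $U$ of $P(A/\alpha)$. Because $\norm{f(A/\alpha) - P(A/\alpha)} \le \varepsilon'$ on the relevant singular-value range, $U$ is a $(1, a+1, \gamma + \varepsilon')$-block-encoding of $f(A/\alpha)$. The key computation is that applying $x \mapsto x^{-c}$ in the singular-value sense scales each singular value of $A/\alpha$ by $\alpha^c$, so with $\delta' = 1/(\kappa\alpha)$,
\[ f(A/\alpha) = \frac{(\delta')^c}{2}\paren{\frac{A}{\alpha}}^{-c} = \frac{(\kappa\alpha)^{-c}}{2}\,\alpha^c A^{-c} = \frac{A^{-c}}{2\kappa^c}. \]
Reinterpreting the sub-normalization, $U$ is then a $(2\kappa^c, a+1, 2\kappa^c(\gamma + \varepsilon'))$-block-encoding of $A^{-c}$.

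Finally I would set $\gamma = \varepsilon' = \delta/(4\kappa^c)$, so that the error is exactly $2\kappa^c(\gamma + \varepsilon') = \delta$. This forces $\logp{1/\varepsilon'} = \order{\logp{\kappa/\delta}}$, hence $n = \order{\kappa\alpha\logp{\kappa/\delta}}$ and the cost $\order{nT_A}$ matches the claimed $\order{\alpha\kappa\logp{\kappa/\delta}T_A}$; the precision constraint $\varepsilon \le \alpha\gamma/(2n)$ then reduces to $\varepsilon = \smalloh{\delta/(\kappa^{c+1}\logp{\kappa/\delta})}$, matching the hypothesis. The only genuine obstacle is bookkeeping the three interacting scalings—the $\alpha^c$ from the rescaled argument, the $(\delta')^c$ prefactor built into $f$, and the final sub-normalization reinterpretation—so that the $\kappa^c$ factors cancel to leave precisely $A^{-c}/(2\kappa^c)$; everything else is a direct transcription of the matrix-inversion argument.
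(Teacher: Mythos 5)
Your proposal is correct and follows essentially the same route as the paper's own proof: invoke \autoref{lem:poly_approx_negative_powers} with domain cutoff $1/(\kappa\alpha)$ (your $\delta'$ is the paper's $\Delta$, your $\varepsilon'$ its $\varphi$), apply \autoref{thm:robust_qsvt}, observe that $f(A/\alpha) = A^{-c}/(2\kappa^c)$, and reinterpret the sub-normalization, with the identical parameter choice $\gamma = \varepsilon' = \delta/(4\kappa^c)$. The bookkeeping of the $\alpha^c$ and $(\delta')^c$ factors that you flag as the main obstacle is handled exactly as you describe.
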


\begin{proof}
\UPDATED
  From \autoref{lem:poly_approx_negative_powers},
  using $\Delta := \frac1{\kappa \alpha}$
  and an appropriate $\varphi \in (0, \frac{1}{2}]$,
  we get an even QSP polynomial $P := P_{c, \varphi, \Delta}$
  which is $\varphi$-close to $f(x) := \frac{1}{2\kappa^c\alpha^cx^c}$,
  and has degree $n$ such that $n = \order{\alpha\kappa\logp{\frac{1}{\varphi}}}$.
  Therefore \[ \norm{f(A/\alpha) - P(A/\alpha)} \leq \varphi. \]

  Using \autoref{thm:robust_qsvt} we can construct $U_P$, a $(1, a+1, \gamma)$-block-encoding of $P(A/\alpha)$,
  given that $\varepsilon \leq \frac{\alpha\gamma}{2n}$. 
  Then from triangle inequality it follows that it is a $(1, a+1,~ \varphi + \gamma)$-block-encoding of $f(A/\alpha)$. 
  And because $f(A/\alpha) = \frac{A^{-c}}{2\kappa^c}$,
  $U_P$ can be re-interpreted as a $(2\kappa^c, a+1, 2\kappa^c(\varphi+\gamma))$-block-encoding of $A^{-c}$. 
  We therefore choose $\varphi = \gamma = \frac{\delta}{4\kappa^c}$, and choose $\varepsilon$ as
  \[
  \varepsilon 
  = \smalloh{\alpha \frac{\delta}{4\kappa^c} \frac{1}{\alpha\kappa \log(4\kappa^c/\delta)}}
  = \smalloh{\frac \delta {\kappa^{c + 1} \log(\kappa/\delta)}}
  \]
  
\end{proof}

Having discussed the necessary algorithmic primitives, we are now in a position to design quantum algorithms for linear regression with general $\ell_2$-regularization. We will first deal with ordinary least squares followed by weighted and generalized least squares.

%=============================================================================
\section{Quantum Least Squares with General $\ell_2$-Regularization}
\label{sec:main_proof}
%=============================================================================

In this section, we derive the main results of our paper, namely quantum algorithms for quantum ordinary least squares (OLS), quantum weighted least squares (WLS) and quantum generalized least squares (GLS) with $\ell_2$-regularization. 

%-----------------------------------------------------------------------------
\subsection{Quantum Ordinary Least Squares}
\label{subsec:ols-regularized}
%-----------------------------------------------------------------------------

Given $N$ data points $\{a_i,b_i\}_{i=1}^N$ such that $a_i\in\rr^d$ and $b_i\in \rr$, the objective of linear regression is to find $x\in\rr^d$ that minimizes the loss function
\begin{equation}
    \mathcal{L}_{O}=\sum_{j=1}^N (x^Ta_i-b_i)^2.
\end{equation}
Consider the $N\times d$ matrix $A$ (known as the data matrix) such that the $i^{\mathrm{th}}$ row of $A$ is the vector $a_i$ transposed and the column vector $b=(b_1\cdots b_N)^T$. Then, the solution to the OLS problem is given by $x=(A^TA)^{-1}A^T b=A^+b$. 

For the $\ell_2$-regularized version of the OLS problem, a penalty term is added to its objective function. This has the effect of shrinking the singular values of $A$ which helps overcome problems such as rank deficiency and \textit{overfitting} for the OLS problem. The loss function to be minimized is of the form
\begin{equation}
    \norm{Ax-b}^2_2+\norm{Lx}^2_2,
\end{equation}
where $L$ is the $N\times d$ penalty matrix and $\lambda>0$ is the optimal regularizing parameter. The solution $x\in\rr^d$ satisfies
\begin{equation}
    x=(A^TA+\lambda L^TL)^{-1}A^Tb.
\end{equation}
Therefore, for quantum ordinary least squares with general $\ell_2$-regularization, we assume that we have access to approximate block-encodings of the data matrix $A$, $L$ and a procedure to prepare the quantum state $\ket{b}=\sum_{j=1}^N b_j\ket{j} / \norm{b}$. Our algorithm outputs a quantum state that is close to
\begin{equation}
\label{eq:sol-ols-l2}
\ket{x}=\dfrac{(A^TA+\lambda L^TL)^{-1}A^T\ket{b}}{\norm{(A^TA+\lambda L^TL)^{-1}A^T\ket{b}}}.
\end{equation}

In order to implement a quantum algorithm that implements this, a straightforward approach would be the following: We first construct block-encodings of $A^TA$ and $L^TL$, given block encodings of $A$ and $L$, respectively (Using \autoref{lem:prod_of_be}).
We could then implement a block-encoding of $A^TA+\lambda L^TL$ using these block encodings (By \autoref{lem:constr_lincomb}).
On the other hand, we could also prepare a quantum state proportional to $A^T\ket{b}$ by using the block-encoding for $A$ and the unitary preparing $\ket{b}$.
Finally, using the block encoding of $A^TA+\lambda L^TL$, we could implement a block-encoding of $(A^TA+\lambda L^TL)^{-1}$ (using \autoref{thm:matrix_inversion_qsvt_normalized}) and apply it to the state $A^T\ket{b}$.
Although this procedure would output a quantum state close to $\ket{x}$, it is not efficient.
It is easy to see that the inverse of $A^TA+\lambda L^TL$, would be implemented with a complexity that has a quadratic dependence on the condition numbers of $A$ and $L$.
This would be undesirable as it would perform worse than the unregularized quantum least squares algorithm, where one is able to implement $A^+$ directly.
However, it is possible to design a quantum algorithm that performs significantly better than this.

The first observation is that it is possible to recast this problem as finding the pseudoinverse of some augmented matrix. Given the data matrix $A\in\rr^{N\times d}$, the regularizing matrix $L\in\rr^{N\times d}$, let us define the following augmented matrix

\begin{equation}
\label{eq:augmented-data-matrix}
    A_L := \begin{pmatrix}
        A & 0 \\
        \sqrt{\lambda} L & 0
    \end{pmatrix}.
\end{equation}

It is easy to see that the top left block of $A^{+}_L=(A^TA+\lambda L^TL)^{-1}A^T$, which is the required linear transformation to be applied to $b$. Consequently, our strategy would be to implement a block-encoding of $A_L$, given block-encodings of $A$ and $L$. Following this, we use matrix inversion by QSVT to implement $A^{+}_L\ket{b}\ket{0}$. The first register is left in the quantum state given in \autoref{eq:sol-ols-l2}.

{\UPDATED
From this, it is clear that the complexity of our quantum algorithm would depend on the effective condition number of the augmented matrix $A_L$. In this regard, we shall assume that the penalty matrix $L$ is a \textit{good regularizer}. That is, $L$ is chosen such that it does not have zero singular values (positive definite). This is a fair assumption as if $L$ has only non-zero singular values, the minimum singular value of $A_L$ is guaranteed to be lower bounded by the minimum singular value of $L$. This ensures that the effective condition number of $A_L$ depends on $\kappa_L$, even when the data matrix $A$ has zero singular values and $A^TA$ is not invertible. Consequently, this also guarantees that regularized least squares provide an advantage over their unregularized counterparts.
} 

Next, we obtain bounds on the effective condition number of the augmented matrix $A_L$ for a good regularizer $L$ via the following lemma:

\begin{lemma}[Condition number and Spectral Norm of $A_L$]
  \label{thm:spec-augmented_regression_matrix}
  Let the data matrix $A$ and the {\UPDATED positive definite} penalty matrix $L$
      have spectral norms $\norm{A}$ and $\norm{L}$, respectively.
  Furthermore, suppose their effective condition numbers be upper bounded by $\kappa_A$ and $\kappa_L$.
  Then the ratio between the maximum and minimum (non-zero) singular value of $A_L$ is upper bounded by 
  \begin{equation*}
  \kappa
  = \kappa_L\paren{1 + \frac{\norm{A}}{\sqrt\lambda \norm{L}}}
  \end{equation*}
  We can also bound the spectral norm as
  \begin{equation*}
      \norm{A_L} = \Theta \paren{\norm{A} + \sqrt\lambda\norm{L}}
  \end{equation*}
\end{lemma}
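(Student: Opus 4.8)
The plan is to reduce both claims to controlling the extreme eigenvalues of the matrix $M := A^TA + \lambda L^TL$. The first step is to exploit the block structure: since $A_L = \begin{pmatrix} A & 0 \\ \sqrt\lambda L & 0\end{pmatrix}$, a direct computation gives
\begin{equation*}
A_L^T A_L = \begin{pmatrix} A^TA + \lambda L^TL & 0 \\ 0 & 0 \end{pmatrix},
\end{equation*}
so the nonzero singular values of $A_L$ are exactly the square roots of the positive eigenvalues of $M$, while the appended columns of zeros contribute only zero singular values. Because $L$ is a \emph{good regularizer} (positive definite), $L^TL$ is positive definite and hence so is $M$ for $\lambda > 0$; thus all $d$ eigenvalues of $M$ are strictly positive and we have $\sigmamax{A_L}^2 = \lambda_{\max}(M)$ and $\sigmaminnz{A_L}^2 = \lambda_{\min}(M)$.

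Next I would bound these two eigenvalues. For the top, subadditivity of the spectral norm (equivalently Weyl's inequality) applied to the sum of the positive semidefinite matrices $A^TA$ and $\lambda L^TL$ gives
\begin{equation*}
\maxp{\norm{A}^2, \lambda\norm{L}^2} \le \lambda_{\max}(M) \le \norm{A}^2 + \lambda\norm{L}^2.
\end{equation*}
For the bottom, I use that adding the positive semidefinite matrix $A^TA$ can only increase the smallest eigenvalue, so $\lambda_{\min}(M) \ge \lambda\,\lambda_{\min}(L^TL) = \lambda\,\sigmamin{L}^2$; combined with $\sigmamin{L} \ge \norm{L}/\kappa_L$ this yields $\lambda_{\min}(M) \ge \lambda\norm{L}^2/\kappa_L^2$.

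Finally I would assemble the two statements. Taking the ratio,
\begin{equation*}
\frac{\sigmamax{A_L}^2}{\sigmaminnz{A_L}^2} = \frac{\lambda_{\max}(M)}{\lambda_{\min}(M)} \le \frac{\norm{A}^2 + \lambda\norm{L}^2}{\lambda\norm{L}^2/\kappa_L^2} = \kappa_L^2\paren{1 + \frac{\norm{A}^2}{\lambda\norm{L}^2}},
\end{equation*}
and taking square roots together with the elementary inequality $\sqrt{1+t^2}\le 1+t$ for $t = \norm{A}/(\sqrt\lambda\norm{L}) \ge 0$ gives the claimed $\kappa \le \kappa_L(1 + \norm{A}/(\sqrt\lambda\norm{L}))$. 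For the norm, taking square roots in the two-sided bound on $\lambda_{\max}(M)$ and using $\tfrac12(a+b) \le \maxp{a,b}$ and $\sqrt{a^2+b^2}\le a+b$ sandwiches $\norm{A_L}$ between $\tfrac12(\norm{A}+\sqrt\lambda\norm{L})$ and $\norm{A}+\sqrt\lambda\norm{L}$, which is precisely $\Theta(\norm{A}+\sqrt\lambda\norm{L})$. The only real subtlety — and the step worth stating carefully — is the lower bound on $\lambda_{\min}(M)$: this is exactly where the good-regularizer hypothesis is indispensable, since without $L^TL$ being positive definite the smallest nonzero eigenvalue of $M$ need not be controlled by $L$ alone, and the effective condition number could instead blow up with $\kappa_A$.
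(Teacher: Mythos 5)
Your proposal is correct and follows essentially the same route as the paper's proof: both reduce to the block identity $A_L^TA_L = \mathrm{diag}(A^TA+\lambda L^TL,\,0)$, upper-bound $\norm{A_L}$ by subadditivity, lower-bound it by $\maxp{\norm{A},\sqrt\lambda\norm{L}}$, and lower-bound the smallest non-zero singular value via Weyl's inequality together with the positive-definiteness of $L^TL$, giving $\sigmaminnz{A_L}\ge\sqrt\lambda\norm{L}/\kappa_L$. The only cosmetic difference is that you take the ratio at the level of eigenvalues of $M$ and then apply $\sqrt{1+t^2}\le 1+t$, whereas the paper takes square roots first and divides the resulting singular-value bounds directly; the conclusions are identical.
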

%%%%%%%%%%%%%%%%%%%%%%%%%%%%%%%%%%%%%%%%%%%%%%%%%%%%%%%%%%
\begin{proof}

To bound the spectral norm and condition number of $A_L$, consider the eigenvalues of the following matrix:
    \begin{equation*}
      A_L^T A_L = \begin{pmatrix}
          A^T A + \lambda L^T L & 0 \\
          0 & 0
        \end{pmatrix}
    \end{equation*}

    This implies that the non-zero eigenvalues of $A_L^T A_L$
      are the same as those of $A^T A + \lambda L^T L$.
    Therefore, using triangle inequality, the spectral norm of $A_L$ can be upper-bounded as follows:
    \begin{align*}
      \norm{A_L}
       = \sqrt \norm{A_L^T A_L} 
       = \sqrt \norm{A^T A + \lambda L^T L} 
       \le \sqrt {\norm{A^T A} + \lambda \norm{L^T L}} 
       = \sqrt {\norm{A}^2 + \lambda \norm{L}^2} 
       \leq \norm{A} + \sqrt\lambda \norm{L}
    \end{align*}
    Similarly $\norm{A_L} \ge \norm{A}$ and $\norm{A_L} \ge \sqrt\lambda \norm{L}$,
    which effectively gives the tight bound for $\norm{A_L}$.
    
    % If the matrices $A$ and $L$ have the same singular vectors,
    % then we can lower bound the norm by
    %   $\norm{A_L} \ge \max\paren{\norm{A}, \sqrt\lambda\norm{L}}$.
    % Together these two give us a nice asymptotic bound 
    %   $\norm{A_L} = \Theta\paren{\norm{A} + \sqrt\lambda \norm{L}}$.

As $L^TL$ is positive definite,
we have that its minimum singular value is $\sigma_{\mathrm{min}(L)}=\norm{L}/\kappa_L$.
And we also know that $A^TA$ is positive semidefinite, 
so by Weyl's inequality, the minimum singular value of $A_L$ is lower bounded by
\begin{align*}
    \sigma_{\min}\paren{A_L} 
    \ge \sqrt {\sigma_{\min}\paren{A}^2 + \lambda \sigma_{\min}\paren{L}^2} 
    \ge \sqrt {\lambda \frac{\norm{L}^2}{\kappa_L^2}} 
    = \sqrt \lambda \frac{\norm{L}}{\kappa_L}
\end{align*}

Thus, 
\begin{equation*}
\dfrac{\sigma_{\max}\paren{A_L}}{\sigma_{\min}\paren{A_L}}\leq\kappa
=\kappa_L\paren{1 + \frac{\norm{A}}{\sqrt\lambda \norm{L}}}
\end{equation*}
\end{proof}

{\UPDATED In the theorems and lemmas for regularized quantum linear regression and its variants that we develop in this section, we consider that $L$ is a \textit{good regularizer} in order to provide a simple expression for $\kappa$. However, this is without loss of generality. When $L$ is not a good regularizer, the expressions for the respective complexities will remain unaltered, except that $\kappa$ would now correspond to the condition number of the augmented matrix.}

%We first show how to obtain a block-encoding of $A_L$ given approximate block-encodings of $A$ and $L$. Now, for the $\ell_2$-regularized quantum OLS problem, we need to implement $A^{+}_L$. Given a block encoding of $A$, $L$ and a unitary procedure to prepare the state $\ket{b}$, our algorithm implements $A_L^+\ket{b}\ket{0}/\norm{A^{+}_L\ket{b}\ket{0}}$. In the first register, we are left with a state close to
%$$
%\ket{x}=\dfrac{(A^TA+\lambda L^TL)^{-1}A^T\ket{b}}{\norm{(A^TA+\lambda L^TL)^{-1}A^T\ket{b}}}.
%$$ 
Now it might be possible that $\ket{b}$ does not belong to the row space of $(A^TA+\lambda L^TL)^{-1}A^T$ which is equivalent to saying $\ket{b}\ket{0}$ may not lie in $\mathrm{row}(A^{+}_L)$.  However, it is reasonable to expect that the initial hypothesis of the underlying model being close to linear is correct. That is, we expect $\ket{b}$ to have a good overlap with $\mathrm{row}\left(A^+_L\right)=\mathrm{col}\left(A_L\right)$. The quantity that quantifies how far the model is from being linear is the so called \textit{normalized residual sum of squares}. For $\ell_2$-regularized ordinary least squares, this is given by
\begin{equation}
\mathcal{S}_O = \dfrac{\norm{(I-\Pi_{\mathrm{col}(A_L)})\ket{b}\ket{0}}^2}{\norm{\ket{b}}^2}=1-\norm{\Pi_{\mathrm{col}(A_L)}\ket{b}\ket{0}}^2.
\end{equation} 

If the underlying data can indeed be fit by a linear function, $\mathcal{S}_O$ will be low. Subsequently, we assume that $\mathcal{S}_O=1-\norm{\Pi_{\mathrm{col}(A_L)}\ket{b}\ket{0}}^2\leq \gamma<1/2$. This in turn implies that $\norm{\Pi_{\mathrm{col}(A_L)}\ket{b}\ket{0}}^2=\Omega(1)$, implying that the data can be reasonably fit by a linear model.\footnote{Our results also hold if we assume that $\mathcal{S}_O\leq \gamma$ for some $\gamma\in (0,1)$. That is, $\norm{\Pi_{\mathrm{col}(A_L)}}\geq 1-\gamma$. In such a scenario our complexity to prepare $A^{+}_L\ket{b,0}/ \norm{A^+_L\ket{b,0}}$ is rescaled by $1/\sqrt{1-\gamma}$.}

Now we are in a position to present our quantum algorithm for the quantum least squares problem with general $\ell_2$-regularization. We also present an improved quantum algorithm for the closely related quantum ridge regression, which is a special case of the former. 

%%%%%%

\begin{theorem}[Quantum Ordinary Least Squares with General $\ell_2$-Regularization]
    \label{thm:quantum_least_squares_gen_tik_reg}
    Let $A, L \in \rr^{N \times d}$ be the data and penalty matrices with effective condition numbers $\kappa_A$ and $ \kappa_L$ respectively, and $\lambda \in \rr^+$ be the regression parameter.
    Let $U_A$ be a $(\alpha_A, a_A, \varepsilon_A)$-block-encoding of $A$
      implemented in time $T_A$
    and $U_L$ be a $(\alpha_L, a_L, \varepsilon_L)$-block-encoding of $L$
      implemented in time $T_L$.
    Furthermore, suppose $U_b$ be a unitary that prepares $\ket b$ in time $T_b$ and
    {\UPDATED
    define  \[ \kappa = \OLSKappa \]
    }
    Then for any $\delta \in (0, 1)$ such that
    \begin{equation}
        \label{eqn:qols_error_condition}
        \UPDATED
        \varepsilon_A, \sqrt\lambda\varepsilon_L = \smalloh{\frac{\delta}{\kappa^3\log^2\paren{\frac{\kappa}{\delta}}}}
    \end{equation}
      we can prepare a state that is $\delta$-close to
      \begin{equation*} \OLSstate \end{equation*}
      with probability $\Theta(1)$,
      at a cost of
      \begin{equation}
          \label{eqn:ols_complexity_vtqa}
          \UPDATED
          \OLScomplexity{\alpha_A}{\alpha_L} 
      \end{equation}
    using only $\order{\log \kappa}$ additional qubits.
\end{theorem}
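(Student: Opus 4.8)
The plan is to reduce $\ell_2$-regularized OLS to a single application of the variable-time matrix-inversion algorithm (\autoref{thm:matrix_inversion_vtaa}) to the augmented matrix $A_L$ of \autoref{eq:augmented-data-matrix}, using that the top-left block of $A_L^+$ equals $(A^TA+\lambda L^TL)^{-1}A^T$; consequently applying $A_L^+$ to $\ket{b}\ket{0}$ leaves $(A^TA+\lambda L^TL)^{-1}A^T\ket{b}$ in the input register with the augmenting register in $\ket{0}$. First I would assemble a block-encoding of $A_L$: the $(\alpha_L,a_L,\varepsilon_L)$-block-encoding of $L$ is, after rescaling the subnormalization, a $(\sqrt\lambda\,\alpha_L,a_L,\sqrt\lambda\,\varepsilon_L)$-block-encoding of $\sqrt\lambda L$, and feeding the block-encodings of $A$ and $\sqrt\lambda L$ into \autoref{lem:augmented_regression_matrix} produces an $(\alpha_A+\sqrt\lambda\,\alpha_L,\ \max(a_A,a_L)+2,\ \varepsilon_A+\sqrt\lambda\,\varepsilon_L)$-block-encoding of $A_L$ at cost $\order{T_A+T_L}$.

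Next I would pin down the spectral data of $A_L$. Since $L$ is a good regularizer, \autoref{thm:spec-augmented_regression_matrix} bounds its effective condition number by $\kappa=\OLSKappa$ and gives $\norm{A_L}=\Theta(\norm{A}+\sqrt\lambda\norm{L})$; crucially $\kappa$ is independent of $\kappa_A$, so the inversion stays cheap even when $A^TA$ is singular. Because \autoref{thm:matrix_inversion_vtaa} is stated for normalized matrices, I would reinterpret the same unitary as a block-encoding of $\overline{A_L}:=A_L/\norm{A_L}$, whose nonzero singular values lie in $[1/\kappa,1]$: it is a $(\bar\alpha,\max(a_A,a_L)+2,\bar\varepsilon)$-block-encoding with $\bar\alpha=(\alpha_A+\sqrt\lambda\,\alpha_L)/\norm{A_L}=\Theta((\alpha_A+\sqrt\lambda\,\alpha_L)/(\norm{A}+\sqrt\lambda\norm{L}))$ and $\bar\varepsilon=(\varepsilon_A+\sqrt\lambda\,\varepsilon_L)/\norm{A_L}$ (inflating $\bar\alpha$ to $\ge2$ with a trivial ancilla if needed). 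The hypothesis \autoref{eqn:qols_error_condition} is precisely what forces $\bar\varepsilon=\smalloh{\delta/(\kappa^3\log^2(\kappa/\delta))}$, matching the precision demanded by \autoref{thm:matrix_inversion_vtaa}.

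Then I would prepare $\ket{b}\ket{0}$ at cost $\order{T_b}$ and run \autoref{thm:matrix_inversion_vtaa} on $\overline{A_L}$. As $A_L^+$ kills every component orthogonal to $\mathrm{col}(A_L)$, the residual assumption $\mathcal{S}_O\le\gamma<1/2$ ensures $\norm{\Pi_{\mathrm{col}(A_L)}\ket{b}\ket{0}}^2=\Omega(1)$, which keeps the relevant overlap ($\sum_j|c_j|^2/\sigma_j^2$ in the notation of \autoref{thm:matrix_inversion_vtaa}) at $\Omega(1)$ and hence the post-amplification success probability $\Theta(1)$. The output is $\delta$-close to $A_L^+\ket{b}\ket{0}/\norm{A_L^+\ket{b}\ket{0}}$, whose input register is exactly $\OLSstate$ (the overall scalar $\norm{A_L}$ cancels under normalization). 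Substituting $\bar\alpha$, the block-encoding cost $\order{T_A+T_L}$, and $\order{T_b}$ into the complexity of \autoref{thm:matrix_inversion_vtaa} yields $\OLScomplexity{\alpha_A}{\alpha_L}$, while the $\order{\log\kappa}$ additional-qubit count is inherited from that theorem, since the constant number of block-encoding ancillas is absorbed.

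I expect the main obstacle to be the normalization-and-precision bookkeeping rather than any new idea: tracking how the subnormalization $\alpha_A+\sqrt\lambda\,\alpha_L$ and error $\varepsilon_A+\sqrt\lambda\,\varepsilon_L$ of $A_L$ transform under renormalization to $\overline{A_L}$, and confirming that the stated bounds on $\varepsilon_A$ and $\sqrt\lambda\,\varepsilon_L$ genuinely propagate through the (already error-analyzed) matrix-inversion routine to a final error of $\delta$ while the residual assumption secures a constant success probability. The conceptual step---recasting regularized OLS as one well-conditioned pseudoinverse via $A_L$---is supplied by the earlier setup.
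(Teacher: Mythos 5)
Your proposal is correct and follows essentially the same route as the paper's proof: construct the block-encoding of the augmented matrix $A_L$ via \autoref{lem:augmented_regression_matrix}, bound its effective condition number and spectral norm via \autoref{thm:spec-augmented_regression_matrix}, reinterpret the unitary as a block-encoding of the normalized $A_L/\norm{A_L}$, and invoke the variable-time inversion of \autoref{thm:matrix_inversion_vtaa} on $\ket{b}\ket{0}$. Your additional bookkeeping on the renormalized subnormalization and error, and on how the residual assumption yields constant success probability, matches (and slightly elaborates on) what the paper does.
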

%%%%%%%%%%%%%%%%%%%%%%%%%%%%%%%%%%%%%%%%%%%%%%%%%%%%%%%%%%%%%%%%%%%%%%%%%%%
\begin{proof}
  We invoke \autoref{lem:augmented_regression_matrix}, to obtain a unitary $U$,
  which is a
  $(\alpha_A + \sqrt\lambda \alpha_L, \max(a_A, a_L) + 2, \varepsilon_A + \sqrt\lambda\varepsilon_L)$-block-encoding of the matrix $A_L$, implemented at a cost of $\order{T_A + T_L}$. Note that in \autoref{lem:augmented_regression_matrix}, $A$ and $L$ are considered to be $s$-qubit operators. For $N\times d$ matrices, such that $N,d\leq 2^s$, we can pad them with zero entries. Padding $A$ and $L$ with zeros may result in the augmented matrix $A_L$ having some zero rows between $A$ and $L$. However, this is also not an issue as we are only interested in the top left block of $A^{+}_L$ which remains unaffected.

Note that $U$ can be reinterpreted as a $\left(\frac{\alpha_A + \sqrt\lambda \alpha_L}{\norm{A_L}}, \max(a_A, a_L) + 2, \frac{\varepsilon_A + \sqrt\lambda \varepsilon_L}{\norm{A_L}}\right)$-block-encoding of the normalized matrix $A_L/\norm{A_L}$. Furthermore, we can prepare the quantum state $\ket{b}\ket{0}$ in time $T_b$. Now by using \autoref{thm:matrix_inversion_vtaa} with $U$ and an appropriately chosen $\delta$ specified above, we obtain a quantum state that is $\delta$-close to 
\begin{equation*}
\normalized{(A^TA + \lambda L^TL)^{-1} A^T \ket b}
\end{equation*}
in the first register. 
\end{proof}

% At this stage, it is worth considering how $\ell_2$-regularization can potentially improve the complexity of quantum ordinary least squares. Consider the following two scenarios (i) where the input data is such that $A$ has a very high condition number ($\kappa_A$) and (ii) when the input data is such that two or more columns are linearly dependent, resulting in $A^TA$ being singular. For ease of exposition, let us deal with these cases assuming $\norm{A} = \norm{L} = 1$.

% (i) When $\kappa_A$ is large (say $\mathrm{poly}(N,d)$), the ordinary least squares algorithm would run in a time proportional to $\ohtilde{\alpha_A\kappa_A}$ (assuming $T_A$ and $T_b$ are small). Now consider any penalty matrix $L$ for which $\kappa_L\ll \kappa_A$. Then $\ell_2$-regularized regression problem has an effective condition number
% \begin{align*}
% \kappa
% =\kappa_L \sqrt{1 + \lambda}
% \paren{\lambda + \frac{\kappa_L^2}{\kappa_A^2}}^{-1/2}
% \end{align*}

% Typically in most practical scenarios, $\lambda$ is a small constant. In this regime, we have that $\kappa=\order{\kappa_L}$ and so the $\ell_2$-regularized OLS problem runs in complexity $\ohtilde{\kappa_L(\alpha_A+\alpha_L)}$ which can be polynomially better than its unregularized counterpart.
% \\~\\
% (ii) In the second case, as $A^TA$ is singular, and so OLS cannot be applied altogether as $\kappa_A\to \infty$. The $\ell_2$-regularized OLS problem on the other hand has an effective condition number $\kappa=\order{\kappa_L}$ and would still have the same complexity as in case (i).

In the above complexity, when $L$ is a good regularizer, $\kappa$ is independent of $\kappa_A$.
$\kappa$ can be made arbitrarily smaller than $\kappa_A$ by an appropriate choice of $L$.
Thus the regularized version has significantly better time complexity than the unregularized case.
One such example of a good regularizer is in case of \textit{Quantum Ridge Regression}, where we use the identity matrix to regularize.
The corollary below elucidates this.

\begin{corollary}[Quantum Ridge Regression]
    \label{thm:quantum_ridge_regression}
    Let $A$ be a matrix of dimension $N \times d$
      with effective condition number $\kappa_A$
    and $\lambda \in \rr^+$ be the regression parameter.
    Let $U_A$ be a $(\alpha, a, \varepsilon)$-block-encoding of $A$
      implemented in time $T_A$.
    Let $U_b$ be a unitary that prepares $\ket b$ in time $T_b$.
    If $\kappa = 1 + \norm{A} / \sqrt\lambda$
    then for any $\delta$ such that
    \begin{equation*}
    \UPDATED
    \varepsilon = \smalloh{\frac{\delta}{\kappa^3\log^2\paren{\frac{\kappa}{\delta}}}}
    \end{equation*}
    we can prepare a state $\delta$-close to
    \begin{equation*}
        \normalized{\paren{A^TA + \lambda I}^{-1} A^T \ket b}
    \end{equation*}
    at a cost of
      \begin{equation} \RidgeComplexity \end{equation}
   with probability $\Theta(1)$ using only $\order{\log \kappa}$ additional qubits.
\end{corollary}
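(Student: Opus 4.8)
The plan is to obtain this corollary as a direct specialization of \autoref{thm:quantum_least_squares_gen_tik_reg} to the case $L = I$. First I would note that the identity matrix is a \emph{good regularizer} (it is positive definite) and admits a trivial exact block-encoding: $U_L = I$ is a $(1, 0, 0)$-block-encoding of $I$ with cost $T_L = \order{1}$, and $\norm{L} = \kappa_L = 1$. Thus all hypotheses of the general theorem are satisfied with $(\alpha_L, a_L, \varepsilon_L) = (1, 0, 0)$, and its conclusion furnishes a state $\delta$-close to $\normalized{\paren{A^TA + \lambda I}^{-1} A^T \ket b}$ using $\order{\log\kappa}$ additional qubits, with success probability $\Theta(1)$.

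The parameters then collapse cleanly. Substituting $\kappa_L = 1$ and $\norm{L} = 1$ into $\kappa = \kappa_L\paren{1 + \norm{A}/(\sqrt\lambda\norm{L})}$ gives exactly $\kappa = 1 + \norm{A}/\sqrt\lambda$, as claimed. Likewise, the theorem's precision requirement $\varepsilon_A, \sqrt\lambda\varepsilon_L = \smalloh{\delta/(\kappa^3\log^2(\kappa/\delta))}$ reduces to the single stated condition on $\varepsilon = \varepsilon_A$, since $\varepsilon_L = 0$ makes the constraint on $\sqrt\lambda\varepsilon_L$ vacuous. The output state and qubit count carry over verbatim, with $A^TA + \lambda L^TL = A^TA + \lambda I$.

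The only genuine bookkeeping, and the step I expect to be the main (albeit mild) obstacle, is reconciling the cost $\OLScomplexity{\alpha_A}{\alpha_L}$ with the target $\RidgeComplexity$. After inserting $\alpha_L = 1$, $\norm{L} = 1$, and $T_L = \order{1}$, the first factor in the numerator becomes $(\alpha_A + \sqrt\lambda)/(\norm{A} + \sqrt\lambda)$. Here I would invoke the identity $\norm{A} + \sqrt\lambda = \kappa\sqrt\lambda$, immediate from $\kappa = 1 + \norm{A}/\sqrt\lambda$, which cancels one power of $\kappa$ against the prefactor and leaves $\log\kappa \cdot (\alpha_A/\sqrt\lambda + 1)\log(\kappa/\delta)T_A$ for the first term and $\kappa\log\kappa\,T_b$ for the second. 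The latter already matches the $T_b$ term of $\RidgeComplexity$. The final simplification is absorbing the additive $1$ in $\alpha_A/\sqrt\lambda + 1$ into $\alpha_A/\sqrt\lambda$: since any block-encoding obeys $\alpha_A \ge \norm{A} - \varepsilon_A$ and $\norm{A} = (\kappa-1)\sqrt\lambda$, for $\kappa \ge 2$ we get $\alpha_A/\sqrt\lambda = \Omega(1)$, so the constant is swallowed by the $\order{\cdot}$ and the first term reduces to $\log\kappa\cdot(\alpha_A/\sqrt\lambda)\log(\kappa/\delta)T_A$, yielding precisely $\RidgeComplexity$.
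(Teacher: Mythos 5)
Your proposal is correct and follows exactly the paper's route: the paper's own proof is the two-sentence observation that $I$ is a trivial $(1,0,0)$-block-encoding of itself with $\kappa_I=1$, followed by invoking \autoref{thm:quantum_least_squares_gen_tik_reg} with $L=I$. The additional bookkeeping you supply (using $\norm{A}+\sqrt\lambda=\kappa\sqrt\lambda$ to cancel a factor of $\kappa$, and absorbing the additive constant via $\alpha_A\ge\norm{A}-\varepsilon_A$) is a sound and welcome elaboration of a simplification the paper leaves implicit.
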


\begin{proof}
    The identity matrix $I$ is a trivial $(1, 0, 0)$-block-encoding of itself, and $\kappa_I = 1$. We invoke \autoref{thm:quantum_least_squares_gen_tik_reg} with $L = I$ to obtain the solution.
\end{proof}

Being in the block-encoding framework allows us to express the complexity of our quantum algorithm in specific input models such as the \textit{quantum data structure input model} and the \textit{sparse access model}. We express these complexities via the following corollaries.

\begin{corollary}[Quantum Ordinary Least Squares with $\ell_2$-Regularization in the Quantum Data Structure Input Model]
    \label{cor:quantum_reg_in_qrom}
    Let $ A, L \in \mathbb{R}^{N \times d} $
    with effective condition numbers $\kappa_A, \kappa_L$ respectively.
    Let $\lambda \in \mathbb{R}^+$ and $b \in \rr^N$.
    Let $\kappa$ be the effective condition number of the augmented matrix $A_L$.
    Suppose that $A$, $L$ and $b$ are stored in a quantum accessible data structure. 
    Then for any $\delta > 0$ 
    there exists a quantum algorithm to prepare a quantum state $\delta$-close to
    \begin{equation*}
        \OLSstate{}
    \end{equation*}
    with probability $\Theta(1)$,
    at a cost of
    \begin{equation}
        \order{\kappa
            \paren{\frac{\mu_A + \sqrt\lambda \mu_L}{\norm{A} + \sqrt\lambda \norm{L}}}
            \polylog{Nd, \kappa, \frac{1}{\delta}, \lambda}
        }.
    \end{equation}
\end{corollary}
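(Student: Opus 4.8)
The plan is to specialize \autoref{thm:quantum_least_squares_gen_tik_reg} to the quantum data structure input model by instantiating the abstract block-encoding access assumed there with the concrete block-encodings that the KP-tree data structure provides. First I would invoke \autoref{lem:qrom_be_constr} (and its $\mu_p$ generalization discussed in \autoref{sssec:q_data_structure_model}) on the stored matrices $A$ and $L$: for any target precision $\eta$, this yields a $(\mu_A, \lceil\log(N+d)\rceil, \eta)$-block-encoding of $A$ and a $(\mu_L, \lceil\log(N+d)\rceil, \eta)$-block-encoding of $L$, each implementable at cost $T_A, T_L = \polylog{Nd/\eta}$. In the notation of the main theorem this sets $\alpha_A = \mu_A$ and $\alpha_L = \mu_L$. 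Since $b$ is likewise stored in the data structure, the preparation unitary $U_b$ runs in cost $T_b = \polylog{Nd/\eta}$.

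Next I would fix the precision. \autoref{thm:quantum_least_squares_gen_tik_reg} requires the input errors to satisfy $\varepsilon_A, \sqrt\lambda\varepsilon_L = \smalloh{\delta/(\kappa^3\log^2(\kappa/\delta))}$, together with a comparably small error in the preparation of $\ket{b}$, which is absorbed through the robustness statement \autoref{lem:apply_block_enc_approx}. Because the cost of the data-structure block-encodings and of $U_b$ depends only logarithmically on the inverse precision, I can choose $\eta$ small enough (inverse-polynomial in $\kappa$, $1/\delta$, and $\lambda$) to meet all of these conditions simultaneously, while inflating the runtime by at most a factor $\polylog{\kappa, 1/\delta, \lambda, Nd}$. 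The key observation is that driving the precision down to the required level is essentially free, up to logarithmic factors, in this input model.

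With these instantiations in hand, I would substitute $\alpha_A = \mu_A$, $\alpha_L = \mu_L$, and $T_A, T_L, T_b = \polylog{Nd, \kappa, 1/\delta, \lambda}$ into the general complexity $\OLScomplexity{\alpha_A}{\alpha_L}$ of \autoref{thm:quantum_least_squares_gen_tik_reg}. Collapsing every polylogarithmic factor into a single $\polylog{\cdot}$ term then yields
\[
\order{\kappa \paren{\frac{\mu_A + \sqrt\lambda \mu_L}{\norm{A} + \sqrt\lambda \norm{L}}} \polylog{Nd, \kappa, \frac1\delta, \lambda}},
\]
while the correctness guarantee, the output-state description, and the $\order{\log\kappa}$ additional-qubit bound all transfer verbatim from the main theorem, with $\kappa$ the effective condition number of $A_L$ bounded via \autoref{thm:spec-augmented_regression_matrix}.

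The main obstacle I anticipate is bookkeeping rather than conceptual: carefully propagating the $\eta$-approximation of the data-structure maps through the augmentation of \autoref{lem:augmented_regression_matrix}, the reinterpretation as a block-encoding of the normalized matrix $A_L/\norm{A_L}$, and the variable-time inversion of \autoref{thm:matrix_inversion_vtaa}, so as to certify that one choice of $\eta$ simultaneously satisfies every precision hypothesis without perturbing the final subnormalization or condition-number parameters. I would also double-check that padding $A$ and $L$ to equal-dimensional $s$-qubit operators, and the possible zero rows introduced between the $A$ and $\sqrt\lambda L$ blocks, leave $\mu_A, \mu_L$ and the relevant top-left block of $A_L^+$ unchanged.
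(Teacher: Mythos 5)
Your proposal matches the paper's proof: both instantiate \autoref{thm:quantum_least_squares_gen_tik_reg} with the data-structure block-encodings from \autoref{lem:qrom_be_constr} (setting $\alpha_A=\mu_A$, $\alpha_L=\mu_L$), choose the block-encoding and state-preparation precisions to satisfy \autoref{eqn:qols_error_condition} and \autoref{lem:apply_block_enc_approx} (the paper splits the budget as $\delta/2$ for the solver and $\varepsilon_b=\delta/2\kappa$ for $\ket{b}$), and absorb the resulting logarithmic overheads into the $\polylog{\cdot}$ factor of \autoref{eqn:ols_complexity_vtqa}. The bookkeeping concerns you flag about padding and the augmented matrix are already handled inside the proof of the main theorem, so nothing further is needed here.
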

\begin{proof}
Since $b$ is stored in the data structure,
    for some $\varepsilon_b > 0$,
    we can prepare the state $\ket{b'}$ that is $\varepsilon_b$-close
    to $\ket{b}=\sum_{i}b_i\ket{i}/\norm{b}$ 
    using $T_b=\order{\mathrm{polylog}(N/\varepsilon_b)}$ queries to the data structure (see \autoref{sssec:q_data_structure_model}.)
    \UPDATED
Similarly, for some parameters $\varepsilon_A, \varepsilon_L > 0$,
    we can construct a $\left(\mu_A,\lceil\log(d+N)\rceil,\varepsilon_A\right)$-block-encoding
    of $A$ using $T_A=\order{\mathrm{polylog}(Nd/\varepsilon_A)}$ queries to the data structure
    and a $\left(\mu_L,\lceil\log(d+N)\rceil,\varepsilon_B\right)$-block-encoding
    of $L$ using $T_L=\order{\mathrm{polylog}(Nd/\varepsilon_B)}$ queries. 

We invoke \autoref{thm:quantum_least_squares_gen_tik_reg} with a precision $\delta/2$ by choosing $\varepsilon_A$ and $\varepsilon_L$ such that equation \autoref{eqn:qols_error_condition} is satisfied. 
This gives us a state that is $\delta/2$-close to
\begin{equation*}
    \normalized{\paren{A^TA + \lambda L^TL}^{-1} A^T \ket{b'}}
\end{equation*}

To compute the final precision as $\delta$, we use \autoref{lem:apply_block_enc_approx} by choosing $\varepsilon_b = \frac{\delta}{2\kappa}$.
The complexity can be calculated by plugging in the relevant values in \autoref{eqn:ols_complexity_vtqa}
\end{proof}

In the previous corollary $\mu_A=\norm{A}_F$ and $\mu_L=\norm{L}_F$ when the matrix $A$ and $L$ are stored in the data structure. Similarly, $\mu_A=\mu_p(A)$ and $\mu_L=\mu_p(L)$ when the matrices $A^{(p)}, A^{(1-p)}$ and $L^{(p)}, L^{(1-p)}$ are stored in the data structure. 

Now we discuss the complexity of quantum ordinary least squares with $\ell_2$-regularization in the \textit{sparse access input model}. We call a matrix $M$ as $(s_r,s_c)$ row-column sparse if it has a row sparsity $s_r$ and column sparsity $s_c$. 

\begin{corollary}[Quantum Ordinary least squares with $\ell_2$-regularization in the sparse access model]
    \label{cor:quantum_reg_in_sparse}
    Let $ A \in \rr^{N \times d} $ be $ (s^A_r, s^A_c)$ row-column sparse,
    and similarly, let $ L \in \rr^{N \times d} $ be $ (s^L_r, s^L_c)$ row-column sparse,
    with effective condition numbers $\kappa_A$ and $\kappa_L$ respectively.
    Let $ \lambda \in \mathbb{R}^+ $ and $\delta > 0$.
    Suppose there exists a unitary that prepares $\ket{b}$ at a cost, $T_b$.
    Then there is a quantum algorithm to prepare 
    a quantum state that is $\delta$-close to
  
    \begin{equation*}
    \normalized{(A^TA + \lambda L^TL)^{-1} A^T \ket b}
    \end{equation*} 
    with probability $\Theta(1)$,
    at a cost of
    \begin{equation}
        \order{\kappa
            \paren{\frac{\sqrt{s^A_r s^A_c} + \sqrt{\lambda s^L_r s^L_c}}{\norm{A}+ \sqrt\lambda \norm{L}}}
            \polylog{Nd, \kappa, \frac{1}{\delta}, \lambda}
            + \kappa\log\kappa T_b
        }.
    \end{equation}
\end{corollary}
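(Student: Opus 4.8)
The plan is to specialize \autoref{thm:quantum_least_squares_gen_tik_reg} to the sparse access input model by feeding it the block-encodings of $A$ and $L$ supplied by \autoref{lem:sparse_access_be_constr}. First I would invoke \autoref{lem:sparse_access_be_constr} on $A$: for any precision $\varepsilon_A \in (0,1)$ it produces a $\paren{\sqrt{s^A_r s^A_c}, \polylog{Nd/\varepsilon_A}, \varepsilon_A}$-block-encoding of $A$ using $\order{1}$ queries to the oracles $O_A, O_r, O_c$ together with $\polylog{Nd/\varepsilon_A}$ elementary gates, so that $\alpha_A = \sqrt{s^A_r s^A_c}$ and $T_A = \polylog{Nd/\varepsilon_A}$. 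Applying the same lemma to $L$ yields a block-encoding with $\alpha_L = \sqrt{s^L_r s^L_c}$ and $T_L = \polylog{Nd/\varepsilon_L}$. The state $\ket b$ is prepared by the assumed unitary at cost $T_b$.

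Next I would invoke \autoref{thm:quantum_least_squares_gen_tik_reg} with these two block-encodings, choosing $\varepsilon_A$ and $\varepsilon_L$ small enough that the precision hypothesis \autoref{eqn:qols_error_condition} is met, i.e. $\varepsilon_A, \sqrt\lambda\,\varepsilon_L = \smalloh{\delta / \paren{\kappa^3 \log^2(\kappa/\delta)}}$. This produces a state that is $\delta$-close to the target $\OLSstate$ with probability $\Theta(1)$ using only $\order{\log\kappa}$ additional qubits, at the cost given by \autoref{eqn:ols_complexity_vtqa}, namely $\OLScomplexity{\alpha_A}{\alpha_L}$.

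Finally I would simplify this expression. Substituting $\alpha_A = \sqrt{s^A_r s^A_c}$ and $\alpha_L = \sqrt{s^L_r s^L_c}$ (so that $\sqrt\lambda\,\alpha_L = \sqrt{\lambda s^L_r s^L_c}$), and noting that the chosen precision satisfies $1/\varepsilon_A, 1/\varepsilon_L = \mathrm{poly}\paren{\kappa, 1/\delta}$ while $\kappa$ itself depends on $\lambda$ through \autoref{thm:spec-augmented_regression_matrix}, we get $T_A + T_L = \polylog{Nd, \kappa, 1/\delta, \lambda}$. The remaining $\log\kappa$ and $\log(\kappa/\delta)$ factors multiplying $(T_A+T_L)$ are absorbed into this polylogarithm, so the first term collapses to $\kappa \paren{\frac{\sqrt{s^A_r s^A_c} + \sqrt{\lambda s^L_r s^L_c}}{\norm{A} + \sqrt\lambda \norm{L}}} \polylog{Nd, \kappa, 1/\delta, \lambda}$, while the $\ket b$-dependent contribution $\kappa\log\kappa\,T_b$ is retained verbatim, giving the claimed bound.

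I expect no genuine obstacle here: all the analytical content already resides in \autoref{thm:quantum_least_squares_gen_tik_reg}, and this statement is a routine specialization. The only care required is bookkeeping — correctly reading off $\sqrt{s_r s_c}$ as the subnormalization $\alpha$ for each matrix, and verifying that the precision demanded by \autoref{eqn:qols_error_condition} costs only polylogarithmic overhead in query complexity, so that it folds into the $\polylog$ factor rather than surfacing as a separate polynomial dependence on $\kappa$ or $1/\delta$.
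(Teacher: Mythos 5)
Your proposal is correct and matches the paper's approach: the paper proves this corollary by pointing to the proof of \autoref{cor:quantum_reg_in_qrom} with $\alpha_A=\sqrt{s^A_r s^A_c}$ and $\alpha_L=\sqrt{s^L_r s^L_c}$, which is exactly the specialization of \autoref{thm:quantum_least_squares_gen_tik_reg} via \autoref{lem:sparse_access_be_constr} that you carry out. Your bookkeeping of the precision requirement folding into the $\polylog$ factor is also the intended argument.
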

\begin{proof}
    The proof is similar to \autoref{cor:quantum_reg_in_qrom} but with $\alpha_A=\sqrt{s^A_r s^A_c}$ and $\alpha_L=\sqrt{s^L_r s^L_c}$. 
\end{proof}

%-----------------------------------------------------------------------------
\subsection{Quantum Weighted And Generalized Least Squares}
\label{subsec:ols-gls-l2}
%-----------------------------------------------------------------------------

This technique of working with a augmented matrix will also hold for the other variants of ordinary least squares. In this section, we begin by briefly describing these variants before moving on to designing quantum algorithms for the corresponding problems. 
\\~\\
\textbf{Weighted Least Squares:~} For the WLS problem, each observation $\{a_i,b_i\}$ is assigned some weight $w_i\in\rr^{+}$ and the objective function to be minimized is of the form 
\begin{equation}
\mathcal{L}_W:= \sum_j w_j (x^T a_j - b_j)^2.
\end{equation}
If $W\in\rr^{N\times N}$ is the diagonal matrix with $w_i$ being the $i^{\mathrm{th}}$ diagonal entry, then the optimal $x$ satisfies
\begin{equation}
x=(A^TWA)^{-1}A^TWb.
\end{equation}
The $\ell_2$-regularized version of WLS satisfies
\begin{equation}
x=(A^TWA+\lambda L^TL)^{-1}A^TWb
\end{equation}
Our quantum algorithm outputs a state that is close to 
\begin{equation}
\ket{x}=\normalized{(A^TWA+\lambda L^TL)^{-1}A^TW\ket{b}}
\end{equation}
given approximate block-encodings of $A$, $W$ and $L$. 
Much like \autoref{eq:augmented-data-matrix}, finding the optimal solution reduces to finding the   
pseudo inverse of an augmented matrix $A_L$ given by
\[
A_L := \begin{pmatrix}
        \sqrt{W}A & 0 \\
        \sqrt{\lambda} L & 0
    \end{pmatrix}.
\]
The top left block of $A^{+}_L=(A^TWA+\lambda L^TL)^{-1}A^T\sqrt{W}$, which is the required linear transformation to be applied to the vector $y=\sqrt{W} b$.
The ratio between the minimum and maximum singular values of $A_L$, $\kappa$, can be obtained analogously to \autoref{thm:spec-augmented_regression_matrix}. 
For the $\ell_2$-regularized WLS problem, \textit{normalized residual sum of squares} is given by
\begin{equation}
\mathcal{S}_W = \dfrac{\norm{(I-\Pi_{\mathrm{col}(A_L)})\ket{y}\ket{0}}^2}{\norm{\ket{y}}^2}=1-\norm{\Pi_{\mathrm{col}(A_L)}\ket{y}\ket{0}}^2.
\end{equation} 
Subsequently, we assume that $\mathcal{S}_W=1-\norm{\Pi_{\mathrm{col}(A_L)}\ket{y}\ket{0}}^2\leq \gamma<1/2$. This in turn implies that $\norm{\Pi_{\mathrm{col}(A_L)}\ket{y}\ket{0}}^2=\Omega(1)$, implying that the data can be reasonably fit by a linear model.
\\~\\
\textbf{Generalized Least Squares.}
Similarly, we can extend this to GLS problem, where there the input data may be correlated.
These correlations are given by the non-singular covariance matrix $\Omega\in\rr^{N\times N}$.
The WLS problem is a special case of the GLS problem, corresponding to when $\Omega$ is a diagonal matrix.
The objective function to be minimized is
\begin{equation}
    \mathcal{L}_\Omega:= \sum_{i,j} (\Omega^{-1})_{ij} (x^T a_i - b_i)(x^Ta_j-b_j).
\end{equation}
The optimal $x\in\rr^d$ satisfies
\begin{equation}
x=(A^T\Omega^{-1}A)^{-1}A^T\Omega^{-1}b
\end{equation}
Similarly, the $\ell_2$-regularized GLS solver outputs $x$ such that
\begin{equation}
x=(A^T\Omega^{-1}A+\lambda L^TL)^{-1}A^T\Omega^{-1}b.
\end{equation}
So, given approximate block-encodings of $A$, $\Omega$ and $L$ a quantum GLS solver outputs a quantum state close to
\begin{equation}
\ket{x}=\normalized{(A^T\Omega^{-1}A+\lambda L^TL)^{-1}A^T\Omega^{-1}\ket{b}}
\end{equation} 
The augmented matrix $A_L$ is defined as
\[
A_L := \begin{pmatrix}
        \Omega^{-1/2} A & 0 \\
        \sqrt{\lambda} L & 0
    \end{pmatrix}.
\]
Then top left block of $A^{+}_{L}$ to the vector $y=\Omega^{-1/2}b$ yields the optimal $x$. Thus the quantum GLS problem with $\ell_2$-regularization first prepares $\Omega^{-1/2}\ket{b}\ket{0}$ and then uses the matrix inversion algorithm by QSVT to implement $A^{+}_L \Omega^{-1/2}\ket{b}\ket{0}$. Analogous to OLS and WLS, we assume that the normalized residual sum of squares $\mathcal{S}_{\Omega}\leq \gamma<1/2$.

\subsubsection{Quantum Weighted Least Squares}
%-----------------------------------------------------------------------------
In this section, we derive the complexity of the $\ell_2$-regularized WLS problem.
{\UPDATED
We assume that we have a diagonal weight matrix $W \in \rr^{N \times N}$ such that its smallest and largest diagonal entries are $\wmin$ and $\wmax$, respectively. 
  This implies that $\norm{W} = \wmax$ and $\kappa_W = \wmax/\wmin$. 
  We take advantage of the fact that the matrix $W$ is diagonal and then apply controlled rotations to directly implement a block encoding of $\sqrt{W}A$. Additionally, given a state preparation procedure for $\ket{b}$, we can easily prepare a state proportional to $\sqrt{W}\ket{b}$. 
  We then use \autoref{thm:quantum_least_squares_gen_tik_reg} to solve QWLS. 

We first formalize this idea in \autoref{thm:qwls-general}, assuming direct access to (i) a block encoding of $B=\sqrt{W}A$, and (ii) a procedure for preparing the state $\ket{b_w}=\frac{\sqrt{W}\ket{b}}{\norm{\sqrt{W}\ket{b}}}$. Subsequently, for the specific input models, we show that we can indeed efficiently obtain a block-encoding of $B$ and prepare the state $\ket{b_w}$.  
}

\begin{theorem}[Quantum Weighted Least Squares with General $\ell_2$-Regularization]
\label{thm:qwls-general}
    Let $A, L \in \rr^{N \times d}$, be the data and penalty matrix, with effective condition numbers $\kappa_A$ and $\kappa_L$, respectively. 
    Let $\lambda \in \rr^+$ be the regularizing parameter. 
    Let $W \in \rr^{N \times N}$ be a diagonal weight matrix with the largest and smallest diagonal entries being $\wmax, \wmin$, respectively.
    {\UPDATED
    Let $U_{B}$ be a $(\alpha_B, a_B, \varepsilon_B)$ block encoding of $B := \sqrt{W}A$ implemented in time $T_B$ and let $U_L$ be a $(\alpha_L, a_L, \varepsilon_L)$ block encoding of $L$ implemented in time $T_L$, such that $\varepsilon_B = \smalloh{\frac{\delta}{\kappa^3\log^2\paren{\frac{\kappa}{\delta}}}}$ and $\varepsilon_L = \smalloh{\frac{\delta}{\sqrt{\lambda}\kappa^3\log^2\paren{\frac{\kappa}{\delta}}}}$.
    Let $U_{b_w}$ be a unitary that prepares $\frac{\sqrt{W}\ket{b}}{\norm{\sqrt{W}\ket{b}}}$ in time $T_{b_w}$.
  Define 
  \begin{equation*}
    \kappa := \kappa_L \paren{1 + \frac{\sqrt{\wmax}\norm{A}}{\sqrt{\lambda}\norm{L}}}
  \end{equation*}
  }
  Then for any $\delta > 0$ we can prepare a quantum state that is $\delta$-close to
    \begin{equation*}
      \normalized{(A^TWA + \lambda L^TL)^{-1} A^TW \ket b}
    \end{equation*}
  with probability $\Theta(1)$,
  at a cost of  
    \begin{equation}
    \UPDATED
        \order{
            \kappa\log\kappa\paren{
                \frac{\alpha_B + \sqrt{\lambda}\alpha_L}{\sqrt{\wmax}\norm{A} + \sqrt{\lambda} \norm{L}} \logp{\frac{\kappa}{\delta}} (T_B + T_L) + T_{b_w}
            }
        },
    \end{equation}
  using only $\order{\log\kappa}$ additional qubits. 
\end{theorem}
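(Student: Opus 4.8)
The plan is to reduce the weighted problem directly to the already-established $\ell_2$-regularized OLS solver (\autoref{thm:quantum_least_squares_gen_tik_reg}) by absorbing the weight matrix into the data matrix. The key algebraic observation is that, writing $B := \sqrt{W}A$, we have $A^TWA = B^TB$ and $A^TWb = B^T(\sqrt{W}b)$, so that
\begin{equation*}
(A^TWA + \lambda L^TL)^{-1}A^TW b = (B^TB + \lambda L^TL)^{-1}B^T\paren{\sqrt{W}b}.
\end{equation*}
Thus the regularized WLS solution is exactly the regularized OLS solution for the data matrix $B$, penalty matrix $L$, and target vector $\sqrt{W}b$. Since $U_{b_w}$ prepares $\ket{b_w} = \normalized{\sqrt{W}\ket{b}}$ and the output state is renormalized, the scalar $\norm{\sqrt{W}\ket{b}}$ cancels and the target coincides with $\normalized{(A^TWA + \lambda L^TL)^{-1}A^TW\ket{b}}$.

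First I would bound the effective condition number of the augmented matrix $A_L = \begin{pmatrix} B & 0 \\ \sqrt{\lambda}L & 0\end{pmatrix}$ using \autoref{thm:spec-augmented_regression_matrix} with data matrix $B$, which yields $\kappa_L\paren{1 + \norm{B}/(\sqrt{\lambda}\norm{L})}$. Because $\norm{B} = \norm{\sqrt{W}A} \le \norm{\sqrt{W}}\norm{A} = \sqrt{\wmax}\norm{A}$, this is upper bounded by the quantity $\kappa = \kappa_L\paren{1 + \sqrt{\wmax}\norm{A}/(\sqrt{\lambda}\norm{L})}$ defined in the statement, so $\kappa$ is a legitimate upper bound on the effective condition number of $A_L$. (As with OLS, this implicitly uses that $L$ is a good regularizer to guarantee the minimum singular value of $A_L$ is controlled by $\kappa_L$.)

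Next I would invoke \autoref{thm:quantum_least_squares_gen_tik_reg} with the block-encoding $U_B$ of $B$ (parameters $\alpha_B, a_B, \varepsilon_B$, cost $T_B$) in place of $U_A$, the block-encoding $U_L$ of $L$, and the state-preparation unitary $U_{b_w}$ in place of $U_b$. The precision hypotheses of that theorem, namely $\varepsilon_B, \sqrt{\lambda}\varepsilon_L = \smalloh{\delta/(\kappa^3\log^2\paren{\kappa/\delta})}$, are precisely the assumptions imposed here (the bound on $\varepsilon_L$ rearranges to $\varepsilon_L = \smalloh{\delta/(\sqrt\lambda\kappa^3\log^2\paren{\kappa/\delta})}$). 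This produces a state $\delta$-close to the desired output with probability $\Theta(1)$ using $\order{\log\kappa}$ ancillas, at a cost
\begin{equation*}
\order{\kappa \log\kappa \paren{\frac{\alpha_B + \sqrt{\lambda}\alpha_L}{\norm{B} + \sqrt{\lambda}\norm{L}} \logp{\frac{\kappa}{\delta}} \paren{T_B + T_L} + T_{b_w}}},
\end{equation*}
which carries $\norm{B}$ rather than $\sqrt{\wmax}\norm{A}$ in the denominator.

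Finally I would recast this into the stated form. The only cosmetic step is replacing $\norm{B}$ by $\sqrt{\wmax}\norm{A}$ in the denominator, which is justified by the identity $\kappa \cdot \frac{\alpha_B + \sqrt{\lambda}\alpha_L}{\norm{B} + \sqrt{\lambda}\norm{L}} = \frac{\kappa_L\paren{\alpha_B + \sqrt{\lambda}\alpha_L}}{\sqrt{\lambda}\norm{L}}$, which holds independently of $\norm{B}$; hence substituting the upper bound $\sqrt{\wmax}\norm{A}$ simultaneously inside $\kappa$ and inside this denominator leaves the dominant factor unchanged and only weakens the surviving logarithmic terms (using $\kappa_{\mathrm{true}} \le \kappa$). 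I do not expect a genuine obstacle here, as the argument is a clean reduction; the only points demanding care are verifying that the normalization absorbed into $\ket{b_w}$ washes out correctly and that the $\norm{B}$-versus-$\sqrt{\wmax}\norm{A}$ substitution is applied consistently so that the final expression matches the claim exactly.
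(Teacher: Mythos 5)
Your proposal is correct and follows essentially the same route as the paper: the paper's proof is simply to invoke \autoref{thm:quantum_least_squares_gen_tik_reg} with $B=\sqrt{W}A$ as the data matrix, $L$ as the penalty matrix, and $U_{b_w}$ as the state-preparation unitary, matching the precision hypotheses $\varepsilon_B + \sqrt{\lambda}\varepsilon_L = \smalloh{\delta/(\kappa^3\log^2(\kappa/\delta))}$. Your additional care about $\norm{B}\le\sqrt{\wmax}\norm{A}$ and the observation that $\kappa\cdot\frac{\alpha_B+\sqrt\lambda\alpha_L}{\norm{B}+\sqrt\lambda\norm{L}} = \frac{\kappa_L(\alpha_B+\sqrt\lambda\alpha_L)}{\sqrt\lambda\norm{L}}$ is independent of $\norm{B}$ actually fills a detail the paper glosses over, but it does not change the argument.
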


\begin{proof}
\UPDATED
    We then invoke \autoref{thm:quantum_least_squares_gen_tik_reg} with $B$ and $L$ as the data and regularization matrices, respectively. This requires that $\varepsilon_{B}, \varepsilon_L$ such that \[\varepsilon_{B} + \sqrt{\lambda}\varepsilon_L = \smalloh{\frac{\delta}{\kappa^3\log^2\paren{\frac{\kappa}{\delta}}}}.\] Thus, we get the upper bounds on the precision $\varepsilon_B,~\varepsilon_L$ required. This gives us a quantum state $\delta$-close to \[\normalized{(A^TWA + \lambda L^TL)^{-1} A^TW \ket b}.\]
\end{proof}

{\UPDATED
Next, we construct the block encodings for $\sqrt{W}A$ and the state $\frac{\sqrt{W}\ket{b}}{\norm{\sqrt{W}\ket{b}}}$ efficiently in the quantum data structure input model. This construction would also apply to the sparse access input model with slight modifications. 
}

\begin{lemma}[Efficiently preparing $\sqrt{W} A$ in the Quantum Data Structure Model]
    \label{lem:wls_qram_sqrtWA}
    Let $W \in \rr^{N \times N}$
    such that $W = \mathrm{diag}(w_1, w_2 \ldots w_N)$
        and $w_{\max} := \max_i w_i$, and $A \in \rr^{N \times d}$
        be stored in a quantum-accessible data structure.
    Then for any $\delta > 0$ there exists a
    $$(\sqrt{w_{\max}}\norm{A}_{F}, \lceil\logp{N + d}\rceil, \delta)$$
    block-encoding of $\sqrt{W}A$ that can be implemented at the cost $\order{\mathrm{polylog}(Nd/\delta)}$.
\end{lemma}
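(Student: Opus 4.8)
The plan is to modify the KP-tree block-encoding construction of \autoref{lem:qrom_be_constr} so that it outputs $\sqrt{W}A$ instead of $A$, exploiting the fact that $\sqrt{W}$ merely rescales the rows of $A$. The starting point is the Frobenius-norm bound $\norm{\sqrt{W}A}_F^2 = \sum_i w_i \norm{A_{i,\cdot}}^2 \le w_{\max}\sum_i\norm{A_{i,\cdot}}^2 = w_{\max}\norm{A}_F^2$, which identifies $\sqrt{w_{\max}}\norm{A}_F$ as the natural subnormalization. The crucial observation is that the $i^{\mathrm{th}}$ row of $\sqrt{W}A$ is $\sqrt{w_i}A_{i,\cdot}$, whose normalized direction is identical to that of $A_{i,\cdot}$. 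Hence the right state-preparation unitary $U_R = U$ of \autoref{thm:qrom_data_structure}, which prepares the normalized row states $\ket{\psi_i}$, is already correct for $\sqrt{W}A$ and needs no modification; all the work goes into building a weighted left unitary.

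First I would construct the modified left state-preparation unitary $U'_L$. Applying the map $V$ of \autoref{thm:qrom_data_structure} produces $\frac{1}{\norm{A}_F}\sum_i \norm{A_{i,\cdot}}\ket{i,j}$. I then use the QRAM access to $W$ to coherently load $w_i$ into an ancilla register controlled on the row index, apply a single controlled rotation placing amplitude $\sqrt{w_i/w_{\max}}$ on a flag qubit, and uncompute the loaded value. The branch in which the flag is $\ket{0}$ is precisely $\ket{\phi'_j} = \frac{1}{\sqrt{w_{\max}}\norm{A}_F}\sum_i \sqrt{w_i}\norm{A_{i,\cdot}}\ket{i,j}$, and the orthogonal $\ket{1}$-branch is garbage. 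Taking $U'_L$ to be this procedure composed with the \texttt{SWAP} as in \autoref{lem:qrom_be_constr}, a short inner-product computation gives
\[
\braket{\psi_i|\phi'_j} = \frac{\sqrt{w_i}\,a_{ij}}{\sqrt{w_{\max}}\norm{A}_F} = \frac{(\sqrt{W}A)_{ij}}{\sqrt{w_{\max}}\norm{A}_F},
\]
so the top-left block of $U_R^\dagger U'_L$, restricted to the all-zero ancilla (including the flag), equals $\sqrt{W}A / (\sqrt{w_{\max}}\norm{A}_F)$, exactly the required subnormalization.

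For the error and cost analysis, I would use that $U$, $V$, and the QRAM load of the diagonal entries of $W$ are each implementable as $\varepsilon'$-approximate maps at cost $\order{\polylog{Nd/\varepsilon'}}$, and that the rotation angle $\arccos\!\sqrt{w_i/w_{\max}}$ can be computed to enough bits that its contribution is negligible. Since the full map is a fixed-length composition of these pieces, the errors add to $\order{\varepsilon'}$ plus the arithmetic error; choosing $\varepsilon' = \Theta(\delta)$ yields a $\delta$-block-encoding at cost $\order{\polylog{Nd/\delta}}$, which is the claimed statement.

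The part requiring the most care, and the main obstacle, is the ancilla bookkeeping combined with error propagation through the controlled rotation: one must verify that the flag qubit introduced by the weighting can be folded into the $\lceil\log(N+d)\rceil$-qubit ancilla budget claimed in the statement (there is room, since $2^{\lceil\log(N+d)\rceil}\ge N+d$ while the row and column indices only range over $[N]$ and $[d]$, and the $w_i$-loading register is uncomputed), and that the finite-precision computation of $\arccos\!\sqrt{w_i/w_{\max}}$ together with the approximate QRAM queries combine so that the garbage $\ket{1}$-branch stays genuinely orthogonal and the deviation of the encoded block from $\sqrt{W}A/(\sqrt{w_{\max}}\norm{A}_F)$ remains bounded by $\delta$. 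Everything else is a routine adaptation of \autoref{lem:qrom_be_constr}.
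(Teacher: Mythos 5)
Your proof is correct and follows essentially the same route as the paper's: both modify the Kerenidis--Prakash state-preparation pair by a QRAM-controlled rotation that loads the amplitude $\sqrt{w_i/w_{\max}}$ onto an ancilla, so that the inner product of the two prepared states becomes $(\sqrt{W}A)_{ij}/(\sqrt{w_{\max}}\norm{A}_F)$. The only (immaterial) difference is that the paper attaches the weight factor to the row states $\ket{\psi_j}$ while you attach it to the column states $\ket{\phi_k}$, and both you and the paper treat the extra flag qubit's accounting equally informally.
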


\begin{proof}
    $\forall j \in [N]$, define
    \begin{equation*}
        \ket{\psi_j} := \sqrt{\frac{w_j}{w_{\max}}} \ket{j} \frac{1}{\norm{A_{j, \cdot}}} \sum_{k \in [d]} A_{j, k} \ket{k}.
    \end{equation*}
    
    Similarly, $\forall k \in [d]$, define
    \begin{equation*}
        \ket{\phi_k} := \frac{1}{\norm{A}_F} \paren{\sum_{j \in [N]} \norm{A_{j,\cdot}} \ket{j}} \ket{k}.
    \end{equation*}
    
    Observe that $\forall j \in [N], k \in [d]$,
    \begin{equation*}
        \braket{\psi_j|\phi_k} = \sqrt{\frac{w_j}{w_{\max}}} \frac{A_{j, k}}{\norm{A}_F} = \frac{\bra{j} \sqrt{W}A\ket{k}}{\sqrt{w_{\max}} \norm{A}_F}.
    \end{equation*}
    Given quantum data structure accesses to $W$ and $A$, one can construct quantum circuits $W_R$ and $W_L$ similar to $U_L$ and $U_R$ from \autoref{lem:qrom_be_constr} that prepare $\ket{\phi_k}$ and $\ket{\psi_j}$ above. $\ket{\phi_k}$ can be prepared just as in \autoref{lem:qrom_be_constr}, while $\ket{\psi_j}$ can be prepared using controlled rotations on the state $\ket{\frac{w_j}{w_{\max}}}$ (which can be constructed from the QRAM access to $W$) after adding an ancilla qubit and the QRAM access to $A$. Thus, $W_R^{\dagger}W_L$ is the required block encoding, which according to \autoref{thm:qrom_data_structure} can be implemented using $\mathrm{polylog}(Nd/\delta)$ queries. 
\end{proof}

\begin{lemma}[Efficiently preparing $\sqrt{W}\ket{b}$ in the Quantum Data Structure Model]
    \label{lem:spp_sqrtWb}
  Let $b \in \rr^N$ and $W \in \rr^{N \times N}$. Suppose that $b$ and $W$ are stored in a quantum-accessible data structure such that we have a state preparation procedure that acts as 
  \begin{align*}
      &U_W : \ket{j} \ket{0} \mapsto \ket{j} \ket{w_j}, \\ 
      &U_b : \ket{0} \mapsto \sum_j \frac{b_j}{\norm{b}} \ket{j}.
  \end{align*}
  Then for any $\delta > 0$ we can prepare the quantum state that is $\delta$-close to $\normalized{\sqrt{W}\ket{b}}$
  with constant success probability and at a cost of
  $\order{\sqrt{\frac{w_{\max}}{w_{\min}}}\polylog{\frac{N}{\delta}}}$. 
  
\end{lemma}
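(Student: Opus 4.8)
The plan is to realize $\normalized{\sqrt{W}\ket{b}}$ as the $\ket{0}$-flagged branch of an easily-preparable state, and then apply amplitude amplification to boost it. First I would run $U_b$ to prepare $\sum_j \frac{b_j}{\norm{b}}\ket{j}$, append a fresh weight register, and invoke $U_W$ to obtain $\sum_j \frac{b_j}{\norm{b}}\ket{j}\ket{w_j}$. Introducing a single ancilla qubit and performing a rotation controlled on $\ket{w_j}$ by the angle $\theta_j = \arccos\sqrt{w_j/w_{\max}}$ (which is well-defined since $w_j \le w_{\max}$, so the argument lies in $[0,1]$) gives
\begin{equation*}
\sum_j \frac{b_j}{\norm{b}}\ket{j}\ket{w_j}\paren{\sqrt{\tfrac{w_j}{w_{\max}}}\ket{0} + \sqrt{1 - \tfrac{w_j}{w_{\max}}}\ket{1}}.
\end{equation*}
Uncomputing the weight register with $U_W^{\dagger}$ decouples it, and the component flagged by $\ket{0}$ on the ancilla is the (unnormalized) state $\frac{1}{\sqrt{w_{\max}}}\sum_j \sqrt{w_j}\frac{b_j}{\norm{b}}\ket{j} = \frac{1}{\sqrt{w_{\max}}}\sqrt{W}\ket{b}$, which is exactly proportional to the target.

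Next I would lower-bound the success probability of measuring $\ket{0}$ on the ancilla, which is $p = \norm{\sqrt{W}\ket{b}}^2/w_{\max}$. Since $\norm{\sqrt{W}\ket{b}}^2 = \sum_j w_j b_j^2/\norm{b}^2 \ge w_{\min}$, we obtain $p \ge w_{\min}/w_{\max}$. Hence $\order{1/\sqrt{p}} = \order{\sqrt{w_{\max}/w_{\min}}}$ rounds of amplitude amplification suffice to boost the success probability to a constant. Each round consists of one application each of $U_b$, $U_W$, the controlled rotation, and $U_W^{\dagger}$ (together with their inverses and a reflection about the initial state); by \autoref{thm:qrom_data_structure} each of these costs $\polylog{N/\delta}$, giving the claimed total cost $\order{\sqrt{w_{\max}/w_{\min}}\polylog{N/\delta}}$.

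The main obstacle is the error analysis, since several approximations compound and are then amplified. The data-structure maps $U_b, U_W$ are only approximate, the weights $w_j$ are stored to finite precision so the rotation angle $\theta_j$ is realized only approximately, and amplitude amplification multiplies any per-round error by $\order{1/\sqrt{p}}$. I would therefore fix an internal precision $\delta' = \Theta\paren{\delta\sqrt{w_{\min}/w_{\max}}}$ for the state-preparation maps and the rotation, so that after the $\order{\sqrt{w_{\max}/w_{\min}}}$ amplification rounds the accumulated deviation from $\normalized{\sqrt{W}\ket{b}}$ is at most $\delta$. Because the dependence of the query counts on precision is only polylogarithmic, this rescaling affects only the $\polylog{N/\delta}$ factor and leaves the stated complexity unchanged.
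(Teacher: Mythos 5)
Your proposal is correct and follows essentially the same route as the paper: prepare $\ket{b}$ with $U_b$, compute the weights, apply a controlled rotation to place $\sqrt{w_j/w_{\max}}$ on a $\ket{0}$-flagged ancilla, uncompute the weight register, and amplify the $\ket{0}$ branch with $\order{\sqrt{w_{\max}/w_{\min}}}$ rounds of amplitude amplification. Your explicit derivation of the success probability bound $p \ge w_{\min}/w_{\max}$ and the accounting for error accumulation under amplification are somewhat more careful than the paper's treatment, but they do not change the argument or the stated complexity.
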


\begin{proof}
    Use $U_b$ to prepare the state 
    \begin{equation*}
        \ket{b} = \frac{1}{\norm{b}}\sum_j b_j \ket{j}
    \end{equation*}
    in time polylog$(N)$. Then, apply the following transformation 
    \begin{align*}
        \ket{j}\ket{0}\ket{0}
        &\mapsto \ket{j}\ket{w_j}\ket{0} \\
        &\mapsto \ket{j} \ket{w_j} \paren{\sqrt{\frac{w_j}{w_{\max}}} \ket{0} + \sqrt{1 - \frac{w_j}{w_{\max}}} \ket{1}} \\
        &\mapsto \ket{j} \ket{0} \paren{\sqrt{\frac{w_j}{w_{\max}}} \ket{0} + \sqrt{1 - \frac{w_j}{w_{\max}}} \ket{1}}
    \end{align*}
    which can again be applied using some controlled rotations, a square root circuit and $U_W$. This gives us the state (ignoring some blank registers)
    \begin{equation}
        \sum_j \paren{\sqrt{\frac{w_j}{w_{\max}}} \ket{0} + \sqrt{1 - \frac{w_j}{w_{\max}}} \ket{1}} \frac{b_j}{\norm{b}} \ket{j}.
    \end{equation}
    The probability for the ancilla to be in $\ket{0}$ state is $$\Omega\paren{\frac{w_{\min}}{w_{\max}}}.$$
    Thus performing $\order{\sqrt{\frac{w_{\max}}{w_{\min}}}}$ rounds of amplitude amplification on $\ket{0}$ gives us a constant probability of observing $\ket{0}$, and therefore obtaining the desired state $\normalized{\sqrt{W}\ket{b}}$. 
\end{proof}

Using the above two theorems, and the quantum OLS solver (\autoref{thm:quantum_least_squares_gen_tik_reg}), we can construct an algorithm for regularized quantum WLS. 

\newcommand{\WLSComplexity}[2]{% @params alpha_A, alpha_L
    \order{
        \kappa \paren{
        \frac
            {\sqrt\wmax #1 + \sqrt\lambda #2}
            {\sqrt\wmax \norm{A}   + \sqrt\lambda \norm{L}} 
        + \sqrt{\frac{\wmax}{\wmin}} }
        \polylog{Nd, \kappa, \frac1\delta}
    }
}

\begin{theorem}[Quantum Weighted Least Squares with General $\ell_2$-Regularization in the Quantum Data Structure Model]
      \label{thm:wls_l2_qram}
  Let $A, L \in \rr^{N \times d}$ with effective condition numbers $\kappa_A, \kappa_L$ respectively be stored in an efficient quantum accessible data structure.
  Let $W \in \rr^{N \times N}$ be a diagonal matrix with largest and smallest singular values $\wmax, \wmin$ respectively, which is also stored in an efficient quantum accessible data structure.
  Furthermore, suppose the entries of the vector $b \in \rr^N$ are also stored in a quantum-accessible data structure and define,
  \begin{equation*}
  \UPDATED
    \kappa := \kappa_L \paren{1 + \frac{\sqrt{\wmax}\norm{A}}{\sqrt{\lambda}\norm{L}}}
  \end{equation*}
  Then for any $\delta > 0$ we can prepare a quantum state that is $\delta$-close to
    \begin{equation*}
      \normalized{(A^TWA + \lambda L^TL)^{-1} A^TW \ket b}
    \end{equation*}
  with probability $\Theta(1)$,
  at a cost of  
  \begin{equation}
  \UPDATED
        \WLSComplexity{\norm{A}_F}{\norm{L}_F}
  \end{equation}
\end{theorem}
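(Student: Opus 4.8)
The plan is to reduce this theorem to the abstract weighted least squares solver \autoref{thm:qwls-general} by supplying it with the two ingredients it requires—a block-encoding of $B := \sqrt{W}A$ and a state-preparation unitary for $\normalized{\sqrt{W}\ket{b}}$—both of which are manufactured from the data structure by the immediately preceding lemmas. Concretely, I would first invoke \autoref{lem:wls_qram_sqrtWA} to obtain, for a suitable precision $\varepsilon_B$, a $(\sqrt{\wmax}\norm{A}_F, \lceil\log(N+d)\rceil, \varepsilon_B)$-block-encoding $U_B$ of $\sqrt{W}A$ at cost $T_B = \order{\polylog(Nd/\varepsilon_B)}$. Since $L$ is stored in the data structure, \autoref{lem:qrom_be_constr} simultaneously yields a $(\norm{L}_F, \lceil\log(N+d)\rceil, \varepsilon_L)$-block-encoding $U_L$ of $L$ at cost $T_L = \order{\polylog(Nd/\varepsilon_L)}$. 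Finally, \autoref{lem:spp_sqrtWb} provides $U_{b_w}$, preparing a state $\delta'$-close to $\normalized{\sqrt{W}\ket{b}}$ at cost $T_{b_w} = \order{\sqrt{\wmax/\wmin}\,\polylog(N/\delta')}$. These match exactly the inputs $\alpha_B = \sqrt{\wmax}\norm{A}_F$, $\alpha_L = \norm{L}_F$, and $U_{b_w}$ demanded by \autoref{thm:qwls-general}.

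Next I would fix the precisions. To satisfy the hypotheses of \autoref{thm:qwls-general} I set $\varepsilon_B = \smalloh{\delta/(\kappa^3\log^2(\kappa/\delta))}$ and $\varepsilon_L = \smalloh{\delta/(\sqrt\lambda\kappa^3\log^2(\kappa/\delta))}$; to absorb the error incurred by the \emph{approximate} preparation of $\normalized{\sqrt{W}\ket{b}}$ I invoke the robustness of state preparation (\autoref{lem:apply_block_enc_approx}) with $\delta' = \delta/(2\kappa)$ while running the solver itself to precision $\delta/2$—precisely the bookkeeping used in \autoref{cor:quantum_reg_in_qrom}. Substituting $\alpha_B, \alpha_L, T_B, T_L, T_{b_w}$ into the cost of \autoref{thm:qwls-general} then gives
\[
\order{\kappa\log\kappa\paren{\frac{\sqrt{\wmax}\norm{A}_F + \sqrt\lambda\norm{L}_F}{\sqrt{\wmax}\norm{A} + \sqrt\lambda\norm{L}}\logp{\frac\kappa\delta}(T_B + T_L) + T_{b_w}}}.
\]

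The final step is simplification. Because each of $T_B, T_L$ is $\polylog(Nd/\varepsilon)$ with the chosen $\varepsilon$ inverse-polynomial in $\kappa$ and $1/\delta$, these factors—together with the explicit $\log\kappa$ and $\log(\kappa/\delta)$—collapse into a single $\polylog(Nd, \kappa, 1/\delta)$, while the $\sqrt{\wmax/\wmin}$ coming from $T_{b_w}$ survives as the additive term. This yields exactly $\WLSComplexity{\norm{A}_F}{\norm{L}_F}$, and the $\order{\log\kappa}$ ancilla count is inherited directly from \autoref{thm:qwls-general}. I expect the only genuine subtlety to be this precision-and-logarithm accounting: one must verify that driving $\varepsilon_B, \varepsilon_L$ down to order $\delta/(\kappa^3\log^2(\kappa/\delta))$ inflates only the polylogarithmic factors and never the leading $\kappa$ dependence, and that the approximate-state-preparation error composes through \autoref{lem:apply_block_enc_approx} to the claimed $\delta$-closeness. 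Everything else is a direct substitution into the already-established abstract solver.
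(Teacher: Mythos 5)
Your proposal is correct and follows essentially the same route as the paper: construct the block-encoding of $\sqrt{W}A$ via \autoref{lem:wls_qram_sqrtWA}, the block-encoding of $L$ via \autoref{lem:qrom_be_constr}, and the state $\normalized{\sqrt{W}\ket{b}}$ via \autoref{lem:spp_sqrtWb}, then feed these into the regularized least-squares solver with precisions $\smalloh{\delta/(\kappa^3\log^2(\kappa/\delta))}$ and handle the approximate state preparation via \autoref{lem:apply_block_enc_approx} with $\varepsilon_b = \delta/2\kappa$ and solver precision $\delta/2$. The only cosmetic difference is that you route through \autoref{thm:qwls-general} while the paper calls \autoref{thm:quantum_least_squares_gen_tik_reg} directly, but the former is itself just a one-line invocation of the latter, so the arguments coincide.
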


\begin{proof}
\UPDATED
    Choose some precision parameter $\varepsilon > 0$ for accessing the data structure.
    Given access to $W$ and $A$,
        we can use \autoref{lem:wls_qram_sqrtWA}
        to prepare a $(\sqrt{\wmax}\norm{A}_F, \lceil\logp{N + d}\rceil, \varepsilon)$-block-encoding
        of $\sqrt{W}A$,
        using $T_A := \order{\polylog{Nd/\varepsilon}}$ queries to the data structure.
    Similarly, 
        \autoref{lem:qrom_be_constr} allows us to build a
        $(\norm{L}_F, \lceil\logp{N+d}\rceil, \varepsilon)$-block-encoding
        of $L$
        using $T_L := \order{\mathrm{polylog}(Nd/\varepsilon)}$ queries to the data structure. 
    
    Next, using \autoref{lem:spp_sqrtWb},
        for any $\varepsilon_b > 0$, 
        we can prepare a state $\varepsilon_b$-close to
        $\ket{b'} := \normalized{\sqrt{W}\ket{b}}$.
        This procedure requires
        $T_b := \order{\sqrt{\frac{w_{\max}}{w_{\min}}}\polylog{N/\varepsilon_b}}$ queries to the data structure.
Now we can invoke the OLS solver in
    \autoref{thm:quantum_least_squares_gen_tik_reg}
    with a precision of $\delta_b$,
    by considering $\sqrt{W}A$ as the data matrix
    and $\normalized{\sqrt{W}\ket{b}}$ as the input state.
In order for the input block-encoding precision to satisfy the bound in \autoref{eqn:qols_error_condition}, we choose $\varepsilon$ such that
    \begin{equation*}
        \varepsilon =
        o\paren{\frac{\delta_b}{\kappa^3 \log^2 \paren{\frac{\kappa}{\delta_b}}}}.
    \end{equation*}
    Finally, for the output state to be $\delta$-close to the required state,
    we choose $\delta_b = \delta/2$ and $\varepsilon_b = \delta/2\kappa$
    to use the robustness result from \autoref{lem:apply_block_enc_approx}.
    This gives us
    \begin{align*}
        \logp{\frac{1}{\varepsilon}}
        &= \order{\logp{
            \frac{\kappa^3\log^2\paren{\frac{\kappa}{\delta_b}}}{\delta_b}
        }} \\
        &= \order{
        \logp{\frac{\kappa}{\delta}}
        }
    \end{align*}
    Now we can substitute the cost of the individual components in \autoref{eqn:ols_complexity_vtqa} to obtain the final cost as
    \begin{align*}
        &\order{
            \kappa\log\kappa \paren{
            \frac{\sqrt\wmax \norm{A}_F + \sqrt\lambda \norm{L}_F}{\sqrt\wmax \norm{A} + \sqrt\lambda \norm{L}} \logp{\frac{\kappa}{\delta}} \mathrm{polylog}\paren{\frac{Nd}{\varepsilon}}
            + \sqrt{\frac{\wmax}{\wmin}} \mathrm{polylog}\paren{\frac{N\kappa}{\delta}}}
        } \\ 
        &= \order{
            \kappa \paren{
            \frac
                {\sqrt\wmax \norm{A}_F + \sqrt\lambda \norm{L}_F}
                {\sqrt\wmax \norm{A}   + \sqrt\lambda \norm{L}} 
            + \sqrt{\frac{\wmax}{\wmin}} }
            \polylog{\frac{Nd\kappa}{\delta}}
        }
        \end{align*}
\end{proof}

Now, for the sparse access model, we can obtain a block encoding similar to \autoref{lem:wls_qram_sqrtWA} and a quantum state similar to \autoref{lem:spp_sqrtWb}, with the same query complexities. 
Thus we have an algorithm similar to \autoref{thm:wls_l2_qram} in the sparse access model as well. 
We directly state the complexity of this algorithm.

\begin{theorem}[Quantum Weighted Least Squares with General $\ell_2$-Regularization in the Sparse Access Model]
      \label{thm:wls_l2_sparse}
  Let $ A \in \rr^{N \times d} $ be $ (s^A_r, s^A_c)$ row-column sparse,
    and similarly, let $ L \in \rr^{N \times d} $ be $ (s^L_r, s^L_c)$ row-column sparse,
    with effective condition numbers $\kappa_A$ and $\kappa_L$ respectively.
    Let $ \lambda \in \mathbb{R}^+ $. Let $W \in \rr^{N \times N}$ be a diagonal matrix with the largest and the smallest diagonal entries being $\wmax, \wmin$, respectively. Suppose that the diagonal entries of $W$ are stored in a QROM such that, for any $\delta>0$, we can compute $\ket{j}{0}\mapsto \ket{j}\ket{w_j}$ in cost $O\left(\polylog{Nd/\delta}\right)$ as well as $\wmax$.
  Furthermore, suppose there exists a unitary that prepares $\ket{b}$ at a cost $T_b$ and define,
  \begin{equation*}
  \UPDATED
    \kappa := \kappa_L \paren{1 + \frac{\sqrt{\wmax}\norm{A}}{\sqrt{\lambda}\norm{L}}}
  \end{equation*}
  Then for any $\delta > 0$ we can prepare a quantum state that is $\delta$-close to
    \begin{equation*}
      \normalized{(A^TWA + \lambda L^TL)^{-1} A^TW \ket b}
    \end{equation*}
    with probability $\Theta(1)$,
  at a cost of  
  \begin{equation}
  \UPDATED
      \order{
        \kappa \paren{
        \frac
            {\sqrt\wmax \sqrt{s_r^A s_c^A} + \sqrt\lambda \sqrt{s_r^L s_c^L}}
            {\sqrt\wmax \norm{A}   + \sqrt\lambda \norm{L}} 
        + \sqrt{\frac{\wmax}{\wmin}}~T_b }
        \polylog{Nd, \kappa, \frac1\delta}
    }
      %\WLSComplexity{\sqrt{s_r^A s_c^A}}{\sqrt{s_r^L s_c^L}}
  \end{equation}
\end{theorem}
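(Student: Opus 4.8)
The plan is to mirror the proof of \autoref{thm:wls_l2_qram}, replacing the two quantum-data-structure constructions there---the block-encoding of $\sqrt{W}A$ and the preparation of $\normalized{\sqrt{W}\ket b}$---by their sparse-access analogues, and then feeding these ingredients into the general weighted least squares solver of \autoref{thm:qwls-general}. The only conceptually new work is the sparse-access construction of the \emph{weighted} block-encoding; everything else is bookkeeping of subnormalizations and error parameters.

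First I would build a block-encoding of $B := \sqrt{W}A$ directly in the sparse access model. The key observation is that, since $W$ is diagonal with nonzero entries, $\sqrt{W}A$ has exactly the same row- and column-sparsity pattern as $A$, namely $(s^A_r, s^A_c)$, and its entries are $(\sqrt{W}A)_{ij} = \sqrt{w_i}\,a_{ij}$. Using the QROM for $W$ to compute $\sqrt{w_i}$ together with the sparse-access oracles $O_A, O_r, O_c$ for $A$, I can synthesize an entry oracle for the rescaled matrix $\sqrt{W}A/\sqrt\wmax$, whose entries are bounded by $1$. Applying \autoref{lem:sparse_access_be_constr} to this rescaled matrix then yields a $(\sqrt\wmax\sqrt{s^A_r s^A_c},\ \polylog{Nd/\varepsilon},\ \varepsilon)$-block-encoding of $\sqrt{W}A$ at cost $T_B = \order{\polylog{Nd/\varepsilon}}$; this is the sparse-access counterpart of \autoref{lem:wls_qram_sqrtWA}, with the Frobenius-norm subnormalization $\sqrt\wmax\norm{A}_F$ replaced by the sparsity subnormalization $\sqrt\wmax\sqrt{s^A_r s^A_c}$. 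Applying \autoref{lem:sparse_access_be_constr} to $L$ likewise gives a $(\sqrt{s^L_r s^L_c},\ \polylog{Nd/\varepsilon},\ \varepsilon)$-block-encoding of $L$ at cost $T_L = \order{\polylog{Nd/\varepsilon}}$.

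Next I would prepare $\normalized{\sqrt{W}\ket b}$ exactly as in \autoref{lem:spp_sqrtWb}: prepare $\ket b$ with the given unitary at cost $T_b$, append an ancilla, use the QROM for $W$ and a square-root circuit to apply the controlled rotation $\ket{j}\mapsto \sqrt{w_j/\wmax}\,\ket0 + \sqrt{1 - w_j/\wmax}\,\ket1$, and amplitude-amplify on the $\ket0$ flag. As the success probability is $\Omega(\wmin/\wmax)$, this needs $\order{\sqrt{\wmax/\wmin}}$ rounds, giving $T_{b_w} = \order{\sqrt{\wmax/\wmin}\,T_b\,\polylog{N/\varepsilon_b}}$. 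With $\alpha_B = \sqrt\wmax\sqrt{s^A_r s^A_c}$, $\alpha_L = \sqrt{s^L_r s^L_c}$, and this weighted input state in hand, I would invoke \autoref{thm:qwls-general}, choosing the access precisions $\varepsilon,\varepsilon_b$ polynomially small in $\kappa$ and $1/\delta$ (as dictated by \autoref{eqn:qols_error_condition} and the robustness bound of \autoref{lem:apply_block_enc_approx}, needed because the state preparation is only approximate) so that the output is $\delta$-close to the target. These choices contribute only $\logp{\kappa/\delta}$ overhead, which is absorbed into the $\polylog{Nd,\kappa,1/\delta}$ factor; substituting $\alpha_B,\alpha_L$ into the cost of \autoref{thm:qwls-general} and collecting logarithmic factors then gives the claimed complexity. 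The main obstacle is the second paragraph: one must verify that folding $\sqrt{w_i}$ into the entry oracle leaves both the $(s^A_r, s^A_c)$ sparsity pattern and the $\order{\polylog{Nd/\varepsilon}}$ query complexity of \autoref{lem:sparse_access_be_constr} intact, and that the rescaling by $\sqrt\wmax$ produces precisely the subnormalization $\sqrt\wmax\sqrt{s^A_r s^A_c}$ that appears in the final bound.
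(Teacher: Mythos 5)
Your proposal is correct and follows essentially the same route the paper takes: the paper itself only remarks that one obtains sparse-access analogues of \autoref{lem:wls_qram_sqrtWA} and \autoref{lem:spp_sqrtWb} with the same query complexities and then proceeds exactly as in \autoref{thm:wls_l2_qram}, which is precisely your plan. Your second paragraph merely fills in the detail the paper leaves implicit --- that folding $\sqrt{w_i/\wmax}$ into the entry oracle preserves the $(s^A_r, s^A_c)$ sparsity pattern and the $\polylog{Nd/\varepsilon}$ cost of \autoref{lem:sparse_access_be_constr} --- and that verification is sound.
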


\subsubsection{Quantum Generalized Least Squares}
\label{subsec:gls-l2}
%-----------------------------------------------------------------------------

In this section,
  we assume that we have block-encoded access to the correlation matrix $\Omega \in \rr^{N \times N}$, with condition number $\kappa_\Omega$. We begin by preparing a block encoding of $\Omega^{-1/2}$, given an approximate block-encoding of $\Omega$.

\begin{lemma}[Preparing $\Omega^{-1/2}$]
  \label{lem:gls_inv_sqrt_omega}
  Let $\Omega \in \rr^{N \times N}$ be a matrix 
    with condition number $\kappa_\Omega$
  Let $U_\Omega$ be an $(\alpha_\Omega, a_\Omega, \varepsilon_\Omega)$-block-encoding
   of $\Omega$, implemented in time $T_\Omega$.
  For any $\delta$ such that
  \begin{equation*}
      \UPDATED
      \eps_\Om = \smalloh{
      \frac{\sqrt{\norm{\Omega}}\delta}{\kappa^{1.5}\logp{\frac{\kappa}{\sqrt{\norm{\Omega}}\delta}}}
      },
  \end{equation*}
  we can prepare a 
    $(2\sqrt{\kappa_\Omega/\norm{\Omega}}, a_\Omega + 1, \delta)$-block-encoding
    of $\Omega^{-1/2}$
  at a cost of 
  \begin{equation*}
  \order{\frac{\alpha_\Omega\kappa_\Omega}{\norm{\Omega}}\log\paren{\frac{\kappa_\Omega}{\delta\sqrt{\norm{\Omega}}}} T_\Omega}
  \end{equation*}
  Moreover, the condition number of $\Omega^{-1/2}$ is bounded by $\sqrt{\kappa_\Omega}$.
\end{lemma}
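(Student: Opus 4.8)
The plan is to reduce this to the negative fractional power primitive of \autoref{thm:constr_neg_powers_normalized} with $c = \tfrac12$, applied to a \emph{normalized} version of $\Omega$. The first step is a reinterpretation of the input: since $\norm{\Omega - \alpha_\Omega(\bra{0}^{\otimes a_\Omega}\otimes I)U_\Omega(\ket{0}^{\otimes a_\Omega}\otimes I)} \le \varepsilon_\Omega$, dividing through by $\norm{\Omega}$ shows that the same unitary $U_\Omega$ is a $\paren{\alpha_\Omega/\norm{\Omega},\, a_\Omega,\, \varepsilon_\Omega/\norm{\Omega}}$-block-encoding of the normalized matrix $\hat\Omega := \Omega/\norm{\Omega}$, whose non-zero singular values lie in $[1/\kappa_\Omega, 1]$. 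The subnormalization hypothesis $\alpha \ge 2$ of \autoref{thm:constr_neg_powers_normalized} then corresponds to $\alpha_\Omega \ge 2\norm{\Omega}$, a mild condition (met for instance whenever $\alpha_\Omega = \norm{\Omega}_F$ is not essentially rank-one, and otherwise achievable by amplifying the block-encoding first).

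Next I would invoke \autoref{thm:constr_neg_powers_normalized} with $A = \hat\Omega$, $\kappa = \kappa_\Omega$, $c = \tfrac12$, and a target error $\delta'$ to be fixed below. This produces a $\paren{2\kappa_\Omega^{1/2},\, a_\Omega + 1,\, \delta'}$-block-encoding of $\hat\Omega^{-1/2} = \norm{\Omega}^{1/2}\,\Omega^{-1/2}$ at a cost of $\order{\tfrac{\alpha_\Omega}{\norm{\Omega}}\,\kappa_\Omega\,\logp{\kappa_\Omega/\delta'}\,T_\Omega}$. The key bookkeeping step is a second rescaling of the \emph{output}: a $\paren{2\kappa_\Omega^{1/2}, a_\Omega+1, \delta'}$-block-encoding of $\norm{\Omega}^{1/2}\Omega^{-1/2}$ is, upon dividing by $\norm{\Omega}^{1/2}$, a $\paren{2\sqrt{\kappa_\Omega/\norm{\Omega}},\, a_\Omega+1,\, \delta'/\sqrt{\norm{\Omega}}}$-block-encoding of $\Omega^{-1/2}$. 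Setting $\delta'/\sqrt{\norm{\Omega}} = \delta$, i.e.\ $\delta' = \sqrt{\norm{\Omega}}\,\delta$, reproduces exactly the claimed subnormalization $2\sqrt{\kappa_\Omega/\norm{\Omega}}$, the final error $\delta$, and (after substituting $\delta'$) the stated cost $\order{\tfrac{\alpha_\Omega\kappa_\Omega}{\norm{\Omega}}\logp{\kappa_\Omega/(\delta\sqrt{\norm{\Omega}})}T_\Omega}$.

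The precision requirement follows by substituting $\varepsilon = \varepsilon_\Omega/\norm{\Omega}$ and $\delta' = \sqrt{\norm{\Omega}}\,\delta$ into the input-precision hypothesis $\varepsilon = \smalloh{\delta'/\paren{\kappa^{c+1}\logp{\kappa/\delta'}}}$ of \autoref{thm:constr_neg_powers_normalized} at $c = \tfrac12$ (so $\kappa^{c+1} = \kappa_\Omega^{3/2}$) and solving for $\varepsilon_\Omega$; this yields the displayed bound, with care taken to correctly carry the factors of $\norm{\Omega}$ coming from the two separate rescalings. The condition-number claim is then a direct singular-value computation: the non-zero singular values of $\Omega$ lie in $\left[\norm{\Omega}/\kappa_\Omega,\, \norm{\Omega}\right]$, so those of $\Omega^{-1/2}$ lie in $\left[\norm{\Omega}^{-1/2},\, \sqrt{\kappa_\Omega/\norm{\Omega}}\right]$, whose ratio is $\sqrt{\kappa_\Omega}$ (here $\Omega$ is non-singular, so effective and true condition numbers coincide).

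The main obstacle is arithmetic rather than conceptual: the whole statement is essentially one application of the negative-power theorem sandwiched between two normalizations, so the delicate part is propagating the two independent $\norm{\Omega}$-rescalings---once to bring the input to spectral norm $1$, and once to convert the output of $\hat\Omega^{-1/2}$ back to $\Omega^{-1/2}$---consistently through the subnormalization factor, the output error, and especially the input-precision condition, so that every exponent of $\norm{\Omega}$, $\kappa_\Omega$, and $\delta$ matches the stated expressions.
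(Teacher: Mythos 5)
Your proposal is correct and follows essentially the same route as the paper's own proof: reinterpret $U_\Omega$ as a $\paren{\alpha_\Omega/\norm{\Omega}, a_\Omega, \varepsilon_\Omega/\norm{\Omega}}$-block-encoding of $\Omega/\norm{\Omega}$, apply \autoref{thm:constr_neg_powers_normalized} with $c=1/2$ and internal precision $\gamma$, rescale the resulting block-encoding of $\sqrt{\norm{\Omega}}\,\Omega^{-1/2}$ by $\sqrt{\norm{\Omega}}$, and fix $\gamma = \sqrt{\norm{\Omega}}\,\delta$. The only caveat is your closing claim that the precision bookkeeping ``yields the displayed bound'': tracking both rescalings literally gives $\varepsilon_\Omega = \smalloh{\norm{\Omega}^{1.5}\delta / (\kappa_\Omega^{1.5}\log(\cdot))}$ rather than the $\sqrt{\norm{\Omega}}$ prefactor in the statement, a discrepancy the paper's own (terser) proof also does not resolve.
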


\begin{proof}
  $U_{\Omega}$ can be re-interpreted as a $(\frac{\alpha}{\norm{\Omega}},~a,~\frac{\varepsilon}{\norm{\Omega}})$-block-encoding of $\frac{\Omega}{\norm{\Omega}}$. We can then prepare the required unitary by invoking \autoref{thm:constr_neg_powers_normalized} on $U_\Omega$ with $c = 1/2$ and some $\gamma$ such that we get a $(2\sqrt{\kappa_{\Omega}},~a+1,~\gamma)$ block encoding of $\sqrt{\norm{\Omega}} \Omega^{-1/2}$, which is a $(2\sqrt{\frac{\kappa}{\norm{\Omega}}},~a+1,~\frac{\gamma}{\sqrt{\norm{\Omega}}})$ block encoding of $\Omega^{-1/2}$. Fixing $\gamma = \sqrt{\norm{\Omega}}\delta$ gives us the required result.
\end{proof}

We will now use this lemma in conjunction with \autoref{thm:quantum_least_squares_gen_tik_reg} to develop quantum algorithms for GLS with general $\ell_2$-regularization.

\begin{theorem}[Quantum Generalized Least Squares with General $\ell_2$-regularization]
    \label{thm:gls_l2}
  Let $A, L \in \rr^{N \times d}$ be the data and penalty matrices
      with effective condition numbers $\kappa_A, \kappa_L$ respectively.
  Let $\Omega \in \rr^{N \times N}$ be the covariance matrix
      with condition number $\kappa_\Omega$.
  Let $\delta > 0$ be the precision parameter.
  Define $\kappa$ as \[ \UPDATED \kappa := \GLSKappa. \]
  For some $\varepsilon_A$ such that
      \begin{equation*}
          \UPDATED
          \varepsilon_A = \smalloh{
          \frac
              {\delta \sqrt{\NO}}
              {\kappa^3 \sqrt{\KO} \log^2 \frac\kappa\delta}
          }
      \end{equation*}
      we have access to $U_A$,
      an $(\alpha_A, a_A, \varepsilon_A)$-block-encoding of $A$
      implemented in time $T_A$.
  For some $\varepsilon_L$ such that
      \begin{equation*}
          \UPDATED
          \varepsilon_L = \smalloh{
          \frac
              {\delta}
              {\sqrt\lambda \kappa^3 \log^2 \frac\kappa\delta}
          }
      \end{equation*}
      we have access to $U_L$,
      an $(\alpha_L, a_L, \varepsilon_L)$-block-encoding of $L$
      implemented in time $T_L$.
  For some $\varepsilon_\Omega$ such that
      \begin{equation*}
          \UPDATED
          \eps_\Om = \smalloh{
            \frac
                {\delta}
                {\norm{A} \kappa^3 \KO^{1.5} \log^3{\frac\kappa\delta} \logp{\frac{\KO}{\norm{A}\NO}}}
          }
      \end{equation*}
      we have access to $U_\Omega$,
      an $(\alpha_\Omega, a_\Omega, \varepsilon_\Omega)$-block-encoding of $\Omega$
      implemented in time $T_\Omega$.
  Let $U_b$ be a unitary that prepares the state $\ket b$ in time $T_b$.
  
  Then we can prepare the quantum state that is $\delta$-close to
  \begin{equation*}
    \normalized{\paren{A^T\Omega^{-1}A + \lambda L^TL}^{-1}A^T\Omega^{-1}\ket b}
  \end{equation*}
   with probability $\Theta(1)$,
  at a cost of \begin{equation} \UPDATED \order{\GLSComplexity} \end{equation}
  using only $\order{\log \kappa}$ additional qubits.
\end{theorem}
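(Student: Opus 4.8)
The plan is to reduce $\ell_2$-regularized GLS to the $\ell_2$-regularized OLS solver of \autoref{thm:quantum_least_squares_gen_tik_reg}, exactly as was done for weighted least squares. Setting $B := \Omega^{-1/2}A$ as the effective data matrix and using that $\Omega$ (hence $\Omega^{-1/2}$) is symmetric, one has $B^TB = A^T\Omega^{-1}A$ and $B^T\Omega^{-1/2} = A^T\Omega^{-1}$, so that
\begin{equation*}
  \paren{B^TB + \lambda L^TL}^{-1} B^T \Omega^{-1/2}\ket{b} = \paren{A^T\Omega^{-1}A + \lambda L^TL}^{-1} A^T\Omega^{-1}\ket{b}.
\end{equation*}
Hence it suffices to run the OLS solver with data matrix $B$ on the input state $\ket{b_w} := \normalized{\Omega^{-1/2}\ket{b}}$; internally this inverts the augmented matrix $A_L$ and applies it to $\ket{b_w}\ket{0}$.

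First I would construct a block-encoding of $\Omega^{-1/2}$ via \autoref{lem:gls_inv_sqrt_omega}, obtaining subnormalization $2\sqrt{\KO/\NO}$, condition number $\sqrt{\KO}$, and cost $\order{\frac{\alpha_\Om\KO}{\NO}\logp{\cdots}T_\Om}$. Next I would form a block-encoding of $B=\Omega^{-1/2}A$ using the amplified-product \autoref{lem:prod_of_be_preamp}; since $\alpha_{\Omega^{-1/2}}/\norm{\Omega^{-1/2}}=2$, this gives subnormalization $\alpha_B = 2\sqrt{\KO/\NO}\norm{A}$ at cost $T_B = \order{\paren{\frac{\alpha_A}{\norm{A}}T_A + \frac{\alpha_\Om\KO}{\NO}T_\Om}\logp{\cdots}}$, supplying the $\frac{\alpha_A}{\norm{A}}T_A$ and $\frac{\alpha_\Om\KO}{\NO}T_\Om$ terms. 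I would prepare $\ket{b_w}$ by applying the $\Omega^{-1/2}$ block-encoding to $\ket{b}$ through \autoref{lem:apply_block_enc}; because $\kappa_{\Omega^{-1/2}}=\sqrt{\KO}$ this costs $T_{b_w}=\order{\sqrt{\KO}\paren{T_{\Omega^{-1/2}}+T_b}}$. After pre-amplifying $L$ with \autoref{lem:uniform_block_ampl} (which contributes the $\frac{\alpha_L}{\norm{L}}T_L$ term), I would invoke \autoref{thm:quantum_least_squares_gen_tik_reg} on $B$, this amplified $L$, and the preparation of $\ket{b_w}$.

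Applying \autoref{thm:spec-augmented_regression_matrix} to $B$ and bounding $\norm{B}=\norm{\Omega^{-1/2}A}\le\norm{\Omega^{-1/2}}\norm{A}=\sqrt{\KO/\NO}\norm{A}$ yields $\kappa\le\GLSKappa$, independent of $\kappa_A$ for a good regularizer $L$. The overall $\sqrt{\KO}$ prefactor then arises in two coordinated places: the subnormalization of the normalized $A_L$ encoding is bounded by $\alpha_B/\norm{B}\le 2\sqrt{\KO}$ (using $\norm{B}\ge\norm{A}/\sqrt{\NO}$), which multiplies the matrix-inversion cost of \autoref{thm:matrix_inversion_vtaa}; and the factor $\sqrt{\KO}$ sitting inside $T_{b_w}$ multiplies the state-preparation cost. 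Both contributions therefore scale as $\kappa\sqrt{\KO}\log\kappa$, so this factor pulls out in front and the total cost collapses to $\order{\GLSComplexity}$.

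The main obstacle I anticipate is the precision bookkeeping rather than the running time, because the block-encoding of $B$ fed into the inversion routine is doubly approximate — a product of the approximate encodings of $\Omega^{-1/2}$ and $A$ — yet must satisfy the stringent requirement $\smalloh{\delta/(\kappa^3\log^2(\kappa/\delta))}$ of \autoref{eqn:qols_error_condition}. I would back-propagate this target: \autoref{lem:prod_of_be_preamp} turns it into a bound on $\varepsilon_A$ and on the encoding error of $\Omega^{-1/2}$, and \autoref{lem:gls_inv_sqrt_omega} then translates the latter into the stated bound on $\varepsilon_\Omega$, where the extra $\KO^{1.5}$ and the $\logp{\KO/(\norm{A}\NO)}$ factors originate from the negative-power construction and the $\NO$-normalization respectively. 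The independent state-preparation error is absorbed through the robustness result \autoref{lem:apply_block_enc_approx} by demanding that $\ket{b_w}$ be prepared to accuracy $\order{\delta/\kappa}$. Collecting these chains produces the three precision conditions on $\varepsilon_A,\varepsilon_L,\varepsilon_\Omega$ in the statement.
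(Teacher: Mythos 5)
Your proposal follows essentially the same route as the paper's proof: reduce to the regularized OLS solver with effective data matrix $\Omega^{-1/2}A$, build $\Omega^{-1/2}$ via \autoref{lem:gls_inv_sqrt_omega}, form the product with \autoref{lem:prod_of_be_preamp}, prepare $\normalized{\Omega^{-1/2}\ket{b}}$ with \autoref{lem:apply_block_enc}, pre-amplify $L$, and invoke \autoref{thm:quantum_least_squares_gen_tik_reg}, with the $\sqrt{\KO}$ factor extracted exactly as in the paper from $\alpha_B/\norm{\Omega^{-1/2}A}\le 2\sqrt{\KO}$ and from the state-preparation cost. The error back-propagation you sketch matches the paper's derivation of the three precision conditions, so the argument is correct and essentially identical.
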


\begin{proof}
\UPDATED
    Observe that by choosing $A' := \Om^{-1/2} A, L' := L, \ket{b'} := \Om^{-1/2}\ket{b}$ (upto normalization) in the quantum ordinary least squares, we get a state proportional to $(A'^T A' + \lambda L'^T L')^{-1} A'^T \ket{b'} = (A^T\Om^{-1}A + \lambda L^TL) A^T \Om^{-1} \ket b$, which is the desired state.
    
    For convenience, let us define the matrix $B := \Om^{-1/2}$
    (and therefore $\kappa_B = \sqrt{\KO}$ and $\norm{B} = \sqrt{\KO/\NO}$).
    We now need to prepare a block-encoding of $BA$ and the quantum state $\normalized{B \ket b}$,
    which we then use to invoke \autoref{thm:quantum_least_squares_gen_tik_reg}.

    We begin by using \autoref{lem:gls_inv_sqrt_omega}
        with some precision $\eps_B$
        to construct a $(\alpha_B, a_B, \eps_B)$-block-encoding
        of $B = \Om^{-1/2}$,
        where $\alpha_B = 2\sqrt{\frac\KO\NO} = 2\norm{B}$,
        and $a_B = a_\Om + 1$.
    This bounds $\eps_\Om$ as 
    \[
    \varepsilon_{\Om} = \smalloh{
        \frac
            {\sqrt{\NO} \varepsilon_B}
            {\KO^{1.5} \logp{\frac{\KO}{\sqrt{\NO\varepsilon_B}}}
            }
    },
    \]
    and has a cost of
    \[
    T_B := \order{
        \frac{\alpha_\Om \KO}{\NO}
        \logp{\frac{\KO}{\eps_B \sqrt{\NO}}} T_\Om
    }
    \]

  Then using \autoref{lem:prod_of_be_preamp} with precision $\gamma$
  satisfying $\gamma \ge 4\sqrt 2 \max\paren{\norm{B}\eps_A, \norm{A}\eps_B}$,
    we get a\\ $(2\norm{A}\norm{B}, a_A + a_B + 3, \gamma)$-block-encoding
    of $A' := BA = \Omega^{-1/2}A$
    at a cost
    \begin{equation*}
        T_{A'} := \order{
        \paren{\frac{\alpha_A}{\norm{A}} T_A + \frac{\alpha_B}{\norm{B}} T_B}
        \log\paren{\frac{\norm{A}\norm{B}}{\gamma}}}.
    \end{equation*}

  To prepare $\normalized{B\ket b}$, 
  we use \autoref{lem:apply_block_enc}
    with precision $\varepsilon_b \ge 2\varepsilon_B\kappa_B/\norm{B}$.
    This prepares a state that is $\eps_b$-close to $\ket{b'} := \normalized{B\ket b}$
    with constant success probability
    at a cost of
    \begin{equation*}
        T_{b'} := \order{
            \frac{\alpha_B \kappa_B}{\norm{B}}
            (T_B + T_b)
        }
        = \order{\kappa_B (T_B + T_b)}
    \end{equation*}
    
  We could invoke OLS directly using the above two, but that ends up with a product of sub-normalization factors
  ($\alpha$ terms) in the complexity. We want to avoid this, because in most common cases $\alpha$-s for block-encodings are quite large.
  So we also pre-amplify $U_L$ using \autoref{lem:uniform_block_ampl}:
  for any $\delta_L \ge 2\varepsilon_L$
    we get a $(\sqrt{2}\norm{L}, a_L + 1, \delta_L)$-encoding of $L$
    at a cost of
    \begin{equation*}
    T_{L'} := \order{\frac{\alpha_L}{\norm{L}} T_L \logp{\frac{\norm{L}}{\delta_L}}}.
    \end{equation*}

    Now that we have these, we can use \autoref{thm:quantum_least_squares_gen_tik_reg}
        to get a quantum state $\delta'$-close to
        $\ket{\psi} := \normalized{A_L^+\ket{b'}}$,
        where $A^+_L = (A^T\Omega^{-1}A + \lambda L^TL)^{-1}A^T\Omega^{-1/2}$.
    This would require that
        $\gamma, \sqrt{\lambda} \delta_L \in \smalloh{\frac{\delta'}{\kappa^3 \log^2 \paren{\frac{\kappa}{\delta'}}}}$ and would cost 
    \[ \order{\kappa\log\kappa
        \paren{
            \paren{
                \frac
                    {2\norm{A}\norm{B} + \sqrt{2\lambda} \norm{L}}
                    {\norm{BA} + \sqrt{\lambda} \norm{L}}
            } 
            \logp{\frac{\kappa}{\delta'}} (T_{A'} + T_{L'})
            + T_{b'}
        }
    }. \]

    To simplify the ratio of norms term,
    we can first lower-bound $\norm{BA} \ge \norm{A}/\norm{B^{-1}} = \norm{A}/\sqrt\NO$.
    And as $\norm{B} = \sqrt{\KO/\NO}$,
    the whole term can be simplified to $\order{\sqrt{\KO}}$.
    This simplifies the cost expression to
    $\order{\kappa\log\kappa \paren{\sqrt{\KO} \logp{\kappa/{\delta'}}(T_{A'} + T_{L'}) + T_{b'}}}$.
    
  We can compute the error between $\ket\psi$ and the expected state by using \autoref{lem:apply_block_enc_approx}.
  For the final error to be $\delta$, we have to choose $\varepsilon_b = \delta/2\kappa$ and $\delta' = \delta / 2$.
  Therefore
  \begin{align*}
      \eps_B 
      \le \frac{\eps_b \norm{B}}{4\kappa_B}
      = \Theta\paren{\frac{\delta}{\kappa \sqrt{\NO}}}
  \end{align*}
  \begin{align*}
    &\gamma, \sqrt\lambda \delta_L \in \smalloh{\frac{\delta}{\kappa^3 \log^2 (\kappa/\delta)}}
    \\ \implies&
    \logp{\frac1\gamma} = \smalloh{\logp{\frac\kappa\delta}},\;\;
    \logp{\frac1{\delta_L}} = \smalloh{\logp{\frac{\sqrt\lambda\kappa}\delta}} 
  \end{align*}
  \[
      \eps_A = \smalloh{\frac{\gamma}{\norm{B}}}, \;\;
      \eps_B = \smalloh{\frac{\gamma}{\norm{A}}}
  \]
  Combining both bounds of $\eps_B$ by using sums or products, we can effectively bound
  \[
  \eps_\Om = \smalloh{
    \frac
        {\delta}
        {\norm{A} \kappa^3 \KO^{1.5} \log^3{\frac\kappa\delta} \logp{\frac{\KO}{\norm{A}\NO}}}
  }
  \]

\newcommand{\coeffSymb}{\mathcal{C}}
\newcommand{\coeff}[2]{\coeffSymb_{#1}(#2)}
  Finally for the final costs,
  we calculate the respective coefficients of terms $T_A, T_\Om, T_L$ and $T_b$,
  (excluding the common factor of $\kappa\sqrt{\KO}\log\kappa$ for brevity).
  Let us label these ``coefficient extraction'' functions as $\coeffSymb$ with matching subscripts,
  and the total cost as $T$.

  \begin{align*}
  \coeff{A}{T}
      &= \order{\logp{\frac\kappa\delta} \coeff{A}{T_{A'}}} \\
      &= \order{\logp{\frac\kappa\delta} \frac{\alpha_A}{\norm{A}} \logp{\frac{\norm{A}\norm{B}}{\gamma}}} \\
      &= \order{\frac{\alpha_A}{\norm{A}}
          \log^2\paren{\frac{\kappa \KO \norm{A}}{\delta \NO}}}\\
  \coeff{L}{T}
      &= \order{\logp{\frac\kappa\delta} \coeff{L}{T_{L'}}} \\
      &= \order{\logp{\frac\kappa\delta} 
          \frac{\alpha_L}{\norm{L}} \logp{\frac{\norm{L}}{\delta_L}}} \\
      &= \order{\frac{\alpha_L}{\norm{L}} \log^2 \paren{\frac{\kappa\norm{L}}\delta}} \\
  \coeff{\Om}{T}
      &= \order{\logp{\frac\kappa\delta} \coeff{\Om}{T_{A'}}
      + \frac{\coeff{\Om}{T_{b'}}}{\sqrt{\KO}}
      } \\
      &= \order{\paren{
          \logp{\frac\kappa\delta} \logp{\frac{\norm{A}\norm{B}}{\gamma}}
          + 1
          } \coeff{\Om}{T_B}
       }\\
      &= \order{
          \log^2\paren{\frac{\kappa \KO \norm{A}}{\delta \NO}}
          \frac{\alpha_\Om \KO}{\NO} \logp{\frac{\KO}{\eps_B \sqrt{\NO}}}
          }\\
      &= \order{
          \frac{\alpha_\Om \KO}{\NO}
          \log^3\paren{\frac{\kappa \KO \norm{A}}{\delta \NO}}
          }\\
  \coeff{b}{T}
        &= \order{\frac{\coeff{\Om}{T_{b'}}}{\sqrt{\KO}}}
        = \order{1} 
  \end{align*}
  And hence the final complexity is given by the expression
  \begin{align*}
  T &= \order{\kappa \sqrt{\KO} \log\kappa \paren{
          \coeff{A}{T} \cdot T_A
          + \coeff{L}{T}\cdot T_L
          + \coeff{\Om}{T}\cdot T_\Om
          + \coeff{b}{T}\cdot T_b
          }} \\
    &= \order{\kappa \sqrt{\KO} \log\kappa \paren{
          \frac{\alpha_A}{\norm{A}}
          \log^2\paren{\frac{\kappa \KO \norm{A}}{\delta \NO}} T_A
          + \frac{\alpha_L}{\norm{L}} \log^2 \paren{\frac{\kappa\norm{L}}\delta} T_L
          + \frac{\alpha_\Om \KO}{\NO} \log^3\paren{\frac{\kappa \KO \norm{A}}{\delta \NO}} T_\Om
          +  T_b
          }} \\
    &= \order{\GLSComplexity}
  \end{align*}
\end{proof}

One immediate observation is that for the special case of the (unregularized) quantum GLS problem (when $L=0$ and $\lambda=0$), our algorithm has a slightly better complexity than \cite{CGJ19} and requires fewer additional qubits.  
Now, we will state the complexities of this algorithm in specific input models, namely the quantum data structure model and the sparse-access input model.

\newcommand{\GLSComplexitySpecial}[3]{
    \order{
        \kappa \sqrt{\KO}
        \paren{
            \frac{#1}{\norm{A}} 
            + \frac{#2}{\norm{L}} 
            + \frac{\kappa_\Omega #3}{\norm{\Omega}}
        }
        \polylog{Nd, \kappa, \frac1\delta, \frac\KO\NO, \norm{A}, \norm{L}, \lambda}
    }
}
\begin{corollary}[Quantum Generalized Least Squares with General $\ell_2$-Regularization in the Quantum Data Structure Model]
    \label{cor:gls_data_structure}
  Let $A, L \in \rr^{N \times d}$ be the data and penalty matrices
      with effective condition numbers $\kappa_A, \kappa_L$ respectively.
  and $\Omega \in \rr^{N \times N}$ be the covariance matrix
      with condition number $\kappa_\Omega$.
  Let the matrices $A, L, \Omega$ and the vector $b$
  be stored in a quantum-accessible data structure.
  Define $\kappa$ as \[ \kappa := \GLSKappa \]
  Then for any $\delta > 0$,
  we can prepare the quantum state that is $\delta$-close to
  \begin{equation*}
    \normalized{\paren{A^T\Omega^{-1}A + \lambda L^TL}^{-1}A^T\Omega^{-1}\ket b}
  \end{equation*}
  with probability $\Theta(1)$,
  at a cost of
  \begin{equation}
      \UPDATED
      \GLSComplexitySpecial{\mu_A}{\mu_L}{\mu_\Om}
  \end{equation}
\end{corollary}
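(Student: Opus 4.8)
The plan is to specialize \autoref{thm:gls_l2} to the quantum data structure model by explicitly constructing the required block-encodings and the state-preparation unitary out of the stored data, so that the abstract cost parameters $\alpha_A, \alpha_L, \alpha_\Omega, T_A, T_L, T_\Omega, T_b$ become concrete. First I would invoke \autoref{lem:qrom_be_constr} (together with \autoref{thm:qrom_data_structure}) to obtain, for any chosen precisions $\varepsilon_A, \varepsilon_L, \varepsilon_\Omega > 0$, a $(\mu_A, \lceil\log(N+d)\rceil, \varepsilon_A)$-block-encoding of $A$, a $(\mu_L, \lceil\log(N+d)\rceil, \varepsilon_L)$-block-encoding of $L$, and a $(\mu_\Omega, \lceil\log(N+d)\rceil, \varepsilon_\Omega)$-block-encoding of $\Omega$, each implementable in cost $\order{\polylog{Nd/\varepsilon}}$. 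When the matrices themselves are stored we have $\mu_A = \norm{A}_F$ (and likewise for $L,\Omega$), while the same argument carries through with $\mu_p$ when fractional powers are stored. Since $b$ is also in the data structure, for any $\varepsilon_b > 0$ I can prepare a state $\varepsilon_b$-close to $\ket b$ at cost $T_b = \order{\polylog{N/\varepsilon_b}}$, exactly as in \autoref{sssec:q_data_structure_model}.

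Next I would feed these into \autoref{thm:gls_l2} with $\alpha_A = \mu_A$, $\alpha_L = \mu_L$, $\alpha_\Omega = \mu_\Omega$, and $T_A = T_L = T_\Omega = \order{\polylog{\cdot}}$. The key step is to pick the precisions so that the three $\smalloh{\cdot}$ hypotheses of \autoref{thm:gls_l2} are met: these force $\varepsilon_A, \varepsilon_L, \varepsilon_\Omega$ to be inverse-polynomial in $\kappa$, $\KO$, $\NO$, $\norm A$, $\norm L$, $\sqrt\lambda$, and $1/\delta$. Because the data-structure construction costs only $\polylog{1/\varepsilon}$, each such choice contributes merely $\polylog$ factors to the query cost. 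To reach the target precision $\delta$, I would run \autoref{thm:gls_l2} with precision $\delta/2$ and set $\varepsilon_b = \delta/(2\kappa)$, then apply the robustness result \autoref{lem:apply_block_enc_approx} to combine the error incurred by the approximate preparation of $\ket b$ with the algorithmic error, concluding that the output is $\delta$-close to the desired state.

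Finally I would substitute $\alpha_A = \mu_A$, $\alpha_L = \mu_L$, $\alpha_\Omega = \mu_\Omega$ and the polylogarithmic costs into the GLS complexity $\order{\GLSComplexity}$ of \autoref{thm:gls_l2}. After this substitution the explicit $\log^3\paren{\kappa\KO\norm{A}\norm{L}/(\delta\NO)}$ factor and every logarithm arising from the precision choices merge into a single $\polylog$ term, and using $\KO/\NO = \kappa_\Omega/\norm{\Omega}$ the whole expression collapses to
\[ \GLSComplexitySpecial{\mu_A}{\mu_L}{\mu_\Om}. \]
I expect the only genuine obstacle to be the bookkeeping of the interlocking error conditions: one must check that the precisions demanded on $A$, $L$, and $\Omega$ are simultaneously satisfiable, with $\Omega$ being the most delicate since in \autoref{thm:gls_l2} its precision already absorbs the error from inverting $\Omega^{-1/2}$, while verifying that every resulting $\log(1/\varepsilon)$ stays polylogarithmic in the stated parameters so that nothing escapes the $\polylog$ notation. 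Everything else is a direct substitution.
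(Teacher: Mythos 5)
Your proposal is correct and follows essentially the same route as the paper's own (much terser) proof: build the block-encodings of $A$, $L$, $\Omega$ and the preparation of $\ket{b}$ from the data structure via \autoref{lem:qrom_be_constr}, invoke \autoref{thm:gls_l2} with precision $\delta/2$ and the $\varepsilon$ parameters chosen to saturate its hypotheses, and finish with \autoref{lem:apply_block_enc_approx} at $\varepsilon_b = \delta/2\kappa$ before substituting $\alpha_A = \mu_A$, $\alpha_L = \mu_L$, $\alpha_\Omega = \mu_\Omega$ and the polylogarithmic implementation costs into the complexity of \autoref{thm:gls_l2}. Your additional bookkeeping of the interlocking precision conditions is consistent with what the paper leaves implicit.
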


\begin{proof}
    The proof is very similar to \autoref{cor:quantum_reg_in_qrom} with the extra 
    input of $\Omega$.
    We can use the data structure to prepare the block-encodings 
    for $A, L, \Omega$ and the state $\ket{b}$,
    with precisions $\varepsilon_A, \varepsilon_L, \varepsilon_\Omega, \varepsilon_b$ respectively.
    We invoke \autoref{thm:gls_l2} with a precision of $\delta_b$,
    and choose the above $\varepsilon$ terms to be equal to their corresponding upper-bounds.
    And finally we use \autoref{lem:apply_block_enc_approx} with $\varepsilon_b = \delta/2\kappa$ and $\delta_b = \delta/2$ to get the final error as $\delta$.
\end{proof}

Now, $\mu_A=\norm{A}_F$ (similarly for $\mu_L$ and $\mu_\Omega$). As $\norm{A}_F\leq \sqrt{r(A)}\norm{A}$, where $r(A)$ is the rank of $A$, we have that the complexity of \autoref{cor:gls_data_structure} can be re-expressed as
\begin{equation}
\UPDATED
 \order{
        \kappa
        \sqrt{\kappa_{\Omega}}
        \paren{
            \sqrt{r(A)} 
            + \sqrt{r(L)} 
            + \sqrt{r(\Omega)}\kappa_\Omega
        }
        \polylog{\frac{Nd\kappa}{\delta}}
    }.
\end{equation}

\begin{corollary}[Quantum Generalized Least Squares with General $\ell_2$-Regularization in the Sparse Access Model]
    Let $A \in \rr^{N \times d}$ be a $(s^A_r, s^A_c)$ row-column sparse data matrix.
    Let $L \in \rr^{N \times d}$ be a $(s^L_r, s^L_c)$ row-column sparse penaly matrix.
    Let $\Omega \in \rr^{N \times N}$ be a $(s^\Omega_r, s^\Omega_c)$ row-column sparse covariance matrix.
    Suppose we have a procedure to prepare $\ket b$ in cost $T_b$. 
  Define $\kappa$ as \[ \kappa := \GLSKappa \]
  Then for any $\delta > 0$,
  we can prepare the quantum state that is $\delta$-close to
  \begin{equation*}
    \normalized{\paren{A^T\Omega^{-1}A + \lambda L^TL}^{-1}A^T\Omega^{-1}\ket b}
  \end{equation*}
  with probability $\Theta(1)$,
  at a cost of
  \begin{equation}
      \UPDATED
      \order{
        \kappa \sqrt{\KO}
        \paren{
            \frac{\sqrt{s^A_r s^A_c}}{\norm{A}} 
            + \frac{\sqrt{s^L_r s^L_c}}{\norm{L}} 
            + \frac{\kappa_\Omega \sqrt{s^\Om_r s^\Om_c}}{\norm{\Omega}}
            + T_b
        }
        \polylog{Nd, \kappa, \frac1\delta, \frac\KO\NO, \norm{A}, \norm{L}, \lambda}
      }
  \end{equation}
\end{corollary}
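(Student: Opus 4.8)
The plan is to follow the template of the data-structure corollary (\autoref{cor:gls_data_structure}), replacing the data-structure block-encoding construction with the sparse-access one, and then invoking the main generalized least squares theorem (\autoref{thm:gls_l2}) as a black box.

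First I would build the three input block-encodings using \autoref{lem:sparse_access_be_constr}. For each sparse matrix this lemma provides, at any target precision $\varepsilon$, a block-encoding whose subnormalization factor is the geometric mean of the row and column sparsities, giving $\alpha_A = \sqrt{s^A_r s^A_c}$, $\alpha_L = \sqrt{s^L_r s^L_c}$, and $\alpha_\Omega = \sqrt{s^\Omega_r s^\Omega_c}$, each implementable with $\order{1}$ oracle queries and $\polylog{Nd/\varepsilon}$ elementary gates; hence the implementation costs satisfy $T_A, T_L, T_\Omega = \polylog{Nd/\varepsilon}$. The state $\ket b$ is supplied directly by the assumed preparation unitary at cost $T_b$, which matches the input hypothesis of \autoref{thm:gls_l2} exactly; consequently, unlike in the data-structure case where $\ket b$ is only approximately prepared, no separate robustness argument via \autoref{lem:apply_block_enc_approx} is needed for the input state.

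Next I would fix the precisions and invoke \autoref{thm:gls_l2} with target accuracy $\delta$, choosing $\varepsilon_A, \varepsilon_L, \varepsilon_\Omega$ so as to saturate the upper bounds demanded there. Those bounds are inverse-polynomial in $\kappa, \KO, \NO, \norm{A}, \norm{L}, \lambda$ and $1/\delta$; since the query cost of each sparse-access block-encoding depends only logarithmically on its precision, the factors $\logp{1/\varepsilon_A}$, $\logp{1/\varepsilon_L}$, and $\logp{1/\varepsilon_\Omega}$ all collapse into a single $\polylog{Nd, \kappa, \frac{1}{\delta}, \frac{\KO}{\NO}, \norm{A}, \norm{L}, \lambda}$ factor. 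Substituting the values of $\alpha_A, \alpha_L, \alpha_\Omega$ together with the polylogarithmic $T_A, T_L, T_\Omega$ into the complexity expression of \autoref{thm:gls_l2} then produces the three stated sparsity terms $\sqrt{s^A_r s^A_c}/\norm{A}$, $\sqrt{s^L_r s^L_c}/\norm{L}$, and $\kappa_\Omega \sqrt{s^\Omega_r s^\Omega_c}/\norm{\Omega}$, with $T_b$ appearing additively and the $\kappa\sqrt{\KO}$ prefactor preserved.

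I do not anticipate a genuine obstacle, as the argument is structurally identical to \autoref{cor:gls_data_structure}; the only point requiring real care is the bookkeeping of logarithmic factors, namely verifying that every $\logp{1/\varepsilon}$ introduced by the high-precision block-encodings, together with the $\log^3$ factor already present in the complexity of \autoref{thm:gls_l2}, is genuinely absorbed into the advertised $\polylog$ and that no hidden polynomial dependence on the sparsities or norms escapes. The $\Theta(1)$ success probability is inherited unchanged from \autoref{thm:gls_l2}.
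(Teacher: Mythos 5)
Your proposal is correct and follows essentially the same route as the paper, whose proof is a one-line reference to \autoref{cor:gls_data_structure} with $\alpha_A = \sqrt{s^A_r s^A_c}$, $\alpha_L = \sqrt{s^L_r s^L_c}$, $\alpha_\Omega = \sqrt{s^\Omega_r s^\Omega_c}$ obtained via \autoref{lem:sparse_access_be_constr}. Your additional observation that the exact state-preparation unitary for $\ket b$ removes the need for the \autoref{lem:apply_block_enc_approx} robustness step is a correct and slightly more careful reading than the paper makes explicit.
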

\begin{proof}
    The algorithm is similar to \autoref{cor:gls_data_structure}, but with
    $\alpha_A = \sqrt{s^A_r s^A_c}$,
    $\alpha_L = \sqrt{s^L_r s^L_c}$,
    $\alpha_\Omega = \sqrt{s^\Omega_r s^\Omega_c}$.
\end{proof}

%=============================================================================
\section{Future Directions}
\label{sec:discussion}
%=============================================================================

Our algorithms for quantum linear regression with general $\ell_2$-regularization made use of QSVT to implement various several matrix operations. However, it is possible to use QSVT directly to obtain the solution to \textit{quantum ridge regression}. This requires computing a polynomial approximation for the transformation $\sigma \mapsto \sigma/(\sigma^2+\lambda)$, to be applied on the singular values of $A$, which lie between $[1/\kappa_A,1]$.
{\UPDATED
However, it is unclear how to extend this while considering general $\ell_2$-regularization. For instance, even when the data matrix and the penalty matrix share the same right singular vectors, this approach involves
}
obtaining polynomial approximations to directly implement transformations of the form $\sigma\mapsto \sigma/(\sigma^2+\lambda\widetilde{\sigma}^2)$, where $\widetilde{\sigma}$ is a singular value of the penalty matrix $L$. A monomial is no longer sufficient to approximate this quantum singular value transformation. It would be interesting to explore whether newly developed ideas of M-QSVT \cite{rossi2022multivariable} can be used to implement such transformations directly with improved complexity.

While developing quantum machine learning algorithms, it is essential to point out the caveats, even at the risk of being repetitive \cite{Aaronson2015readTheFinePrint}. Our quantum algorithms output a quantum state $\ket{x}$ whose amplitudes encode the solution of the classical (regularized) linear regression problem. While given access to the data matrix and the penalty matrix, we achieve an exponential advantage over classical algorithms, this advantage is not generic. If similar assumptions ($\ell_2$-sample and query access) are provided to a classical device, Gily\'{e}n et al.~developed a quantum algorithm \cite{Gilyen2020AnIQ} for ridge regression (building upon \cite{chia2020sampling}) which has a running time in $\order{\mathrm{poly}(\kappa, \mathrm{rank}(A), 1/\delta)}$. This implies that any quantum algorithm for this problem can be at most polynomially faster in $\kappa$ under these assumptions. One might posit that similar quantum-inspired classical algorithms for general $\ell_2$-regression can also be developed. The exponential quantum speedup, however, is retained when the underlying matrices are sparse.

Another future direction of research would be to recast our algorithms in the framework of adiabatic quantum computing (AQC) following the works of \cite{Lin2020optimalpolynomial, an2020quantum}. Quantum algorithms for linear systems in this framework have the advantage that a linear dependence on $\kappa$ can be obtained without using complicated subroutines like variable-time amplitude amplification. The strategy is to implement these problems in the AQC model and then use time-dependent Hamiltonian simulation \cite{low2018hamiltonian} to obtain their complexities in the circuit model. One caveat is that, so far, time-dependent Hamiltonian simulation algorithms have only been developed in the sparse-access model and therefore the advantage of the generality of the block-encoding framework is lost. 

In the future, it would also be interesting to explore other quantum algorithms for machine learning such as principal component regression and linear support vector machines \cite{rebentrost2014quantum} using QSVT. Finally, following the results of \cite{chen2021quantum}, it would be interesting to investigate techniques for quantum machine learning that do not require the quantum linear systems algorithm as a subroutine.

%=============================================================================
\section*{Acknowledgements}
SC thanks Andr\'{a}s Gily\'{e}n, Stacey Jeffery and J\'{e}r\'{e}mie Roland for useful discussions. SC acknowledges funding from the Science and Engineering Board, Department of Science and Technology (SERB-DST), Government of India via grant number SRG/2022/000354. SC is also supported by IIIT Hyderabad via the Faculty Seed Grant. 
AP thanks Michael Walter for useful discussions. AP acknowledges support by the BMBF through project Quantum Methods and Benchmarks for Resource Allocation (QuBRA).
%=============================================================================

\bibliographystyle{alphaurl}
\bibliography{references}

\newcommand{\etalchar}[1]{$^{#1}$}
\begin{thebibliography}{MRTC21}

\bibitem[Aar15]{Aaronson2015readTheFinePrint}
Scott Aaronson.
\newblock Read the fine print.
\newblock {\em Nature Physics}, 11(4):291--293, Apr 2015.
\newblock \href {https://doi.org/10.1038/nphys3272} {\path{doi:10.1038/nphys3272}}.

\bibitem[AL22]{an2020quantum}
Dong An and Lin Lin.
\newblock Quantum linear system solver based on time-optimal adiabatic quantum computing and quantum approximate optimization algorithm.
\newblock {\em ACM Transactions on Quantum Computing}, 3(2), mar 2022.
\newblock \href {https://doi.org/10.1145/3498331} {\path{doi:10.1145/3498331}}.

\bibitem[Amb12]{ambainis2010variable}
Andris Ambainis.
\newblock {Variable time amplitude amplification and quantum algorithms for linear algebra problems}.
\newblock In Christoph D{\"u}rr and Thomas Wilke, editors, {\em 29th International Symposium on Theoretical Aspects of Computer Science (STACS 2012)}, volume~14 of {\em Leibniz International Proceedings in Informatics (LIPIcs)}, pages 636--647, Dagstuhl, Germany, 2012. Schloss Dagstuhl--Leibniz-Zentrum fuer Informatik.
\newblock \href {https://doi.org/10.4230/LIPIcs.STACS.2012.636} {\path{doi:10.4230/LIPIcs.STACS.2012.636}}.

\bibitem[Bis95]{bishop1995training}
Chris~M. Bishop.
\newblock Training with noise is equivalent to tikhonov regularization.
\newblock {\em Neural Computation}, 7(1):108--116, 1995.
\newblock \href {https://doi.org/10.1162/neco.1995.7.1.108} {\path{doi:10.1162/neco.1995.7.1.108}}.

\bibitem[CdW21]{chen2021quantum}
Yanlin Chen and Ronald de~Wolf.
\newblock Quantum algorithms and lower bounds for linear regression with norm constraints.
\newblock {\em arXiv preprint}, 2021.
\newblock \href {https://doi.org/10.48550/ARXIV.2110.13086} {\path{doi:10.48550/ARXIV.2110.13086}}.

\bibitem[CGJ18]{CGJ19full}
Shantanav Chakraborty, András Gilyén, and Stacey Jeffery.
\newblock The power of block-encoded matrix powers: Improved regression techniques via faster hamiltonian simulation.
\newblock {\em arXiv preprint}, 2018.
\newblock \href {https://doi.org/10.48550/arXiv.1804.01973} {\path{doi:10.48550/arXiv.1804.01973}}.

\bibitem[CGJ19]{CGJ19}
Shantanav Chakraborty, Andr{\'a}s Gily{\'e}n, and Stacey Jeffery.
\newblock {The Power of Block-Encoded Matrix Powers: Improved Regression Techniques via Faster Hamiltonian Simulation}.
\newblock In Christel Baier, Ioannis Chatzigiannakis, Paola Flocchini, and Stefano Leonardi, editors, {\em 46th International Colloquium on Automata, Languages, and Programming (ICALP 2019)}, volume 132 of {\em Leibniz International Proceedings in Informatics (LIPIcs)}, pages 33:1--33:14, Dagstuhl, Germany, 2019. Schloss Dagstuhl--Leibniz-Zentrum fuer Informatik.
\newblock \href {https://doi.org/10.4230/LIPIcs.ICALP.2019.33} {\path{doi:10.4230/LIPIcs.ICALP.2019.33}}.

\bibitem[CGL{\etalchar{+}}20]{chia2020sampling}
Nai-Hui Chia, Andr\'{a}s Gily\'{e}n, Tongyang Li, Han-Hsuan Lin, Ewin Tang, and Chunhao Wang.
\newblock {\em Sampling-Based Sublinear Low-Rank Matrix Arithmetic Framework for Dequantizing Quantum Machine Learning}, page 387–400.
\newblock Association for Computing Machinery, New York, NY, USA, 2020.
\newblock \href {https://doi.org/10.1145/3357713.3384314} {\path{doi:10.1145/3357713.3384314}}.

\bibitem[CKS17]{CKS17}
Andrew~M. Childs, Robin Kothari, and Rolando~D. Somma.
\newblock Quantum algorithm for systems of linear equations with exponentially improved dependence on precision.
\newblock {\em SIAM Journal on Computing}, 46(6):1920–1950, Jan 2017.
\newblock \href {https://doi.org/10.1137/16M1087072} {\path{doi:10.1137/16M1087072}}.

\bibitem[CYGL22]{chen2022faster}
Menghan Chen, Chaohua Yu, Gongde Guo, and Song Lin.
\newblock Faster quantum ridge regression algorithm for prediction.
\newblock {\em International Journal of Machine Learning and Cybernetics}, Apr 2022.
\newblock \href {https://doi.org/10.1007/s13042-022-01526-6} {\path{doi:10.1007/s13042-022-01526-6}}.

\bibitem[EHN96]{engl1996regularization}
H.W. Engl, M.~Hanke, and A.~Neubauer.
\newblock {\em Regularization of Inverse Problems}.
\newblock Mathematics and Its Applications. Springer Netherlands, 1996.
\newblock URL: \url{https://link.springer.com/book/9780792341574}.

\bibitem[GHO99]{golub1999tikhonov}
Gene~H. Golub, Per~Christian Hansen, and Dianne~P. O'Leary.
\newblock Tikhonov regularization and total least squares.
\newblock {\em SIAM Journal on Matrix Analysis and Applications}, 21(1):185--194, 1999.
\newblock \href {https://doi.org/10.1137/S0895479897326432} {\path{doi:10.1137/S0895479897326432}}.

\bibitem[GSLW18]{GSLW2019full}
Andr{\'{a}}s Gily{\'{e}}n, Yuan Su, Guang~Hao Low, and Nathan Wiebe.
\newblock Quantum singular value transformation and beyond: exponential improvements for quantum matrix arithmetics.
\newblock {\em arXiv preprint}, jun 2018.
\newblock \href {https://doi.org/10.48550/arXiv.1806.01838} {\path{doi:10.48550/arXiv.1806.01838}}.

\bibitem[GSLW19]{GSLW2019}
Andr\'{a}s Gily\'{e}n, Yuan Su, Guang~Hao Low, and Nathan Wiebe.
\newblock Quantum singular value transformation and beyond: Exponential improvements for quantum matrix arithmetics.
\newblock In {\em Proceedings of the 51st Annual ACM SIGACT Symposium on Theory of Computing}, STOC 2019, page 193–204, New York, NY, USA, 2019. Association for Computing Machinery.
\newblock \href {https://doi.org/10.1145/3313276.3316366} {\path{doi:10.1145/3313276.3316366}}.

\bibitem[GST22]{Gilyen2020AnIQ}
Andr{\'{a}}s Gily{\'{e}}n, Zhao Song, and Ewin Tang.
\newblock An improved quantum-inspired algorithm for linear regression.
\newblock {\em {Quantum}}, 6:754, June 2022.
\newblock \href {https://doi.org/10.22331/q-2022-06-30-754} {\path{doi:10.22331/q-2022-06-30-754}}.

\bibitem[GTC19]{ge2019faster}
Yimin Ge, Jordi Tura, and J.~Ignacio Cirac.
\newblock Faster ground state preparation and high-precision ground energy estimation with fewer qubits.
\newblock {\em Journal of Mathematical Physics}, 60(2):022202, 2019.
\newblock \href {https://doi.org/10.1063/1.5027484} {\path{doi:10.1063/1.5027484}}.

\bibitem[Hem75]{hemmerle1975explicit}
William~J. Hemmerle.
\newblock An explicit solution for generalized ridge regression.
\newblock {\em Technometrics}, 17(3):309--314, 1975.
\newblock URL: \url{http://www.jstor.org/stable/1268066}, \href {https://doi.org/10.2307/1268066} {\path{doi:10.2307/1268066}}.

\bibitem[HH93]{hanke1993regularization}
Martin Hanke and Per~Christian Hansen.
\newblock Regularization methods for large-scale problems.
\newblock {\em Surv. Math. Ind}, 3(4):253--315, 1993.

\bibitem[HHL09]{HHL2009}
Aram~W. Harrow, Avinatan Hassidim, and Seth Lloyd.
\newblock Quantum algorithm for linear systems of equations.
\newblock {\em Physical Review Letters}, 103(15), Oct 2009.
\newblock \href {https://doi.org/10.1103/physrevlett.103.150502} {\path{doi:10.1103/physrevlett.103.150502}}.

\bibitem[HK00]{hoerl1970ridge}
Arthur~E. Hoerl and Robert~W. Kennard.
\newblock Ridge regression: Biased estimation for nonorthogonal problems.
\newblock {\em Technometrics}, 42(1):80--86, 2000.
\newblock URL: \url{http://www.jstor.org/stable/1271436}, \href {https://doi.org/10.2307/1271436} {\path{doi:10.2307/1271436}}.

\bibitem[KKR06]{kempe2006complexity}
Julia Kempe, Alexei Kitaev, and Oded Regev.
\newblock The complexity of the local hamiltonian problem.
\newblock {\em SIAM Journal on Computing}, 35(5):1070--1097, 2006.
\newblock \href {https://doi.org/10.1137/S0097539704445226} {\path{doi:10.1137/S0097539704445226}}.

\bibitem[KP17]{kerenidis2016quantum}
Iordanis Kerenidis and Anupam Prakash.
\newblock {Quantum Recommendation Systems}.
\newblock In {\em 8th Innovations in Theoretical Computer Science Conference (ITCS 2017)}, volume~67 of {\em Leibniz International Proceedings in Informatics (LIPIcs)}, pages 49:1--49:21, Dagstuhl, Germany, 2017. Schloss Dagstuhl--Leibniz-Zentrum fuer Informatik.
\newblock \href {https://doi.org/10.4230/LIPIcs.ITCS.2017.49} {\path{doi:10.4230/LIPIcs.ITCS.2017.49}}.

\bibitem[KP20]{KP2020_iterative_quantum_gradient}
Iordanis Kerenidis and Anupam Prakash.
\newblock Quantum gradient descent for linear systems and least squares.
\newblock {\em Phys. Rev. A}, 101:022316, Feb 2020.
\newblock \href {https://doi.org/10.1103/PhysRevA.101.022316} {\path{doi:10.1103/PhysRevA.101.022316}}.

\bibitem[LC17a]{low2017hamiltonian}
Guang~Hao Low and Isaac~L Chuang.
\newblock Hamiltonian simulation by uniform spectral amplification.
\newblock {\em arXiv:1707.05391}, 2017.
\newblock \href {https://doi.org/10.48550/ARXIV.1707.05391} {\path{doi:10.48550/ARXIV.1707.05391}}.

\bibitem[LC17b]{LC16b}
Guang~Hao Low and Isaac~L. Chuang.
\newblock Optimal hamiltonian simulation by quantum signal processing.
\newblock {\em Phys. Rev. Lett.}, 118:010501, Jan 2017.
\newblock \href {https://doi.org/10.1103/PhysRevLett.118.010501} {\path{doi:10.1103/PhysRevLett.118.010501}}.

\bibitem[LC19]{LC16}
Guang~Hao Low and Isaac~L. Chuang.
\newblock Hamiltonian simulation by qubitization.
\newblock {\em Quantum}, 3:163, Jul 2019.
\newblock \href {https://doi.org/10.22331/q-2019-07-12-163} {\path{doi:10.22331/q-2019-07-12-163}}.

\bibitem[LMR14]{lloyd2014quantum}
Seth Lloyd, Masoud Mohseni, and Patrick Rebentrost.
\newblock Quantum principal component analysis.
\newblock {\em Nature Physics}, 10(9):631--633, Sep 2014.
\newblock \href {https://doi.org/10.1038/nphys3029} {\path{doi:10.1038/nphys3029}}.

\bibitem[Low17]{low2017quantum}
Guang~Hao Low.
\newblock {\em Quantum signal processing by single-qubit dynamics}.
\newblock PhD thesis, Massachusetts Institute of Technology, 2017.
\newblock URL: \url{http://hdl.handle.net/1721.1/115025}.

\bibitem[LT20a]{lin2020near}
Lin Lin and Yu~Tong.
\newblock Near-optimal ground state preparation.
\newblock {\em {Quantum}}, 4:372, December 2020.
\newblock \href {https://doi.org/10.22331/q-2020-12-14-372} {\path{doi:10.22331/q-2020-12-14-372}}.

\bibitem[LT20b]{Lin2020optimalpolynomial}
Lin Lin and Yu~Tong.
\newblock Optimal polynomial based quantum eigenstate filtering with application to solving quantum linear systems.
\newblock {\em {Quantum}}, 4:361, November 2020.
\newblock \href {https://doi.org/10.22331/q-2020-11-11-361} {\path{doi:10.22331/q-2020-11-11-361}}.

\bibitem[LW18]{low2018hamiltonian}
Guang~Hao Low and Nathan Wiebe.
\newblock Hamiltonian simulation in the interaction picture.
\newblock {\em arXiv preprint}, 2018.
\newblock \href {https://doi.org/10.48550/arXiv.1805.00675} {\path{doi:10.48550/arXiv.1805.00675}}.

\bibitem[LYC16]{LYC16}
Guang~Hao Low, Theodore~J. Yoder, and Isaac~L. Chuang.
\newblock Methodology of resonant equiangular composite quantum gates.
\newblock {\em Phys. Rev. X}, 6:041067, Dec 2016.
\newblock \href {https://doi.org/10.1103/PhysRevX.6.041067} {\path{doi:10.1103/PhysRevX.6.041067}}.

\bibitem[Mar70]{marquaridt1970generalized}
Donald~W. Marquardt.
\newblock Generalized inverses, ridge regression, biased linear estimation, and nonlinear estimation.
\newblock {\em Technometrics}, 12(3):591--612, 1970.
\newblock URL: \url{http://www.jstor.org/stable/1267205}, \href {https://doi.org/10.2307/1267205} {\path{doi:10.2307/1267205}}.

\bibitem[MRTC21]{Grand_Uni_2021}
John~M. Martyn, Zane~M. Rossi, Andrew~K. Tan, and Isaac~L. Chuang.
\newblock Grand unification of quantum algorithms.
\newblock {\em PRX Quantum}, 2:040203, Dec 2021.
\newblock \href {https://doi.org/10.1103/PRXQuantum.2.040203} {\path{doi:10.1103/PRXQuantum.2.040203}}.

\bibitem[Mur12]{murphy2012machine}
Kevin~P Murphy.
\newblock {\em Machine learning: a probabilistic perspective}.
\newblock MIT press, 2012.
\newblock URL: \url{https://mitpress.mit.edu/books/machine-learning-1}.

\bibitem[Pra14]{prakash2014quantum}
Anupam Prakash.
\newblock {\em Quantum Algorithms for Linear Algebra and Machine Learning.}
\newblock PhD thesis, EECS Department, University of California, Berkeley, Dec 2014.
\newblock URL: \url{http://www2.eecs.berkeley.edu/Pubs/TechRpts/2014/EECS-2014-211.html}.

\bibitem[RC22]{rossi2022multivariable}
Zane~M. Rossi and Isaac~L. Chuang.
\newblock Multivariable quantum signal processing ({M}-{QSP}): prophecies of the two-headed oracle.
\newblock {\em {Quantum}}, 6:811, September 2022.
\newblock \href {https://doi.org/10.22331/q-2022-09-20-811} {\path{doi:10.22331/q-2022-09-20-811}}.

\bibitem[RML14]{rebentrost2014quantum}
Patrick Rebentrost, Masoud Mohseni, and Seth Lloyd.
\newblock Quantum support vector machine for big data classification.
\newblock {\em Phys. Rev. Lett.}, 113:130503, Sep 2014.
\newblock \href {https://doi.org/10.1103/PhysRevLett.113.130503} {\path{doi:10.1103/PhysRevLett.113.130503}}.

\bibitem[SSP16]{schuld2016predictionByRegression}
Maria Schuld, Ilya Sinayskiy, and Francesco Petruccione.
\newblock Prediction by linear regression on a quantum computer.
\newblock {\em Phys. Rev. A}, 94:022342, Aug 2016.
\newblock \href {https://doi.org/10.1103/PhysRevA.94.022342} {\path{doi:10.1103/PhysRevA.94.022342}}.

\bibitem[SX20]{Shao2020}
Changpeng Shao and Hua Xiang.
\newblock Quantum regularized least squares solver with parameter estimate.
\newblock {\em Quantum Information Processing}, 19(4):113, Feb 2020.
\newblock \href {https://doi.org/10.1007/s11128-020-2615-9} {\path{doi:10.1007/s11128-020-2615-9}}.

\bibitem[Tan19]{tang_reco_19}
Ewin Tang.
\newblock A quantum-inspired classical algorithm for recommendation systems.
\newblock In {\em Proceedings of the 51st Annual ACM SIGACT Symposium on Theory of Computing}, STOC 2019, page 217–228, New York, NY, USA, 2019. Association for Computing Machinery.
\newblock \href {https://doi.org/10.1145/3313276.3316310} {\path{doi:10.1145/3313276.3316310}}.

\bibitem[Vin78]{vinod1978survey}
Hrishikesh~D. Vinod.
\newblock A survey of ridge regression and related techniques for improvements over ordinary least squares.
\newblock {\em The Review of Economics and Statistics}, 60(1):121--131, 1978.
\newblock URL: \url{http://www.jstor.org/stable/1924340}, \href {https://doi.org/10.2307/1924340} {\path{doi:10.2307/1924340}}.

\bibitem[vW15]{vanwieringen2021lecture}
Wessel~N. van Wieringen.
\newblock Lecture notes on ridge regression, 2015.
\newblock \href {https://doi.org/10.48550/ARXIV.1509.09169} {\path{doi:10.48550/ARXIV.1509.09169}}.

\bibitem[Wan17]{wang_2017_linear_regression}
Guoming Wang.
\newblock Quantum algorithm for linear regression.
\newblock {\em Phys. Rev. A}, 96:012335, Jul 2017.
\newblock \href {https://doi.org/10.1103/PhysRevA.96.012335} {\path{doi:10.1103/PhysRevA.96.012335}}.

\bibitem[WBL12]{wiebe2012quantum}
Nathan Wiebe, Daniel Braun, and Seth Lloyd.
\newblock Quantum algorithm for data fitting.
\newblock {\em Phys. Rev. Lett.}, 109:050505, Aug 2012.
\newblock \href {https://doi.org/10.1103/PhysRevLett.109.050505} {\path{doi:10.1103/PhysRevLett.109.050505}}.

\bibitem[YGW21]{yu2019improved}
Chao-Hua Yu, Fei Gao, and Qiao-Yan Wen.
\newblock An improved quantum algorithm for ridge regression.
\newblock {\em IEEE Transactions on Knowledge and Data Engineering}, 33(3):858--866, 2021.
\newblock \href {https://doi.org/10.1109/TKDE.2019.2937491} {\path{doi:10.1109/TKDE.2019.2937491}}.

\end{thebibliography}

%%%%%%%%%%%%%%%%%%%%%%%%%%%%%%%%%%%%%%%%%%%%%%%%%%%%%%%%%%%%%%%%%%%%%%%%%%%%%%
% \newpage
\appendix

%=============================================================================
\section{Algorithmic Primitives}
\label{app:algo_primitives}
%=============================================================================

This appendix contains detailed proofs for certain lemmas and corollaries in \autoref{sec:algorithmic_primitives}, for completeness. These proofs are not necessary to understand the techniques and results of the paper, but may help the reader develop a better intuition for the methods used.

\thmbodyUniformBlockAmp*
\begin{proof}
  We can re-interpret $U$ as a
    $(\alpha/\norm{A}, a, \varepsilon/\norm{A})$-block-encoding
    of $A/\norm{A}$.
  Invoking \autoref{lem:uniform_block_ampl_contractions}
    with $\gamma = \frac{\delta}{2\norm{A}}$,
    we get $U'$, a $(\sqrt 2, a + 1, \varepsilon/\norm{A} + \frac{\delta}{2\norm{A}})$-block-encoding
    of $A/\norm{A}$,
    implemented at a cost of $\order{\frac{\alpha}{\norm{A}} T_U \log\paren{\norm{A}/\delta}}$ which is a
    $(\sqrt 2 \norm{A}, a + 1, \delta)$-block-encoding of $A$.
\end{proof}

\thmbodyApplyBlockEnc*
\begin{proof}
    The proof is similar to Lemma 24 of \cite{CGJ19}.
    We have $\norm{A\ket{b}}\geq\frac{\norm{A}}{\kappa}$.
    By applying $U_A$ to $\ket{0}\ket{b}$ (implementable at a cost of $T_A+T_B$), 
    followed by $\frac{\alpha\kappa}{\norm{A}}$-rounds of amplitude amplification (conditioned on having $\ket{0}$ in the first register) , we obtain a quantum state that within $\delta$ of $\ket{0}\otimes\frac{A\ket{b}}{\norm{A\ket{b}}}$.
\end{proof}

\thmbodyApplyBlockEncPreamp*
\begin{proof}
    We first pre-amplify the unitary using \autoref{lem:uniform_block_ampl} with some $\gamma \ge 2\varepsilon$.
    We get a $(\sqrt{2}\norm{A}, a + 1, \gamma)$-block-encoding of $A$ implemented at a cost of
    \begin{equation*}
        T_{A'} := \order{\frac{\alpha T_A}{\norm{A}} \logp{\frac{\norm{A}}{\gamma}}}
    \end{equation*}
    
    Now we invoke \autoref{lem:apply_block_enc} with $\delta = \frac{2\kappa\gamma}{\norm{A}}$ and the above unitary to prepare the state,
    which has a time complexity of
    \begin{equation*}
        \order{\kappa\paren{T_{A'} + T_b}} 
        = \order{\frac{\alpha\kappa}{\norm{A}}\logp{\frac{\kappa}{\delta}} T_A + \kappa T_b}
    \end{equation*}
\end{proof}

\thmbodyApplyBlockEncApprox*
\begin{proof}
  We know that $$\norm{\ket b - \ket{b'}} \le \frac{\varepsilon}{2\kappa}$$
  and $$\norm{\ket\psi - \frac{A\ket{b'}}{\norm{A\ket{b'}}}} \le \frac{\varepsilon}{2}$$
  For small enough $\varepsilon \ll \kappa$, we can assume that
    $\norm{A\ket b} \approx \norm{A\ket{b'}}$.
  We can derive the final error as
  \begin{align*}
    \norm{\ket\psi - \frac{A\ket{b}}{\norm{A\ket{b}}}}
    &= \norm{\ket\psi - \frac{A\ket{b} - A\ket{b'} + A\ket{b'}}{\norm{A\ket{b}}}} \\
    &= \norm{\ket\psi - \frac{A\ket{b'}}{\norm{A\ket{b}}}
        + \frac{A\ket{b'} - A\ket{b'}}{\norm{A\ket{b}}}} \\
    &\le \norm{\ket{\psi} - \frac{A\ket{b'}}{\norm{A\ket{b'}}}}
        + \norm{\frac{A\ket{b'} - A\ket{b'}}{\norm{A\ket{b}}}} \\
    &\le \frac{\varepsilon}{2} + \frac{\norm{A} \norm{\ket b - \ket{b'}}}{\norm{A\ket{b}}} \\
    &\le \frac{\varepsilon}{2} + \frac{\varepsilon}{2} \\
    &= \varepsilon
  \end{align*}
\end{proof}

%-----------------------------------------------------------------------------
\subsection{Arithmetic with Block-Encoded Matrices}
%-----------------------------------------------------------------------------

\thmbodyConstrLincomb*
\begin{proof}
Let $a = max_j(a_j) + s$ and $\alpha = \sum_j y_j \alpha_j$.
For each $j \in \{0,\ldots,m-1\}$,
construct the extended unitary $U'_j$ by padding ancillas to $U_j$,
i.e. $U_j^\prime = I_{a - s - a_j} \otimes U_j$.
Note that $U'_j$ is a $(\alpha_j, a - s, \varepsilon_j)$-block-encoding of $A_j$.
Let $B_j = (\bra{0}^{a_j} \otimes I_s)U_j(\ket{0}^{a_j} \otimes I_s)$
  denote the top left block of $U_j$ and $U_j^\prime$,
  and observe that $\|A_j - \alpha_j B_j \| \le \varepsilon_j$.
We also construct $P$
  --- an $\eta$ state-preparation unitary 
  s.t. $P\ket{0} = \sum_j \sqrt{y_j\alpha_j} \ket j$ ---
  by invoking \autoref{def:constr_opt_spu}.

Consider the unitary
$W = (P^\dagger \otimes I_{a - 1} \otimes I_s)(\sum_j \ketbra{j}{j} \otimes U_j^\prime ) (P\otimes I_{a - 1} \otimes I_s)$.
This is a $(\alpha, a, \varepsilon)$-block-encoding
of $A = \sum_j y_j A_j$,
where $\varepsilon$ is computed as:

\begin{align*}
\norm{ A - \alpha (\bra{0}^a \otimes I_s)W(\ket{0}^a \otimes I_s) }
&= \norm{ \sum_{j=0}^{m-1} y_j A_j - \alpha(\bra{0}^a \otimes I_s)W(\ket{0}^a \otimes I_s) } \\
&= \norm{ \sum_j y_j A_j - \alpha(\bra{0}^a \otimes I_s)(\sum_j P^\dagger \ketbra{j}{j} P \otimes U_j^\prime)(\ket{0}^a \otimes I_s) } \\
&= \norm{ \sum_j y_j A_j - \alpha\sum_j \bra{0}P^\dagger \ketbra{j}{j}P\ket{0} \otimes B_j } \\
&= \norm{ \sum_j \bigg( y_j A_j - \alpha\bra{0}P^\dagger \ketbra{j}{j}P\ket{0}  B_j \bigg) } \\
&= \norm{ \sum_j \bigg( y_j A_j - \alpha\Big(\frac{y_j \alpha_j}{\alpha}\Big) B_j \bigg) } \\
&\le \sum_j y_j \norm{ A_j - \alpha_j B_j } \\
&\le \sum_j y_j \varepsilon_j = \varepsilon
\end{align*}
\end{proof}

\thmbodyBlockEncTensor*
\begin{proof}
    From the property of Kronecker products $(A \otimes B)(C \otimes D) = (AC) \otimes (BD)$. For $j \in \{1, 2\}$ let $\tilde{A_j} = \paren{\bra{0}^{\otimes a_j} \otimes I_s} U_j \paren{\ket{0}^{\otimes a_j} \otimes I_s}$. Then it follows that 

    \begin{equation}
        \label{eqn:block_encoding_tensor_block}
        \paren{ \bra{0}^{\otimes a} \otimes I_s \otimes \bra{0}^{\otimes b} \otimes I_t } (U_1 \otimes U_2) \paren{ \ket{0}^{\otimes a} \otimes I_s \otimes \ket{0}^{\otimes b} \otimes I_t } = \tilde{A}_1 \otimes \tilde{A}_2
    \end{equation}

    Therefore $\tilde{A}_1 \otimes \tilde{A}_2$ is block-encoded in $U_1 \otimes U_2$ as a non-principal block-encoding, and we can use \texttt{SWAP} gates to move it to the principal block as follows.

    \begin{align*}
        S \paren{ \ket{0}^{\otimes a} \otimes I_s \ket{0}^{\otimes b} \otimes I_t } &= \Pi_{i=1}^s \texttt{SWAP}_{a+b+i}^{a+i} \paren{ \ket{0}^{\otimes a} \otimes I_s \ket{0}^{\otimes b} \otimes I_t } \\
                                                                                    &= \Pi_{i=1}^{s-1} \texttt{SWAP}_{a+b+i}^{a+i} \texttt{SWAP}_{a + b + s}^{a+s} \paren{ \ket{0}^{\otimes a} \otimes I_s \ket{0}^{\otimes b} \otimes I_t } \\
                                                                                    &= \Pi_{i=1}^{s-1} \texttt{SWAP}_{a+b+i}^{a+i} \paren{ \ket{0}^{\otimes a} \otimes I_{s-1} \ket{0}^{\otimes b} \otimes I_{t+1} } \\
                                                                                    &= \ldots \\ 
                                                                                    &= \ket{0}^{\otimes a+b} \otimes I_{s+t}
    \end{align*}

    Similarly,

    \begin{equation*}
        \paren{\bra{0}^{\otimes a} \otimes I_s \otimes \bra{0}^{\otimes b} \otimes I_t} S^{\dagger} = \bra{0}^{\otimes a + b} \otimes I_{s+t}.
    \end{equation*}

    From \autoref{eqn:block_encoding_tensor_block} we have 

    \begin{align*}
        \tilde{A}_1 \otimes \tilde{A}_2 &= \paren{\bra{0}^{\otimes a} \otimes I_s \otimes \bra{0}^{\otimes b} \otimes I_t} S^{\dagger}S(U_1 \otimes U_2) S^{\dagger}S \paren{ \ket{0}^{\otimes a} \otimes I_s \ket{0}^{\otimes b} \otimes I_t } \\ 
                                        &= \paren{\bra{0}^{\otimes a + b} \otimes I_{s+t}} S (U_1 \otimes U_2) S^{\dagger} \paren{\ket{0}^{\otimes a+b} \otimes I_{s+t}}
    \end{align*}

    Next, we look at the subnormalization and error terms. 

    \begin{align*}
        \norm{A_1 \otimes A_2 - \alpha \beta \tilde{A}_1 \otimes \tilde{A}_2}_2 &\leq \norm{(\alpha \tilde{A}_1 + \varepsilon_1 I_s) \otimes (\beta \tilde{A}_2 + \varepsilon_2 I_t) - \alpha \tilde{A}_1 \otimes \beta \tilde{A}_2}_2 \\ 
                                                                                &= \norm{\alpha \tilde{A}_1 \otimes \varepsilon_2 I_2 + \varepsilon_1 I_s \otimes \beta \tilde{A}_2 + \varepsilon_1 I_s \otimes \varepsilon_2 I_2}_2 \\
                                                                                &\leq \alpha \varepsilon_2 \norm{\tilde{A}_1}_2 + \beta \varepsilon_2 \norm{\tilde{A}_2}_2 + \varepsilon_1 \varepsilon_2 \\ 
                                                                                &= \alpha \varepsilon_2 + \beta \varepsilon_1 + \varepsilon_1 \varepsilon_2
    \end{align*}

    where we have used $\norm{A_1}_2 \leq \alpha \norm{\tilde{A}_1}_2 + \varepsilon_1$ and $\norm{\tilde{A}_1}_2 \leq 1$ and similarly for $A_2$. 

\end{proof}
\end{document}